\documentclass[journal,onecolumn]{IEEEtran}
\usepackage{amsmath,amsfonts}
\usepackage{amsthm}
\usepackage{amssymb}
\newtheorem{theorem}{Theorem}
\newtheorem{lemma}{Lemma}
\newtheorem{proposition}{Proposition}
\newtheorem{corollary}{Corollary}
\newtheorem{definition}{Definition}
\newtheorem{remark}{Remark}
\newtheorem{example}{Example}
\usepackage{algorithmic}
\usepackage{algorithm}
\usepackage{array}

\usepackage{qcircuit}
\usepackage{tikz}
\usetikzlibrary{arrows.meta,decorations.pathreplacing,positioning}
\usepackage[caption=false,font=normalsize,labelfont=sf,textfont=sf]{subfig}
\usepackage{textcomp}

\usepackage{stfloats}
\usepackage{xcolor}
\usepackage{url}
\usepackage{verbatim}
\usepackage{graphicx}
\usepackage{cite}
\usepackage{hyperref}  
\def\BibTeX{{\rm B\kern-.05em{\sc i\kern-.025em b}\kern-.08em
    T\kern-.1667em\lower.7ex\hbox{E}\kern-.125emX}}
\usepackage{balance}

\newcommand{\qlb}{\large\textbf{[[}}
\newcommand{\qrb}{\large\textbf{]]}}
\usepackage{float}

\begin{document}
\title{Entanglement-Assisted Quantum Locally Recoverable Codes: Bounds, Optimal Constructions, and Achievability}
\author{Vijay~Kumar,~\IEEEmembership{Student Member,~IEEE} 
        and~Ramakrishna~Bandi,~\IEEEmembership{Senior Member,~IEEE}%

\thanks{The authors are with the Department of Science and Mathematics, International Institute of Information Technology, Naya Raipur, India, 493661 (e-mail: vijayk@iiitnr.edu.in, ramakrishna@iiitnr.edu.in).}}


\maketitle

\begin{abstract}
Entanglement-assisted quantum error-correcting codes (EAQECCs) remove the dual-containment
constraint of Calderbank--Shor--Steane (CSS) constructions through pre-shared entanglement,
while quantum locally recoverable codes (qLRCs) allow efficient localized recovery of
erased qudits in distributed quantum storage. This paper studies entanglement-assisted
quantum locally recoverable codes (EA-qLRCs) built via a CSS-like stabilizer construction
from pairs of classical locally recoverable codes (cLRCs), without requiring
dual-containment. We define such codes through local recovery channels, give a sufficient
stabilizer criterion for this CSS-like construction, and derive Singleton-like,
Griesmer-like, Plotkin-like, and sphere-packing-like converse bounds on the achievable
parameters of the resulting pure CSS-like EA-qLRCs, along with a Cadambe--Mazumdar-like
bound that, as in the classical case, is not explicit in closed form, and a comparative
analysis of their relative tightness across finite-length and asymptotic regimes. We give
necessary and sufficient conditions for a pure CSS-like EA-qLRC to attain the Singleton-like
bound with equality; for the single-code case $\mathcal{C}_1=\mathcal{C}_2=\mathcal{C}$,
this reduces to a simple condition on the hull dimension
$s=\dim(\mathcal{C}\cap\mathcal{C}^\perp)$, which also fixes the entanglement count via
$c=n-k-s$. We present CSS-like EA-qLRC constructions from classical LRC families - Tamo--Barg codes and cyclic codes --- and characterize precisely when these attain the Singleton-like bound, showing in particular that the cyclic families yield optimal codes while the Tamo--Barg construction, though valid, attains the bound only in the degenerate regime $k \le r$, where the locality constraint is vacuous. We complement these explicit
constructions with two Gilbert--Varshamov-like achievability bounds, established via a
classical parity-check augmentation and via a sharper concatenated-code construction, and
show both are met unconditionally for field size $q>3$ via a monomial-equivalence argument.
Finally, we unify all bounds established in this paper --- converse and achievability alike
--- under a common maximally entangled regime, giving a single comparison of the achievable
and forbidden rate--distance--locality region for CSS-like EA-qLRCs.
\end{abstract}

\begin{IEEEkeywords}
Entanglement-assisted quantum error-correcting codes, quantum locally recoverable codes, Tamo-Barg codes, cyclic codes, and LCD codes.
\end{IEEEkeywords}

\section{Introduction}
Classical locally recoverable codes (LRCs) address single-node failures in distributed storage --- the most common failure mode in such systems~\cite{gopalan2012locality} --- by recovering an erased symbol from only a small subset of the other code symbols, rather than from an entire codeword as classical maximum distance separable (MDS) codes effectively require. This locality property has driven a substantial body of work on fundamental bounds~\cite{cadambe2015bounds,hao2020bounds,prakash2012optimal,tamo2016bounds} and optimal explicit constructions~\cite{tamo2014family,tamo2015cyclic,papailiopoulos2014locally,hao2020bounds,chen2017constructions} for LRCs, where optimality means attaining one of these bounds with equality.

\medskip

Quantum locally recoverable codes (qLRCs)—introduced by Golowich and Guruswami~\cite{Golowic}, who defined a qLRC of locality $r$ as a quantum code in which each qudit participates in two low-weight stabilizers—extend this idea to the quantum setting. A substantial line of follow-up work has since developed qLRCs via Calderbank--Shor--Steane (CSS) or Hermitian constructions, obtaining Singleton-like bounds~\cite{Golowic,luo2025bounds,galindo2026quantum}, constructions from good polynomials~\cite{sharma2025quantum}, codes with intersecting recovery sets~\cite{bu2025quantum}, tighter bounds and explicit families from the Hermitian construction~\cite{li2025improved,li2025optimal}, constructions from trace codes~\cite{xie2025two}, and a generalization to $(r,\rho)$-locality~\cite{galindo2026quantum}. Every one of these constructions, however, is built on the CSS or Hermitian stabilizer mechanism, which requires the underlying classical code (or pair of codes) to satisfy a dual-containment or self-orthogonality condition. This restriction is severe specifically for locally recoverable codes: the best-known optimal classical LRC constructions are almost never dual-containing, so most of classical LRC theory remains inaccessible to CSS-qLRC constructions.

\medskip

The entanglement-assisted quantum error-correcting code (EAQECC) framework~\cite{brun2007general} was developed precisely to remove this restriction in the general, non-local setting: pre-shared entanglement between encoder and decoder allows quantum stabilizer codes to be built from \emph{arbitrary} pairs of classical linear codes, with no dual-containment requirement at all. Whether this same mechanism could be extended to \emph{local} recovery, however, remained open --- a question Luo \textit{et al.}~\cite{luo2025bounds} posed explicitly: ``Do entanglement-assisted quantum codes admit the property of being locally recoverable?" This paper answers that question by developing a systematic theory of \emph{entanglement-assisted quantum locally recoverable codes (EA-qLRCs)}.

\medskip

\emph{Our conceptual contribution.} We show that the answer is yes because locality and dual-containment were never fundamentally linked in the first place. In the stabilizer formalism, the existence of two \emph{local} dual codewords, one from each classical constituent code, whose combined support covers a given coordinate with weight at most $r+1$, suffices to guarantee local recoverability there (Theorem~\ref{Stb_EAqLRC}); this is a purely local, support-based sufficient condition. Dual-containment, by contrast, is a \emph{global} condition on the code pair, needed only to make the resulting stabilizer generators commute in the absence of shared entanglement. Because entanglement assistance is exactly the resource that removes this global commutativity requirement, it can be layered onto the local-recovery mechanism without disturbing it: the same local support condition that gives locality continues to hold verbatim, whether or not the two classical codes are dual-containing. Locality and dual-containment are, in this sense, orthogonal requirements on a CSS-type code pair, and entanglement assistance is precisely the resource that decouples them. The price of this decoupling is quantified by a single classical invariant: for the important case of a single constituent code, its \emph{hull dimension} $\dim(\mathcal{C}\cap\mathcal{C}^\perp)$ simultaneously determines the entanglement consumed and, via a sharp threshold, whether the resulting code is optimal --- linear complementary dual (LCD) code. This decoupling is what allows the constructions in this paper to draw on classical LRC families, such as Tamo--Barg and LCD codes, that are almost never dual-containing and were therefore previously inaccessible to CSS-qLRC constructions.

\subsection{Our Contributions}
The technical contributions below develop this idea from its stabilizer foundations through to explicit optimal constructions.

\begin{enumerate}

\item \textbf{Stabilizer Foundations and CSS-Like Construction:} We begin by defining an EA-qLRC through its recovery channels, establishing a sufficient stabilizer criterion (Theorem~\ref{Stb_EAqLRC}) for this locality property, and using it to formulate a CSS-like construction of EA-qLRCs from a pair of classical linear codes (Proposition~\ref{CSS-EA-qLRC}).

\item \textbf{Explicit Converse Bounds:} Building on this construction, we establish four
explicit converse bounds for CSS-like EA-qLRCs, characterizing the fundamental trade-offs among
blocklength, dimension, minimum distance, locality, and entanglement consumption. These
bounds generalize classical LRC bounds and CSS-based qLRC bounds to the
entanglement-assisted setting:
    \begin{enumerate}
    \item \textbf{Singleton-like Bound:}
    \[
        2\delta \le n - \kappa + c - 2\left\lceil \frac{\kappa}{r} \right\rceil + 4.
    \]
    \item \textbf{Griesmer-like Bound:}
    \[
        n \;\ge\; 2\max_{0\le\tau\le \lceil\frac{\kappa}{r}\rceil-1} \left\{ \tau(r+1) + \sum_{i=0}^{\kappa-\tau r-1}\!\left\lceil\frac{\delta}{q^i}\right\rceil \right\} - \kappa - c.
    \]
    \item \textbf{Plotkin-like Bound:}
    \[
        2\delta \;\le\; \min_{0\le \tau \le \lceil \frac{\kappa}{r}\rceil -1} \left\{\frac{q^{\kappa -\tau r-1}(q-1)\bigl(n+\kappa +c-2\tau(r+1)\bigr)}{q^{\kappa -\tau r}-1}\right\}.
    \]
    \item \textbf{Sphere-Packing-like Bound:}
    \[
        \kappa \;\le\; n + c - 2\max_{0\le\tau\le \lfloor\frac{n+\kappa+c-2}{2(r+1)}\rfloor} \left\{ \tau + \log_q \!\left[ V_q\left(\frac{n+\kappa+c}{2}-\tau(r+1),\, t\right) \right] \right\},
    \]
    where $t = \lfloor \frac{\delta-1}{2} \rfloor$ and $V_q(M, t) = \sum_{j=0}^{t} \binom{M}{j}(q-1)^j$ is the volume of a classical $q$-ary Hamming ball of radius $t$ in $\mathbb{F}_q^M$.
    \end{enumerate}
We additionally establish a Cadambe--Mazumdar-like bound (Corollary~\ref{cor:CM}) that, as
in the classical case, is not explicit in closed form, but yields tighter,
alphabet-dependent constraints than the Singleton-like bound when $k_{\mathrm{opt}}^{(q)}$
is known or bounded.

\item \textbf{Optimality Conditions for Pure CSS-Like EA-qLRCs:} We answer the question of when these bounds can be met by deriving necessary and sufficient conditions for a pure CSS-like EA-qLRC to attain the Singleton-like bound with equality (Theorem~\ref{pure_EA-qLRC}). Specifically, a pure CSS-like EA-qLRC with locality $r$ attains equality in \eqref{EA-qLRC_SB} if and only if:
    \begin{enumerate}
    \item $\mathcal{C}_1$ and $\mathcal{C}_2$ are classical codes with identical parameters $[n,k,d]_q$.
    \item $\mathcal{C}_1$ and $\mathcal{C}_2$ attain the classical LRC Singleton bound~\eqref{cLRC Singleton Bound}, with $\left\lceil \frac{k}{r} \right\rceil = \left\lceil \frac{2k-n+c}{r} \right\rceil$.
    \end{enumerate}

\item \textbf{Explicit Constructions and Their Optimality Boundary:} Instantiating this
optimality criterion, we construct two families of CSS-like EA-qLRCs in
Section~\ref{Construction Optimal EA-qLRCs} from Tamo--Barg codes and general cyclic code
pairs. We identify an exact algebraic obstruction showing that the Tamo--Barg-based
construction can never attain the Singleton-like bound in any nontrivial regime, since
optimality would require $k\le r$, a regime in which the locality constraint is vacuous; the general cyclic-pair construction is likewise shown, by explicit example, not to attain optimality in general. This motivates the specialized, provably optimal
construction of the next contribution.

\item \textbf{Maximally Entangled Cyclic CSS-like EA-qLRCs:} We characterize the LCD condition for any cyclic code (Lemma~\ref{cyclic_LCD}) and leverage it to construct several explicit families of maximally entangled CSS-like EA-qLRCs of lengths $q-1$ and $q+1$ that attain the Singleton-like bound with equality.

\item \textbf{Gilbert--Varshamov-Like Achievability Bounds:} We establish two Gilbert--Varshamov-like asymptotic achievability bounds for CSS-like EA-qLRCs derived from classical LRC achievability results:
    \begin{enumerate}
    \item \textbf{Basic Parity-Check Augmentation Bound:}
    \[
        R \;\ge\; \frac{r}{r+1} - H_q(\Delta), \qquad \gamma = 1 - R, \qquad R_{\mathrm{net}} = 2R - 1.
    \]
    \item \textbf{Sharper Concatenated-Code Bound:}
    \[
        R \;\ge\; 1 - H_q(\Delta) - \frac{1}{r+1}\log_q\!\left[1+(q-1)\left(1-\Delta\cdot\frac{q}{q-1}\right)^{r+1}\right], \qquad \gamma = 1 - R.
    \]
    \end{enumerate}
    We prove that both bounds are achievable unconditionally for all field sizes $q > 3$ via a locality-preserving monomial-equivalence transformation, identifying $q=2,3$ as the remaining open regime.

\item \textbf{Unified Parameter Landscape:} We unify all converse bounds (Singleton-, Griesmer-, Plotkin-, and sphere-packing-like) and both achievability bounds under the shared maximally entangled regime $\gamma = 1 - R$. We show that each bound collapses to a closed form in $\Delta$ alone, providing a single visual representation of the achievable and forbidden parameter regions for CSS-like EA-qLRCs.

\end{enumerate}

\textit{Relation to concurrent work.} Independently and concurrently with this work,
Li \emph{et al.}~\cite{li2026eaqlrc} also introduced entanglement-assisted quantum
locally recoverable codes. They likewise defined locality via recovery channels and proposed
a CSS-like construction from classical code pairs without requiring dual-containment,
together with a Singleton-like bound and a necessary and sufficient optimality
characterization for pure codes analogous to our Theorem~\ref{pure_EA-qLRC}. Their work additionally
establishes an upper bound on achievable locality,
$r \le \min\{n-1, \dim(C_X + C_Z)\}$,
which we do not consider here, and gives explicit constructions from $\ell$-intersection pairs of MDS codes and block parity-check matrices built from generalized Reed--Solomon codes. In contrast, the present paper establishes three additional converse bounds (Griesmer-like, Plotkin-like, and sphere-packing-like), together with a full finite-length and asymptotic comparison among all four bounds, and provides disjoint explicit constructions from Tamo--Barg codes, general cyclic LRC pairs, and LCD cyclic codes of length dividing $q \pm 1$, along with a Gilbert--Varshamov-like existence result. The two works thus develop complementary aspects of the same emerging framework.

The remainder of the paper is organized as follows. Section~\ref{Preliminary} presents the necessary preliminaries along with several existing results. Section~\ref{EA-qLRC} introduces new results on the locality of entanglement-assisted quantum codes. Section~\ref{Bounds EA-qLRC} establishes four explicit bounds for CSS-like EA-qLRCs via the classical LRC bounds and analyzes the tightness of these bounds in finite length and asymptotically. Section~\ref{Conditions EA-qLRC} characterizes the necessary and sufficient conditions for pure CSS-like EA-qLRCs to achieve optimality with respect to \eqref{EA-qLRC_SB}. Section~\ref{Construction Optimal EA-qLRCs} constructs CSS-like EA-qLRCs with explicit parameters using cyclic codes and Tamo–Barg codes. Section~\ref{sec:maximally-entangled-constructions} introduces maximally CSS like  EA-qLRCs based on classical cyclic LRCs and their complementary dual codes. Section~\ref{EA-qLRC GV} derives a basic and a sharper Gilbert--Varshamov-like achievability
bound for CSS-like EA-qLRCs from corresponding classical achievability results for LRCs, establishes their unconditional validity for $q>3$ via monomial equivalence to LCD codes, and closes with a numerical analysis of the sharper bound and a unified comparison of every bound established in this paper under the maximally entangled regime. Finally, Section~\ref{conclusion} concludes the paper and outlines several open problems for future research.

\section{Preliminaries}\label{Preliminary}
This section introduces the foundational concepts required for the remainder of the paper. We review classical linear codes and their duals, classical LRCs together with their associated distance bounds, and the framework of entanglement-assisted quantum error-correcting codes. Throughout this paper, we use the following notation: $q$ is a prime power, $\mathbb F_q$ is denoted by the finite field of order $q$; for $m,n\in\mathbb{N}$, we denote $[m]^\dagger:=\{0,1,\ldots,m\}$ and $[n]=\{1,\dots,n\}$. For $\mathbf u=(u_1,\dots,u_n)\in\mathbb F_q^n$, $\operatorname{supp}(\mathbf u)=\{i\in[n]: u_i\neq 0\}$, and $\mathrm{wt}(\mathbf u)=\vert\operatorname{supp}(\mathbf u)\vert$. Let  $S\subseteq\mathbb F_q^n$, $\mathrm{wt}(S)=\min_{\mathbf u\in S,\,\mathbf u\neq \mathbf 0}\mathrm{wt}(\mathbf u)$ and for a linear code $\mathcal C$, $\mathrm{wt}(\mathcal C)$ denote the minimum weight of a nonzero codeword. $\operatorname{tr}(\cdot)$ denotes the trace map from the relevant extension field to $\mathbb F_p$, and $\omega=e^{2\pi i/p}$, where $p$ is the characteristic of $\mathbb F_q$.

\subsection{Classical Linear Codes} 
A linear code $\mathcal{C}$ over $\mathbb{F}_q$ of length $n$ and dimension $k$ is a $k$-dimensional subspace of $\mathbb{F}_q^n$, and is denoted by an $[n,k]_q$ code. The minimum distance of $\mathcal{C}$ is defined by $d=\operatorname{wt}(\mathcal{C})$ then; we denote $\mathcal{C}$ by an $[n,k,d]_q$ . A code $\mathcal{C}$ is generated by a generator matrix $G\in\mathbb{F}_q^{k\times n}$, whose rows form a basis of $\mathcal{C}$. Thus, $\mathcal{C} = \{uG : u\in\mathbb{F}_q^k\}$. The Euclidean dual of $\mathcal{C}$ is defined by
\[
\mathcal{C}^{\perp} = \left\{ (x_1,\ldots,x_n)\in\mathbb{F}_q^n : \sum_{i=1}^{n}x_ic_i=0 \text{ for all } (c_1,\ldots,c_n)\in\mathcal{C} \right\}.
\]
Similarly, when $\mathcal{C}$ is a code over $\mathbb{F}_{q^2}$, its Hermitian dual is defined by
\[
\mathcal{C}^{\perp_H}= \left\{ (x_1,\ldots,x_n)\in\mathbb{F}_{q^2}^n : \sum_{i=1}^{n}x_ic_i^{\,q}=0 \text{ for all } (c_1,\ldots,c_n)\in\mathcal{C} \right\}.
\]

If $\mathcal{C}$ is an $[n,k,d]_q$ code, then $\mathcal{C}^{\perp}$ is an $[n,n-k,d^{\perp}]_q$ code, where $d^{\perp}$ denotes the minimum distance of $\mathcal{C}^{\perp}$. A parity-check matrix $H\in\mathbb{F}_q^{(n-k)\times n}$ for $\mathcal{C}$ satisfies $\mathcal{C} = \ker(H) = \left\{ x\in\mathbb{F}_q^n : Hx^{\top}=\mathbf{0} \right\}$,
where the rows of $H$ form a basis of $\mathcal{C}^{\perp}$. The generator and parity-check matrices satisfy the orthogonality relation $GH^{\top}=0$.
A code is called \emph{systematic} if its generator matrix contains the identity matrix $I_k$ as a submatrix. In this case, $G=[I_k\mid A],\; H=[-A^{\top}\mid I_{n-k}]$,
where the first $k$ coordinates correspond to information symbols and the remaining $n-k$ coordinates correspond to parity symbols.

A code $\mathcal{C}$ is called \emph{self-orthogonal} (or weakly self-dual) if $\mathcal{C} \subseteq \mathcal{C}^\perp$, which requires $k \leq n/2$; \emph{self-dual} if $\mathcal{C} = \mathcal{C}^\perp$, which requires $k = n/2$; \emph{dual-containing} if $\mathcal{C}^\perp \subseteq \mathcal{C}$; and a \emph{linear complementary dual} (LCD) code if $\mathcal{C} \cap \mathcal{C}^\perp =\{0\}$.

The Singleton bound $d\le n-k+1$ provides a fundamental upper bound on the minimum distance of an $[n,k,d]_q$ code. A code attaining equality in the Singleton bound is called a \emph{maximum distance separable} (MDS) code.
\subsection{Classical Locally Recoverable Codes}\label{ssec:classical-LRC}

\begin{definition}\label{def:locality}
A code $\mathcal{C}$ of length $n$ over $\mathbb{F}_q$ is said to have locality $r$ if for every coordinate $i \in [n]$, there exists a set $\Gamma_i \subseteq [n] \setminus \{i\}$ with $|\Gamma_i| \leq r$ such that the symbol $c_i$ can be recovered from $\{ c_j \mid j \in \Gamma_i \}$. The set $\Gamma_i$ is called a recovery set for coordinate $i$.
\end{definition}

An $[n, k, d]_q$ code with locality $r$ is denoted an $(n, k, d, r)_q$-LRC. It is well known [\cite{luo2018optimal},\cite{tan2023minimum}] that for a linear code, the locality condition is equivalent to requiring that for each coordinate $i$, there exists a local parity-check of support size at most $r + 1$ involving coordinate $i$. Throughout this paper, we adopt the standard convention that $1 \le r \le k$ \cite{gopalan2012locality}: if $r \ge k$, then every coordinate of an $[n,k,d]_q$ code with $d \ge 2$ can be recovered from an information set of size $k$ that avoids it, so the locality constraint is vacuous.

The first fundamental distance bound for codes with locality was established by Gopalan \textit{et al.} \cite{gopalan2012locality}, who showed that the presence of locality constraints strictly weakens the classical Singleton bound.
\begin{lemma}[\cite{gopalan2012locality}]
Let $\mathcal{C}$ be an $(n, k, d, r)_q$-LRC. Then
\begin{equation}\label{cLRC Singleton Bound}
    d \le n - k - \left\lceil \frac{k}{r} \right\rceil + 2.
\end{equation}
\end{lemma}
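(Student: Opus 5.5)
We prove \eqref{cLRC Singleton Bound} with the standard shortening argument of Gopalan \textit{et al.}: we build a large set $S$ of coordinates that spans only a small subspace of the code, and then use the elementary fact that the minimum distance of a linear code satisfies
\[
d = n - \max\{|S| : S\subseteq[n],\ \operatorname{rank}(G_S)<k\},
\]
where $G_S$ is the submatrix of a generator matrix $G$ on the columns indexed by $S$. This holds because a nonzero codeword vanishing on $S$ exists exactly when $\operatorname{rank}(G_S)<k$. It therefore suffices to exhibit $S$ with $\operatorname{rank}(G_S)\le k-1$ and $|S|\ge k-1+\lceil k/r\rceil-1$.

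The set $S$ is built greedily. By Definition~\ref{def:locality} and its linear equivalent, each coordinate $i$ lies in a local group $\{i\}\cup\Gamma_i$ of size at most $r+1$. On that group the column $g_i$ of $G$ is a linear combination of the columns indexed by $\Gamma_i$.

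Start with $S=\emptyset$. While $\operatorname{rank}(G_S)\le k-r-1$, pick a coordinate $i$ with $g_i\notin\operatorname{span}(G_S)$. Such an $i$ exists because the columns of $G$ span $\mathbb{F}_q^k$. Replace $S$ by $S\cup\{i\}\cup\Gamma_i$. In each such step:
\begin{itemize}
\item the rank increases by at most $|\Gamma_i|\le r$, since $g_i$ is dependent on $\Gamma_i$;
\item the size grows by at least one more than the rank increase, because at least one new column (namely $g_i$ or some column of $\Gamma_i$) is dependent on the others just added, and the rank does increase.
\end{itemize}

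Once $\operatorname{rank}(G_S)\ge k-r$, finish by adding single columns that raise the rank one at a time until the rank equals exactly $k-1$. Each such column adds one to both the size and the rank.

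Now count. Let $\ell$ be the number of greedy steps. While the rank is at most $k-r-1$, each step adds at most $r$ to the rank, so reaching rank at least $k-r$ requires $\ell\ge\lceil (k-r)/r\rceil=\lceil k/r\rceil-1$. Before the finishing phase we have $|S|\ge \operatorname{rank}(G_S)+\ell$, and the finishing phase preserves the gap $|S|-\operatorname{rank}(G_S)$. At the end, therefore,
\[
|S|\ge k-1+\lceil k/r\rceil-1 .
\]
Hence $d\le n-|S|\le n-k-\lceil k/r\rceil+2$, which is \eqref{cLRC Singleton Bound}.

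The main obstacle is the overshoot case. A greedy step taken at rank at most $k-r-1$ can raise the rank to at most $k-1$, so it never jumps past $k-1$, but the step may leave the rank anywhere in $[k-r,\,k-1]$. One must check that the counting inequality $\ell\ge\lceil k/r\rceil-1$ still holds in that case.

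The check goes as follows. The last greedy step starts at rank at most $k-r-1$. All previous steps together contributed at most $r(\ell-1)$ to the rank, so the rank before the last step is at most $r(\ell-1)$. For that rank to be at most $k-r-1$ while the process continued to rank at least $k-r$, we need $r\ell\ge k-r$, which gives $\ell\ge\lceil k/r\rceil-1$.

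The degenerate case $r\ge k$ makes the bound reduce to the ordinary Singleton bound $d\le n-k+1$, which the same argument yields with $\ell=0$.
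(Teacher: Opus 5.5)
Your proof is correct, and it is essentially the shortening argument of Gopalan \emph{et al.}, which the paper only cites without proof: greedily absorb whole local groups, each adding at least one more coordinate than rank, until you reach a rank-$(k-1)$ set $S$ with $|S|\ge k+\lceil k/r\rceil-2$, and then conclude $d\le n-|S|$. Your one deviation is to stop the group phase once the rank reaches $k-r$ and pad with single columns, rather than running to rank $k-1$ and truncating the last group; this cleanly rules out reaching rank $k$, and it makes your separate ``overshoot'' paragraph unnecessary, since $\ell\ge\lceil k/r\rceil-1$ already follows from the final group-phase rank being both $\ge k-r$ and $\le r\ell$.
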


 A code achieving this bound with equality is called an {\it{optimal LRC}}. The Singleton-like bound is tight only when the underlying alphabet is sufficiently large. For codes over fixed or small alphabets, a tighter information-theoretic upper bound was derived by Cadambe et al. \cite{cadambe2015bounds}.
 
 \begin{lemma}[CM-Bound  \cite{cadambe2015bounds}]
     Let $\mathcal{C}$ be an $(n, k, d, r)_q$-LRC. Then
     \begin{equation}\label{CM-Bound}
    k \le \min_{ \tau \in\mathbb{Z}_{+}}\left\{ \tau r + k_{\mathrm{opt}}^{(q)}(n-\tau(r+1),d) \right\},
\end{equation}
 \end{lemma}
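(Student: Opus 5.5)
The plan is to prove the Cadambe--Mazumdar bound \eqref{CM-Bound} by an iterative shortening argument. Here $k_{\mathrm{opt}}^{(q)}(n',d)$ denotes the largest dimension of any $q$-ary linear code of length $n'$ and minimum distance at least $d$. We take $\tau$ to range over integers with $\tau(r+1)<n$; for larger $\tau$ the bound is vacuous or conventional, since then $\tau r\ge k$ when the construction below stalls. Fix such a $\tau$. The strategy is to find a set $I\subseteq[n]$ with $|I|\le \tau(r+1)$ whose coordinates, viewed as random variables induced by a uniformly random codeword, carry entropy at most $\tau r$. Equivalently, in linear terms, $\operatorname{rank}(G_I)\le \tau r$, where $G_I$ is the restriction of a generator matrix $G$ to the columns indexed by $I$.

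The set $I$ is built greedily in $\tau$ rounds, starting from $I_0=\emptyset$. In round $j$ we pick a coordinate $i_j\notin I_{j-1}$ whose column of $G$ is not in the span of the columns of $G_{I_{j-1}}$. Such a coordinate exists as long as $\operatorname{rank}(G_{I_{j-1}})<k$. We then set $I_j=I_{j-1}\cup\{i_j\}\cup\Gamma_{i_j}$, where $\Gamma_{i_j}$ is the recovery set from Definition~\ref{def:locality}. Because $c_{i_j}$ is a function of $c_{\Gamma_{i_j}}$, adding $i_j$ does not increase rank beyond what $\Gamma_{i_j}$ contributes. Hence each round raises the rank by at most $r$ while adding at most $r+1$ coordinates. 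After $\tau$ rounds, $|I|\le\tau(r+1)$ and $\operatorname{rank}(G_I)\le\tau r$. If $|I|$ falls below $\tau(r+1)$, we pad $I$ with arbitrary coordinates, with the rank accounting handled in the next step.

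If the greedy procedure stops early because $\operatorname{rank}(G_{I_{j-1}})=k$, then $k\le (j-1)r\le \tau r$ and the bound is immediate. Otherwise, consider the subcode $\mathcal{C}'=\{c\in\mathcal{C}: c_I=0\}$. It has dimension at least $k-\operatorname{rank}(G_I)\ge k-\tau r$. Puncturing $\mathcal{C}'$ on $I$ keeps all the zeroed coordinates out, so the map is injective and preserves weights. This yields a linear code of length $n-|I|\ge n-\tau(r+1)$ and minimum distance at least $d$. Therefore
\[
k-\tau r\le k_{\mathrm{opt}}^{(q)}(n-|I|,d)\le k_{\mathrm{opt}}^{(q)}\bigl(n-\tau(r+1),d\bigr)\quad\text{(after padding)}.
\]
Taking the minimum over $\tau$ gives the claim.

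The main obstacle is the padding step. Enlarging $I$ to exactly $\tau(r+1)$ coordinates can increase $\operatorname{rank}(G_I)$ beyond $\tau r$, but it lowers the dimension bound on the shortened code by the same amount. Since the bound on $\dim\mathcal{C}'$ is $k-\operatorname{rank}(G_I)$ and the argument only needs $k-\operatorname{rank}(G_I)\le k_{\mathrm{opt}}$, I would handle this directly: compare against $k-\tau r$ using the padded set, where each added coordinate increases the rank by at most one and decreases the length by exactly one. Alternatively, one can use monotonicity of $k_{\mathrm{opt}}^{(q)}(\cdot,d)$ in length, which holds by zero-padding a code.
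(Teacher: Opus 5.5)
The paper gives no proof of this lemma; it is quoted from Cadambe--Mazumdar. Your skeleton (grow $I$ greedily from recovery sets, then shorten $\mathcal{C}$ on $I$) is the linear specialization of their entropy argument, and that suffices here because the paper's LRCs are linear. However, the step you flag as the main obstacle is a genuine gap, and neither of your repairs closes it.

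Your per-round accounting (rank up by at most $r$, size up by at most $r+1$) bounds $\operatorname{rank}(G_I)$ and $|I|$ separately. After padding by $p=\tau(r+1)-|I|$ coordinates you therefore only know $\operatorname{rank}(G_I)\le\tau r+p$. This gives only $k\le\tau r+p+k_{\mathrm{opt}}^{(q)}(n-\tau(r+1),d)$, and your ``trade-off'' remark merely restates this loss of $p$.

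The monotonicity alternative points the wrong way. Zero-padding shows $k_{\mathrm{opt}}^{(q)}(\cdot,d)$ is non-\emph{decreasing} in length. So from $k-\tau r\le k_{\mathrm{opt}}^{(q)}(n-|I|,d)$ with $n-|I|\ge n-\tau(r+1)$ you cannot pass to the shorter length $n-\tau(r+1)$. The correct comparison, $k_{\mathrm{opt}}^{(q)}(n'+p,d)\le k_{\mathrm{opt}}^{(q)}(n',d)+p$ (by shortening), again costs $p$.

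The missing idea is to track the deficit $|I_j|-\operatorname{rank}(G_{I_j})$ rather than the two quantities separately. In round $j$ the new coordinates are $\{i_j\}\cup(\Gamma_{i_j}\setminus I_{j-1})$, of size $1+a_j$ with $a_j=|\Gamma_{i_j}\setminus I_{j-1}|$. The rank grows by at most $a_j$, for two reasons:
the columns indexed by $\Gamma_{i_j}\cap I_{j-1}$ are already columns of $G_{I_{j-1}}$;
the column of $i_j$ is a linear combination of the columns indexed by $\Gamma_{i_j}$.
Hence the deficit increases by at least $1$ per round. Padding never decreases it, since each padded coordinate adds $1$ to the size and at most $1$ to the rank. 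After $\tau$ rounds and padding to exactly $|I|=\tau(r+1)$, you get $\operatorname{rank}(G_I)\le|I|-\tau=\tau r$. That is precisely what your shortening step needs, and no monotonicity is required.

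Your dismissal of the $\tau$ with $\tau(r+1)\ge n$ silently relies on the same invariant. There you need $k\le\tau r$, i.e.\ the rate bound $k\le nr/(r+1)$. Run the rounds until the rank reaches $k$, say after $J$ rounds. The crude accounting gives $k\le Jr$ and the deficit gives $k\le n-J$, so $k\le\min\{Jr,\,n-J\}\le nr/(r+1)$. The crude accounting alone yields nothing of the kind.
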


where $\tau\;\le\; min\left\{\lceil\frac{n}{r+1}\rceil,\lceil\frac{k}{r}\rceil\right\}$ and $k_{\mathrm{opt}}^{(q)}(n', d)$ denotes the maximum dimension of a $q$-ary code of length $n'$ and minimum distance $d$. 

Finding $k_q^{\text{opt}}(n',d)$ is not always straightforward, so the CM bound \eqref{CM-Bound} is not explicit. To get more explicit bounds for $(n, k, d, r)_q$-LRC. Hao \textit{et al.}  \cite{hao2020bounds}  used the Griesmer and Plotkin bounds for linear codes in \eqref{CM-Bound} and found two new explicit bounds: the Griesmer-like bound and the Plotkin-like bound of $(n, k, d, r)_q$-LRC. In \cite[Lemma 1]{li2025improved}, Li \textit{et al.} give a more general form of the sphere packing bound for $(n, k, d, r)_q$-LRC. All these bounds are summarized as follows:

\begin{lemma}[\cite{hao2020bounds,li2025improved}]\label{lemma:classical-bounds}
Let $\mathcal{C}$ be an $(n, k, d, r)_q$-LRC and  $\tau$ be an integer. Then the following bounds hold.
\begin{enumerate}
    \item \textbf{Griesmer-like bound}: We have that
    \begin{equation}
    n \ge \max_{0 \le \tau \le \lceil \frac{k}{r} \rceil - 1} \left\{ \tau(r+1) + \sum_{i=0}^{k-\tau r-1} \left\lceil \frac{d}{q^i} \right\rceil \right\}. \label{eq:griesmer}
    \end{equation}

    \item \textbf{Plotkin-like bound}: We have that
    \begin{equation}
    d \le \min_{0 \le \tau \le \lceil \frac{k}{r} \rceil - 1} \left\{ \frac{q^{k-\tau r-1}(q-1)(n-\tau(r+1))}{q^{k-\tau r} - 1} \right\}. \label{eq:plotkin}
    \end{equation}

    \item \textbf{Sphere-packing-like bound}: We have that
    \begin{equation}\label{eq:sphere-packing}
    k \le n - \max_{0 \le \tau \le \lfloor \frac{n-1}{r+1} \rfloor} \left\{ \tau + \log_q \left( \sum_{i=0}^{\lfloor \frac{d-1}{2} \rfloor} \binom{n-\tau(r+1)}{i} (q-1)^i \right) \right\}. 
    \end{equation}
\end{enumerate}
\end{lemma}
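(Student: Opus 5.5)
The three bounds in Lemma~\ref{lemma:classical-bounds} all come from one reduction. We iteratively shorten the LRC to strip off whole local groups, and then apply a classical bound for linear codes (Griesmer, Plotkin, or sphere-packing) to what remains.

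\textbf{Reduction step.} Let $\mathcal{C}$ be an $(n,k,d,r)_q$-LRC. We show by induction on $\tau$ that for every admissible $\tau$ there is a set $T$ of coordinates with $|T|\le \tau(r+1)$ such that the subcode $\mathcal{C}_T=\{c\in\mathcal{C}: c|_T=0\}$ has dimension at least $k-\tau r$. For the Griesmer and Plotkin bounds we take $0\le\tau\le\lceil k/r\rceil-1$; for the sphere-packing bound we take $\tau\le\lfloor (n-1)/(r+1)\rfloor$.

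The inductive step works as follows. Pick a coordinate $i\notin T$ on which $\mathcal{C}_T$ is not identically zero. Take its local parity check, whose support $\{i\}\cup\Gamma_i$ has size at most $r+1$, and add that support to $T$. Forcing the coordinates of $\Gamma_i\setminus T$ to zero imposes at most $r$ linear conditions. The parity check then forces $c_i=0$ at no extra cost. So the dimension drops by at most $r$ while $|T|$ grows by at most $r+1$. If $T$ happens to be smaller, we pad it; this only helps.

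Puncturing $\mathcal{C}_T$ on $T$ gives a linear code of length $n'\le n-\tau(r+1)$. Its dimension is $k'\ge k-\tau r$, and its minimum distance is at least $d$, because puncturing only removes coordinates that are zero on $\mathcal{C}_T$. We may take $k'=k-\tau r$ exactly by passing to a subcode. When $\tau\le\lceil k/r\rceil-1$ we have $k-\tau r\ge 1$, so the code is nonzero.

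\textbf{Applying the classical bounds.}
\begin{itemize}
\item \emph{Griesmer.} Apply $n'\ge\sum_{i=0}^{k'-1}\lceil d/q^i\rceil$. Using $n\ge n'+\tau(r+1)$ and maximizing over $\tau$ gives \eqref{eq:griesmer}.
\item \emph{Plotkin.} Use the Plotkin bound for linear codes in average-weight form, $d\le \frac{q^{k'-1}(q-1)n'}{q^{k'}-1}$. Substituting $n'\le n-\tau(r+1)$ and minimizing over $\tau$ gives \eqref{eq:plotkin}.
\item \emph{Sphere-packing.} Use $q^{k'}V_q(n',\lfloor (d-1)/2\rfloor)\le q^{n'}$. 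Here we keep $n'=n-\tau(r+1)$ exactly and $k'=k-\tau r$, taking a subcode if necessary. This gives $k-\tau r\le n-\tau(r+1)-\log_q V$, which rearranges to $k\le n-\tau-\log_q V$. Maximizing over $\tau$ in the stated range gives \eqref{eq:sphere-packing}.
\end{itemize}

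\textbf{Main obstacle.} The delicate part is the bookkeeping in the reduction. We must ensure that each new local group contributes a genuinely fresh coordinate (the choice of $i\notin T$ with $\mathcal{C}_T$ nonzero there), and we must handle the case where fewer than $r+1$ new coordinates are added. In that case we argue monotonicity: each classical bound is monotone in length, so a shorter $n'$ only strengthens the inequality. Otherwise, for the sphere-packing bound, we adjust by padding $T$ with arbitrary coordinates, which costs at most one dimension per added coordinate and so preserves $k'\ge k-\tau r-(\tau(r+1)-|T|)$ consistently. This is essentially the Cadambe--Mazumdar argument behind \eqref{CM-Bound}. Indeed, all three bounds follow at once from \eqref{CM-Bound} by replacing $k_{\mathrm{opt}}^{(q)}(n',d)$ with the respective Griesmer, Plotkin, or Hamming upper bound, which is the route I would actually write down.
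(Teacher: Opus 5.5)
The route you say you would actually write down, bounding $k_{\mathrm{opt}}^{(q)}$ in \eqref{CM-Bound} by the Griesmer, Plotkin, or Hamming bound, is correct. It is exactly how the paper accounts for this lemma: the paper gives no proof and attributes the bounds to Hao \emph{et al.} and Li \emph{et al.}, who plug classical bounds into \eqref{CM-Bound}.

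The self-contained reduction you sketch, however, does not close as written. With $|T|\le\tau(r+1)$ the punctured length is $n-|T|\ge n-\tau(r+1)$, not $\le$. Neither of your repairs works:
\begin{itemize}
\item A longer code yields a \emph{weaker} Griesmer, Plotkin, or Hamming conclusion, so your monotonicity argument runs the wrong way.
\item Padding $T$ by $e=\tau(r+1)-|T|$ coordinates can cost $e$ dimensions, leaving only $k-\tau r-e$, as your own last paragraph computes.
\end{itemize}
The invariant $\dim\mathcal{C}_T\ge k-\tau r$ is too weak. Track instead $|T|-\operatorname{rank}(G_T)\ge\tau$, where $G_T$ is the restriction of a generator matrix to the columns in $T$, so that $\dim\mathcal{C}_T=k-\operatorname{rank}(G_T)$.

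Your inductive step already gives this if you count $|\Gamma_i\setminus T|$ rather than $r$. The step adds $1+|\Gamma_i\setminus T|$ new coordinates but raises the rank by at most $|\Gamma_i\setminus T|$, because column $i$ of $G$ lies in the span of the columns indexed by $\Gamma_i$. Hence $|T|-\operatorname{rank}(G_T)$ grows by at least one per step, and padding preserves it. You can pad to exactly $\tau(r+1)$ coordinates, since $\tau(r+1)\le n-1$ in both ranges (for the Griesmer/Plotkin range this follows from \eqref{cLRC Singleton Bound} with $d\ge 1$). After padding you get $|T|=\tau(r+1)$ and $\dim\mathcal{C}_T\ge k-\tau r$ simultaneously.

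Two smaller points remain.
\begin{itemize}
\item The Griesmer bound holds only for linear codes, so you need \eqref{CM-Bound} with $k_{\mathrm{opt}}^{(q)}$ taken over linear codes. The shortening argument supplies this because $\mathcal{C}$ is linear; the Plotkin and Hamming bounds need no linearity.
\item The sphere-packing range $\tau\le\lfloor (n-1)/(r+1)\rfloor$ may contain $\tau$ with $\tau r\ge k$. There your ``subcode of dimension $k-\tau r$'' makes no sense. For such $\tau$ the claimed inequality $k\le \tau r+(n-\tau(r+1))-\log_q V_q(n-\tau(r+1),t)$ is trivial, since $\tau r\ge k$ and $V_q(M,t)\le q^M$.
\end{itemize}
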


\subsection{Entanglement-Assisted Quantum Codes}
Let $\mathbb{C}$ denote the field of complex numbers and $\mathcal{H}=\mathbb{C}^q$ a $q$-dimensional Hilbert space over $\mathbb{C}$. An $n$-qudit quantum system is represented by a nonzero vector in the tensor product space $\mathcal{H}=(\mathbb{C}^q)^{\otimes n} \cong \mathbb{C}^{q^n}$. For $x=(x_1,\ldots,x_n) \in \mathbb{F}_q^n$, the standard basis of $(\mathbb{C}^q)^{\otimes n}$ over $\mathbb{C}$ is given by $\{\, |x\rangle := |x_1\rangle \otimes \cdots \otimes |x_n\rangle : x \in \mathbb{F}_q^n \,\}$. Any quantum state in $(\mathbb{C}^q)^{\otimes n}$ can be expressed as $|\psi\rangle = \sum_{x \in \mathbb{F}_q^n} u_x |x\rangle$, where the coefficients $u_x \in \mathbb{C}$ satisfy $\sum_{x \in \mathbb{F}_q^n} |u_x|^2 = 1$. Let $u=(u_1,\ldots,u_n)$ and $v=(v_1,\ldots,v_n)$ be elements of $\mathbb{F}_q^n$.  The operators $X(u)$ and $Z(v)$ act on a basis state $|x\rangle$ as
\[
X(u)|x\rangle = |x+u\rangle, \qquad
Z(v)|x\rangle = \omega^{\,\mathrm{tr}(\langle v,x\rangle)} |x\rangle,
\]
 The quantum error operators acting on $(\mathbb{C}^q)^{\otimes n}$ form the generalized Pauli group $\mathcal{G}_n = \{\, \omega^{\,j} X(u)Z(v) : u,v \in \mathbb{F}_q^n,\; j \in \mathbb{F}_p \,\}$. For any operator $E=\omega^{\,j}X(u)Z(v) \in \mathcal{G}_n$, the quantum support is defined as $\mathrm{supp}(E) = \{\, i \in [n] : u_i \neq 0 \text{ or } v_i \neq 0 \,\}$.

We briefly recall the notation for density operators and quantum channels. Let $\mathcal{H}_{\Omega}=(\mathbb{C}^{q})^{\otimes |\Omega|}$ denote the Hilbert space associated with a finite set $\Omega$ of qudits, and  ${D}(\Omega)$ denote the set of density operators acting on $\mathcal{H}_{\Omega}$. A density operator $\rho\in{D}(\Omega)$ is a positive semidefinite Hermitian operator on $\mathcal{H}_{\Omega}$ satisfying $\operatorname{Tr}(\rho)=1$. For a pure state $|\psi\rangle\in\mathcal{H}_{\Omega}$, the corresponding density operator is $\rho=|\psi\rangle\langle\psi|$. If $A\subseteq\Omega$, the reduced density operator on the subsystem $A$ is obtained by taking the partial trace over its complement, $\rho_{A}=\operatorname{Tr}_{\Omega\setminus A}(\rho)\in{D}(A).$

A quantum channel from the subsystem $\Omega$ to another subsystem $\Lambda$ is a completely positive trace-preserving (CPTP) map $\mathcal{N}:{D}(\Omega)\longrightarrow{D}(\Lambda)$. Equivalently, every quantum channel admits a Kraus representation $\mathcal{N}(\rho)=\sum_k K_k \rho K_k^\dagger,$ where the Kraus operators $K_{k}:\mathcal{H}_{\Omega}\longrightarrow\mathcal{H}_{\Lambda}$
satisfy the completeness relation $\sum_k K_k^\dagger K_k=I.$

A quantum code with subspace $\mathcal Q\in\mathcal{H}_{\Omega}$ is correctable against a noise channel $\mathcal{N}$ if there exists a CPTP recovery map $\mathcal{R}$ such that $\mathcal{R} \circ \mathcal{N}$ acts as the identity on the code subspace, i.e. $ \mathcal{R}\circ \mathcal{N}(\rho )=\rho \; \text{for every } \rho \in D(\Omega)$. The necessary and sufficient conditions for the existence of such a recovery map $\mathcal{R}$ are given by the Knill-Laflamme conditions \cite{knill2000theory} adapted to the entanglement-assisted framework. Let $\mathcal{H}_{\mathcal{E}}$ denote the Hilbert space of the encoder's $n$ physical qudits and $\mathcal{H}_{\mathcal{D}}$ denote the Hilbert space of the decoder's halves of the $c$ shared ebits. The errors $\mathcal E \subset \mathcal{G}_n$ act locally on the encoder's system, while the decoder's system remains noiseless. The entanglement-assisted quantum code subspace $\mathcal Q_E \subset \mathcal{H}_{\mathcal{E}} \otimes \mathcal{H}_{\mathcal{D}}$ with projector $P$ can correct the set of errors $E$ if and only if
\[P (E_{a} \otimes I_{\mathcal{D}})^{\dagger }(E_{b} \otimes I_{\mathcal{D}}) P = \alpha_{ab} P, \qquad \forall\; E_{a}, E_{b} \in  E,
\]
where $I_{\mathcal{D}}$ is the identity operator on the decoder's qudits, and $\alpha=(\alpha_{ab})$ is a complex Hermitian matrix, i.e. $\alpha_{ab} = \alpha_{ba}^*$.

A general (non-entanglement-assisted) quantum error-correcting code $\mathcal{Q}$ is a $q^\kappa$-dimensional subspace of the $q^n$-dimensional complex Hilbert space $\mathbb{C}^{q^n}$. It encodes $\kappa$ logical qudits into $n$ physical qudits and is denoted by an {\qlb}$n,\kappa${\qrb}$_q$ code. The encoding operation follows two steps. First, one appends $n-\kappa$ ancilla qudits, typically $|0\rangle^{\otimes (n-\kappa)}$, to the state $|\psi\rangle$ of $\kappa$ information qudits to be sent or stored. Then an encoding unitary operator $U_{\mathrm{enc}}$ is applied:
\[
|\psi\rangle \mapsto |\psi\rangle \otimes |0\rangle^{\otimes (n-\kappa)} \mapsto |\Psi_L\rangle= U_{\mathrm{enc}}\big( |\psi\rangle \otimes |0\rangle^{\otimes (n-\kappa)}\big),
\]
where $|\Psi_L\rangle$ is called the encoded state, or logical state.
If $\mathcal{Q}$ is a stabilizer code, its code subspace is defined as the common $+1$-eigenspace of the stabilizer group $\mathcal{S}$. Let $\mathcal{S} = \langle S_1, S_2, \ldots, S_{n-\kappa} \rangle \subset \mathcal{G}_n$ be an abelian subgroup generated by $n-\kappa$ independent and commuting Pauli operators. The corresponding stabilizer code space is given by
\[
\mathcal{Q}_{\mathcal{S}} = \left\{ |\psi\rangle \in \mathbb{C}^{q^n} \;\middle|\; S_i|\psi\rangle = |\psi\rangle, \, \forall i \in \{1, 2, \ldots, n-\kappa\} \right\}.
\]

Entanglement-assisted quantum error-correcting codes (EAQECCs) extend this framework by allowing the use of pre-shared entanglement between the encoder ($\mathcal{E}$) and the decoder ($\mathcal{D}$). A quantum code $\mathcal{Q}_E$ is called an {\qlb}$n,\kappa,\delta;c${\qrb}$_q$  entanglement-assisted quantum error-correcting code (EAQECC) if it encodes $\kappa$ logical qudits into $n$ physical qudits with the assistance of $c$ copies of a maximally entangled state, and has minimum distance $\delta$, meaning it can correct up to $\left\lfloor(\delta-1)/2\right\rfloor$ arbitrary quantum errors. The logical state is given by
\[
|\psi\rangle \otimes |0\rangle^{\otimes (n-\kappa-c)} \otimes |\Phi\rangle^{\otimes c}_{\mathcal{ED}}
\;\longmapsto\;
|\Psi_L\rangle= (U_{\mathrm{enc}} \otimes I_\mathcal{D})\Big(|\psi\rangle \otimes |0\rangle^{\otimes (n-\kappa-c)} \otimes |\Phi\rangle^{\otimes c}_{\mathcal{ED}}\Big).
\]
The states $|\Phi\rangle_{\mathcal{ED}}$ represent maximally entangled Einstein–Podolsky–Rosen (EPR) \cite{einstein1935can} pairs shared between the encoder and decoder, where $I_{\mathcal{D}}$ denotes the identity operator acting on the decoder's halves of the shared entangled pairs. We assume that noise affects only the first $n$ qudits transmitted or stored through the noisy channel $\mathcal{N}$, while the decoder's half of the shared $c$ maximally entangled pairs remains unaffected. When an error or erasure occurs in the transmitted or stored $n$ qudits, the decoder performs a syndrome measurement on them together with his half of the $c$ maximally entangled qudits to correct the error, as illustrated in Figure~\ref{fig:1}.

In the stabilizer formalism for EAQECCs, every element of the generalized Pauli group $\mathcal{G}_{n+c}$ can be written as $E=E_{\mathcal E}\otimes E_{\mathcal D}$, where $E_{\mathcal E}\in\mathcal{G}_n$ acts on the encoder's $n$ physical qudits and $E_{\mathcal D}\in\mathcal{G}_c$ acts on the decoder's $c$ halves of the shared maximally entangled pairs. To construct the entanglement-assisted stabilizer code, we begin with a set of independent generators and partition them into symplectic and isotropic subsets. Let $X_i^{\mathcal D}$ and $Z_i^{\mathcal D}$ denote the generalized Pauli operators acting on the $i$-th decoder qudit. The corresponding extended stabilizer generators are defined by
$g'_i=g_i^{\mathcal E}\otimes Z_i^{\mathcal D}, \;\text{for}\;1\le i\le c;\;
h'_i=h_i^{\mathcal E}\otimes X_i^{\mathcal D},\;\text{for}\; 1\le i\le c;\;
g'_j =g_j^{\mathcal E}\otimes I_{\mathcal D},\; \text{for}\;c+1\le j\le n-\kappa,$
where $g_i^{\mathcal{E}}, h_i^{\mathcal{E}}, g_j^{\mathcal{E}} \in \mathcal{G}_n$ and $Z^{\mathcal{D}}_{i},X^{\mathcal{D}}_{i}\in\mathcal{G}_c$. These operators satisfy the following commutation relations for all distinct indices $i \neq j$:
$[g_i^{\mathcal E},g_j^{\mathcal E}]=[h_i^{\mathcal E},h_j^{\mathcal E}]=[g_i^{\mathcal E},h_j^{\mathcal E}]=0$.
However, the pairs $\{g_i^{\mathcal{E}}, h_i^{\mathcal{E}}\}$ form symplectic partners that do not commute. Using these extended generators, we define the symplectic subgroup $\mathcal{S}_S$ and the isotropic subgroup $\mathcal{S}_I$ as follows $\mathcal{S}_S = \langle g'_1, \dots, g'_c, h'_1, \dots, h'_c \rangle,\;\mathcal{S}_I = \langle g'_{c+1}, \dots, g'_{n-\kappa} \rangle.$ The stabilizer group $\mathcal{S}_E \subset \mathcal{G}_{n+c}$ for the entanglement-assisted code is
\begin{equation}\label{eq:EAQECC-stabilizer}
    \mathcal{S}_E = \mathcal{S}_S \times \mathcal{S}_I=\langle g'_1, \dots, g'_{n-\kappa}, h'_1, \dots, h'_c \rangle.
\end{equation}
Finally, the code subspace $\mathcal{Q}_{\mathcal{S}_E}$ is defined as the simultaneous $+1$-eigenspace of this stabilizer group:
\[
\mathcal{Q}_{\mathcal{S}_E} = \left\{ |\psi\rangle \in(\mathbb{C}^q)^{\otimes(n+c)} \;\big\vert\; S |\psi\rangle = |\psi\rangle, \; \forall S \in \mathcal{S}_E \right\}.
\]

\begin{figure}[ht!]
    \centering
    \includegraphics[width=0.8\linewidth]{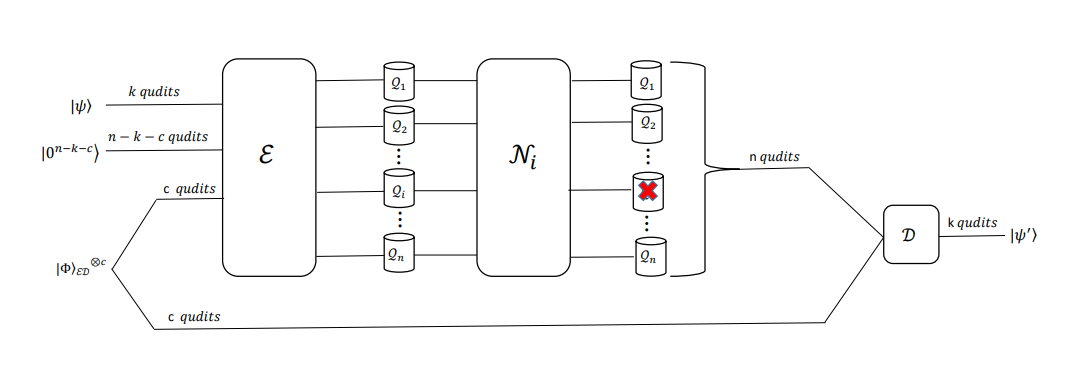}
    \caption{An entanglement-assisted quantum distributed storage system. A quantum state $|\psi\rangle$ is encoded into $n$ quantum storage systems $Q_1,Q_2,\ldots ,Q_n$ by an encoder with entanglement assistance. If any one of these $n$ systems is erased by a noisy channel, a decoder $\mathcal{D}$ must recover the quantum information from the remaining (unerased) storage systems and the entanglement assistance. In particular, for EA-qLRCs, any one of these $n$ systems erased by a channel can be recovered by accessing a small subset of $r$ surviving nodes together with the decoder's $\mathcal{D}$ share of the entangled resource.}
    \label{fig:1}
\end{figure}

The fundamental performance bound for entanglement-assisted quantum codes, analogous to the classical Singleton bound, takes the following form.
\begin{lemma}\cite[Theorem 6]{lai2017linear} \label{lem:EA-singleton}
Let $\mathcal{Q}_E$ be an {\qlb}$n,\kappa,\delta;c${\qrb}$_q$ entanglement-assisted quantum code. If $\delta \le \frac{n+2}{2}$, then
\[
2\delta \leq n-\kappa+c+2.
\]
\end{lemma}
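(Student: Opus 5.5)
The plan is to adapt the entropic proof of the quantum Singleton bound to the entanglement-assisted setting. The idea is to treat the decoder's $c$ halves of the ebits as a noiseless register $D$ that is always available to the decoder.

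\textbf{Setup.} Let $R$ be a reference system of $\kappa$ qudits maximally entangled with the logical input. After encoding, the global state on $R\,\mathcal{E}\,D$ is pure, where $\mathcal{E}$ denotes the $n$ physical qudits. Measure entropies in base $q$, so that $S(R)=\kappa$. The hypothesis $\delta\le \frac{n+2}{2}$ means $2(\delta-1)\le n$. So I can partition the $n$ physical qudits into three blocks:
\begin{itemize}
\item $A$, with $|A|=\delta-1$;
\item $B$, with $|B|=\delta-1$;
\item $C$, with $|C|=n-2\delta+2\ge 0$.
\end{itemize}
This is the only place the hypothesis is used.

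\textbf{Decoupling from erasure correction.} A code of distance $\delta$ corrects any erasure of $\delta-1$ physical qudits, with the decoder holding the rest of $\mathcal{E}$ together with $D$. I will justify this from the entanglement-assisted Knill--Laflamme conditions stated above: every error supported on $A$ satisfies $P(E_a\otimes I_{\mathcal D})^\dagger(E_b\otimes I_{\mathcal D})P=\alpha_{ab}P$. The standard information-theoretic consequence of erasure correctability is that the erased subsystem is decoupled from the reference. Hence
\[
S(RA)=S(R)+S(A), \qquad S(RB)=S(R)+S(B).
\]

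\textbf{Entropy inequalities.} Because the state on $RABCD$ is pure, $S(RA)=S(BCD)$ and $S(RB)=S(ACD)$. Subadditivity then gives
\[
\kappa+S(A)\le S(B)+S(CD), \qquad \kappa+S(B)\le S(A)+S(CD).
\]
Adding the two inequalities cancels $S(A)$ and $S(B)$, leaving $\kappa\le S(CD)$. Since entropy is at most the log-dimension, $S(CD)\le |C|+|D|=n-2\delta+2+c$. Therefore $\kappa\le n-2\delta+2+c$, which rearranges to $2\delta\le n-\kappa+c+2$.

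\textbf{Main obstacle.} The delicate step is the decoupling claim. I must make sure that recoverability uses $D$ as side information but never acts on it. This is exactly why $D$ appears together with $C$ on the complementary side and contributes the $+c$ term. I expect the step to go through by the usual argument that a correctable erasure of $A$ forces $\rho_{RA}=\rho_R\otimes\rho_A$, applied with $BCD$ as the decoder's system.
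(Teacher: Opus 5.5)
The paper gives no proof of this lemma. It is imported from Lai--Ashikhmin \cite[Theorem 6]{lai2017linear} and used only as a reference point.

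Your argument is correct and self-contained. It is the standard entropic proof of the quantum Singleton bound, with the noiseless register $D$ placed on the side $C$ opposite the erased blocks. It uses the hypothesis $\delta\le\frac{n+2}{2}$ exactly where it is needed: to fit two disjoint erasable blocks $A,B$ of size $\delta-1$ into the $n$ physical qudits. Compared with a bare citation, your route buys three things:
\begin{itemize}
\item It shows why the hypothesis appears. For larger $\delta$ the blocks cannot be chosen disjoint, and the unrestricted inequality is in fact known to fail for some codes there.
\item It shows that the $+c$ term is nothing more than $\log_q\dim\mathcal{H}_{\mathcal D}$.
\item It applies to arbitrary EA codes, with $\kappa=\log_q K$, not just stabilizer codes.
\end{itemize}

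Two points should be made explicit.

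First, you need the standard convention that distance $\delta$ means every encoder-side error $E$ of weight at most $\delta-1$ is detectable, i.e.\ $P(E\otimes I_{\mathcal D})P=\lambda(E)P$. The paper's informal gloss (``corrects $\lfloor(\delta-1)/2\rfloor$ errors'') is weaker when $\delta$ is even. In that case it only guarantees correction of $\delta-2$ erasures, not the $\delta-1$ you use.

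Second, the decoupling step you flag as the main obstacle is a one-line computation. Write the purified code state as $K^{-1/2}\sum_i|i\rangle_R|\psi_i\rangle_{\mathcal{E}\mathcal{D}}$ with $K=q^\kappa$. Detectability, extended by linearity to every operator $O_A$ on $A$, gives $\langle\psi_j|O_A\otimes I|\psi_i\rangle=\lambda(O_A)\delta_{ij}$. Hence $\operatorname{Tr}[\rho_{RA}(X_R\otimes O_A)]=\operatorname{Tr}(\rho_R X_R)\,\lambda(O_A)$ for all $X_R$, i.e.\ $\rho_{RA}=\rho_R\otimes\sigma_A$, where $\sigma_A$ is the common reduced state of code states on $A$.

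Your remark that recovery must ``never act on'' $D$ is slightly off. What matters is that the noise never touches $D$, which is why the conditions are imposed only on operators $E\otimes I_{\mathcal D}$. The recovery map may act on $D$ freely, and $D$ sits on the decoder's side $BCD$ (or $ACD$) throughout.
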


A code achieving this bound with equality is called an EA-MDS code. Note that when $c = 0$, this reduces to the standard quantum Singleton bound $2\delta \leq n-\kappa+2$ for quantum codes.

A general framework for constructing EAQECCs from stabilizer generators that do not necessarily commute was developed for the qubit case by Brun \textit{et al.} \cite{brun2014catalytic}, establishing a correspondence between arbitrary classical codes over $\mathbb{F}_4$ (equivalently, pairs of binary codes over $\mathbb{F}_2$) and entanglement-assisted quantum codes. This approach has been further generalized to higher-dimensional systems (qudits) with $q>2$ in Galindo \textit{et al.} \cite{galindo2019entanglement}. We have the following construction.

\begin{theorem}(\cite{galindo2019entanglement}, Theorem~4)\label{PC-EAQEC}
Let $\mathcal{C}_i$ be an $[n,k_i,d_i]_q$ linear code over $\mathbb{F}_q$ with parity-check matrix $H_i$, for $i=1,2$. Then there exists an EAQEC code with parameters {\qlb}$n,\, k_1+k_2-n+c,\,\delta;\, c${\qrb}$_q$, where $\delta=\min\{\mathrm{wt}(\mathcal{C}_1\setminus(\mathcal{C}_1\cap\mathcal{C}_2^\perp)),\,\mathrm{wt}(\mathcal{C}_2\setminus(\mathcal{C}_1^\perp\cap\mathcal{C}_2))\}$ and $c=\operatorname{rank}(H_1H_2^{T})=n-k_1-\dim_{\mathbb{F}_q}(\mathcal{C}_1^\perp\cap\mathcal{C}_2)$ is the minimum number of required maximally entangled pairs.
\end{theorem}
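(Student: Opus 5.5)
The plan is to follow the symplectic/stabilizer route of Brun--Devetak--Hsieh and Galindo et al. The goal is to build the EA stabilizer group from $H_1$ and $H_2$, count the ebits via a symplectic Gram--Schmidt decomposition, and read off the distance from the normalizer.

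\emph{Step 1: the non-commuting group.} Form the $(2n-k_1-k_2)\times 2n$ symplectic matrix
\[
\begin{pmatrix} H_1 & 0\\ 0 & H_2\end{pmatrix},
\]
whose rows give Pauli operators $X(h)$ for $h$ in the row space of $H_1$ (that is, $\mathcal{C}_1^\perp$) and $Z(h')$ for $h'$ in $\mathcal{C}_2^\perp$. This is the standard CSS choice; the exact pairing of $X$/$Z$ with $\mathcal{C}_1$/$\mathcal{C}_2$ should be matched to the distance formula in the last step. The symplectic Gram matrix of these rows is
\[
\begin{pmatrix} 0 & H_1H_2^T\\ -H_2H_1^T & 0\end{pmatrix},
\]
which has rank $2\operatorname{rank}(H_1H_2^T)$.

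\emph{Step 2: symplectic normal form.} By the symplectic Gram--Schmidt lemma (valid over $\mathbb{F}_q$ with the trace form), choose a basis of the row space consisting of $c=\operatorname{rank}(H_1H_2^T)$ hyperbolic pairs $(g_i,h_i)$ together with $n-k_1+n-k_2-2c$ isotropic generators that commute with everything. Extend each hyperbolic pair by $Z_i^{\mathcal D}$ and $X_i^{\mathcal D}$ on the decoder side, exactly as in \eqref{eq:EAQECC-stabilizer}. The resulting abelian group on $n+c$ qudits has $2n-k_1-k_2$ independent generators. Hence the code dimension is $q^{\,n+c-(2n-k_1-k_2)}$, which gives $\kappa=k_1+k_2-n+c$.

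For the formula $c=n-k_1-\dim(\mathcal{C}_1^\perp\cap\mathcal{C}_2)$, use the rank--nullity argument. The map $x\mapsto xH_1H_2^T$ on $\mathbb{F}_q^{n-k_1}$ composed with $x\mapsto xH_1\in\mathcal{C}_1^\perp$ is injective. The kernel of $xH_1\mapsto xH_1H_2^T$ is $\mathcal{C}_1^\perp\cap\ker(H_2)=\mathcal{C}_1^\perp\cap\mathcal{C}_2$, so $\operatorname{rank}=n-k_1-\dim(\mathcal{C}_1^\perp\cap\mathcal{C}_2)$.

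\emph{Step 3: distance.} Since the decoder qudits are noiseless, an error $E\otimes I_{\mathcal D}$ is undetectable and harmful if and only if it commutes with every extended generator but is not in the group. Commuting with $g_i\otimes Z_i^{\mathcal D}$ and with $h_i\otimes X_i^{\mathcal D}$ while acting trivially on $\mathcal{D}$ forces $E$ to commute with all of the original (non-extended) rows. For $E=X(a)Z(b)$ this means $b\in\mathcal{C}_1$ and $a\in\mathcal{C}_2$, up to the chosen convention.

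Next determine the elements of the stabilizer with trivial decoder part. These are exactly the products of isotropic generators, which correspond to the symplectic radical: the $X$-parts in $\mathcal{C}_1^\perp\cap\mathcal{C}_2$ and the $Z$-parts in $\mathcal{C}_2^\perp\cap\mathcal{C}_1$. Thus a harmful error has either $b\in\mathcal{C}_1\setminus(\mathcal{C}_1\cap\mathcal{C}_2^\perp)$ or $a\in\mathcal{C}_2\setminus(\mathcal{C}_1^\perp\cap\mathcal{C}_2)$. Its weight is at least $\max(\mathrm{wt}(a),\mathrm{wt}(b))$, which yields the stated $\delta$; a pure $X$ or pure $Z$ error attains it.

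\emph{Main obstacle.} The delicate point is the identification of the radical, that is, proving that a stabilizer element with trivial decoder part has encoder part exactly in $(\mathcal{C}_1^\perp\cap\mathcal{C}_2)\times(\mathcal{C}_2^\perp\cap\mathcal{C}_1)$ and nothing more. I would handle this by writing the radical of the symplectic form restricted to $\mathcal{C}_1^\perp\times\mathcal{C}_2^\perp$ directly. Since the form decouples into $X$ and $Z$ blocks, the radical is the product of the left and right kernels of the pairing $\mathcal{C}_1^\perp\times\mathcal{C}_2^\perp\to\mathbb{F}_q$, which are $\mathcal{C}_1^\perp\cap\mathcal{C}_2$ and $\mathcal{C}_2^\perp\cap\mathcal{C}_1$. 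This also double-checks that the count $c$ is consistent with both kernels, since the pairing has equal left and right rank.
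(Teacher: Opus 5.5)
The paper does not prove this theorem; it quotes it from Galindo et al. Your argument is the standard symplectic one used there:
\begin{itemize}
\item regard the rows as the code $V=(\mathcal{C}_1^\perp\,|\,0)\oplus(0\,|\,\mathcal{C}_2^\perp)\subseteq\mathbb{F}_q^{2n}$;
\item its symplectic dual is $(\mathcal{C}_2\,|\,\mathcal{C}_1)$ and its radical is $(\mathcal{C}_1^\perp\cap\mathcal{C}_2\,|\,\mathcal{C}_1\cap\mathcal{C}_2^\perp)$;
\item extend the $c$ hyperbolic pairs by decoder operators, and read $\delta$ off the dual minus the radical.
\end{itemize}
Your dimension count, the rank--nullity identity for $c$, and the identification of decoder-trivial stabilizer elements with the radical are correct. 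So is the weight argument, including attainment by a pure $X$ or pure $Z$ operator. The $X/Z$ convention you worried about is immaterial, because $\kappa$, $c$ and the two-term minimum defining $\delta$ are all unchanged when the roles are swapped.

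The one clause you leave unproved is that $c$ is the \emph{minimum} number of ebits. Your construction only shows that $c$ ebits suffice, and this optimality is the substantive part of the cited result. A short extra step closes it:
\begin{itemize}
\item Suppose some abelian extension of the same encoder generators uses $c'$ ebits. The decoder parts define an additive ($\mathbb{F}_p$-linear) map $\phi\colon V\to\mathbb{F}_q^{2c'}$ with $\operatorname{tr}\langle\phi(v),\phi(w)\rangle_s=-\operatorname{tr}\langle v,w\rangle_s$.
\item Hence $\ker\phi$ lies in the radical of the trace form on $V$.
\item Because $V$ is $\mathbb{F}_q$-linear, that radical coincides with the $\mathbb{F}_q$-radical. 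So for $q=p^m$ the trace form on $V$ has $\mathbb{F}_p$-rank $2mc$.
\item Therefore $2mc\le\dim_{\mathbb{F}_p}V-\dim_{\mathbb{F}_p}\ker\phi=\dim_{\mathbb{F}_p}\phi(V)\le 2mc'$, i.e.\ $c'\ge c$.
\end{itemize}
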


In particular, if $\mathrm{wt}(\mathcal{C}_1\setminus(\mathcal{C}_1\cap\mathcal{C}_2^\perp))=\mathrm{wt}(\mathcal{C}_1)=d_1$ and, $\mathrm{wt}(\mathcal{C}_2\setminus(\mathcal{C}_1^\perp\cap\mathcal{C}_2))=\mathrm{wt}(\mathcal{C}_2)=d_2$ , then the resulting EAQECC is said to be pure (or nondegenerate) if $\delta=\min\{d_1,d_2\}$ and impure (or degenerate) otherwise, i.e. , if $\delta >\min\{d_1,d_2\}$.

\section {Entanglement-Assisted Quantum Locally Recoverable Codes}\label{EA-qLRC}
The notion of locality has been well established for both classical codes and \emph{CSS} quantum codes. We now generalize this concept to the entanglement-assisted framework by introducing EA-qLRCs and examining the role of pre-shared entanglement in enabling local recovery.
\begin{definition}\label{EAqLRC}
An entanglement-assisted quantum code $\mathcal{Q}_E$
is called an EA-qLRC with \emph{locality} $r$ if the following conditions are satisfied.
\begin{enumerate}
\item[(i)] For every coordinate $i\in[n]$, there exists a recovery set $\Gamma_i\subseteq[n]$ with $i\in\Gamma_i$ and 
$|\Gamma_i|\le r+1,$
and a CPTP recovery channel
\[\mathcal R_i:
\mathsf D\!\left(
\mathcal H_{\Gamma_i\setminus\{i\}}
\otimes
\mathcal H_{\mathcal D}
\right)
\longrightarrow
\mathsf D\!\left(
\mathcal H_{\Gamma_i}
\right),\]
where $\mathcal H_{\mathcal D}$ denotes the decoder's share of the pre-shared maximally entangled state.
\item[(ii)] For every code state $\rho\in\mathcal Q_E$,
the erased qudit at coordinate $i$ can be recovered by applying $\mathcal R_i$ 
\[
\left(
\mathcal{R}_i
\otimes
I_{[n]\setminus\Gamma_i}
\right)
\left(
\rho_{([n]\setminus\{i\})}
\right)
=
\rho.
\]
\end{enumerate}
\end{definition}

The stabilizer construction in \eqref{eq:EAQECC-stabilizer} naturally extends to EA-qLRCs. We next derive a sufficient stabilizer criterion for EA-qLRCs by incorporating the locality constraints into the entanglement-assisted stabilizer formalism.

\begin{theorem}\label{Stb_EAqLRC}
Let $\mathcal{Q}_{\mathcal{S}_E}\subseteq(\mathbb{C}^q)^{\otimes(n+c)}$ be an
entanglement-assisted quantum stabilizer code with stabilizer group $\mathcal{S}_E$.
Suppose that for every encoder coordinate $i\in[n]$, there exist two stabilizer generators
\[
E_1 = E_{\mathcal{E}} \otimes E_{\mathcal{D}}, \qquad E_2 = E'_{\mathcal{E}} \otimes E'_{\mathcal{D}} \;\in\; \mathcal{S}_E,
\]
where $E_{\mathcal{E}} = X^\alpha Z^\beta$ and $E'_{\mathcal{E}} = X^{\alpha'} Z^{\beta'}$
satisfy $(\alpha_i,\beta_i)=(1,0)$ and $(\alpha_i',\beta_i')=(0,1)$. If the associated
recovery set $\Gamma_i = \operatorname{supp}(E_{\mathcal{E}})\cup\operatorname{supp}(E'_{\mathcal{E}})$
satisfies $|\Gamma_i|\le r+1$, then qudit $i$ can be reconstructed, following its erasure,
using only the surviving qudits in $\Gamma_i\setminus\{i\}$ together with the decoder's
shared ebits. Consequently, $\mathcal{Q}_{\mathcal{S}_E}$ is an EA-qLRC with locality $r$.
\end{theorem}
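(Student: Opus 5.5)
Our plan is to reduce local recovery to the Knill--Laflamme erasure condition on a small subsystem, as in the CSS qLRC arguments of Golowich--Guruswami, adapted to the entanglement-assisted setting. Fix $i\in[n]$ and let $E_1,E_2$ be the two generators in the hypothesis. By definition $\Gamma_i=\operatorname{supp}(E_{\mathcal E})\cup\operatorname{supp}(E'_{\mathcal E})$, so both $E_1$ and $E_2$ act trivially on encoder qudits outside $\Gamma_i$. We therefore regard them as stabilizers of the code restricted to the system $\mathcal H_{\Gamma_i}\otimes\mathcal H_{\mathcal D}$. The decoder's ebits are noiseless, so this subsystem is available to the recovery channel except for qudit $i$. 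The goal is to show that the single-qudit erasure at $i$ is correctable using only $\Gamma_i\setminus\{i\}$ and $\mathcal D$. Then $\mathcal R_i$ in Definition~\ref{EAqLRC} can be taken to act only on $\mathcal H_{\Gamma_i\setminus\{i\}}\otimes\mathcal H_{\mathcal D}$, tensored with the identity elsewhere.

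The key step is to show that the error set $\{X^aZ^b \text{ on qudit } i\}$ is detectable via the syndromes of $E_1$ and $E_2$. Since $(\alpha_i,\beta_i)=(1,0)$, the generator $E_1$ has an $X$ on qudit $i$. Hence $Z^b_i$ acquires the phase $\omega^{\operatorname{tr}(\cdot b)}$ under conjugation by $E_1$, which is nontrivial for $b\neq0$ (choosing the scalar appropriately over $\mathbb F_q$, or using scalar multiples of $E_1$ in $\mathcal S_E$). Symmetrically, $E_2$ has a $Z$ on qudit $i$ and detects $X^a_i$ for $a\neq 0$. Thus every nontrivial Pauli on qudit $i$ anticommutes, in the generalized sense, with some element of $\langle E_1,E_2\rangle$. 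Consequently $P\,(F_a^\dagger F_b)\,P=\alpha_{ab}P$ for all single-qudit Paulis $F_a,F_b$ on qudit $i$ (tensored with $I_{\mathcal D}$), so the EA Knill--Laflamme condition holds for erasure of qudit $i$.

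To obtain locality rather than mere correctability, we will build the recovery explicitly. Replace the erased qudit $i$ by a fresh maximally mixed qudit; this is a uniform twirl over the Paulis $X^aZ^b$ on qudit $i$. Then measure the eigenvalues of $E_1$ and $E_2$. This measurement uses only qudits in $\Gamma_i$ together with the decoder qudits in $\operatorname{supp}(E_{\mathcal D})\cup\operatorname{supp}(E'_{\mathcal D})$. By the previous paragraph, the pair of syndromes uniquely determines $(a,b)$, so applying $(X^aZ^b)^{-1}$ on qudit $i$ restores $\rho$. All operations act on $\Gamma_i$ and $\mathcal H_{\mathcal D}$ only, which yields $\mathcal R_i$ with $|\Gamma_i|\le r+1$ as required.

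The main obstacle is the injectivity of the syndrome map $(a,b)\mapsto$ (phases of $E_1$, $E_2$) over $\mathbb F_q$ when $q$ is not prime, since the phases involve $\omega^{\operatorname{tr}(\cdot)}$. We would handle this by including the scalar multiples $X^{\lambda\alpha}Z^{\lambda\beta}\otimes(\cdots)$, $\lambda\in\mathbb F_q$, which lie in $\mathcal S_E$ because it is a group closed under $\mathbb F_q$-scaling in the standard qudit stabilizer formalism. Since the trace form is nondegenerate, measuring all of them determines $a$ and $b$ exactly. A secondary check is that the decoder factors $E_{\mathcal D},E'_{\mathcal D}$ cause no problem: they act on the noiseless share and commute with the error, so they do not affect the syndrome.
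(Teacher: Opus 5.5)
Your argument is correct, and it takes a different route from the paper's.

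The paper uses $E_1,E_2$ as \emph{cleaning} operators. From $E_1^a|\psi\rangle=|\psi\rangle$ and $E_2^b|\psi\rangle=|\psi\rangle$ it derives $X_i^aZ_i^b|\psi\rangle=\widehat{O}_{a,b}|\psi\rangle$, with $\widehat{O}_{a,b}$ supported on $\Gamma_i\setminus\{i\}$ and $\mathcal{D}$. It extends this by linearity to all of $\mathcal{B}(\mathcal{H}_i)$, then appeals to the Knill--Laflamme condition and to the abstract equivalence between that condition and the existence of a recovery map.

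You use the same two generators as \emph{detectors}. Their $X$ and $Z$ components at qudit $i$ make every nontrivial $X_i^aZ_i^b$ fail to commute with some element of $\mathcal{S}_E$. Hence $PFP=0$ for every nontrivial Pauli $F$ on qudit $i$, and the $\alpha_{ab}P$ form follows at once. You then write the recovery down explicitly: adjoin a maximally mixed qudit (a Pauli twirl at $i$), measure the commuting family $E_1^{(\lambda)},E_2^{(\mu)}$, and undo $X_i^aZ_i^b$.

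Your route adds rigor exactly where the paper is thin. The pushing identities alone only show that $\langle\psi|E_a^\dagger E_b|\phi\rangle$ can be evaluated away from qudit $i$. They do not show it is proportional to $\langle\psi|\phi\rangle$; your non-commutation phases supply that step. Likewise, the generic Knill--Laflamme equivalence returns a channel acting on all surviving systems. The confinement of $\mathcal{R}_i$ to $\Gamma_i\setminus\{i\}$ and $\mathcal{D}$ is asserted in the paper, but it is manifest in your construction. The paper's route, in turn, states the result as operator reconstruction: every operator on qudit $i$ has a code-equivalent representative on $\Gamma_i\setminus\{i\}$ and $\mathcal{D}$, which is the natural cleaning-lemma viewpoint.

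One small correction. Closure of $\mathcal{S}_E$ under $\mathbb{F}_q$-scaling does not follow from its being a group, since integer powers only give $\mathbb{F}_p$-multiples. It holds because the stabilizers of Proposition~\ref{CSS-EA-qLRC} come from $\mathbb{F}_q$-linear codes. The paper tacitly makes the same assumption when it writes $E_1^a$ for $a\in\mathbb{F}_q$.
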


\begin{proof}
Fix a coordinate $i\in[n]$, and let $E_1,E_2\in\mathcal{S}_E$ be as in the statement. Write
\[
E_1 = (X_i\otimes A)\otimes E_{\mathcal{D}}, \qquad E_2 = (Z_i\otimes B)\otimes E'_{\mathcal{D}},
\]
where $A=\bigotimes_{j\in\Gamma_i\setminus\{i\}}X_j^{\alpha_j}Z_j^{\beta_j}$ and
$B=\bigotimes_{j\in\Gamma_i\setminus\{i\}}X_j^{\alpha_j'}Z_j^{\beta_j'}$ act strictly on the
surviving encoder qudits in $\Gamma_i\setminus\{i\}$, and $E_{\mathcal{D}},E_{\mathcal{D}}'$
act on the decoder's ebits $\mathcal{H}_{\mathcal{D}}$; this decomposition follows from
$(\alpha_i,\beta_i)=(1,0)$, $(\alpha_i',\beta_i')=(0,1)$, and
$\operatorname{supp}(E_{\mathcal{E}})\cup\operatorname{supp}(E'_{\mathcal{E}})=\Gamma_i$.

Since $E_1\in\mathcal{S}_E$, every code state $|\psi\rangle\in\mathcal{Q}_{\mathcal{S}_E}$
satisfies $E_1|\psi\rangle=|\psi\rangle$, and hence $E_1^a|\psi\rangle=|\psi\rangle$ for every
$a\in\mathbb{F}_q$. Writing this out as
$(X_i^a\otimes A^a\otimes E_{\mathcal{D}}^a)|\psi\rangle=|\psi\rangle$ and rearranging gives
\begin{equation}\label{eq:push-X}
    (X_i^a\otimes I\otimes I)|\psi\rangle = (I\otimes A^{-a}\otimes E_{\mathcal{D}}^{-a})|\psi\rangle,
    \qquad \forall\,|\psi\rangle\in\mathcal{Q}_{\mathcal{S}_E}.
\end{equation}
By the identical argument applied to $E_2^b|\psi\rangle=|\psi\rangle$,
\begin{equation}\label{eq:push-Z}
    (Z_i^b\otimes I\otimes I)|\psi\rangle = (I\otimes B^{-b}\otimes (E_{\mathcal{D}}')^{-b})|\psi\rangle,
    \qquad \forall\,|\psi\rangle\in\mathcal{Q}_{\mathcal{S}_E}.
\end{equation}
For any $|\psi\rangle\in\mathcal{Q}_{\mathcal{S}_E}$ and any $a,b\in\mathbb{F}_q$,
\[
(X_i^aZ_i^b\otimes I\otimes I)|\psi\rangle
= (X_i^a\otimes I\otimes I)\bigl[(Z_i^b\otimes I\otimes I)|\psi\rangle\bigr]
\overset{\eqref{eq:push-Z}}{=} (X_i^a\otimes I\otimes I)(I\otimes B^{-b}\otimes(E_{\mathcal{D}}')^{-b})|\psi\rangle.
\]
Since $(X_i^a\otimes I\otimes I)$ acts only on qudit $i$ while
$(I\otimes B^{-b}\otimes(E_{\mathcal{D}}')^{-b})$ acts only on $\Gamma_i\setminus\{i\}$ and
$\mathcal{D}$, these operators act on disjoint tensor factors and therefore commute, so
\[
= (I\otimes B^{-b}\otimes(E_{\mathcal{D}}')^{-b})(X_i^a\otimes I\otimes I)|\psi\rangle
\overset{\eqref{eq:push-X}}{=} (I\otimes B^{-b}A^{-a}\otimes (E_{\mathcal{D}}')^{-b}E_{\mathcal{D}}^{-a})|\psi\rangle.
\]
Thus, for every $a,b\in\mathbb{F}_q$,
\begin{equation}\label{eq:push-general}
    (X_i^aZ_i^b\otimes I\otimes I)|\psi\rangle
    = (I\otimes \widehat{O}_{a,b}\otimes I)|\psi\rangle,
    \qquad \widehat{O}_{a,b} := B^{-b}A^{-a}\otimes (E_{\mathcal{D}}')^{-b}E_{\mathcal{D}}^{-a},
\end{equation}
where $\widehat{O}_{a,b}$ acts purely on $\mathcal{H}_{\Gamma_i\setminus\{i\}}\otimes\mathcal{H}_{\mathcal{D}}$.
Since $\{X_i^aZ_i^b:a,b\in\mathbb{F}_q\}$ is a basis for $\mathcal{B}(\mathcal{H}_i)$, any
operator $O_i=\sum_{a,b}c_{a,b}X_i^aZ_i^b\in\mathcal{B}(\mathcal{H}_i)$ satisfies, by
linearity of \eqref{eq:push-general},
\[
    (O_i\otimes I\otimes I)|\psi\rangle = (I\otimes\widehat{O}_i\otimes I)|\psi\rangle,
    \qquad \widehat{O}_i := \sum_{a,b}c_{a,b}\,\widehat{O}_{a,b},
\]
for every $|\psi\rangle\in\mathcal{Q}_{\mathcal{S}_E}$, where $\widehat{O}_i$ acts purely on
$\mathcal{H}_{\Gamma_i\setminus\{i\}}\otimes\mathcal{H}_{\mathcal{D}}$. Hence, the action of
the entire single-qudit operator algebra $\mathcal{B}(\mathcal{H}_i)$ on
$\mathcal{Q}_{\mathcal{S}_E}$ can be reproduced by operators acting only on
$\Gamma_i\setminus\{i\}$ and $\mathcal{D}$, without ever acting on $\mathcal{H}_i$ itself. In
particular, for any code states $|\psi\rangle,|\phi\rangle\in\mathcal{Q}_{\mathcal{S}_E}$ and
any error operators $E_a,E_b\in\{X_i^aZ_i^b\}\otimes I_{[n]\setminus\{i\}}$ supported solely
on qudit $i$,
\[
    \langle\psi|E_a^\dagger E_b|\phi\rangle
    = \langle\psi|(I\otimes\widehat{E}_a^\dagger\widehat{E}_b\otimes I)|\phi\rangle
\]
depends only on data available from $\Gamma_i\setminus\{i\}$ and $\mathcal{D}$; taking
$|\psi\rangle,|\phi\rangle$ over an orthonormal basis of $\mathcal{Q}_{\mathcal{S}_E}$
recovers the Knill--Laflamme condition $PE_a^\dagger E_bP=\alpha_{ab}P$ of Section~II applied
to the error set $\{X_i^aZ_i^b\}$, confirming that erasure of qudit $i$ is correctable. By
the standard equivalence between the Knill--Laflamme condition and the existence of a CPTP
recovery map, there exists
\[
    \mathcal{R}_i:\mathsf{D}\bigl(\mathcal{H}_{\Gamma_i\setminus\{i\}}\otimes\mathcal{H}_{\mathcal{D}}\bigr)
    \longrightarrow \mathsf{D}(\mathcal{H}_{\Gamma_i})
\]
such that $(\mathcal{R}_i\otimes I_{[n]\setminus\Gamma_i})(\operatorname{Tr}_i\rho)=\rho$ for
every code density operator $\rho$ on $\mathcal{Q}_{\mathcal{S}_E}$, where
$\operatorname{Tr}_i$ denotes erasure (partial trace) of qudit $i$; crucially, $\mathcal{R}_i$
is never given access to $\mathcal{H}_i$, consistent with Definition~\ref{EAqLRC}.

Finally, since $\alpha_i=1\ne0$ places $i\in\operatorname{supp}(E_{\mathcal{E}})\subseteq\Gamma_i$,
we have $i\in\Gamma_i$, and therefore
\[
    |\Gamma_i\setminus\{i\}| = |\Gamma_i|-1 \le (r+1)-1 = r,
\]
so the recovery channel accesses at most $r$ surviving encoder qudits, together with the
pre-shared entanglement. Since $i\in[n]$ was arbitrary, every coordinate admits such a
recovery set and channel, and hence $\mathcal{Q}_{\mathcal{S}_E}$ is an EA-qLRC with
locality $r$.
\end{proof}

\begin{remark}\label{rem:sufficiency-only}
Theorem~\ref{Stb_EAqLRC} is a \emph{sufficient} stabilizer criterion: it shows that the
existence of the two local generators $E_1,E_2$ for every coordinate $i$ implies
$\mathcal{Q}_{\mathcal{S}_E}$ is an EA-qLRC with locality $r$. We do not claim, and the proof
above does not establish, the converse implication --- that every EA-qLRC with locality $r$
(in the sense of Definition~\ref{EAqLRC}, i.e.\ admitting the required CPTP recovery channels)
must arise from stabilizer generators of exactly this local form. Proving such a converse for
general entanglement-assisted stabilizer codes would require relating the recovery-channel
definition to the Knill--Laflamme condition on the full single-qudit error set $\{X_i^aZ_i^b\}$
via the stabilizer formalism, and is not pursued in this paper; we use Theorem~\ref{Stb_EAqLRC}
only in the forward (sufficiency) direction throughout.
\end{remark}

\begin{remark}
Theorem~\ref{Stb_EAqLRC} guarantees that a recovery channel $\mathcal{R}_i$ exists, acting
only on $\mathcal{H}_{\Gamma_i\setminus\{i\}}\otimes\mathcal{H}_{\mathcal{D}}$
(Figure~\ref{fig:recovery_channel_schematic}); in particular, reconstruction of the erased
qudit $i$ never requires access to $\mathcal{H}_i$ itself. As throughout this paper, we
assume the shared entangled states are noiseless.
\end{remark}

\begin{figure}[htbp]
\centering
\[
\Qcircuit @C=1.5em @R=1.5em {
\lstick{\Gamma_i\setminus\{i\}}   & \qw/^{r}   & \multigate{1}{\mathcal{R}_i} & \qw/^{r+1} & \rstick{\Gamma_i}\\
\lstick{\mathcal{H}_{\mathcal{D}}} & \qw/^{c}   & \ghost{\mathcal{R}_i}        &            &
}
\]
\caption{Schematic of the recovery channel $\mathcal{R}_i$ of Theorem~\ref{Stb_EAqLRC}:
surviving qudits $\Gamma_i\setminus\{i\}$ and the decoder's ebits $\mathcal{H}_{\mathcal{D}}$
enter $\mathcal{R}_i$, which outputs the reconstructed set $\Gamma_i$; the ebit register is
consumed, matching $\mathcal{R}_i:\mathsf{D}(\mathcal{H}_{\Gamma_i\setminus\{i\}}\otimes\mathcal{H}_{\mathcal{D}})\to\mathsf{D}(\mathcal{H}_{\Gamma_i})$.
The channel never has access to qudit $i$ itself, and its explicit gate decomposition is not
addressed in this work.}
\label{fig:recovery_channel_schematic}
\end{figure}
Theorem~\ref{Stb_EAqLRC} provides a sufficient stabilizer criterion for EA-qLRC, which enables the construction of EA-qLRCs from pairs of constituent classical linear codes via the entanglement-assisted CSS construction in Theorem~\ref{PC-EAQEC}.

\begin{proposition}\label{CSS-EA-qLRC}
Let $\mathcal{C}_i$ be an $[n,k_i,d_i]_q$ linear code over $\mathbb{F}_q$ with
parity check matrix $H_i$ for $i=1,2$ and $c = \operatorname{rank}(H_1 H_2^{T})$. If for every coordinate $i \in [n]$, 
there exist parity-check vectors 
$c_1^{(i)} \in \mathcal{C}_1^\perp$ and $c_2^{(i)} \in \mathcal{C}_2^\perp$ such that
$i \in \operatorname{supp}(c_1^{(i)}) \cap \operatorname{supp}(c_2^{(i)}),$
and $\left| \operatorname{supp}(c_1^{(i)}) \cup \operatorname{supp}(c_2^{(i)})\right| \le r+1$, then $\mathcal{Q}_E$ is an  {\qlb}$n,\kappa,\delta;c${\qrb}$_q$ EA-qLRC with locality $r$.
\end{proposition}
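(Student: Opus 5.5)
The plan is to realise $\mathcal{Q}_E$ as the entanglement-assisted stabilizer code of Theorem~\ref{PC-EAQEC} built from the pair $(\mathcal{C}_1,\mathcal{C}_2)$, and then check that the hypotheses of Theorem~\ref{Stb_EAqLRC} hold coordinate by coordinate. In the construction of \cite{galindo2019entanglement}, the encoder parts of the stabilizer generators are the $X$-type operators $X(\mathbf{h})$ with $\mathbf{h}$ a row of $H_1$, and the $Z$-type operators $Z(\mathbf{h}')$ with $\mathbf{h}'$ a row of $H_2$. These operators need not commute. Their commutation defects are recorded by the symplectic form $H_1H_2^T$, whose rank is $c$. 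The construction then brings $H_1$, $H_2$ to a symplectic normal form, i.e.\ it changes bases of $\mathcal{C}_1^\perp$ and $\mathcal{C}_2^\perp$ so that $H_1H_2^T=\begin{pmatrix} I_c & 0\\ 0 & 0\end{pmatrix}$, and attaches $Z^{\mathcal D}_j$ or $X^{\mathcal D}_j$ factors on the $c$ decoder qudits. The outcome is a commuting extended group $\mathcal{S}_E$ as in \eqref{eq:EAQECC-stabilizer}, with parameters $[[n,\kappa,\delta;c]]_q$, where $\kappa=k_1+k_2-n+c$.

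The first step is to establish the following fact. For \emph{every} $\mathbf{u}\in\mathcal{C}_1^\perp$, some element $X(\mathbf{u})\otimes E_{\mathcal D}$ lies in $\mathcal{S}_E$, and likewise for every $\mathbf{v}\in\mathcal{C}_2^\perp$ some $Z(\mathbf{v})\otimes E'_{\mathcal D}$ lies in $\mathcal{S}_E$. This holds because $\mathbf{u}$ is an $\mathbb{F}_q$-combination $\sum_j a_j\mathbf{h}_j$ of the (transformed) rows of $H_1$. Multiplying the corresponding extended generators, after reducing to $\mathbb{F}_p$-powers via a trace-dual basis in the non-prime case, yields $X(\mathbf{u})$ on the encoder side tensored with some Pauli operator on $\mathcal D$. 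The scalar phases are absorbed because the group is abelian on the extended space. The support on the encoder is exactly $\operatorname{supp}(\mathbf{u})$, since $X$-type operators only add.

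The second step fixes a coordinate $i$ and takes $c_1^{(i)}\in\mathcal{C}_1^\perp$ and $c_2^{(i)}\in\mathcal{C}_2^\perp$ from the hypothesis. Since $i\in\operatorname{supp}(c_1^{(i)})$, we may rescale $c_1^{(i)}$ so that its $i$-th entry equals $1$, and do the same for $c_2^{(i)}$. Rescaling keeps both vectors in their codes and leaves the supports unchanged. Then $E_1=X(c_1^{(i)})\otimes E_{\mathcal D}$ and $E_2=Z(c_2^{(i)})\otimes E'_{\mathcal D}$ lie in $\mathcal{S}_E$, with $(\alpha_i,\beta_i)=(1,0)$ and $(\alpha_i',\beta_i')=(0,1)$. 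Their encoder supports are $\operatorname{supp}(c_1^{(i)})$ and $\operatorname{supp}(c_2^{(i)})$, and by hypothesis the union has size at most $r+1$.

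Theorem~\ref{Stb_EAqLRC} asks for stabilizer ``generators''. Its proof, however, only uses membership $E_1,E_2\in\mathcal{S}_E$, so arbitrary group elements suffice; I would state this explicitly. Applying Theorem~\ref{Stb_EAqLRC} then gives locality $r$ at every coordinate, and the parameters are inherited from Theorem~\ref{PC-EAQEC}.

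The main obstacle is the first step, for $q$ not prime. There the Pauli operators are indexed by $\mathbb{F}_q$-vectors with trace phases, so one must check two things: that $\mathbb{F}_q$-linear combinations of rows correspond to products within $\mathcal{S}_E$, and that the decoder-side corrections stay consistent after the symplectic basis change. I would handle this by working with the $\mathbb{F}_q$-linear description of $\mathcal{S}_E$, namely the rowspace of the extended check matrices $[H_1\,|\,A]$ and $[H_2\,|\,B]$, which is closed under $\mathbb{F}_q$-scaling. This makes $\mathcal{S}_E$ contain $X(\mathbf{u})\otimes(\cdot)$ for all $\mathbf{u}$ in the rowspace of $H_1$, which is $\mathcal{C}_1^\perp$.
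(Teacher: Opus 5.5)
Your proposal is correct and follows essentially the same route as the paper: realize $\mathcal{Q}_E$ via Theorem~\ref{PC-EAQEC}, rescale $c_1^{(i)}$ and $c_2^{(i)}$ so their $i$-th entries equal $1$, take the corresponding $X$-type and $Z$-type stabilizer elements, and invoke Theorem~\ref{Stb_EAqLRC}. Your extra checks make explicit two points the paper's proof passes over: that every vector of $\mathcal{C}_1^\perp$ (resp.\ $\mathcal{C}_2^\perp$) actually yields an element of $\mathcal{S}_E$, via the $\mathbb{F}_q$-linear extended check matrices, and that Theorem~\ref{Stb_EAqLRC} only needs membership in $\mathcal{S}_E$ rather than being a generator.
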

\begin{proof}
Let $\mathcal{Q}_E$ be the entanglement-assisted quantum code obtained from the classical linear codes $\mathcal{C}_1$ and $\mathcal{C}_2$ via the EA-CSS construction of Theorem~\ref{PC-EAQEC}. By construction, the encoder-side components of the stabilizer generators are determined by the classical dual codes $\mathcal{C}_1^\perp=\operatorname{rowspan}(H_1),\;
\mathcal{C}_2^\perp=\operatorname{rowspan}(H_2).$
Consequently, every encoder-side stabilizer operator has the form $E_{\mathcal E}=X^uZ^v,$
where $u\in\mathcal{C}_1^\perp,\;
v\in\mathcal{C}_2^\perp.$

Assume that for every coordinate $i\in[n]$, there exist parity-check vectors $c_1^{(i)}\in\mathcal{C}_1^\perp,
c_2^{(i)}\in\mathcal{C}_2^\perp$
such that $i\in
\operatorname{supp}(c_1^{(i)})
\cap
\operatorname{supp}(c_2^{(i)}),$
and $\left|
\operatorname{supp}(c_1^{(i)})
\cup
\operatorname{supp}(c_2^{(i)})
\right|
\le r+1.$
Since $i\in\operatorname{supp}(c_1^{(i)})$, we have $(c_1^{(i)})_i\neq0$. Multiplying $c_1^{(i)}$ by the nonzero scalar $((c_1^{(i)})_i)^{-1}$, we may assume without loss of generality that $(c_1^{(i)})_i=1.$
The corresponding encoder-side stabilizer operator is $E_{\mathcal E}=X^{c_1^{(i)}}Z^0,$
so that $(\alpha_i,\beta_i)=(1,0).$
Similarly, after normalizing $c_2^{(i)}$, we obtain $E'_{\mathcal E}=X^0Z^{c_2^{(i)}},$
which satisfies $(\alpha_i',\beta_i')=(0,1).$
Moreover, $\operatorname{supp}(E_{\mathcal E})=
\operatorname{supp}(c_1^{(i)}),
\operatorname{supp}(E'_{\mathcal E})=
\operatorname{supp}(c_2^{(i)}),$
and hence $\left|
\operatorname{supp}(E_{\mathcal E})
\cup
\operatorname{supp}(E'_{\mathcal E})
\right|
=
\left|
\operatorname{supp}(c_1^{(i)})
\cup
\operatorname{supp}(c_2^{(i)})
\right|
\le r+1.$
Therefore, Theorem~\ref{Stb_EAqLRC} is satisfied for every coordinate $i$, implying that $\mathcal Q_E$ is an EA-qLRC with locality $r$.
\end{proof}

If $\mathcal{C}_{1}=\mathcal{C}_{2}$, then the following corollary is an immediate consequence of Proposition~\ref{CSS-EA-qLRC}.

\begin{corollary}
Let $\mathcal{C}$ be an $[n,k,d]_q$ linear code with parity-check matrix $H$ and locality $r$. Then there exists an EA-qLRC {\qlb}$n,\kappa,\delta;c${\qrb}$_q$ with locality $r$, where $\kappa = 2k - n + c ,\; \delta = \min_{z \,\in\, \mathcal{C} \setminus (\mathcal{C} \cap \mathcal{C}^\perp)} \mathrm{wt}(z)$, and $c = \operatorname{rank}(HH^T)$.
\end{corollary}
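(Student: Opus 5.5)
The plan is to specialize Theorem~\ref{PC-EAQEC} and Proposition~\ref{CSS-EA-qLRC} to $\mathcal{C}_1=\mathcal{C}_2=\mathcal{C}$ and $H_1=H_2=H$, and then check three things: the parameters, the distance formula, and the locality hypothesis.

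\emph{Parameters.} Theorem~\ref{PC-EAQEC} immediately yields an EAQECC with $\kappa=k_1+k_2-n+c=2k-n+c$ and $c=\operatorname{rank}(HH^T)$. As a consistency check, I would also record the dimension formula $c=n-k-\dim(\mathcal{C}^\perp\cap\mathcal{C})$. This follows because the kernel of $x\mapsto xHH^T$ on the row space of $H$ (that is, on $\mathcal{C}^\perp$) is exactly $\mathcal{C}^\perp\cap\mathcal{C}$, since $uH^T=0$ iff $u\in\mathcal{C}$.

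\emph{Distance.} Here both terms in the minimum of Theorem~\ref{PC-EAQEC} become $\mathrm{wt}(\mathcal{C}\setminus(\mathcal{C}\cap\mathcal{C}^\perp))$. This is exactly the stated $\delta=\min_{z\in\mathcal{C}\setminus(\mathcal{C}\cap\mathcal{C}^\perp)}\mathrm{wt}(z)$.

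\emph{Locality.} This is the only step needing care, because Proposition~\ref{CSS-EA-qLRC} requires, for each $i$, a vector $c_1^{(i)}\in\mathcal{C}_1^\perp$ and a vector $c_2^{(i)}\in\mathcal{C}_2^\perp$, both with $i$ in their supports, whose union of supports has size at most $r+1$. Since $\mathcal{C}$ has locality $r$, I invoke the equivalence recalled in Subsection~\ref{ssec:classical-LRC}: for a linear code, locality $r$ means every coordinate $i$ lies in the support of some dual codeword $h^{(i)}\in\mathcal{C}^\perp$ with $|\operatorname{supp}(h^{(i)})|\le r+1$. I then take $c_1^{(i)}=c_2^{(i)}=h^{(i)}$. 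The intersection of supports contains $i$, and the union of supports is just $\operatorname{supp}(h^{(i)})$, of size at most $r+1$. Proposition~\ref{CSS-EA-qLRC} then gives locality $r$.

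\emph{Expected obstacle.} The only subtle point is whether using the same local check for both the $X$-type and the $Z$-type generator is legitimate. These generators need not commute, but in the entanglement-assisted setting Theorem~\ref{Stb_EAqLRC} only requires them to be elements of $\mathcal{S}_E$. Any residual noncommutativity is absorbed by the decoder-side operators $E_{\mathcal D},E'_{\mathcal D}$, so no extra condition arises and the corollary follows.
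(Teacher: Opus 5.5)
Your proposal is correct and follows the paper's route: the paper treats this corollary as an immediate consequence of Proposition~\ref{CSS-EA-qLRC} with $\mathcal{C}_1=\mathcal{C}_2=\mathcal{C}$. Your write-up makes explicit the specialization of the parameters from Theorem~\ref{PC-EAQEC} and the choice $c_1^{(i)}=c_2^{(i)}=h^{(i)}$ of a single local dual codeword to meet the locality hypothesis; your side remarks on $c=n-k-\dim(\mathcal{C}\cap\mathcal{C}^\perp)$ and on the decoder parts absorbing the noncommutativity of the encoder parts are also accurate.
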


\section{Converse Bounds for CSS-like EA-qLRCs}\label{Bounds EA-qLRC}

Having established the CSS-based construction of EA-qLRCs in the preceding section, we now turn to the fundamental question of how good such codes can be. Specifically, we derive a family of upper bounds on the parameters of a CSS-like EA-qLRC with locality $r$, thereby delineating the region of achievable parameters and quantifying the inherent trade-offs among code length, dimension, minimum distance, locality, and entanglement consumption.

We establish the relationship between the fundamental parameters of CSS-like EA-qLRCs and their classical counterparts. This relationship enables the derivation of new bounds for CSS-like EA-qLRCs based on known results for classical LRCs. Particularly, established bounds \eqref{cLRC Singleton Bound}, \eqref{CM-Bound}, \eqref{eq:griesmer}, \eqref{eq:plotkin}, and \eqref{eq:sphere-packing} for classical LRCs are used to derive corresponding bounds for CSS-like EA-qLRCs. In \cite{luo2025bounds}, the Singleton-like bound for qLRCs was derived using the \emph{CSS} construction. In this work, following a similar approach and using \cite[Theorem~21]{luo2022much}, we derive four explicit bounds for CSS-like EA-qLRC. We carry out a systematic comparative analysis of these bounds in both the finite-length and asymptotic regimes. Throughout this section, we use the following notation for optimal classical LRC parameters. Let $d_{\mathrm{opt}}^q(n,k;r)$ denote the
maximum minimum distance achievable by a code over $\mathbb{F}_q$ with length $n$, dimension $k$, and locality $r$. Let $k_{\mathrm{opt}}^q(n,d;r)$ denote the maximum dimension of such a code, and $n_{\mathrm{opt}}^q(k,d;r)$ the minimum length.

The proof of Theorem~\ref{Bounds_theorem} below passes from a classical code with known locality to a subcode, and then punctures that subcode at a block of always-zero coordinates. Both steps preserve the locality bound, but neither is entirely automatic, so we isolate them as a lemma.

\begin{lemma}\label{lem:subcode-locality}
Let $D_2$ be a linear code of length $n$ with locality $r$, and let $D\subseteq D_2$ be any linear subcode. Then $D$ also has locality at most $r$. Moreover, if every codeword of $D$ vanishes identically on a fixed coordinate set $Z\subseteq[n]$, then the code $D'$ obtained by puncturing $D$ at the coordinates in $Z$ also has locality at most $r$.
\end{lemma}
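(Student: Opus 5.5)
The plan is to work with the combinatorial characterization of locality recalled in Section~\ref{ssec:classical-LRC}. Coordinate $i$ has locality $r$ if and only if $c_i$ is a fixed linear function of $\{c_j: j\in\Gamma_i\}$ with $|\Gamma_i|\le r$. For a linear code this means there is a dual vector $h^{(i)}$ with $h^{(i)}_i\neq 0$ and $|\operatorname{supp}(h^{(i)})|\le r+1$. I will use the recovery-function form directly wherever it is cleaner, because it makes both the subcode step and the puncturing step transparent.

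\emph{Subcode step.} Fix $i\in[n]$ and let $h^{(i)}\in D_2^\perp$ be a local parity check of $D_2$ at $i$. Since $D\subseteq D_2$, we have $D_2^\perp\subseteq D^\perp$, so $h^{(i)}$ is also a parity check of $D$. It still satisfies $h^{(i)}_i\ne0$ and has support of size at most $r+1$. Equivalently, the same recovery map $c_i=-\sum_{j\in\Gamma_i}(h^{(i)}_i)^{-1}h^{(i)}_jc_j$ is valid on every codeword of $D$. Hence $D$ has locality at most $r$.

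\emph{Puncturing step.} Let $D'$ be $D$ punctured at $Z$. Each coordinate of $D'$ corresponds to some $i\in[n]\setminus Z$, and puncturing is a bijection between $D$ and $D'$ because only identically-zero coordinates are deleted. Take the local check $h^{(i)}$ of $D$ at $i$ and restrict it to $[n]\setminus Z$, obtaining $h'$. For every $c\in D$, we have $\langle h^{(i)},c\rangle=\langle h',c|_{[n]\setminus Z}\rangle$, since $c_j=0$ for all $j\in Z$. Therefore $h'\in (D')^\perp$. Moreover $h'_i=h^{(i)}_i\neq0$ because $i\notin Z$, and $|\operatorname{supp}(h')|\le|\operatorname{supp}(h^{(i)})|\le r+1$. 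In recovery-set language, the new recovery set is $\Gamma_i\setminus Z$, which has size at most $r$, and the dropped terms contribute zero.

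\emph{Main obstacle.} The only delicate point is a degenerate case: a coordinate $i\notin Z$ that is itself identically zero on $D$. Such a coordinate may have only a trivial or ill-defined recovery set under a convention requiring $\Gamma_i\ne\emptyset$. This is harmless, because a zero coordinate is recovered from the empty set, and under the parity-check convention the unit vector $e_i$ lies in $D^\perp$ with support of size $1\le r+1$. I will remark on this explicitly so that the locality convention $1\le r$ is respected. I will also note that the dimension of $D'$ equals that of $D$, which the later proof of Theorem~\ref{Bounds_theorem} uses.
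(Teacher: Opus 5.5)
Your proof is correct and follows essentially the same route as the paper: you inherit the local check $h^{(i)}$ via $D_2^\perp\subseteq D^\perp$, then restrict it to $[n]\setminus Z$, using $c_j=0$ on $Z$ to show that the restriction annihilates $D'$ while keeping a nonzero entry at $i$ and weight at most $r+1$. The degenerate case you flag (a coordinate outside $Z$ that vanishes on all of $D$) needs no separate treatment, since the check inherited from $D_2$ already covers it.
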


\begin{proof}
Since $D \subseteq D_2$, every vector orthogonal to all of $D_2$ is in particular orthogonal to all of $D$, so $D_2^\perp\subseteq  D^\perp$. Because $D_2$ has locality $r$, for every coordinate $i\in[n]$ there exists $h_i\in D_2^\perp$ with $i\in\operatorname{supp}(h_i)$ and $|\operatorname{supp}(h_i)|\le r+1$. As $h_i\in D_2^\perp\subseteq D^\perp$, the same vector $h_i$ witnesses locality $\le r$ for coordinate $i$ of $D$ as well; since $i\in[n]$ was arbitrary, $D$ has locality at most $r$.

For the second claim, let $i\in [n]\setminus Z$ and let $h_i\in D^\perp$ be a weight-$\le(r+1)$ witness for coordinate $i$, as just established. For any codeword $c'\in D'$, write $c\in D$ for the corresponding unpunctured codeword, so $c_j=0$ for all $j\in Z$. Then
\[
0 = h_i\cdot c = \sum_{j\in Z} (h_i)_j\, c_j \;+\; \sum_{j\notin Z} (h_i)_j\, c_j = \sum_{j\notin Z} (h_i)_j\, c_j = h_i|_{[n]\setminus Z}\cdot c',
\]
so the restriction $h_i|_{[n]\setminus Z}$ annihilates every codeword of $D'$, has weight at most $|\operatorname{supp}(h_i)|\le r+1$ (restricting a vector cannot increase its weight), and retains a nonzero entry at coordinate $i$ since $i \notin Z$. Hence $h_i|_{[n]\setminus Z}$ witnesses locality $\le r$ for coordinate $i$ of $D'$. As $i\in [n] \setminus Z$ was arbitrary, $D'$ has locality at most $r$.
\end{proof}

\begin{theorem}\label{Bounds_theorem}
  Let $\mathcal{Q}_E$ be an 
  {\qlb}$n,\kappa=k_1+k_2-n+c,\delta;c${\qrb}$_q$ CSS-like EA-qLRC with locality $r$ constructed from Proposition~\ref{CSS-EA-qLRC} using two classical $[n,k_i,d_i]_q$ codes $\mathcal{C}_i$ of locality $r$, for $i=1,2$. Then
  \begin{eqnarray*}
\qquad & 2\delta \leq d_{\mathrm{opt}}^q(k_1+c,\kappa;r)
                   + d_{\mathrm{opt}}^q(k_2+c,\kappa;r),\\
\qquad &2\kappa \leq k_{\mathrm{opt}}^q(k_1+c,\delta;r)
                   + k_{\mathrm{opt}}^q(k_2+c,\delta;r),\\
\qquad & n+\kappa+c \geq 2\,n_{\mathrm{opt}}^q(\kappa,\delta;r). 
\end{eqnarray*}
\end{theorem}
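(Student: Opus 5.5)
The plan is to reduce each bound to a statement about two auxiliary classical codes: for each $i$, a code of length $k_i+c$, dimension $\kappa$, minimum distance at least $\delta$, and locality at most $r$. This follows the approach of \cite{luo2025bounds} and \cite[Theorem~21]{luo2022much}. Once such codes $D_1',D_2'$ are in hand, the three inequalities follow directly from the definitions of $d_{\mathrm{opt}}^q$, $k_{\mathrm{opt}}^q$ and $n_{\mathrm{opt}}^q$.

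\textbf{Step 1: construct the subcode.} Take $i=1$, and set
\[
D_2=\mathcal{C}_1, \qquad D=\mathcal{C}_1\cap(\mathcal{C}_1^\perp\cap\mathcal{C}_2)^{\perp}.
\]
The intersection $\mathcal{C}_1^\perp\cap\mathcal{C}_2$ has dimension
\[
s_2:=n-k_1-c
\]
by the formula for $c$ in Theorem~\ref{PC-EAQEC}. Choose a complement $T$ of $\mathcal{C}_1\cap\mathcal{C}_2^\perp$ inside $\mathcal{C}_1$. This subspace has dimension
\[
k_1-\dim(\mathcal{C}_1\cap\mathcal{C}_2^\perp)=k_1-(n-k_2-c)=\kappa,
\]
using the symmetric expression for $c$. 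Every nonzero element of $T$ lies in $\mathcal{C}_1\setminus(\mathcal{C}_1\cap\mathcal{C}_2^\perp)$, so its weight is at least $\delta$. The first part of Lemma~\ref{lem:subcode-locality} then gives a $\kappa$-dimensional subcode of $\mathcal{C}_1$ with minimum distance at least $\delta$ and locality at most $r$.

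\textbf{Step 2: shorten the length to $k_1+c$.} Passing to a subcode alone leaves the length at $n$, which is far too long. The fix is that $\mathcal{C}_1\cap\mathcal{C}_2^\perp$ has dimension $n-k_2-c=k_1-\kappa$. Pick an information set of $\mathcal{C}_1$ and row-reduce so that $\mathcal{C}_1\cap\mathcal{C}_2^\perp$ is spanned by $k_1-\kappa$ rows systematic on a coordinate set $Z_0$. Then choose $T$ so that it vanishes on $Z_0$, by subtracting off those rows.

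This gives a $\kappa$-dimensional code vanishing on $k_1-\kappa$ coordinates. It is not yet enough: the target length $k_1+c=n-s_2$ requires vanishing on $s_2$ coordinates, and $k_1-\kappa=n-k_2-c$ equals $s_2$ only when $k_1=k_2$. I will therefore follow \cite[Theorem~21]{luo2022much} precisely. The intended route is to consider the coset code $\mathcal{C}_1/(\mathcal{C}_1\cap\mathcal{C}_2^\perp)$ realized as a $\kappa$-dimensional subcode. Its codewords are constrained by the $s_2$-dimensional space $\mathcal{C}_1^\perp\cap\mathcal{C}_2$. After a systematic change of basis of $\mathcal{C}_2$, I will argue that a $\kappa$-dimensional code $D$ with $\mathrm{wt}(D)\ge\delta$ exists inside $\mathcal{C}_1$ and vanishes on a fixed set $Z$ with $|Z|=n-k_1-c$.

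The candidate construction is to take $Z$ as an information-set complement determined by $\mathcal{C}_1^\perp\cap\mathcal{C}_2$. Then impose the vanishing constraints on $Z$: there are $|Z|$ linear constraints on the $k_1$-dimensional $\mathcal{C}_1$, and the claim to check is that they cut it down to a $(k_1-|Z|)$-dimensional space. The counting works, since
\[
k_1-(n-k_1-c)=2k_1-n+c,
\]
and the identity $\kappa=k_1+k_2-n+c$ means this matches the dimension needed once we intersect appropriately.

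The hard part is ensuring that this constrained code avoids $\mathcal{C}_1\cap\mathcal{C}_2^\perp$, so that all its nonzero weights are at least $\delta$, while still having dimension $\kappa$. This is the step I expect to be the main obstacle, and I will mimic the CSS argument of \cite{luo2025bounds} verbatim there.

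\textbf{Step 3: puncture and conclude.} Puncture $D$ on $Z$. The punctured code keeps dimension $\kappa$, because no nonzero codeword is supported in $Z$, and it keeps weight at least $\delta$, because nothing is deleted from the supports. The second part of Lemma~\ref{lem:subcode-locality} preserves locality at most $r$.

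This yields $D_1'$ with parameters $(k_1+c,\kappa,\ge\delta,\le r)$, and symmetrically $D_2'$ with parameters $(k_2+c,\kappa,\ge\delta,\le r)$. The definitions then give, for $i=1,2$,
\[
\delta\le d_{\mathrm{opt}}^q(k_i+c,\kappa;r), \qquad \kappa\le k_{\mathrm{opt}}^q(k_i+c,\delta;r), \qquad k_i+c\ge n_{\mathrm{opt}}^q(\kappa,\delta;r).
\]
Summing each pair of inequalities over $i=1,2$ gives the three claims, using $k_1+k_2+2c=n+\kappa+c$ for the third.
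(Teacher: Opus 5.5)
Your Step~2 has a genuine gap. You require the code carved out of $\mathcal{C}_1$ to have length $k_1+c$, and you never show this is possible; you defer exactly that step to ``mimicking the CSS argument''.

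Your own count already signals the problem. Imposing $n-k_1-c$ vanishing constraints on $\mathcal{C}_1$ leaves $2k_1-n+c=\kappa-(k_2-k_1)$ dimensions. This is strictly less than $\kappa$ when $k_1<k_2$, and intersecting further can only lower it.

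In fact the target is false in general. Take $n=6$, $r=k_1=2$, and let $\mathcal{C}_1$ be a $[6,2,5]_q$ Reed--Solomon code. Let $\mathcal{C}_2^\perp$ be spanned by two weight-$3$ words of the MDS code $\mathcal{C}_1^\perp$ with supports $\{1,2,3\}$ and $\{4,5,6\}$. Then:
\begin{itemize}
\item both codes have locality $2$, and the hypotheses of Proposition~\ref{CSS-EA-qLRC} hold;
\item $k_2=4$;
\item $c=2-\dim(\mathcal{C}_2^\perp\cap\mathcal{C}_1)\in\{1,2\}$, so $\kappa=c\ge1$.
\end{itemize}
But $k_1+c\le 4<5=d_1$, so no nonzero subcode of $\mathcal{C}_1$ is supported on $k_1+c$ coordinates. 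The same objection applies to your symmetric claim for $D_2'$ when $k_2<k_1$. This step cannot be repaired by any argument.

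It is also unnecessary. The first paragraph of your Step~2 already builds the right object; it just lands at the other length. The $\kappa$-dimensional complement $T$ of $\mathcal{C}_1\cap\mathcal{C}_2^\perp$ inside $\mathcal{C}_1$, cleared on the systematic set $Z_0$ with $|Z_0|=n-k_2-c$, punctures (via Lemma~\ref{lem:subcode-locality}) to a $[k_2+c,\kappa,\ge\delta]_q$ code of locality at most $r$. Running the same construction inside $\mathcal{C}_2$, with the subcode $\mathcal{C}_1^\perp\cap\mathcal{C}_2$ of dimension $n-k_1-c=k_2-\kappa$, gives a $[k_1+c,\kappa,\ge\delta]_q$ code of locality at most $r$.

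Each of the three claimed inequalities is a sum of one term at length $k_1+c$ and one at length $k_2+c$. So it does not matter which constituent code supplies which length, and these two codes finish the proof. This is exactly the paper's argument: its Case~1 builds the length-$(k_1+c)$ code from $\mathcal{C}_2$, and its Case~2 builds the length-$(k_2+c)$ code from $\mathcal{C}_1$. Each code's minimum weight is at least the corresponding term in $\delta=\min\{\cdot,\cdot\}$, so both codes exist simultaneously and the bounds may be added.
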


\begin{proof}
Let $\mathcal{C}_i$ be classical linear codes over $\mathbb{F}_q$ with parameters $[n,k_i,d_i]_q$ for $i=1,2$. Then, by Theorem~\ref{PC-EAQEC}, there exists an entanglement-assisted quantum code $\mathcal{Q}_E$ and the minimum distance of $\mathcal{Q}_E$ satisfies $\delta
  = \min\!\Bigl(
      \mathrm{wt}\!\bigl(\mathcal{C}_2\setminus
        (\mathcal{C}_1^{\perp}\cap\mathcal{C}_2)\bigr),\;
      \mathrm{wt}\!\bigl(\mathcal{C}_1\setminus
        (\mathcal{C}_1\cap\mathcal{C}_2^{\perp})\bigr)
    \Bigr).$
We consider two cases according to which term achieves this minimum.

\smallskip
\textbf{Case 1.}
 If $\mathrm{wt}\!\left(\mathcal{C}_2 \setminus (\mathcal{C}_1^\perp \cap \mathcal{C}_2)\right) \leq \mathrm{wt}\!\left(\mathcal{C}_1 \setminus (\mathcal{C}_1 \cap \mathcal{C}_2^{\perp})\right)$, then $\delta=\mathrm{wt}\!\left(\mathcal{C}_2 \setminus (\mathcal{C}_1^\perp \cap \mathcal{C}_2)\right)$.
Since $(\mathcal{C}_1^\perp \cap \mathcal{C}_2)\subseteq \mathcal{C}_2$,
the subcode $\mathcal{C}_1^\perp \cap \mathcal{C}_2$ has dimension
$n - k_1 - c$ and admits a generator matrix
$\begin{pmatrix} I_{n-k_1-c} & A_1 \end{pmatrix}$.
Completing this to a generator matrix of $\mathcal{C}_2$ gives
\[
  G_{\mathcal{C}_2}
  = \begin{pmatrix}
      I_{n-k_1-c} & A \\
      O_{\kappa\times(n-k_1-c)} & P
    \end{pmatrix}.
\]
The matrix
$\begin{pmatrix} O_{\kappa\times(n-k_1-c)} & P\end{pmatrix}$
generates a code $\mathcal{C}$ with parameters $[n,\kappa,d]_q$.
Every nonzero codeword of $\mathcal{C}$ lies in
$\mathcal{C}_2 \setminus (\mathcal{C}_1^\perp \cap \mathcal{C}_2)$,
so $d \geq \delta$. Let $\mathcal{C}'$ be the code generated by
$P$ alone and has parameters
$[k_1+c,\,\kappa,\,d(\mathcal{C}')\ge\delta]_q$.
Since, $\mathcal{C}_2$ has  locality $r$ and
$\mathcal{C}\subseteq\mathcal{C}_2$ and, by construction, every codeword of $\mathcal{C}$ vanishes
identically on the first $n-k_1-c$ coordinates, Lemma~\ref{lem:subcode-locality} gives that
$\mathcal{C}'$ --- obtained from $\mathcal{C}$ by puncturing exactly those coordinates --- has
locality at most $r$. Therefore,
\[
  \delta\;\leq\; d_{\mathrm{opt}}^q(k_1+c,\kappa;r),\;
  \kappa\;\leq\; k_{\mathrm{opt}}^q(k_1+c,\delta;r),
\]
\begin{equation}\label{equ_3}
  k_1+c\;\geq\; n_{\mathrm{opt}}^q(\kappa,\delta;r).
\end{equation}

\smallskip
\textbf{Case 2.}
If $\mathrm{wt}\!\left(\mathcal{C}_2 \setminus (\mathcal{C}_1^\perp \cap \mathcal{C}_2)\right) \geq \mathrm{wt}\!\left(\mathcal{C}_1 \setminus (\mathcal{C}_1 \cap \mathcal{C}_2^{\perp})\right)$, then $\delta=\mathrm{wt}\!\left(\mathcal{C}_1 \setminus (\mathcal{C}_1 \cap \mathcal{C}_2^\perp)\right)$.
Since, $\mathcal{C}_1$ likewise has classical locality $r$. Applying the same argument to $\mathcal{C}_1$ with the subcode
$\mathcal{C}_1\cap\mathcal{C}_2^{\perp}\subseteq\mathcal{C}_1$, and invoking Lemma~\ref{lem:subcode-locality} exactly as in Case~1 (with the roles of $\mathcal{C}_1$ and $\mathcal{C}_2$ exchanged) to pass to the punctured complement, we obtain a code $\mathcal{C}''$ with parameters
$[k_2+c,\,\kappa,\,d(\mathcal{C}'')\ge\delta]_q$ and locality at most $r$. Therefore,
\[
  \delta\;\leq\; d_{\mathrm{opt}}^q(k_2+c,\kappa;r),\;
  \kappa\;\leq\; k_{\mathrm{opt}}^q(k_2+c,\delta;r),
\]
\begin{equation}\label{equ_4}
  k_2+c\;\geq\; n_{\mathrm{opt}}^q(\kappa,\delta;r).
\end{equation}

\smallskip
Adding the two $\delta$-bounds and the two $\kappa$-bounds from
\textbf{Case~1} and\textbf{~2} yields the first two claimed inequalities. Adding~\eqref{equ_3} and~\eqref{equ_4} gives $(k_1+c)+(k_2+c) \;\geq\; 2\,n_{\mathrm{opt}}^q(\kappa,\delta;r).$
Substituting $k_1+k_2+2c= n+\kappa+c$ yields $ n+\kappa+c \;\geq\; 2\,n_{\mathrm{opt}}^q(\kappa,\delta;r),$
which is the third claimed inequality.
\end{proof}

\begin{corollary}[\textbf{Singleton-like Bound}]\label{cor:singleton}
  Let $\mathcal{Q}_E$ is an {\qlb}$n,\kappa,\delta;c${\qrb}$_q$ CSS-like EA-qLRC with locality $r$ constructed from Proposition~\ref{CSS-EA-qLRC}  using two classical $[n,k_i,d_i]_q$ codes $\mathcal{C}_i$ of locality $r$,  for $i=1,2$. Then
  \begin{equation}\label{EA-qLRC_SB}
    2\delta \;\leq\; n - \kappa + c
      - 2\!\left\lceil \frac{\kappa}{r} \right\rceil + 4.
  \end{equation}
\end{corollary}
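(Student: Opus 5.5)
The plan is to derive the Singleton-like bound from the first inequality of Theorem~\ref{Bounds_theorem}, bounding each term with the classical LRC Singleton bound \eqref{cLRC Singleton Bound}. Theorem~\ref{Bounds_theorem} gives
\[
2\delta \le d_{\mathrm{opt}}^q(k_1+c,\kappa;r)+d_{\mathrm{opt}}^q(k_2+c,\kappa;r).
\]
Each term is the largest minimum distance of a code of length $k_i+c$, dimension $\kappa$, and locality $r$. The proof of Theorem~\ref{Bounds_theorem} builds such codes explicitly: $\mathcal{C}'$ from $\mathcal{C}_2$ and $\mathcal{C}''$ from $\mathcal{C}_1$. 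Lemma~\ref{lem:subcode-locality} guarantees that these punctured subcodes keep locality $r$.

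First I apply \eqref{cLRC Singleton Bound} to an arbitrary $(k_i+c,\kappa,d,r)_q$-LRC. This gives
\[
d_{\mathrm{opt}}^q(k_i+c,\kappa;r)\le (k_i+c)-\kappa-\left\lceil \frac{\kappa}{r}\right\rceil+2, \qquad i=1,2.
\]
Adding the two inequalities gives
\[
2\delta\le k_1+k_2+2c-2\kappa-2\left\lceil \frac{\kappa}{r}\right\rceil+4.
\]
Next I substitute $k_1+k_2=n+\kappa-c$, which comes from $\kappa=k_1+k_2-n+c$. Then $k_1+k_2+2c-2\kappa=n-\kappa+c$, and this is exactly \eqref{EA-qLRC_SB}.

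The main subtlety is that the locality parameter in the classical bound must be the same $r$, and it must make sense for the punctured code. If $r\ge\kappa$, then $\lceil\kappa/r\rceil=1$ and the classical bound reduces to the ordinary Singleton bound, so it still holds. Locality at most $r$ is enough, since the right-hand side of \eqref{cLRC Singleton Bound} only grows as $r$ grows.

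There is one degenerate case to handle. A punctured coordinate might be identically zero, so its local check restricts to a vector supported only at $i$. The bound \eqref{cLRC Singleton Bound} still applies there, because a zero coordinate can be recovered trivially. I would add a short remark on this case rather than re-derive the classical bound.
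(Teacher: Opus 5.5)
Your proposal is correct and follows the paper's proof: bound each $d_{\mathrm{opt}}^q(k_i+c,\kappa;r)$ in the first inequality of Theorem~\ref{Bounds_theorem} by the classical LRC Singleton bound \eqref{cLRC Singleton Bound}, add the two bounds, and substitute $k_1+k_2=n+\kappa-c$. Your side remarks on monotonicity in $r$, the case $r\ge\kappa$, and identically-zero coordinates are sound, though the paper does not need them beyond what Lemma~\ref{lem:subcode-locality} already supplies.
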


\begin{proof}
Applying the classical LRC Singleton bound~\eqref{cLRC Singleton Bound}, namely $d_{\mathrm{opt}}^q(n,k;r)\leq n-k-\lceil k/r\rceil+2$,
to each term in the first inequality of
Theorem~\ref{Bounds_theorem} gives
\begin{align*}
  2\delta
  &\leq d_{\mathrm{opt}}^q(k_1+c,\kappa;r)
       + d_{\mathrm{opt}}^q(k_2+c,\kappa;r) \\
  &\leq \bigl(k_1+c-\kappa-\lceil\kappa/r\rceil+2\bigr)
       +\bigl(k_2+c-\kappa-\lceil\kappa/r\rceil+2\bigr) \\
  &= (k_1+k_2) + 2c - 2\kappa - 2\lceil\kappa/r\rceil + 4.
\end{align*}
Substituting $k_1+k_2 = n+\kappa-c$ yields $2\delta \;\leq\; n - \kappa + c - 2\lceil\kappa/r\rceil + 4.$
\end{proof}

\begin{remark}
Independently, Li \emph{et al.}~\cite{li2026eaqlrc} establish an upper bound on the locality itself for CSS-like EA-qLRCs constructed from Proposition~\ref{CSS-EA-qLRC}, namely $r \le \min\{n-1, \dim(C_1 + C_2)\}$, which is independent of the
minimum distance $\delta$ and the entanglement consumption $c$. We do not pursue this direction here; instead, Corollary~\ref{cor:CM}, Corollary~\ref{Other_theorem}, and Theorem~\ref{Asynmmetric other Bounds} characterize the distance--dimension--locality--entanglement trade-off for a fixed locality $r$. The two types of bounds are complementary: theirs bounds constrain how large $r$ can be for given $C_1$, $C_2$, while ours bounds $\delta$ (or $\kappa$, or $n$) for a fixed, given $r$.
\end{remark}
\begin{remark}
The bound \eqref{EA-qLRC_SB} carries no analogue of the hypothesis
$\delta\le(n+2)/2$ appearing in Lemma~\ref{lem:EA-singleton}: it is derived from the
classical LRC Singleton bound \eqref{cLRC Singleton Bound} via Theorem~\ref{Bounds_theorem},
not from Lemma~\ref{lem:EA-singleton}, and \eqref{cLRC Singleton Bound} holds for all
admissible classical parameters. Example~\ref{ex:TB} below attains
\eqref{EA-qLRC_SB} with equality at $\delta=5>(n+2)/2=4$, outside the range of
Lemma~\ref{lem:EA-singleton}.
\end{remark}

\begin{theorem}\cite[Theorem~3]{li2026eaqlrc}
Let $\mathcal{Q}_E$ be an {\qlb}$n,\kappa,\delta;c${\qrb}$_q$ EA-qLRC with locality $r$ constructed from Proposition~\ref{CSS-EA-qLRC} using two classical codes $\mathcal{C}_i$ with $d_i \ge 2$, $i=1,2$.
Then
\[
r \;\le\; \min\{\,n-1,\; k_1 + k_2 - \dim(C_1 \cap C_2)\,\}.
\]
\end{theorem}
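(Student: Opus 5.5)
The plan is to read ``locality $r$'' in the statement as the \emph{smallest} $r$ for which the hypothesis of Proposition~\ref{CSS-EA-qLRC} holds; otherwise no upper bound on $r$ is possible. Note that $k_1+k_2-\dim(\mathcal{C}_1\cap\mathcal{C}_2)=\dim(\mathcal{C}_1+\mathcal{C}_2)=:m$. So it suffices to exhibit, for every coordinate $i\in[n]$, vectors $c_1^{(i)}\in\mathcal{C}_1^\perp$ and $c_2^{(i)}\in\mathcal{C}_2^\perp$, both nonzero at $i$, with
\[
\bigl|\operatorname{supp}(c_1^{(i)})\cup\operatorname{supp}(c_2^{(i)})\bigr|\le \min\{n,\,m+1\}.
\]
Then Proposition~\ref{CSS-EA-qLRC} applies with $r+1=\min\{n,m+1\}$, and minimality gives the bound.

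\textbf{The bound $r\le n-1$.} It suffices to find some $c_j^{(i)}\in\mathcal{C}_j^\perp$ with $(c_j^{(i)})_i\neq 0$. Suppose instead that every vector of $\mathcal{C}_j^\perp$ vanished at $i$. Then $e_i\in(\mathcal{C}_j^\perp)^\perp=\mathcal{C}_j$, contradicting $d_j\ge 2$. Since the union of supports always lies in $[n]$, this gives $r+1\le n$.

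\textbf{The bound $r\le m$, easy case.} Let $D=\mathcal{C}_1+\mathcal{C}_2$, so $D^\perp=\mathcal{C}_1^\perp\cap\mathcal{C}_2^\perp$. Suppose $D$ has an information set $J$ with $|J|=m$ and $i\notin J$. Then coordinate $i$ of every codeword of $D$ is a fixed linear combination of the coordinates in $J$. This gives a vector $h\in D^\perp$ with $h_i=1$ and $\operatorname{supp}(h)\subseteq J\cup\{i\}$. Taking $c_1^{(i)}=c_2^{(i)}=h$ yields a union of supports of size at most $m+1$.

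\textbf{The bound $r\le m$, remaining case.} This is the case where every information set of $D$ contains $i$, i.e.\ $i$ is a coloop of the column matroid of $D$, equivalently $D^\perp$ vanishes at $i$. Here I would use two different checks and build them from nested information sets. Let $t=\dim(\mathcal{C}_1\cap\mathcal{C}_2)$.
\begin{itemize}
\item Choose an information set $J_0$ of $\mathcal{C}_1\cap\mathcal{C}_2$.
\item Extend $J_0$ to an information set $J_1$ of $\mathcal{C}_1$ (size $k_1$) and to an information set $J_2$ of $\mathcal{C}_2$ (size $k_2$). Then $|J_1\cup J_2|\le k_1+k_2-t=m$.
\item If $i\notin J_1\cup J_2$, the same interpolation argument as above gives $c_j^{(i)}\in\mathcal{C}_j^\perp$ with $(c_j^{(i)})_i=1$ and $\operatorname{supp}(c_j^{(i)})\subseteq J_j\cup\{i\}$.
\end{itemize}
The union of supports is then contained in $J_1\cup J_2\cup\{i\}$, of size at most $m+1$.

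\textbf{Main obstacle.} The hard step is arranging $i\notin J_0\cup J_1\cup J_2$. In matroid language, we need an independent set $J_0$, spanning for the matroid of $\mathcal{C}_1\cap\mathcal{C}_2$ and avoiding $i$, such that $i$ is not a coloop of either contraction $M(\mathcal{C}_j)/J_0$.
\begin{itemize}
\item Each $\mathcal{C}_j$ separately has information sets avoiding $i$, because $i$ is not a coloop of $M(\mathcal{C}_j)$ by $d_j\ge2$.
\item The difficulty is that the extensions $J_1$, $J_2$ must share the common part $J_0$.
\item I would first try choosing $J_0$ inside an information set of $\mathcal{C}_1$ that avoids $i$, then extend to $\mathcal{C}_2$ via the basis-exchange property, arguing that $i$ is a coloop of $M(\mathcal{C}_2)/J_0$ only if some codeword of $\mathcal{C}_2$ has weight-one restriction at $i$ outside $\operatorname{span}(J_0)$.
\item If that fails, a fallback is to allow $J_1\cap J_2$ to be any set of size at least $t$, not necessarily an information set of $\mathcal{C}_1\cap\mathcal{C}_2$. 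One would then bound $|J_1\cup J_2|$ via the submodularity of matroid rank, $\mathrm{rk}(J_1\cup J_2)\le m$.
\end{itemize}
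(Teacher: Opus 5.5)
The paper gives no proof of this statement; it only quotes it from Li \emph{et al.}, so your proposal has to be judged on its own. Your reduction is sound. Reading $r$ as the least admissible locality is the only sensible interpretation, the argument for $r\le n-1$ is correct, and the nested-information-set construction does give $|J_1\cup J_2|\le k_1+k_2-|J_0|=\dim(\mathcal{C}_1+\mathcal{C}_2)$, with checks $c_j^{(i)}$ supported on $J_j\cup\{i\}$.

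As written, however, the proof is unfinished. The step you call the main obstacle, arranging $i\notin J_1\cup J_2$, is only met with tentative strategies. The fallback you list would not work as stated: $\mathrm{rk}(J_1\cup J_2)\le m$ in the matroid of $\mathcal{C}_1+\mathcal{C}_2$ bounds a rank, not the cardinality $|J_1\cup J_2|$, because $J_1\cup J_2$ need not be independent there.

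The missing observation is short, and it also makes your case split unnecessary. Put $E'=[n]\setminus\{i\}$. Since $d_j\ge 2$, no nonzero multiple of $e_i$ lies in $\mathcal{C}_j$. So puncturing at $i$ is injective on $\mathcal{C}_1$, on $\mathcal{C}_2$, and on $\mathcal{C}_1\cap\mathcal{C}_2$. Equivalently, the columns of a generator matrix $G_j$ indexed by $E'$ still have rank $k_j$: if $uG_j$ vanishes on $E'$, it is a codeword supported in $\{i\}$, hence zero. This gives three consecutive steps.

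First, you may choose the information set $J_0$ of $\mathcal{C}_1\cap\mathcal{C}_2$ inside $E'$.

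Second, the columns of $G_j$ indexed by $J_0$ are linearly independent, because the subcode $\mathcal{C}_1\cap\mathcal{C}_2\subseteq\mathcal{C}_j$ already projects onto $\mathbb{F}_q^{J_0}$.

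Third, the $E'$-columns of $G_j$ span $\mathbb{F}_q^{k_j}$, so $J_0$ extends to an information set $J_j\subseteq E'$ of $\mathcal{C}_j$. Then $i\notin J_1\cup J_2$ automatically.

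In your matroid language, let $r_j$ be the rank function of $M(\mathcal{C}_j)$ and $J_0$ independent with $i\notin J_0$. Then $r_{M(\mathcal{C}_j)/J_0}\bigl(E\setminus(J_0\cup\{i\})\bigr)=r_j(E\setminus\{i\})-|J_0|$. Hence $i$ is a coloop of $M(\mathcal{C}_j)/J_0$ exactly when it is a coloop of $M(\mathcal{C}_j)$, which $d_j\ge 2$ rules out. So there is never an obstruction.

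Your first suggested route, taking $J_0$ inside an information set $J_1$ of $\mathcal{C}_1$ that avoids $i$, also works. The projection onto $J_1$ is injective on $\mathcal{C}_1\cap\mathcal{C}_2$, so such a $J_0\subseteq J_1$ exists, and the extension to $\mathcal{C}_2$ avoiding $i$ follows from the coloop observation above. With this step inserted, your argument is complete and handles every coordinate $i$ uniformly.
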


\begin{corollary}[\textbf{Cadambe-Mazumdar-like Bound}]\label{cor:CM}
  Let $\mathcal{Q}_E$ be an {\qlb}$n,\kappa,\delta;c${\qrb}$_q$ CSS-like
 EA-qLRC with locality $r$ constructed from Proposition ~\ref{CSS-EA-qLRC}. Then
   \begin{equation}\label{EA-qLRC_CM}
  %
  2\kappa \;\le\; \min_{\substack{0\le\tau_1\le T_1\\ 0\le\tau_2\le T_2}}
\Bigl\{(\tau_1+\tau_2)r + k^{(q)}_{opt}\bigl(k_1+c-\tau_1(r+1),\delta\bigr)
+ k^{(q)}_{opt}\bigl(k_2+c-\tau_2(r+1),\delta\bigr)\Bigr\},
\end{equation}
where $T_i=\min\bigl\{\lceil (k_i+c)/(r+1)\rceil,\ \lceil\kappa/r\rceil\bigr\}$ and $k_i$ are the dimensions
  of the constituent classical codes $\mathcal{C}_i$ of classical locality $r$, for $i=1,2$.
\end{corollary}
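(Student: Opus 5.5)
The plan is to follow the proof of Theorem~\ref{Bounds_theorem} closely and feed its two punctured classical LRCs into the classical Cadambe--Mazumdar bound~\eqref{CM-Bound}. We do not start from the already-summed inequality $2\kappa \le k_{\mathrm{opt}}^q(k_1+c,\delta;r)+k_{\mathrm{opt}}^q(k_2+c,\delta;r)$. Instead we go back to the two auxiliary codes built there, because~\eqref{CM-Bound} must be applied to an actual code with known length, dimension, distance and locality.

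\textbf{Step 1: the two auxiliary codes.} In Case~1 of that proof, $\mathcal{C}_2$ contains $\mathcal{C}_1^\perp\cap\mathcal{C}_2$, of dimension $n-k_1-c$. Completing its basis and puncturing the first $n-k_1-c$ coordinates gives $\mathcal{C}'$ with parameters $[k_1+c,\kappa,\ge\delta]_q$. By Lemma~\ref{lem:subcode-locality}, $\mathcal{C}'$ has locality at most $r$. Symmetrically, Case~2 gives $\mathcal{C}''$ with parameters $[k_2+c,\kappa,\ge\delta]_q$ and locality at most $r$.

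One point needs checking. The lower bound $d(\mathcal{C}')\ge\delta$ used there only needs every nonzero codeword of $\mathcal{C}'$ to lie, after lifting, in $\mathcal{C}_2\setminus(\mathcal{C}_1^\perp\cap\mathcal{C}_2)$. That weight is at least $\min\{\cdot,\cdot\}=\delta$ regardless of which case attains the minimum. Hence both $\mathcal{C}'$ and $\mathcal{C}''$ exist simultaneously with distance at least $\delta$, and no case split is needed. I would state this explicitly, since the summation in Theorem~\ref{Bounds_theorem} silently relies on it as well.

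\textbf{Step 2: apply~\eqref{CM-Bound} to each code.} For $\mathcal{C}'$, with parameters $(k_1+c,\kappa,\ge\delta,r)_q$, the bound gives, for every admissible $\tau_1$,
\[
\kappa\le \tau_1 r + k_{\mathrm{opt}}^{(q)}\bigl(k_1+c-\tau_1(r+1),d(\mathcal{C}')\bigr).
\]
The admissible range is $0\le\tau_1\le\min\{\lceil (k_1+c)/(r+1)\rceil,\lceil\kappa/r\rceil\}=T_1$. Since $k_{\mathrm{opt}}^{(q)}(n',d)$ is nonincreasing in $d$ and $d(\mathcal{C}')\ge\delta$, we may replace $d(\mathcal{C}')$ by $\delta$. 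The same argument applied to $\mathcal{C}''$ gives the corresponding inequality with $\tau_2\le T_2$.

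\textbf{Step 3: combine.} Adding the two inequalities for arbitrary independent $\tau_1,\tau_2$ and minimizing over both yields~\eqref{EA-qLRC_CM}.

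The main obstacle is justifying the use of~\eqref{CM-Bound} at possibly non-optimal distance $d(\mathcal{C}')\ge\delta$. The cleanest route is the monotonicity of $k_{\mathrm{opt}}^{(q)}$ in $d$: a code of distance $d'\ge\delta$ also has distance at least $\delta$, so $k_{\mathrm{opt}}^{(q)}(m,d')\le k_{\mathrm{opt}}^{(q)}(m,\delta)$. A related subtlety is that the lengths $k_i+c-\tau_i(r+1)$ must remain nonnegative. This is ensured by the $\lceil\cdot\rceil$ cap in $T_i$ together with the standard convention in~\cite{cadambe2015bounds} that $k_{\mathrm{opt}}^{(q)}$ vanishes at nonpositive length.
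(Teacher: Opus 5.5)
Your proposal is correct and is essentially the paper's argument. The paper applies the classical bound~\eqref{CM-Bound} to each term $k_{\mathrm{opt}}^q(k_i+c,\delta;r)$ of the second inequality in Theorem~\ref{Bounds_theorem} and minimizes jointly over $(\tau_1,\tau_2)$. You instead apply it directly to the auxiliary codes $\mathcal{C}'$ and $\mathcal{C}''$ and use monotonicity of $k_{\mathrm{opt}}^{(q)}$ in the distance, which comes to the same thing. Your observation that both auxiliary codes have distance at least $\delta$, whichever term attains the minimum, is also worth keeping: the paper's proof of Theorem~\ref{Bounds_theorem} adds inequalities derived under two mutually exclusive case hypotheses without noting that neither case assumption is actually used.
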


\begin{proof}
Applying the Cadambe-Mazumdar bound~\eqref{CM-Bound}, namely
$k_{\mathrm{opt}}^q(n,d;r)\leq\min_{\tau\ge0}\{\tau\;r+k_{\mathrm{opt}}^{(q)}(n-\tau(r+1),d)\}$,
independently to each term in the second inequality of
Theorem~\ref{Bounds_theorem} and taking the joint minimum over
$\tau_1,\tau_2\in\mathbb{Z}_{\ge0}$ yields~\eqref{EA-qLRC_CM}.
\end{proof}

\begin{theorem}\label{Asynmmetric other Bounds}
Let $\mathcal{Q}_E$ be an {\qlb}$n,\kappa,\delta;c${\qrb}$_q$ CSS-like EA-qLRC with locality $r$ constructed from Proposition ~\ref{CSS-EA-qLRC}  using two classical $[n,k_i,d_i]_q$ codes $\mathcal{C}_i$ of locality $r$,  for $i=1,2$. Then the following bounds hold:
\begin{enumerate}

\item \textbf{Griesmer-like Bound}
\[
    n \;\ge\; 2\max_{0\le\tau\le \left\lceil\frac{\kappa}{r}\right\rceil-1}
      \left\{\tau(r+1)+\sum_{i=0}^{\kappa-\tau r-1}\left\lceil\frac{\delta}{q^i}\right\rceil\right\}-\kappa-c.\]
\
 \item \textbf{Plotkin-like Bound}
\[  2\delta\; \le\;
    \min _{0\le \tau_1,\tau_2 \le \lceil \frac{\kappa }{r}\rceil -1}
    (q-1)\left\{\frac{q^{\kappa -\tau_1 r-1}\big(k_1+c-\tau_1(r+1)\big)}{q^{\kappa -\tau_1 r}-1}+\frac{q^{\kappa -\tau_2 r-1}\big(k_2+c-\tau_2(r+1)\big)}{q^{\kappa -\tau_2 r}-1}\right\}.\]
    
 \item \textbf{Sphere-packing-like Bound}
 \[ \kappa\; \le\; n + c - \left(\max_{0\leq\tau_1\leq \lfloor\frac{k_1+c-1}{r+1}\rfloor, 0\leq\tau_2\leq \lfloor\frac{k_2+c-1}{r+1}\rfloor} \left\{ \tau_1+\tau_2 + \log_q \left[ V_q\big(k_1+c-\tau_1(r+1), t\big) \cdot V_q\big(k_2+c-\tau_2(r+1), t\big) \right] \right\}\right),\]
 where $t = \lfloor \frac{\delta-1}{2} \rfloor$, $V_q(M, t) = \sum_{j=0}^{t} \binom{M}{j}(q-1)^j$ is the volume of the classical $q$-ary Hamming ball.
\end{enumerate}
\end{theorem}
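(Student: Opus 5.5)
The plan mirrors the proof of Corollary~\ref{cor:singleton}. Everything rests on the fact, established inside the proof of Theorem~\ref{Bounds_theorem}, that the relevant complement of $\mathcal{C}_1^\perp\cap\mathcal{C}_2$ in $\mathcal{C}_2$ (resp.\ of $\mathcal{C}_1\cap\mathcal{C}_2^\perp$ in $\mathcal{C}_1$), once punctured, yields a classical code $\mathcal{C}'$ (resp.\ $\mathcal{C}''$). Its parameters are $[k_1+c,\kappa,\ge\delta]_q$ (resp.\ $[k_2+c,\kappa,\ge\delta]_q$), and by Lemma~\ref{lem:subcode-locality} it has locality at most $r$.

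Since the proof of Theorem~\ref{Bounds_theorem} builds both constructions from the codes themselves, I will use both simultaneously, rather than only the one realizing the minimum in the definition of $\delta$. This requires checking that every nonzero codeword of each punctured complement has weight at least $\delta$. That holds because each such codeword lies in $\mathcal{C}_2\setminus(\mathcal{C}_1^\perp\cap\mathcal{C}_2)$ (resp.\ $\mathcal{C}_1\setminus(\mathcal{C}_1\cap\mathcal{C}_2^\perp)$), whose minimum weight is at least $\delta$. Puncturing only zero coordinates preserves weight. We may also lower the distance to exactly $\delta$, which is harmless for Griesmer, Plotkin and Hamming-type upper bounds since these are monotone in $d$. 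So both $\mathcal{C}'$ and $\mathcal{C}''$ are LRCs with dimension $\kappa$, distance $\ge\delta$ and locality $r$, and I apply Lemma~\ref{lemma:classical-bounds} to each.

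For the Griesmer-like bound, I apply \eqref{eq:griesmer} to $\mathcal{C}'$ and $\mathcal{C}''$. This gives $k_i+c\ge \max_\tau\{\tau(r+1)+\sum_{i}\lceil\delta/q^i\rceil\}$, with the same range $0\le\tau\le\lceil\kappa/r\rceil-1$ for both codes because both have dimension $\kappa$. Adding the two inequalities and substituting $k_1+k_2+2c=n+\kappa+c$ gives $n+\kappa+c\ge 2\max\{\cdots\}$, which rearranges to the claim.

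For the Plotkin-like bound, I apply \eqref{eq:plotkin} to each code, with $n$ replaced by $k_i+c$ and $k$ by $\kappa$. This gives $\delta\le$ the $\tau_i$-term for each $i$, for any admissible $\tau_1,\tau_2$ chosen independently. Summing and minimizing over the pair gives the stated bound.

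For the sphere-packing-like bound, \eqref{eq:sphere-packing} gives $\kappa\le k_i+c-\tau_i-\log_q V_q(k_i+c-\tau_i(r+1),t)$ for each admissible $\tau_i\le\lfloor (k_i+c-1)/(r+1)\rfloor$. Here I use that the ball volume is nondecreasing in the radius, so $t$ computed from $\delta$ is valid even if the true distance exceeds $\delta$. Adding the two inequalities gives $2\kappa\le n+\kappa+c-\tau_1-\tau_2-\log_q[V_qV_q]$, hence $\kappa\le n+c-(\cdots)$. Maximizing over $\tau_1,\tau_2$ gives the statement.

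The main obstacle is the simultaneous step. In Theorem~\ref{Bounds_theorem} the two cases were phrased as alternatives, so I must verify explicitly that both punctured complements have dimension exactly $\kappa$. This follows from $\dim(\mathcal{C}_1^\perp\cap\mathcal{C}_2)=n-k_1-c$ and its symmetric counterpart $\dim(\mathcal{C}_1\cap\mathcal{C}_2^\perp)=n-k_2-c$, obtained by applying Theorem~\ref{PC-EAQEC} with the roles of the codes swapped, using $\operatorname{rank}(H_1H_2^T)=\operatorname{rank}(H_2H_1^T)$. Both complements also have weight $\ge\delta$ by the definition of $\delta$ as a minimum. Once this is in place, all three bounds are routine summations.
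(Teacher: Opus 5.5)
Your proposal is correct and follows essentially the paper's argument. The paper applies the classical Griesmer-, Plotkin- and sphere-packing-like bounds to $n_{\mathrm{opt}}^q$, $d_{\mathrm{opt}}^q$ and $k_{\mathrm{opt}}^q$ through the inequalities of Theorem~\ref{Bounds_theorem}; you apply them directly to the punctured codes $\mathcal{C}'$ and $\mathcal{C}''$, which removes one layer of indirection but is otherwise the same proof. Your explicit check that both punctured codes exist simultaneously, each with dimension $\kappa$ and weight at least $\delta$, also justifies a step the paper glosses over: it adds bounds obtained from what it presented as two mutually exclusive cases.
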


\begin{proof}
Recall from Theorem~\ref{Bounds_theorem} that
\begin{equation}\tag{\ref{equ_3}}
  k_1+c\;\geq\; n_{\mathrm{opt}}^q(\kappa,\delta;r),
\end{equation}
\begin{equation}\tag{\ref{equ_4}}
  k_2+c\;\geq\; n_{\mathrm{opt}}^q(\kappa,\delta;r),
\end{equation}
and from the first two inequalities of that theorem,
\[
  \delta\;\leq\; d_{\mathrm{opt}}^q(k_1+c,\kappa;r),
  \qquad
  \delta\;\leq\; d_{\mathrm{opt}}^q(k_2+c,\kappa;r),
\]
\[
  \kappa\;\leq\; k_{\mathrm{opt}}^q(k_1+c,\delta;r),
  \qquad
  \kappa\;\leq\; k_{\mathrm{opt}}^q(k_2+c,\delta;r).
\]

\smallskip
\textbf{Griesmer-like bound.}
Since $n_{\mathrm{opt}}^q(\kappa,\delta;r)$ is, by definition, the length of an optimal
classical $(\cdot,\kappa,\delta,r)_q$-LRC, it satisfies the classical Griesmer-like
bound~\eqref{eq:griesmer} in its own right:
\[
  n_{\mathrm{opt}}^q(\kappa,\delta;r) \;\ge\;
  \max_{0\le\tau\le\lceil\kappa/r\rceil-1}
  \left\{\tau(r+1)+\sum_{i=0}^{\kappa-\tau r-1}\left\lceil\frac{\delta}{q^i}\right\rceil\right\}.
\]
Combining this with~\eqref{equ_3} and~\eqref{equ_4} separately,
\[
  k_1+c \;\ge\; \max_{0\le\tau_1\le\lceil\kappa/r\rceil-1}
  \left\{\tau_1(r+1)+\sum_{i=0}^{\kappa-\tau_1 r-1}\left\lceil\frac{\delta}{q^i}\right\rceil\right\},
  \qquad
  k_2+c \;\ge\; \max_{0\le\tau_2\le\lceil\kappa/r\rceil-1}
  \left\{\tau_2(r+1)+\sum_{i=0}^{\kappa-\tau_2 r-1}\left\lceil\frac{\delta}{q^i}\right\rceil\right\}.
\]
Denote by $G(\kappa,\delta,r)$ the common maximizing expression on the right-hand side of
both inequalities; note that $G(\kappa,\delta,r)$ depends only on $\kappa,\delta,r$ and not on
$k_1$ or $k_2$, so both inequalities bound $k_1+c$ and $k_2+c$ below by the \emph{same}
quantity $G(\kappa,\delta,r)$. Adding the two inequalities gives
$  (k_1+c)+(k_2+c) \;\ge\; 2\,G(\kappa,\delta,r)$. 
Substituting $k_1+k_2=n+\kappa-c$ (from $\kappa=k_1+k_2-n+c$) yields
\[
  n \;\ge\; 2\max_{0\le\tau\le\lceil\kappa/r\rceil-1}
  \left\{\tau(r+1)+\sum_{i=0}^{\kappa-\tau r-1}\left\lceil\frac{\delta}{q^i}\right\rceil\right\}-\kappa-c.
\]

\smallskip
\textbf{Plotkin-like bound.}
Applying the classical Plotkin-like bound~\eqref{eq:plotkin} to
$\delta\leq d_{\mathrm{opt}}^q(k_1+c,\kappa;r)$ and
$\delta\leq d_{\mathrm{opt}}^q(k_2+c,\kappa;r)$ separately, with independent optimization
variables $\tau_1$ and $\tau_2$, gives
\[
  \delta \;\le\;
  \min_{0\le\tau_1\le\lceil\kappa/r\rceil-1}
  \left\{\frac{q^{\kappa-\tau_1 r-1}(q-1)\bigl(k_1+c-\tau_1(r+1)\bigr)}{q^{\kappa-\tau_1 r}-1}\right\},
\]
\[
  \delta \;\le\;
  \min_{0\le\tau_2\le\lceil\kappa/r\rceil-1}
  \left\{\frac{q^{\kappa-\tau_2 r-1}(q-1)\bigl(k_2+c-\tau_2(r+1)\bigr)}{q^{\kappa-\tau_2 r}-1}\right\}.
\]
Adding these two inequalities and taking the joint minimum over $(\tau_1,\tau_2)$ yields the
stated bound.

\textbf{Sphere-packing-like bound.}
By the second inequality of Theorem~\ref{Bounds_theorem},
\[
  2\kappa \;\le\; k_{\mathrm{opt}}^q(k_1+c,\delta;r) + k_{\mathrm{opt}}^q(k_2+c,\delta;r).
\]
Since $k_{\mathrm{opt}}^q(n',\delta;r)$ is the dimension of an optimal classical
$(n',\cdot,\delta,r)_q$-LRC, it satisfies the classical sphere-packing-like
bound~\eqref{eq:sphere-packing} in its own right. Applying this with $n'=k_1+c$ and
$n'=k_2+c$, using independent optimization variables $\tau_1$ and $\tau_2$, gives
\[
  k_{\mathrm{opt}}^q(k_1+c,\delta;r) \;\le\; (k_1+c) -
  \max_{0\le\tau_1\le\lfloor(k_1+c-1)/(r+1)\rfloor}
  \left\{\tau_1+\log_q\bigl[V_q(k_1+c-\tau_1(r+1),t)\bigr]\right\},
\]
\[
  k_{\mathrm{opt}}^q(k_2+c,\delta;r) \;\le\; (k_2+c) -
  \max_{0\le\tau_2\le\lfloor(k_2+c-1)/(r+1)\rfloor}
  \left\{\tau_2+\log_q\bigl[V_q(k_2+c-\tau_2(r+1),t)\bigr]\right\}.
\]
Adding these two inequalities and using
$\max_{\tau_1}\{f(\tau_1)\}+\max_{\tau_2}\{g(\tau_2)\}=\max_{\tau_1,\tau_2}\{f(\tau_1)+g(\tau_2)\}$
together with $\log_q A+\log_q B=\log_q(AB)$ gives
\[
  k_{\mathrm{opt}}^q(k_1+c,\delta;r)+k_{\mathrm{opt}}^q(k_2+c,\delta;r)
  \;\le\; (k_1+k_2+2c) -
  \max_{\tau_1,\tau_2}
  \left\{\tau_1+\tau_2+\log_q\Bigl[V_q(k_1+c-\tau_1(r+1),t)\cdot
  V_q(k_2+c-\tau_2(r+1),t)\Bigr]\right\}.
\]
Combining with the inequality $2\kappa\le k_{\mathrm{opt}}^q(k_1+c,\delta;r)+k_{\mathrm{opt}}^q(k_2+c,\delta;r)$
above, and substituting $k_1+k_2=n+\kappa-c$, yields
\[
  2\kappa \;\le\; n+\kappa+c -
  \max_{\tau_1,\tau_2}
  \left\{\tau_1+\tau_2+\log_q\Bigl[V_q(k_1+c-\tau_1(r+1),t)\cdot
  V_q(k_2+c-\tau_2(r+1),t)\Bigr]\right\},
\]
which rearranges to the stated bound.
\end{proof}

If $\mathcal{C}_1=\mathcal{C}_2$, then, using Theorem~\ref{Asynmmetric other Bounds}, we obtain the following corollary:

\begin{corollary}\label{Other_theorem}
Let $\mathcal{Q}_E$ be an {\qlb}$n,\kappa=2k-n+c,\delta;c${\qrb}$_q$ CSS-like  EA-qLRC with locality $r$ constructed from Proposition~\ref{CSS-EA-qLRC} with $\mathcal{C}_1=\mathcal{C}_2=\mathcal{C}$, of classical locality $r$. Then the following bounds hold.
\begin{enumerate}
    \item \textbf{Griesmer-like Bound}
 \begin{equation}\label{EA-qLRC Griesmer Bound}
      n \;\ge\;
        2\max_{0\leq\tau\leq \lceil\frac{\kappa}{r}\rceil-1}
        \left\{
          \tau(r+1)
          +\sum_{i=0}^{\kappa-\tau r-1}\!\left\lceil\frac{\delta}{q^i}\right\rceil
        \right\}-\kappa-c.
 \end{equation}

    \item \textbf{Plotkin-like Bound}

  \begin{equation}\label{EA-qLRC Ploktin Bound}
      2\delta\; \le\;
    \min _{0\le \tau \le \lceil \frac{\kappa }{r}\rceil -1}
    \left\{\frac{q^{\kappa -\tau r-1}(q-1)\big(n+\kappa +c-2\;\tau(r+1)\big)}{q^{\kappa -\tau r}-1}\right\}.
  \end{equation}
 \item \textbf{Sphere-packing-like Bound}
    \begin{equation}\label{EA-qLRC Sphere Packing Bound}
         \kappa\; \le\; n+ c -2\;\max_{0\leq\tau\leq \lfloor\frac{n+\kappa+c-2}{2(r+1)}\rfloor} \left\{ \tau + \log_q \left[ V_q\big(\frac{n+\kappa+c}{2}-\tau(r+1), t\big) \right] \right\},
    \end{equation}
where $t = \lfloor \frac{\delta-1}{2} \rfloor$, $V_q(M, t) = \sum_{j=0}^{t} \binom{M}{j}(q-1)^j$ is the volume of the classical $q$-ary Hamming ball.

\end{enumerate}
\end{corollary}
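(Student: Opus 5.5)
The plan is to specialize Theorem~\ref{Asynmmetric other Bounds} to $\mathcal{C}_1=\mathcal{C}_2=\mathcal{C}$. In that case $k_1=k_2=k$, so the relation $\kappa=k_1+k_2-n+c$ becomes $\kappa=2k-n+c$. Equivalently,
\[
k+c=\frac{n+\kappa+c}{2}.
\]
Every bound in Theorem~\ref{Asynmmetric other Bounds} depends on $k_1,k_2$ only through $k_i+c$. So each bound will follow by substituting $k_i+c=\frac{n+\kappa+c}{2}$ and then simplifying the two identical summands.

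\textbf{Griesmer-like bound.} The bound in Theorem~\ref{Asynmmetric other Bounds} already involves only $n,\kappa,c,\delta,r,q$. Hence \eqref{EA-qLRC Griesmer Bound} is literally the same statement and needs no argument.

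\textbf{Plotkin-like bound.} Set $k_1+c=k_2+c=\frac{n+\kappa+c}{2}$ in the Plotkin-like bound. The joint minimum over $(\tau_1,\tau_2)$ is then a minimum of $f(\tau_1)+f(\tau_2)$ for a single function $f$. This equals $2\min_\tau f(\tau)$, attained at $\tau_1=\tau_2$. Writing out $2f(\tau)$ gives
\[
\frac{q^{\kappa-\tau r-1}(q-1)\bigl(n+\kappa+c-2\tau(r+1)\bigr)}{q^{\kappa-\tau r}-1},
\]
which is exactly \eqref{EA-qLRC Ploktin Bound}. Alternatively, one can apply the classical bound \eqref{eq:plotkin} once to $\delta\le d_{\mathrm{opt}}^q(k+c,\kappa;r)$ and double it.

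\textbf{Sphere-packing-like bound.} The joint maximum of $g(\tau_1)+g(\tau_2)$, with
\[
g(\tau)=\tau+\log_q V_q\!\left(\tfrac{n+\kappa+c}{2}-\tau(r+1),t\right),
\]
equals $2\max_\tau g(\tau)$. The range of $\tau$ becomes
\[
0\le\tau\le\left\lfloor\frac{k+c-1}{r+1}\right\rfloor=\left\lfloor\frac{(n+\kappa+c)/2-1}{r+1}\right\rfloor=\left\lfloor\frac{n+\kappa+c-2}{2(r+1)}\right\rfloor.
\]
Substituting these gives \eqref{EA-qLRC Sphere Packing Bound}.

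The only point needing care, and the closest thing to an obstacle, is the identity $\max_{\tau_1,\tau_2}\{g(\tau_1)+g(\tau_2)\}=2\max_\tau g(\tau)$ and its analogue for the minimum. It holds because the two variables range independently over the same index set, so each summand can be optimized separately. One should also note that $\frac{n+\kappa+c}{2}=k+c$ is an integer, so $V_q$ is evaluated at an integer length.
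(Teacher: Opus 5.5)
Your proposal is correct and takes the same route as the paper. The paper obtains this corollary by specializing Theorem~\ref{Asynmmetric other Bounds} to $k_1=k_2=k$, substituting $k+c=\frac{n+\kappa+c}{2}$, and collapsing the independent optimizations over $(\tau_1,\tau_2)$ into twice a single optimization; your checks of the $\tau$-range $\lfloor (k+c-1)/(r+1)\rfloor=\lfloor (n+\kappa+c-2)/(2(r+1))\rfloor$ and of the integrality of $\frac{n+\kappa+c}{2}$ fill in details the paper leaves implicit.
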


\begin{remark}
While the Singleton-like bound is entirely alphabet-independent, the Griesmer, Plotkin, and Sphere-packing-like bounds become tighter as the finite field alphabet size $q$ decreases (e.g., in binary where $q=2$).
\end{remark}

Based on Corollaries~\ref{cor:singleton} and \ref{Other_theorem}, four explicit bounds for CSS-like EA-qLRCs have been established. An important question is to determine the relative performance of these bounds and to investigate whether they improve upon the Singleton-like bound in \eqref{EA-qLRC_SB}. Because these bounds apply to different parameter classes, a direct comparison is not straightforward. Consequently, we distinguish between finite-length and asymptotic code regimes and carry out a separate comparative analysis for each case, taking into account the relevant parameter variations.

\textbf{For finite code lengths}, no single bound is universally tightest. The maximum possible dimension $\kappa$ for a given code length $n$ depends entirely on the specific regime; specifically, the alphabet size $q$, the distance $\delta$, and the locality parameter $r$. Here is the comparison of how each bound constrains $\kappa$ relative to $n$, and the specific regimes where each bound becomes tight.

\begin{itemize}

\item \textbf{The Singleton-like Bound}: This bound is typically tight only for very large alphabet sizes ($q \gg n$) and small distances $\delta$. For small $q$, this bound is not tight (Refer to Figure~\ref{T_Singleton Bound}).

\item \textbf{The Griesmer-like Bound}: This is the tightest bound for small finite fields (especially $q=2, 3$) and low-to-moderate dimensions $\kappa$. It strictly outperforms the Singleton bound over small alphabets (Refer to Figure~\ref{T_Griesmer}).

\item \textbf{The Plotkin-like Bound}: This is the absolute tightest bound in the high distance regime (Refer to Figure~\ref{T_Ploktin Bound}).

\item \textbf{The Sphere-Packing-like Bound}:  This bound is tightest in the asymptotic regime for moderate relative distances ($\delta/n$) and small $q$. While Griesmer is excellent for small $\kappa$, the sphere-packing bound usually provides the strictest limits on the asymptotic capacity (rate $\kappa/n$) for highly entangled, large-blocklength codes (Refer to Figure~\ref{Finite All Bound}).

\end{itemize}
Numerical comparisons for different parameters are illustrated in Figures~\ref{T_Singleton Bound}, \ref{T_Griesmer},  \ref{T_Ploktin Bound} and \ref{Finite All Bound}.

\begin{figure}[h!]
\begin{minipage}[l]{8cm}
    \centering
    \includegraphics[width=2.8in]{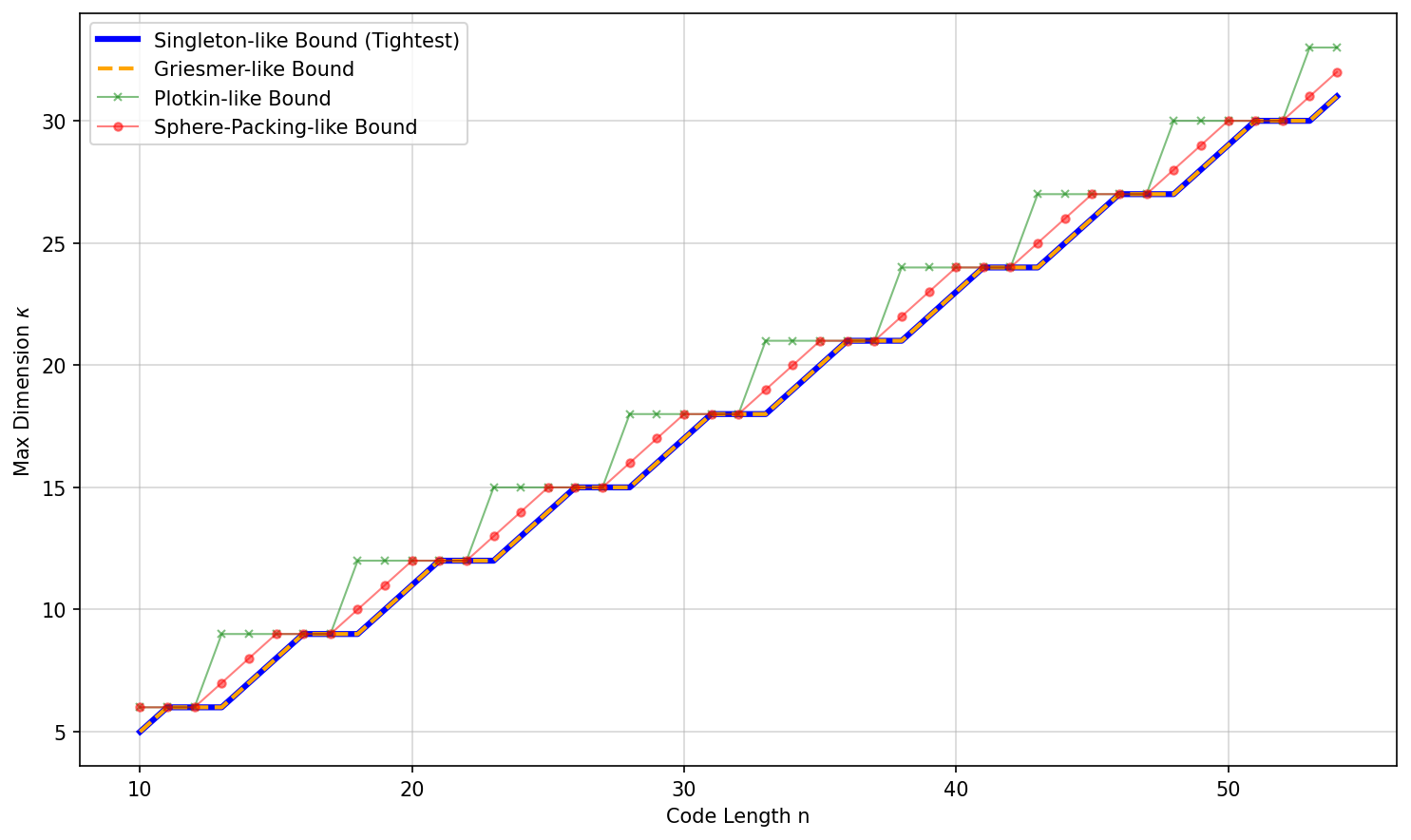}
\caption{Comparison of the bounds  with  $q = 64 ,\delta = 3,r = 3\;\& \;c = 1$.}
\label{T_Singleton Bound}
\end{minipage}
\hfill
\begin{minipage}[r]{8cm}
\includegraphics[width=2.9in]{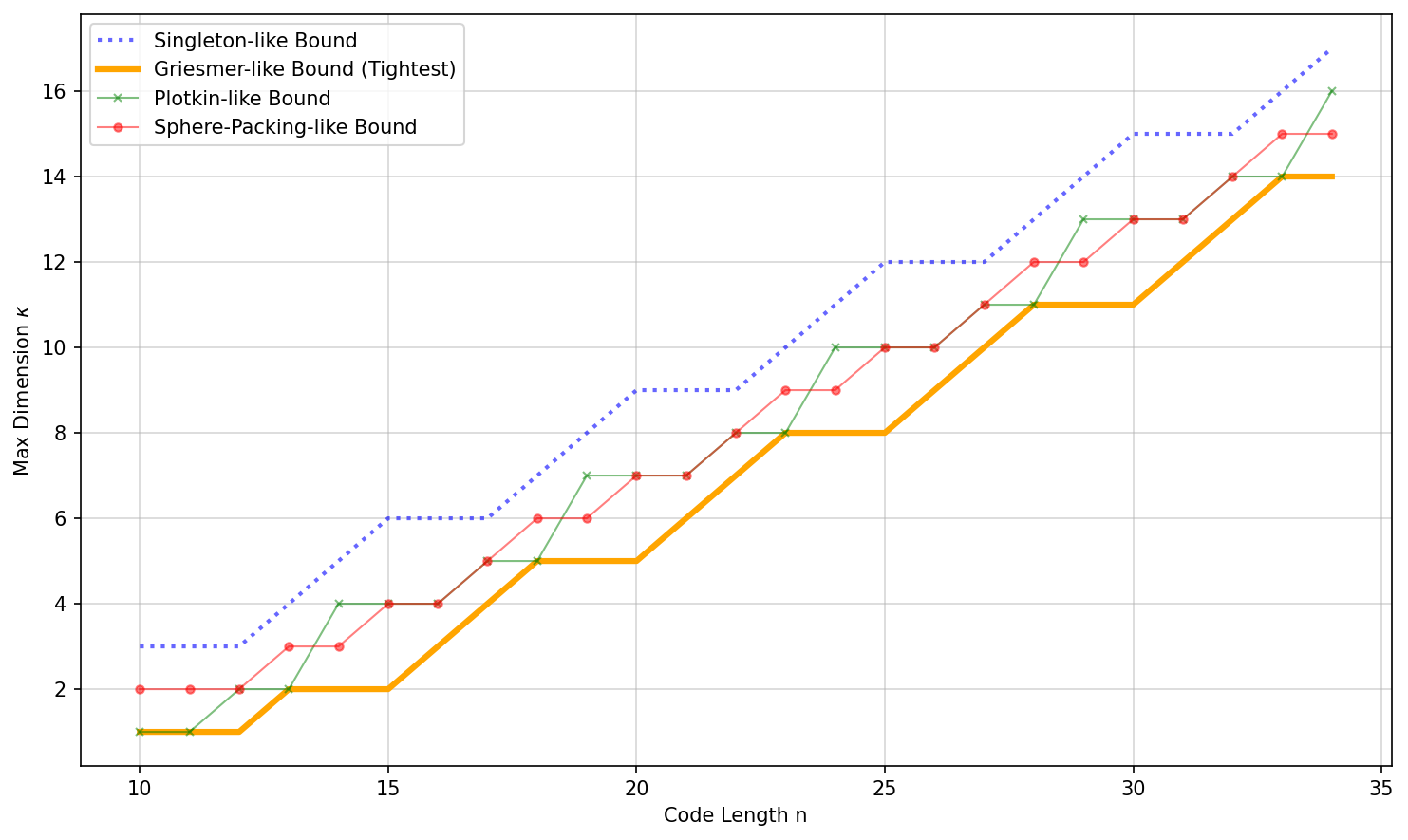}%
\caption{Comparison of the bounds   with  $q =2 ,\delta = 5,r = 3\;\& \;c = 1$.}
\label{T_Griesmer}
\end{minipage}
\end{figure}
\begin{figure}[h!]
\begin{minipage}[l]{8cm}
    \centering
    \includegraphics[width=2.8in]{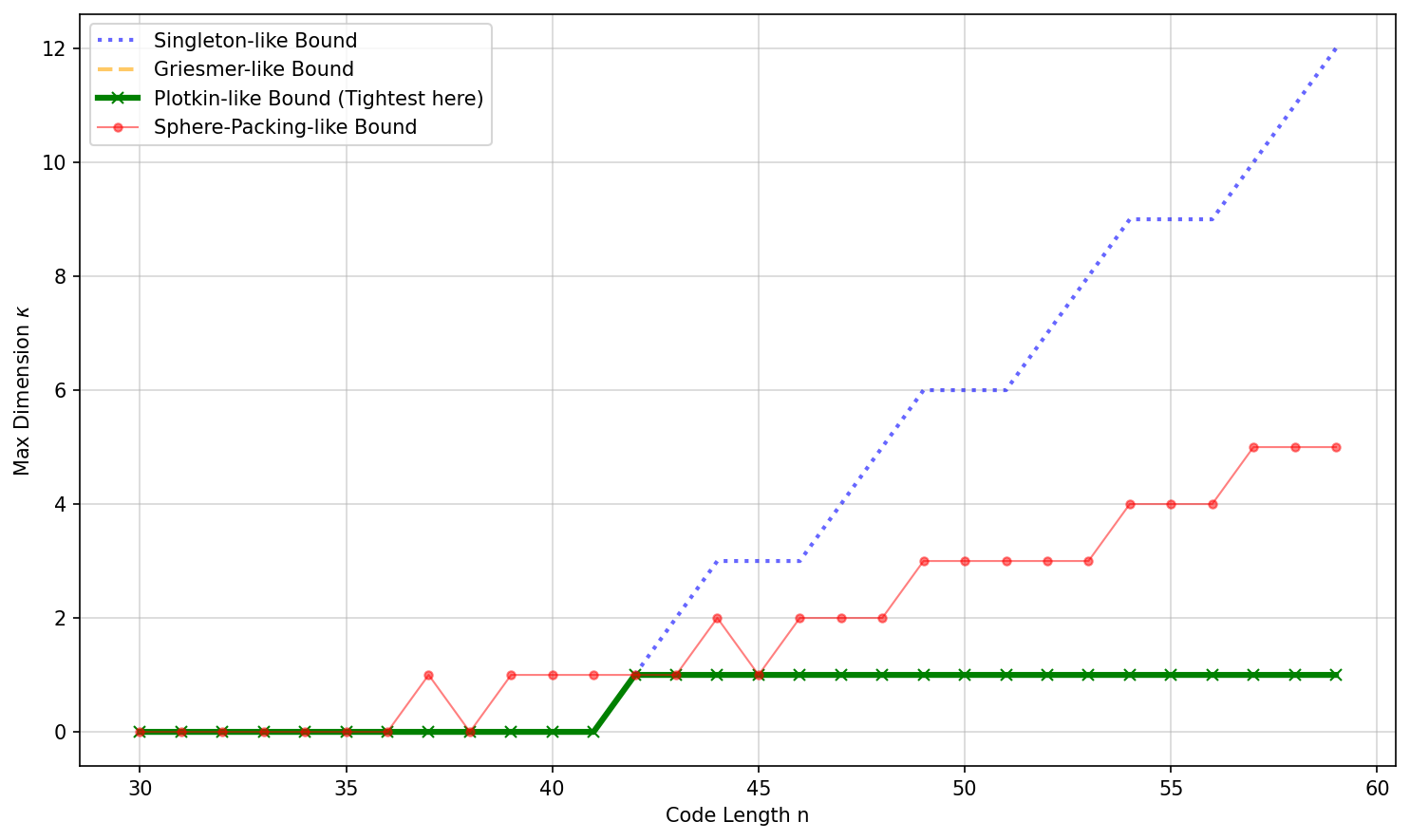}
    \caption{Comparison of the bounds  with $q =2 ,\delta = 22,r = 3\;\& \;c = 1$.}
\label{T_Ploktin Bound}
\end{minipage}
\hfill
\begin{minipage}[r]{8cm}
\centering
  \includegraphics[width=2.8in]{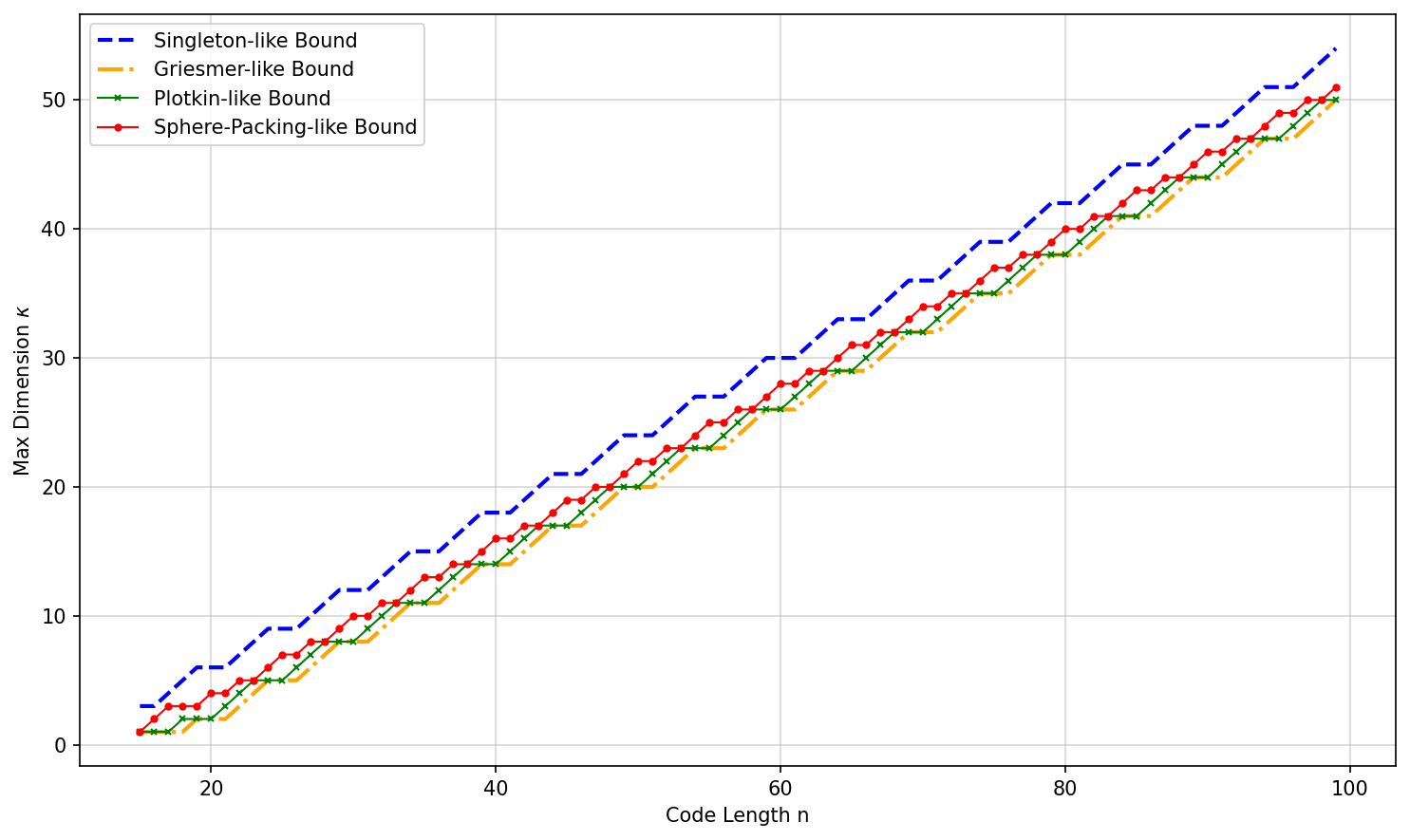}
 \caption{Comparison of the bounds   with $q = 2, \delta = 7,r=3$ \& $c = 1$.}
 \label{Finite All Bound}
\end{minipage}
\end{figure}

 \medskip
 \textbf{In the asymptotic regime}, where the code length $n$ tends to infinity, it is more meaningful to compare the bounds through their asymptotic expressions rather than their finite-length counterparts. It can be shown that all CSS-like EA-qLRC bounds obtained in  Corollary~\ref{Other_theorem} are strictly stronger than the  Singleton-like bound in \eqref{EA-qLRC_SB}. Moreover, the asymptotic analysis allows us to characterize the relative tightness among the proposed bounds. To facilitate this comparison, we first derive their asymptotic forms. Given an {\qlb}$n,\kappa,\delta;c${\qrb}$_q$ CSS-like EA-qLRC with locality $r$, let $R = \lim_{n \to \infty} \frac{\kappa}{n}$,\;$\Delta = \lim_{n \to \infty} \frac{\delta}{n}$,\;$\gamma = \lim_{n \to \infty} \frac{c}{n}$,\;\text{and}\; $R_{net} =\lim_{n \to \infty} \frac{\kappa-c}{n}$ be its rate, relative distance, entanglement rate, and net rate, respectively.

  To examine the relative performance of the bounds asymptotically, we now present the asymptotic forms of these three bounds. The asymptotic version of the Singleton-like bound in \eqref{EA-qLRC_SB} is obtained by dividing each term of \eqref{EA-qLRC_SB} by $n$ and appropriately rearranging the terms, as follows:
\begin{equation}\label{ASB}
    R  \le \frac{r}{r+2}(1 + \gamma) - \frac{2r}{r+2}\Delta+o(1),\;\; n\to\infty.
\end{equation}

Since the improved transmission rate of an entanglement-assisted quantum code is achieved at the cost of consuming additional entanglement, the net rate is the natural performance measure for the entanglement-assisted quantum codes. While dividing \eqref{EA-qLRC_SB} by $n$ and rearranging the terms, we get
\[ R_{net}\leq \frac{r}{r+2}-\frac{2}{r+2}\;\gamma-\frac{2r}{r+2}\;\Delta+o(1),\;\; n\to\infty.\]

 Unlike the other bounds, the Griesmer-like bound in \eqref{EA-qLRC Griesmer Bound} is formulated as a lower bound on the code length $n$ for given values of the dimension, minimum distance, and locality. Consequently, taking the asymptotic limit $n\to\infty$ directly in \eqref{EA-qLRC Griesmer Bound} and \eqref{EA-qLRC Ploktin Bound} does not yield a meaningful asymptotic characterization, as both yield the identical asymptotic bound. Therefore, to enable an explicit comparison with other bounds, and following the similar approach of \cite{li2025improved} and \cite{li2025optimal} to obtain an explicit asymptotic bound, consider $\tau=\lceil\kappa/r\rceil-1$  in the Griesmer-like bound \eqref{EA-qLRC Griesmer Bound}, which gives
\begin{equation}\label{griesmer1}
n \;\ge\; 2\Bigl(\Bigl\lceil\frac{\kappa}{r}\Bigr\rceil-1\Bigr)(r+1)
+ 2\sum_{i=0}^{r-1}\Bigl\lceil\frac{\delta}{q^{i}}\Bigr\rceil-\kappa-c .
\end{equation}

Since $\lceil\kappa/r\rceil=\kappa/r+o(1)$, dividing by $n$ and letting $n\to\infty$
absorbs the rounding into the $o(1)$ term, and the asymptotic form is unchanged.

For any fixed integer $m$ with $1\leq m \leq r$, we can write the sum
\[\sum_{i=0}^{r-1}\left\lceil\frac{\delta}{q^i}\right\rceil=\sum_{i=0}^{m-1}\left\lceil\frac{\delta}{q^i}\right\rceil+ \sum_{i=m}^{r-1}\left\lceil\frac{\delta}{q^i}\right\rceil. \] This implies that \[\sum_{i=0}^{r-1}\left\lceil\frac{\delta}{q^i}\right\rceil
\ge
\sum_{i=0}^{m-1}\frac{\delta}{q^i} + (r-m).\] 
From \eqref{griesmer1}, which further implies that

\begin{eqnarray*}
\qquad & \ge &
2\left(\frac{\kappa}{r}-1\right)(r+1)
+2\sum_{i=0}^{m-1}\frac{\delta}{q^i}
+2(r-m)
-\kappa-c \\
\qquad &=&
2\left(\frac{\kappa}{r}-1\right)(r+1)
+ 2\delta\;\frac{q^m-1}{q^m-q^{m-1}} +2(r-m) -\kappa-c \\
\qquad & = & \frac{r+2}{r} \kappa + 2\delta\;\frac{q^m-1}{q^m-q^{m-1}} -2(m+1)-c. 
\end{eqnarray*}
Thus
\begin{equation}\label{AGB}
    R \le
\frac{r}{r+2}\bigl(1+\gamma\bigr)-
\frac{2r}{r+2}\;
\frac{q^m-1}{q^m-q^{m-1}}\;
\Delta
+o(1),
\; n\to\infty,
\end{equation}
and
\[R_{net}\leq \frac{r}{r+2}-\frac{2}{r+2}\;\gamma-\frac{2r}{r+2}\;
\frac{q^m-1}{q^m-q^{m-1}}\;
\Delta
+o(1)\; n\to\infty.\]
Similarly, the Plotkin-like bound in \eqref{EA-qLRC Ploktin Bound} implies that
\[2\delta\le
\frac{q^{r-1}(q-1)}{q^r-1}
\left(
n+c-\frac{r+2}{r}\kappa+2(r+1)
\right),\]
which implies that
\begin{equation}\label{APB}
    R\le
\frac{r}{r+2}(1+\gamma)-
\frac{2r}{r+2}\;
\frac{q^r-1}{q^r-q^{r-1}}\;\Delta
+o(1),\; n\to\infty,
\end{equation}

and 

\[    R_{net}\le
\frac{r}{r+2}-\frac{2}{r+2}\;\gamma-
\frac{2r}{r+2}\;
\frac{q^r-1}{q^r-q^{r-1}}\;\Delta
+o(1),\; n\to\infty.\]

 From the inequalities \eqref{ASB}, \eqref{AGB} and \eqref{APB}, we can see that

 \begin{equation}\label{Com_equ}
    1\;\le\; \frac{q^m-1}{q^m-q^{m-1}}\;\le\; \frac{q^r-1}{q^r-q^{r-1}} ,\; \mbox{ for all } \; 1\leq m \leq r.
 \end{equation}

\begin{remark}\label{rem:AGB-APB-coincide}
The inequality~\eqref{Com_equ} shows that the coefficient
$\frac{q^m-1}{q^m-q^{m-1}}$ governing the tightness of the asymptotic Griesmer-like bound
\eqref{AGB} is non-decreasing in $m$, and is maximized at the endpoint $m=r$. At $m=r$, the
$\sum_{i=m}^{r-1}\lceil\delta/q^i\rceil \ge (r-m)$  becomes vacuous, since the summation
range $i=m,\dots,r-1$ is empty. Consequently, setting $m=r$ in \eqref{AGB} yields
\[
    R \le \frac{r}{r+2}(1+\gamma) - \frac{2r}{r+2}\,\frac{q^r-1}{q^r-q^{r-1}}\,\Delta + o(1),
    \qquad n\to\infty,
\]
which coincides exactly with the asymptotic Plotkin-like bound \eqref{APB}. That is, the
asymptotic Griesmer-like bound, optimized over the free parameter $m$, recovers the
asymptotic Plotkin-like bound as its tightest member; for $1\le m<r$, \eqref{AGB} is a
strictly weaker (looser) bound than \eqref{APB}, consistent with~\eqref{Com_equ}. 
\end{remark}

Finally, we turn to the asymptotic form of the sphere-packing-like bound
\eqref{EA-qLRC Sphere Packing Bound}. To this end, we first recall the $q$-ary entropy
function $H_q(x)$, defined as
\[
H_{q}(x)=
\begin{cases}
0, & \text{if}\;x=0\\
x\log_{q}(q-1)-x\log_{q}x-(1-x)\log_{q}(1-x),
& \text{if}\; 0<x\le 1-q^{-1}.
\end{cases}
\]
By \cite[Lemma~2.10.3]{huffman2010fundamentals}, for $0\le \Delta' \le 1-q^{-1}$ and $q\ge 2$,
\begin{equation}\label{asyHB}
    \lim_{M\to\infty}\;M^{-1}\;\log_{q}\bigl[V_{q}(M,\lfloor \Delta' M\rfloor)\bigr]
=H_{q}(\Delta').
\end{equation}
Fix any integer $\tau\ge 0$ with $0\le\tau\le\lfloor(n+\kappa+c-2)/(2(r+1))\rfloor$, and set
$M = \tfrac{n+\kappa+c}{2}-\tau(r+1)$; since $\tau$ is fixed while $n\to\infty$,
\[
    \frac{M}{n} \;\longrightarrow\; \mu:=\frac{1+R+\gamma}{2}.
\]
Writing $t=\lfloor(\delta-1)/2\rfloor$, the ball's relative radius satisfies
$t/M \to \Delta/(2\mu)$, and applying~\eqref{asyHB} with $\Delta'=\Delta/(2\mu)$ gives
\[
    \log_q V_q(M,t) \;=\; \mu n\, H_q\!\left(\frac{\Delta}{2\mu}\right) + o(n),
    \qquad n\to\infty.
\]
Since $2\tau/n\to 0$ for fixed $\tau$, substituting into
\eqref{EA-qLRC Sphere Packing Bound} and dividing by $n$ yields
\begin{equation}\label{asySPB}
    R \;\le\; 1+\gamma-2\mu\,H_q\!\left(\frac{\Delta}{2\mu}\right)+o(1),
    \qquad \mu=\frac{1+R+\gamma}{2},\qquad n\to\infty,
\end{equation}
and correspondingly
\[    R_{net} \;\le\; 1-2\mu\,H_q\!\left(\frac{\Delta}{2\mu}\right)+o(1),
    \qquad \mu=\frac{1+R_{net}+2\gamma}{2}, \qquad n\to\infty.\]

\begin{remark}
Unlike the Singleton, Griesmer, and Plotkin-like asymptotic bounds, \eqref{asySPB} is
implicit in $R$, since $\mu$ itself depends on $R$; for fixed $\Delta,\gamma,q$, the bound
should be read as defining the largest $R$ satisfying the inequality. Note also that
\eqref{asySPB} carries no explicit dependence on the locality parameter $r$: at this order of
approximation, the vanishing contribution of $\tau(r+1)/n\to0$ removes $r$ from the
leading-order term, in contrast to its prominent role in \eqref{ASB} and \eqref{AGB}.
\end{remark}

\begin{remark}\label{rem:sphere-loosest}
Unlike the Griesmer and Plotkin-like asymptotic bounds, which coincide exactly at $m=r$
(Remark~\ref{rem:AGB-APB-coincide}), the sphere-packing-like bound \eqref{asySPB} does not
coincide with any of the other three asymptotic bounds for any choice of parameters. At
$\Delta=0$ it gives $R\le1+\gamma$, strictly weaker than the common value
$\frac{r}{r+2}(1+\gamma)$ attained there by \eqref{ASB}, \eqref{AGB}, and \eqref{APB} for
every finite $r$. In the large-alphabet limit $q\to\infty$, since $H_q(x)\to x$ for fixed
$x$, \eqref{asySPB} reduces to $R\le1+\gamma-\Delta$, which remains strictly weaker than the
large-locality limit $R\le(1+\gamma)-2\Delta$ of \eqref{ASB} as $r\to\infty$. Nonetheless, for
finite $r,q$, the sphere-packing-like bound can be strictly tighter than the other three in
specific parameter regimes, as illustrated numerically below.
\end{remark}

\begin{figure}[h!]
\centering
\includegraphics[width=\linewidth]{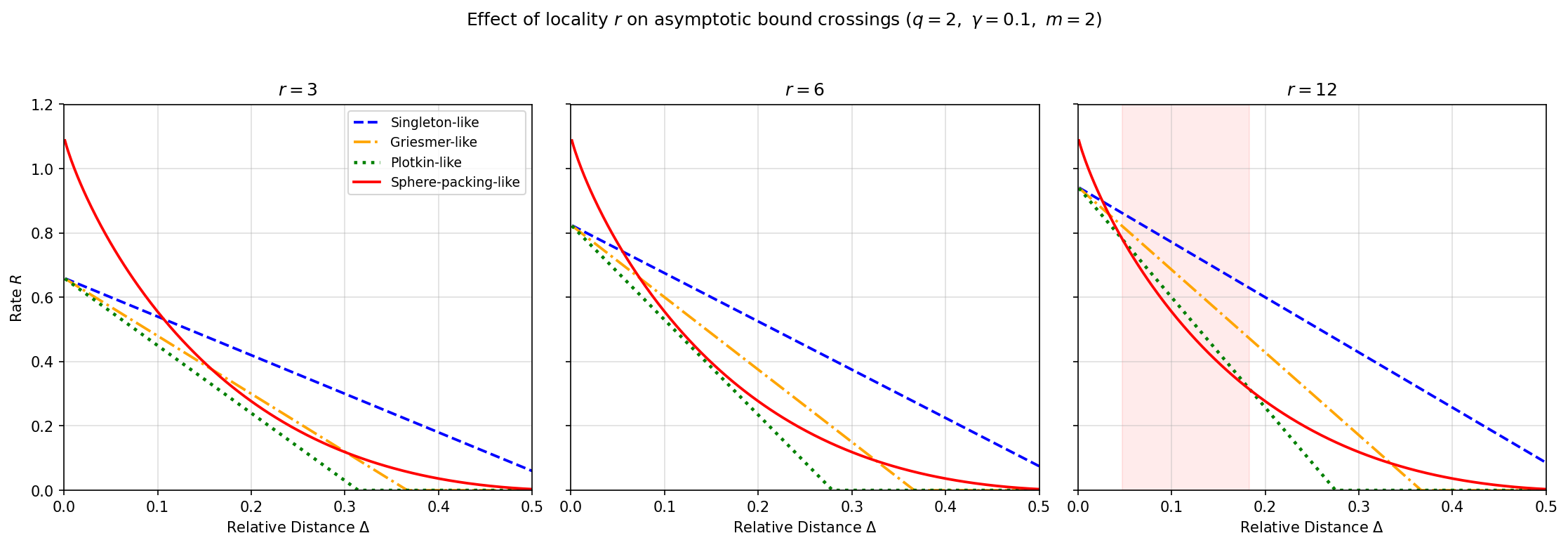}
\caption{Comparison of the four asymptotic bounds at $r=3,6,12$ ($q=2,\gamma=0.1,m=2$). The
sphere-packing-like bound (red) never becomes tightest at $r=3$ or $r=6$; a crossing window
(shaded) first appears at $r=12$, since \eqref{asySPB} carries no explicit $r$-dependence
while the other three bounds' tightness scales with $r$.}
\label{fig:multi-r-comparison}
\end{figure}

\begin{figure}[h!]
\centering
\includegraphics[width=\linewidth]{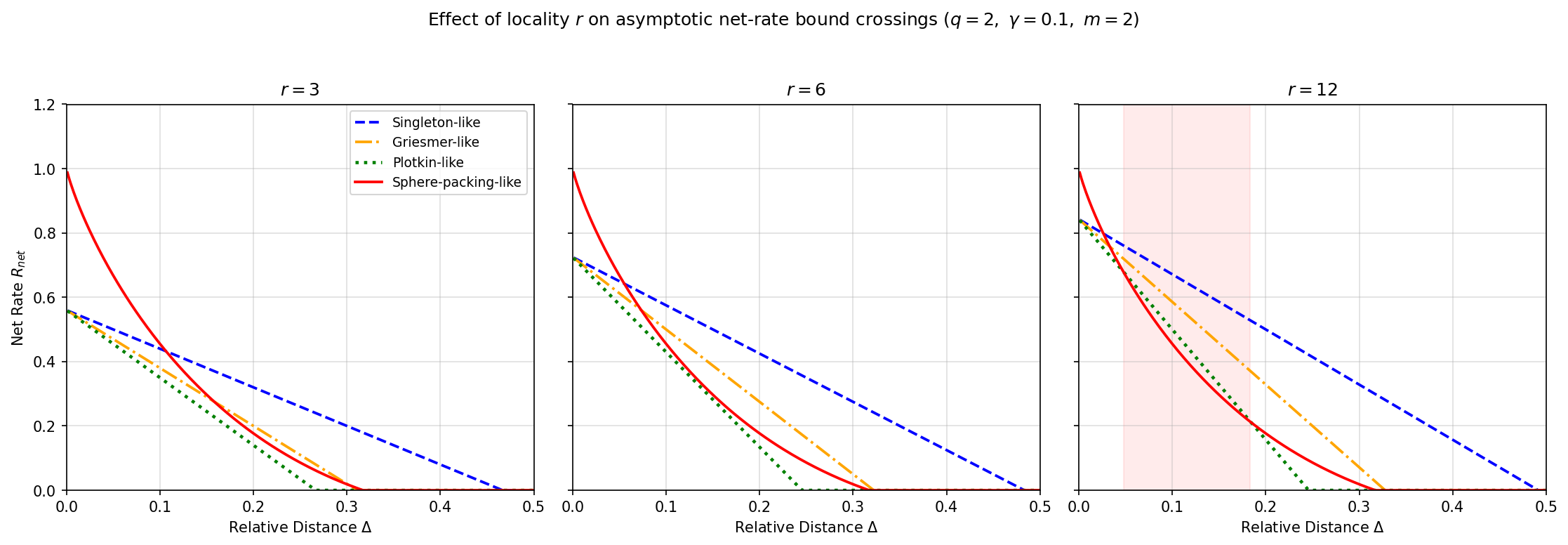}
\caption{Net-rate analogue of Figure~\ref{fig:multi-r-comparison}: the crossing window
$\Delta\in[0.048,0.183]$ at $r=12$ is unchanged from the rate comparison, since
$R_{net}=R-\gamma$ is a fixed vertical shift that does not affect where the bounds cross.}
\label{fig:multi-r-netrate-comparison}
\end{figure}

The comparative analysis reveals a clear hierarchy among the bounds. In the finite-length
regime, the Griesmer-like and Plotkin-like bounds provide the tightest restriction on the
achievable dimension $\kappa$ for a given code length $n$, outperforming the Singleton-like
bound across all tested parameters. In the asymptotic regime, the situation is precisely
characterized by \eqref{Com_equ}, which implies that the asymptotic Plotkin-like bound
\eqref{APB} is strictly tighter than the asymptotic Griesmer-like bound \eqref{AGB}, which is
in turn strictly tighter than the asymptotic Singleton-like bound \eqref{ASB}, for any fixed
locality $r\ge3$. To evaluate the sphere-packing-like bound's tightness relative to the
other three for infinite code lengths, numerical simulations were performed at $r=3,6,12$
($q=2,\gamma=0.1,m=2$); the corresponding results are illustrated in
Figures~\ref{fig:multi-r-comparison} and~\ref{fig:multi-r-netrate-comparison}. No crossing
occurs at $r=3$ or $r=6$ --- the sphere-packing-like bound remains the loosest throughout ---
while at $r=12$ it becomes strictly tighter than the other three for
$\Delta\in[0.048,0.183]$, consistent with its absence of explicit $r$-dependence against the
other bounds' $r$-scaling advantage.

\begin{table}[!ht]
\caption{Summary of Tightness of Bounds\label{tab:table1}}
\centering
\renewcommand{\arraystretch}{1.35}
\begin{tabular}{|c|c|c|c|}
 \hline
       \textbf{Regime}  & \textbf{Tightest Bound} & \textbf{Numerical Comparison} & \textbf{Reference} \\
        \hline
        Large $q$, small $\delta$ & Singleton-like Bound &   $q=64,\delta=3,r=3\ \&\ c=1$ & Figure~\ref{T_Singleton Bound} \\
        \hline
        Small $q$, small $\kappa$ &  Griesmer-like Bound &   $q=2,\delta=5,r=3\ \&\ c=1$ & Figure~\ref{T_Griesmer} \\
        \hline
       Large $\delta\;(\propto n)$ & Plotkin-like Bound &   $q=2,\delta=22,r=3\ \&\ c=1$ & Figure~\ref{T_Ploktin Bound}\\
        \hline
        Finite length &  Griesmer-like/Plotkin-like Bounds  &   $q=2,\delta=7,r=3\ \&\ c=1$ & Figure~\ref{Finite All Bound}\\
        \hline
       Moderate $\Delta$, large $r$ & Sphere-Packing-like Bound & $r=12$ (vs.\ $r=3,6$: never); $\gamma=0.1,q=2\ \&\ m=2$ & Figures~\ref{fig:multi-r-comparison}, \ref{fig:multi-r-netrate-comparison} \\ \hline
\end{tabular}
\end{table}

The Singleton-like bound~\eqref{EA-qLRC_SB} is particularly significant from a structural standpoint; it is entirely alphabet-independent and depends only on the code parameters $n$, $\kappa$, $\delta$, $c$, and $r$. It therefore provides a universal benchmark against which any CSS-like EA-qLRC (in the sense of Proposition~\ref{CSS-EA-qLRC}) must be measured, irrespective of the underlying field size $q$. Codes that meet this bound with equality would be analogues of maximum-distance separable codes in the CSS-like EA-qLRC, and the existence and explicit construction of such optimal CSS-like EA-qLRCs are discussed in a later section.

\section{Optimality Conditions for Pure CSS-like EA-qLRCs}\label{Conditions EA-qLRC}
Having established the Singleton-type bound in~\eqref{EA-qLRC_SB} for 
CSS-like EA-qLRCs, a natural question  is: which pure CSS-like EA-qLRCs actually achieve this bound, and what structural constraints must the underlying classical codes satisfy for this to occur?  This section addresses both questions precisely.

We approach the problem in two steps. We first show, via Lemma~\ref{lemma_1}, 
that a necessary condition for optimality is that the two classical codes 
share the same minimum distance. This rules out 
a broad class of asymmetric constructions from attaining the bound. We 
then derive, in Theorem~\ref{pure_EA-qLRC}, a necessary and sufficient 
condition for a pure CSS-like EA-qLRC to be optimal, characterizing both the 
individual optimality of the constituent classical codes and an explicit 
compatibility condition on their dimensions relative to the locality parameter 
$r$. Together, these results give a complete algebraic characterization of 
optimal pure CSS-like EA-qLRCs and provide a concrete criterion that can be verified 
directly from the parameters of any candidate construction.

\begin{lemma}\label{lemma_1}
Suppose  $\mathcal{Q}_E$ be a pure CSS-like  EA-qLRC with locality $r$, obtained by Proposition~\ref{CSS-EA-qLRC} using two classical $[n,k_i,d_i]_q$ codes $\mathcal{C}_i$ of locality $r$ that satisfy $\delta = \min\{d_1, d_2\}$. If $d_1 \neq d_2$, then $\mathcal{Q}_E$ cannot attain bound~\eqref{EA-qLRC_SB}.
\end{lemma}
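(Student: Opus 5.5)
The plan is to revisit the two-case argument in the proof of Theorem~\ref{Bounds_theorem}. We track both classical Singleton-type estimates simultaneously rather than adding them blindly. Without loss of generality assume $d_1<d_2$, so purity gives $\delta=d_1$. The proof of Theorem~\ref{Bounds_theorem} produces two punctured subcodes. The first is $\mathcal{C}'$, of length $k_1+c$, dimension $\kappa$ and locality at most $r$, all of whose nonzero codewords lie in $\mathcal{C}_2\setminus(\mathcal{C}_1^\perp\cap\mathcal{C}_2)$. The second is $\mathcal{C}''$, of length $k_2+c$, dimension $\kappa$ and locality at most $r$, whose nonzero codewords lie in $\mathcal{C}_1\setminus(\mathcal{C}_1\cap\mathcal{C}_2^\perp)$. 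Purity means these two weight minima equal $d_2$ and $d_1$ respectively. We use the sharper facts $d(\mathcal{C}')\ge d_2$ and $d(\mathcal{C}'')\ge d_1$, not merely $\ge\delta$.

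Applying the classical LRC Singleton bound \eqref{cLRC Singleton Bound} to each subcode gives
\[
d_2\le k_1+c-\kappa-\lceil\kappa/r\rceil+2,\qquad d_1\le k_2+c-\kappa-\lceil\kappa/r\rceil+2 .
\]
Adding these and substituting $k_1+k_2=n+\kappa-c$ yields
\[
d_1+d_2\le n-\kappa+c-2\lceil\kappa/r\rceil+4 .
\]
Since $d_1<d_2$ we have $2\delta=2d_1<d_1+d_2$. Hence
\[
2\delta\le n-\kappa+c-2\lceil\kappa/r\rceil+3,
\]
which is strictly less than the right-hand side of \eqref{EA-qLRC_SB}, so equality cannot hold. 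The case $d_2<d_1$ is symmetric, with $\delta=d_2$ and $2\delta<d_1+d_2$.

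The main point to verify is the claim $d(\mathcal{C}')\ge d_2$. Every nonzero codeword of the subcode generated by $(O\;P)$ lies in $\mathcal{C}_2$ and is nonzero, so its weight is at least $\mathrm{wt}(\mathcal{C}_2)=d_2$. Puncturing coordinates on which all codewords vanish does not change weights, so this claim in fact holds even without purity. The same reasoning gives $d(\mathcal{C}'')\ge d_1$. Locality of both punctured codes is already supplied by Lemma~\ref{lem:subcode-locality}. The only remaining subtlety is that the Singleton estimate must be applied to $\mathcal{C}'$ and $\mathcal{C}''$ themselves, both of which have dimension $\kappa$, rather than through $d_{\mathrm{opt}}$ evaluated at $\delta$.
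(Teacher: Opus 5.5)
Your argument is correct, but it takes a different route from the paper's.

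\textbf{The paper's route.} The paper never passes to subcodes. It applies \eqref{cLRC Singleton Bound} directly to the full codes $\mathcal{C}_1,\mathcal{C}_2$ (length $n$, dimensions $k_i$) and adds the two inequalities. It then substitutes $k_1+k_2=n+\kappa-c$. Finally it uses $c\le\min\{n-k_1,n-k_2\}$, i.e.\ $k_i\ge\kappa$, to replace $\lceil k_1/r\rceil+\lceil k_2/r\rceil$ by $2\lceil\kappa/r\rceil$.

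\textbf{Your route.} You rerun the proof of Theorem~\ref{Bounds_theorem}, keeping the sharper distances $d(\mathcal{C}')\ge d_2$ and $d(\mathcal{C}'')\ge d_1$ instead of collapsing both to $\delta$. Since $k_1+c-\kappa=n-k_2$, your per-code inequality reads $d_2\le n-k_2-\lceil\kappa/r\rceil+2$. This is actually weaker than the direct bound $d_2\le n-k_2-\lceil k_2/r\rceil+2$. Both routes nevertheless sum to the same final inequality $d_1+d_2\le n-\kappa+c-2\lceil\kappa/r\rceil+4$.

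\textbf{What each buys.} The paper's version is more elementary: it needs no systematic-form completion and no Lemma~\ref{lem:subcode-locality}. Its intermediate inequality, which still carries $\lceil k_i/r\rceil$, is exactly what the $(\Rightarrow)$ direction of Theorem~\ref{pure_EA-qLRC} reuses to get classical optimality of the $\mathcal{C}_i$ and $\lceil k_1/r\rceil=\lceil\kappa/r\rceil$. Your inequalities discard that information.

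Your route has two advantages. First, it shows the lemma is exactly the slack lost in the proof of Corollary~\ref{cor:singleton} when both subcode distances are bounded below by $\delta$. Second, it only needs $d(\mathcal{C}')\ge w_2:=\mathrm{wt}(\mathcal{C}_2\setminus(\mathcal{C}_1^\perp\cap\mathcal{C}_2))$ and $d(\mathcal{C}'')\ge w_1:=\mathrm{wt}(\mathcal{C}_1\setminus(\mathcal{C}_1\cap\mathcal{C}_2^\perp))$, with $\delta=\min\{w_1,w_2\}$. So it extends verbatim to impure codes: equality in \eqref{EA-qLRC_SB} forces $w_1=w_2$ in general. The paper's argument bounds only $d_1$ and $d_2$, so it cannot give this.
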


\begin{proof}

Without loss of generality, assume $d_1 < d_2$, and let $\kappa = k_1 + k_2 -  n + c$ denote the dimension of $\mathcal{Q}_E$. Since $c = \mathrm{rank}(H_1 H_2^T) \le \min\{n - k_1, n - k_2\}$, we have $k_1 - \kappa = n - k_2 - c \ge 0$ and $k_2 - \kappa = n - k_1 - c \ge 0$, which implies $k_i \ge \kappa$ for $i \in \{1,2\}$.
 
Since both $\mathcal{C}_1$ and  $\mathcal{C}_2$ are LRCs with locality $r$, so the  Singleton-type bound in~\eqref{cLRC Singleton Bound} gives $ d_1 \le n - k_1 - \left\lceil\frac{k_1}{r}\right\rceil + 2,\;\text{and}\; d_2 \le n - k_2 - \left\lceil\frac{k_2}{r}\right\rceil + 2.$ Adding these two inequalities and using $d_1 < d_2$ yields $2d_1 < d_1 + d_2 \le 2n - (k_1+k_2) - \left\lceil\frac{k_1}{r}\right\rceil  - \left\lceil\frac{k_2}{r}\right\rceil + 4$. This further implies $ 2d_1 < n - \kappa + c 
           - \left\lceil\frac{k_1}{r}\right\rceil 
           - \left\lceil\frac{k_2}{r}\right\rceil + 4$ as $\kappa = k_1 + k_2 - 
n + c$.  Because $k_i \ge \kappa$ for $i \in \{1,2\}$, $\lceil k_i/r\rceil \ge \lceil\kappa/r\rceil$ is given, $\left\lceil\frac{k_1}{r}\right\rceil + \left\lceil\frac{k_2}{r}\right\rceil
    \ge 2\left\lceil\frac{\kappa}{r}\right\rceil$. Therefore,
$2d_1 < n - \kappa + c  - \left\lceil\frac{k_1}{r}\right\rceil - \left\lceil\frac{k_2}{r}\right\rceil + 4\;\le\; n - \kappa + c - 2\left\lceil\frac{\kappa}{r}\right\rceil + 4$. Hence $\mathcal{Q}_E$ cannot achieve the bound 
in~\eqref{EA-qLRC_SB}.
\end{proof}
Lemma~\ref{lemma_1} shows that equal minimum distances in the constituent 
classical codes are a prerequisite for optimality. We now show this condition is also the gateway to a complete 
characterization: the following theorem gives necessary and sufficient 
conditions on the constituent classical codes $\mathcal{C}_1$ and 
$\mathcal{C}_2$ for the resulting pure CSS-like EA-qLRC to achieve 
the bound~\eqref{EA-qLRC_SB}.

\begin{theorem}\label{pure_EA-qLRC}
Let $\mathcal{Q}_E$ be a {\qlb}$n,\kappa=k_1+k_2-n+c,\delta;c${\qrb}$_q$  pure CSS-like EA-qLRC with locality $r$ constructed from
Proposition \ref{CSS-EA-qLRC} using two classical $[n,k_i,d_i]_q$ codes $\mathcal{C}_i$ of locality $r$, for $i=1,2$. Then  $\mathcal{Q}_E$ is optimal if and only if $\mathcal{C}_i$ are optimal LRCs such that $k_1 = k_2$, $d_1=d_2$ and 
          $\left\lceil\dfrac{k_1}{r}\right\rceil 
           = \left\lceil\dfrac{\kappa}{r}\right\rceil$.
\end{theorem}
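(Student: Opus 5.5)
The plan is to sandwich $2\delta$ between the Singleton-like bound \eqref{EA-qLRC_SB} and a chain of elementary inequalities, each of which is forced to be tight when $\mathcal{Q}_E$ is optimal. Since $\mathcal{Q}_E$ is pure, $\delta=\min\{d_1,d_2\}$. Both $\mathcal{C}_i$ are LRCs with locality $r$, so the classical bound \eqref{cLRC Singleton Bound} applies to each, and I will write the chain
\begin{align*}
2\delta &\le d_1+d_2 \le 2n-(k_1+k_2)-\Bigl\lceil\frac{k_1}{r}\Bigr\rceil-\Bigl\lceil\frac{k_2}{r}\Bigr\rceil+4\\
&= n-\kappa+c-\Bigl\lceil\frac{k_1}{r}\Bigr\rceil-\Bigl\lceil\frac{k_2}{r}\Bigr\rceil+4 \le n-\kappa+c-2\Bigl\lceil\frac{\kappa}{r}\Bigr\rceil+4 .
\end{align*}
The equality in the second line uses $k_1+k_2=n+\kappa-c$. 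The last inequality uses $k_i\ge\kappa$, which holds because $c\le\min\{n-k_1,n-k_2\}$, exactly as in the proof of Lemma~\ref{lemma_1}.

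\emph{Necessity.} If $\mathcal{Q}_E$ attains \eqref{EA-qLRC_SB}, every inequality in the chain is an equality, and I will read off the conditions one at a time.
\begin{itemize}
\item Equality in the first step gives $2\min\{d_1,d_2\}=d_1+d_2$, i.e.\ $d_1=d_2$; equivalently, this is Lemma~\ref{lemma_1}.
\item Equality in the second step means that each $\mathcal{C}_i$ meets \eqref{cLRC Singleton Bound} with equality, so both are optimal LRCs.
\item Equality in the last step, combined with the termwise inequalities $\lceil k_i/r\rceil\ge\lceil\kappa/r\rceil$, forces $\lceil k_1/r\rceil=\lceil k_2/r\rceil=\lceil\kappa/r\rceil$.
\end{itemize}
The one condition not read off directly is $k_1=k_2$. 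To get it, I will use $d_1=d_2$ together with optimality of both codes, which gives
\[
k_1+\lceil k_1/r\rceil=k_2+\lceil k_2/r\rceil .
\]
Since the two ceilings are already known to be equal, this yields $k_1=k_2$. Alternatively, one can note that $k\mapsto k+\lceil k/r\rceil$ is strictly increasing.

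\emph{Sufficiency.} Assume $\mathcal{C}_1,\mathcal{C}_2$ are optimal LRCs with common parameters $k_1=k_2=k$ and $d_1=d_2=d$, and with $\lceil k/r\rceil=\lceil\kappa/r\rceil$. Purity gives $\delta=d=n-k-\lceil k/r\rceil+2$. Then $2\delta=2n-2k-2\lceil\kappa/r\rceil+4$, and substituting $2k=n+\kappa-c$ gives exactly the right-hand side of \eqref{EA-qLRC_SB}.

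The only mildly delicate point is extracting $k_1=k_2$ in the necessity direction, since it is not one of the links in the chain itself. I expect the monotonicity of $k+\lceil k/r\rceil$, or simply the already-established equality of the ceilings, to settle it. Everything else is bookkeeping with $\kappa=k_1+k_2-n+c$.
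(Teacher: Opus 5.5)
Your proposal is correct and follows the paper's proof essentially step for step. Both sum the two classical LRC Singleton bounds, use $k_i\ge\kappa$ to force $\lceil k_1/r\rceil=\lceil k_2/r\rceil=\lceil\kappa/r\rceil$, and deduce $k_1=k_2$ from $k_1+\lceil k_1/r\rceil=k_2+\lceil k_2/r\rceil$. The only cosmetic difference is that you read off $d_1=d_2$ from tightness of $2\min\{d_1,d_2\}\le d_1+d_2$ instead of citing Lemma~\ref{lemma_1}, which rests on the same argument.
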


\begin{proof} $(\Leftarrow)$ Suppose $\mathcal{C}_1$ and $\mathcal{C}_2$ are optimal LRCs with $k_1 = k_2 = k$, $d_1 = d_2 = d$, and $\lceil k/r \rceil = \lceil \kappa/r \rceil$. From~\eqref{cLRC Singleton Bound}, we have $d_1 = n - k_1 - \lceil k_1/r \rceil + 2$ and $d_2= n - k_2 - \lceil k_2/r \rceil + 2$. Thus,
\[
2d =d_1+d_2= 2n - 2k - 2\left\lceil\frac{k}{r}\right\rceil + 4 = n - \kappa + c - 2\left\lceil\frac{\kappa}{r}\right\rceil + 4,
\] as $\kappa = k_1+k_2 - n + c$. Again, since $\delta=\min\{d_1, d_2\}=d$, $\mathcal{Q}_E$ attains bound~\eqref{EA-qLRC_SB} with equality, hence $\mathcal{Q}_E$ is optimal.

$(\Rightarrow)$ Suppose that  $\mathcal{Q}_E$ is an 
{\qlb}$n, \kappa, d;c${\qrb}$_q$ {\it{optimal}} pure CSS-like EA-qLRC with locality $r$. Then, by Lemma~\ref{lemma_1},  $\delta = d=d_1=d_2$, and it follows from \eqref{EA-qLRC_SB} that
\begin{equation}\label{equ_13}
    2d = n - \kappa + c - 2\left\lceil\frac{\kappa}{r}\right\rceil + 4.
\end{equation}
Since $\mathcal{C}_1$ and $\mathcal{C}_2$ have locality $r$, the  bound in~\eqref{cLRC Singleton Bound} gives $    d_1 \le n - k_1 - \left\lceil\frac{k_1}{r}\right\rceil + 2$ and $d_2 \le n - k_2 - \left\lceil\frac{k_2}{r}\right\rceil + 2$. 
Adding these two inequalities, we get $2d \leq d_1+d_2 \le n - \kappa + c - \left\lceil\frac{k_1}{r}\right\rceil - \left\lceil\frac{k_2}{r}\right\rceil + 4$. This further implies from \eqref{equ_13} that $\left\lceil\frac{k_1}{r}\right\rceil + \left\lceil\frac{k_2}{r}\right\rceil\le 2\left\lceil\frac{\kappa}{r}\right\rceil$. Since $k_i \ge \kappa$, so $\left\lceil\frac{k_1}{r}\right\rceil + \left\lceil\frac{k_2}{r}\right\rceil
    \ge 2\left\lceil\frac{\kappa}{r}\right\rceil$. Thus $ \left\lceil\frac{k_1}{r}\right\rceil + \left\lceil\frac{k_2}{r}\right\rceil
    = 2\left\lceil\frac{\kappa}{r}\right\rceil$.
 This in turn implies from~\eqref{equ_13}, and $\kappa=k_1+k_2-n+c$ that  $ 2d 
    = \Bigl(n - k_1 - \left\lceil\frac{k_1}{r}\right\rceil + 2\Bigr)
    + \Bigl(n - k_2 - \left\lceil\frac{k_2}{r}\right\rceil + 2\Bigr)$. Therefore $d = n - k_1 - \left\lceil\frac{k_1}{r}\right\rceil + 2
      = n - k_2 - \left\lceil\frac{k_2}{r}\right\rceil + 2$, i.e., $\mathcal{C}_1$ and $\mathcal{C}_2$ attain the bound \eqref{cLRC Singleton Bound}. Consequently, $k_1 + \left\lceil\frac{k_1}{r}\right\rceil  = k_2 + \left\lceil\frac{k_2}{r}\right\rceil$ forces $k_1 = k_2$. Hence $\lceil k_1/r\rceil =  \lceil\kappa/r\rceil$. 
\end{proof}

\begin{theorem}\label{cor:single-code-optimality}
Let $\mathcal{Q}_E$ be a {\qlb}$n,\kappa=2k-n+c,\delta;c${\qrb}$_q$  pure CSS-like EA-qLRC with locality $r$ constructed from Proposition \ref{CSS-EA-qLRC} using a LRC $\mathcal{C}$ of locality $r$ such that 
$s = \dim(\mathcal{C} \cap \mathcal{C}^{\perp})$. Then  $\mathcal{Q}_E$ is optimal if and only if $\mathcal{C}$ is optimal such that $s \;\le\; (k-1) \bmod r$.
\end{theorem}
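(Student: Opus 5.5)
The plan is to specialize Theorem~\ref{pure_EA-qLRC} to $\mathcal{C}_1=\mathcal{C}_2=\mathcal{C}$ and then turn its ceiling condition into a condition on the hull dimension $s$.

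\textbf{Step 1: reduce to one condition.} With $k_1=k_2=k$ and $d_1=d_2=d$, the conditions $k_1=k_2$ and $d_1=d_2$ of Theorem~\ref{pure_EA-qLRC} hold automatically. Requiring that $\mathcal{C}_1$ and $\mathcal{C}_2$ are optimal LRCs is the same as requiring that $\mathcal{C}$ is optimal. Hence $\mathcal{Q}_E$ is optimal if and only if $\mathcal{C}$ attains \eqref{cLRC Singleton Bound} and $\lceil k/r\rceil=\lceil \kappa/r\rceil$.

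\textbf{Step 2: express $\kappa$ through $s$.} Take $\mathcal{C}_1=\mathcal{C}_2=\mathcal{C}$ in Theorem~\ref{PC-EAQEC}. This gives
\[
c=\operatorname{rank}(HH^{T})=n-k-\dim(\mathcal{C}^\perp\cap\mathcal{C})=n-k-s.
\]
Therefore $\kappa=2k-n+c=k-s$, and the condition from Step 1 becomes $\lceil k/r\rceil=\lceil (k-s)/r\rceil$. This also confirms the identity $c=n-k-s$ stated in the abstract.

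\textbf{Step 3: the arithmetic equivalence.} Write $k-1=ar+b$ with $0\le b\le r-1$, so that $b=(k-1)\bmod r$ and $\lceil k/r\rceil=a+1$.

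Since $s\ge 0$, we always have $\lceil (k-s)/r\rceil\le a+1$. Equality holds exactly when $k-s>ar$, that is, when $k-s\ge ar+1$. Using $k=ar+b+1$, this is $s\le b=(k-1)\bmod r$.

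Conversely, if $s\ge b+1$, then $k-s\le ar$, so $\lceil (k-s)/r\rceil\le a<a+1$. This holds also in the degenerate case $k-s\le 0$, where the left side is at most $0$. Combining Steps 1--3 gives the claimed equivalence.

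\textbf{Main obstacle.} The substantive work is already contained in Theorem~\ref{pure_EA-qLRC}; what remains is bookkeeping. The step needing the most care is justifying $c=n-k-s$. This requires reading the formula for $c$ in Theorem~\ref{PC-EAQEC} with $\mathcal{C}_1=\mathcal{C}_2$, so that $\mathcal{C}_1^\perp\cap\mathcal{C}_2$ is the hull $\mathcal{C}\cap\mathcal{C}^\perp$.

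A second point to check is that the purity hypothesis, with $\delta=d$, is exactly what Theorem~\ref{pure_EA-qLRC} requires, so that no further condition enters. Finally, the degenerate case $\kappa=0$ (i.e.\ $s=k$) must be handled; the inequality $s\le b\le r-1<k$ whenever $k\ge r$ excludes it on the optimal side.
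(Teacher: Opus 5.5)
Your proposal is correct and matches the paper's proof: specialize Theorem~\ref{pure_EA-qLRC} to $\mathcal{C}_1=\mathcal{C}_2=\mathcal{C}$, use $c=n-k-s$ from Theorem~\ref{PC-EAQEC} to get $\kappa=k-s$, and reduce optimality to $\lceil k/r\rceil=\lceil (k-s)/r\rceil$. Writing $k-1=ar+b$ simply merges the paper's two cases ($k=ar$ versus $k=ar+b$ with $1\le b\le r-1$) into one, and your Step~1 states the necessity of classical optimality more explicitly than the paper, which assumes it at the outset.
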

\begin{proof}
Since $\mathcal{C}_1=\mathcal{C}_2=\mathcal{C}$ is an optimal LRC, Theorem~\ref{PC-EAQEC} yields $c=n-k-s,$ where $s=\dim(\mathcal{C}\cap\mathcal{C}^{\perp})$. Consequently, $\kappa=2k-n+c=k-s.$ By Theorem~\ref{pure_EA-qLRC}, $\mathcal{Q}_E$ is optimal if and only if $\left\lceil\frac{k}{r}\right\rceil=\left\lceil\frac{k-s}{r}\right\rceil$. 
Write $k=ar+b$, where $a=\lfloor k/r\rfloor$ and $0\le b<r$. If $b=0$, then $\lceil k/r\rceil=a$, while $k-s=(a-1)r+(r-s)$. Hence, $\lceil(k-s)/r\rceil=a$ precisely when $0\le s\le r-1$, whereas $\lceil(k-s)/r\rceil\le a-1$ whenever $s\ge r$. Thus, $\left\lceil\frac{k}{r}\right\rceil=\left\lceil\frac{k-s}{r}\right\rceil\iff s\le r-1.$
Since $k=ar$, $k-1=(a-1)r+(r-1) \geq (a-1)r + s$. This implies that $s\le(k-1)\bmod r$. In the other case, when $1\le b\le r-1$, then $\lceil k/r\rceil=a+1$ and $k-s=ar+(b-s)$. It follows that $\lceil(k-s)/r\rceil=a+1$ if and only if $s<b$ (as $\lceil(k-s)/r\rceil\le a$ whenever $s\ge b$). Therefore, $\left\lceil\frac{k}{r}\right\rceil=\left\lceil\frac{k-s}{r}\right\rceil \iff s\le b-1$. Since $k=ar+b$, $k-1=ar+(b-1) \geq ar + s$, and $\left\lceil\frac{k}{r}\right\rceil=\left\lceil\frac{k-s}{r}\right\rceil \iff s\le(k-1)\bmod r$. Therefore, in both cases, $\left\lceil\frac{k}{r}\right\rceil=\left\lceil\frac{k-s}{r}\right\rceil\iff s\le(k-1)\bmod r$. Hence, proved.
\end{proof}

\begin{remark}\label{rem:why-s}
Theorem~\ref{cor:single-code-optimality} phrases optimality entirely through the hull dimension $s=\dim(\mathcal{C}\cap\mathcal{C}^\perp)$ rather than through $c$ or $\operatorname{rank}(HH^\top)$ directly. {\emph{$s$ isolates the purely classical obstruction from the ambient parameters.}} The quantity $c$ is not intrinsic to $\mathcal{C}$ alone: it depends additively on $n$ and $k$ through $c=n-k-s$. Stating the optimality threshold via $c$ would require re-deriving $n-k$ at every use. Working with $s$ instead expresses the optimality condition $s\le(k-1)\bmod r$ as a comparison between two quantities intrinsic to $\mathcal{C}$ and $r$ alone --- how large the hull is versus how much slack the locality parameter tolerates --- with no reference to the ambient length.
\end{remark}

In the following theorem, we derive a necessary and sufficient condition for an optimal CSS-like EA-qLRC to be maximally entangled.

\begin{theorem}\label{EA-qLRC LCD}
Let $\mathcal{Q}_E$ be a {\qlb}$n,\kappa=2k-n+c,\delta;c${\qrb}$_q$  pure CSS-like EA-qLRC with locality $r$ constructed from Proposition \ref{CSS-EA-qLRC} using an $[n,k,d]_q$ LRC $\mathcal{C}_1 = \mathcal{C}_2 = \mathcal{C}$. Then $\mathcal{Q}_E$ is an optimal maximally entangled pure CSS-like EA-qLRC {\qlb}$n,k,d; n-k ${\qrb}$_q$ if and only if  $\mathcal{C}$ is an optimal LCD code. 
\end{theorem}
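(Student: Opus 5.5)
The plan is to reduce everything to the hull dimension $s=\dim(\mathcal{C}\cap\mathcal{C}^\perp)$ and then invoke Theorem~\ref{cor:single-code-optimality}. By Theorem~\ref{PC-EAQEC} with $\mathcal{C}_1=\mathcal{C}_2=\mathcal{C}$, we have $c=n-k-s$ and hence $\kappa=2k-n+c=k-s$. Since $c\le n-k$ always, with equality exactly when $s=0$, the code $\mathcal{Q}_E$ is maximally entangled (i.e.\ $c=n-k$, giving parameters {\qlb}$n,k,\delta;n-k${\qrb}$_q$) if and only if $\mathcal{C}\cap\mathcal{C}^\perp=\{0\}$, that is, $\mathcal{C}$ is LCD. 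So the two structural conditions correspond first: maximal entanglement $\iff$ LCD.

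Next I will handle the distance. When $s=0$ we have $\mathcal{C}\setminus(\mathcal{C}\cap\mathcal{C}^\perp)=\mathcal{C}\setminus\{0\}$, so $\delta=\mathrm{wt}(\mathcal{C})=d$ automatically. This is consistent with purity and justifies writing the parameters as {\qlb}$n,k,d;n-k${\qrb}$_q$.

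For optimality, Theorem~\ref{cor:single-code-optimality} states that $\mathcal{Q}_E$ is optimal if and only if $\mathcal{C}$ is an optimal LRC and $s\le (k-1)\bmod r$.

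($\Leftarrow$) If $\mathcal{C}$ is an optimal LCD LRC, then $s=0\le (k-1)\bmod r$ holds trivially, because $(k-1)\bmod r\ge 0$. Hence $\mathcal{Q}_E$ is optimal, and it is maximally entangled by the first step. As a sanity check, one can also verify directly that $2d=2(n-k-\lceil k/r\rceil+2)=n-\kappa+c-2\lceil\kappa/r\rceil+4$ with $\kappa=k$ and $c=n-k$.

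($\Rightarrow$) If $\mathcal{Q}_E$ is optimal and maximally entangled, the first step gives $s=0$, so $\mathcal{C}$ is LCD. Theorem~\ref{cor:single-code-optimality} then gives that $\mathcal{C}$ is an optimal LRC.

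The only step needing care is the claim $c=n-k\iff s=0$, together with confirming that "maximally entangled" means $c=n-k$, the maximum possible since $c=\operatorname{rank}(HH^T)\le n-k$. This follows directly from the formula $c=n-k_1-\dim(\mathcal{C}_1^\perp\cap\mathcal{C}_2)$ in Theorem~\ref{PC-EAQEC}, so I expect no real obstacle beyond stating this convention explicitly.
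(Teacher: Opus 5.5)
Your proposal is correct and follows the same route as the paper: Theorem~\ref{PC-EAQEC} gives $c=n-k-s$, so the code is maximally entangled ($c=n-k$) exactly when $s=0$, i.e.\ when $\mathcal{C}$ is LCD, and optimality then follows from Theorem~\ref{cor:single-code-optimality}. You additionally make explicit two steps the paper's two-line proof leaves implicit: that $s=0\le (k-1)\bmod r$ holds automatically, and that $\delta=d$ when $\mathcal{C}$ is LCD.
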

\begin{proof}
From Theorem~\ref{PC-EAQEC}, an EA-QEC  $\mathcal{Q}_E$ obtained by a classical code $\mathcal{C}$ is maximally entangled if and only of the constituent code $\mathcal{C}$ is LCD. Hence, the result from Theorem \ref{cor:single-code-optimality}.
\end{proof}

\section{Explicit Constructions and Their Optimality Boundary}\label{Construction Optimal EA-qLRCs}
In the preceding sections, we established the optimality conditions for pure CSS-like EA-qLRCs and characterized when they attain the Singleton-type bound~\eqref{EA-qLRC_SB}. We now address the explicit construction of such optimal codes. By Theorem~\ref{cor:single-code-optimality}, an optimal $[n,k,d]_q$ classical LRC with locality $r$ yields an optimal pure CSS-like EA-qLRC whenever $s = \dim(\mathcal{C} \cap \mathcal{C}^{\perp}) \le (k-1) \bmod r$. In particular, every optimal LCD LRC yields an optimal maximally pure CSS-like EA-qLRC. Based on this criterion, we present two families of CSS-like EA-qLRCs: \emph{Tamo--Barg codes}, which are Reed--Solomon-like polynomial evaluation codes achieving bound~\eqref{cLRC Singleton Bound}, and \emph{cyclic LRCs}, whose cyclic structure enables simultaneous control of the minimum distance, locality, and dual intersection dimension $s$.

\subsection{Construction from Tamo--Barg Code}
Let $\mathbb{F}_q$ be a finite field of characteristic $p$, and let $r \ge 2$ be an integer such that $r+1$ divides $q-1$. Let $\ell$ be an integer satisfying $1 \le \ell < q-1$, and let $\mathcal{P} = \mathbb{F}_q^{*} = \{1, \alpha, \alpha^2, \ldots, \alpha^{q-2}\}$, where $\alpha$ is a primitive element of $\mathbb{F}_q$. The evaluation set is the multiplicative group $\mathbb{F}_q^*$, giving code length $n = |\mathcal{P}| = q-1$. Since $(r+1)\mid(q-1)=n$, the subgroup $H\le\mathbb{F}_q^*$ of order $r+1$ partitions $\mathcal{P}$ into $n/(r+1)$ cosets of size $r+1$. The Tamo--Barg locally recoverable code is defined by
\begin{equation}\label{eq:Tamo-Barg}
C_{\mathrm{TB}} = \left\{ \operatorname{ev}(f) := \bigl(f(1), f(\alpha), \ldots, f(\alpha^{q-2})\bigr) : f(X) = \sum_{i \in S} a_i X^i, \; a_i \in \mathbb{F}_q \right\},
\end{equation}
where the exponent set is $S = \{i \in [\ell-1]^\dagger : i \not\equiv r \pmod{r+1}\}$. Writing any exponent $i=j(r+1)+i_0$ with $i_0\in [r-1]^\dagger$, the factor $X^{j(r+1)}$ is constant on each coset of $H$, so the restriction of $\operatorname{ev}(f)$ to a coset agrees with a polynomial of degree at most $r-1$ in that coset's $r+1$ points; hence any one symbol in a coset is determined by the remaining $r$ symbols of that same coset. The code $C_{\mathrm{TB}}$ defined in \eqref{eq:Tamo-Barg} has dimension $k = |S| = \ell - \left\lfloor\frac{\ell}{r+1}\right\rfloor$, minimum distance $d \ge q - \ell$, and is locally recoverable with locality $r$ in the sense of Definition~\ref{def:locality}, each recovery set $\Gamma_i$ being the $r$ other elements of the coset of $H$ containing coordinate $i$.

To determine the dual code of $C_{\mathrm{TB}}$, let $\mathbf{v}_i = \operatorname{ev}(X^i)$ for $0 \le i \le n-1$ be the evaluation vector of $X^i$ at $\mathcal{P}$. The Euclidean inner product of two monomial evaluation vectors is
\[
\langle\mathbf{v}_i, \mathbf{v}_j\rangle = \sum_{x \in \mathbb{F}_q^{*}} x^{i+j} = \sum_{m=0}^{q-2} (\alpha^{i+j})^m = 
\begin{cases}
q-1, & i+j \equiv 0 \pmod n, \\
0, & i+j \not\equiv 0 \pmod n.
\end{cases}
\]
Since $q-1 \neq 0$ in $\mathbb{F}_q$, $\mathbf{v}_i$ and $\mathbf{v}_j$ are orthogonal if and only if $i+j \not\equiv 0 \pmod n$. Consequently, a vector $\mathbf{u} = \sum_{j=0}^{n-1} a_j \mathbf{v}_j$ belongs to $C_{\mathrm{TB}}^\perp$ if and only if $\langle\mathbf{u}, \mathbf{v}_i\rangle = 0$ for all $i \in S$, or equivalently, $j \in S^\perp \iff n-j \notin S$, where $S^\perp = \{j \in [n-1] : n-j \notin S\}$.

We determine $S^\perp$ explicitly. First, for $j=0$, $0 \in S \implies \langle\mathbf{v}_0, \mathbf{v}_0\rangle \neq 0$, so $0 \notin S^\perp$. Next, for $1 \le j \le n-\ell$, we have $n-j \ge \ell$, so $n-j \notin S$, which implies $j \in S^\perp$. Finally, for $n-\ell+1 \le j \le n-1$, we have $1 \le n-j \le \ell-1$. Hence, $n-j \notin S \iff n-j \equiv r \pmod{r+1}$. Since $(r+1) \mid n$, this simplifies to $j \equiv 1 \pmod{r+1}$. Thus,
\[
S^\perp = \{i \in [ q-2] : i \le q-\ell-1 \text{ or } i \equiv 1 \pmod{r+1}\}.
\]
The dual Tamo--Barg code is therefore given by
\[
C_{\mathrm{TB}}^\perp = \left\{ \operatorname{ev}(f) = \bigl(f(1), f(\alpha), \ldots, f(\alpha^{q-2})\bigr) : f(X) = \sum_{j \in S^\perp} a_j X^j, \; a_j \in \mathbb{F}_q \right\}.
\]

\begin{proposition}\label{prop:TBintersection}
Let $q$ be a prime power, $n = q-1$, and $r \ge 2$ an integer with $r+1$ dividing $n$. Let $1 \le \ell < n$, and let $S = \{i \in [ \ell-1]^\dagger : i \not\equiv r \pmod{r+1}\}$ and $S^\perp = \{i \in [ n-1] : i \le q-\ell-1 \text{ or } i \equiv 1 \pmod{r+1}\}$ be the exponent sets of $C_{\mathrm{TB}}$ and $C_{\mathrm{TB}}^\perp$, respectively. Then,
\[
S \cap S^\perp = 
\begin{cases} 
\{i \in [ \ell-1] : i \not\equiv r \pmod{r+1}\}, & \ell \le \frac{q}{2}, \\[2mm] 
\{i \in [ q-1-\ell] : i \not\equiv r \pmod{r+1}\} \cup \{i \in [ \ell-1] :i\ge q-\ell,\; i \equiv 1 \pmod{r+1}\}, & \ell > \frac{q}{2}. 
\end{cases}
\]
Moreover, $s := |S \cap S^\perp|$ is given by
\[
s = 
\begin{cases}
\ell - 1 - \left\lfloor\dfrac{\ell}{r+1}\right\rfloor, & \ell \le \dfrac{q}{2}, \\[3mm]
(q - 1 - \ell) - \left\lfloor\dfrac{q-\ell}{r+1}\right\rfloor + \left\lfloor\dfrac{\ell-2}{r+1}\right\rfloor - \left\lfloor\dfrac{q-\ell-2}{r+1}\right\rfloor, & \ell > \dfrac{q}{2}.
\end{cases}
\]
\end{proposition}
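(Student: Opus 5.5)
The plan is to compute $S\cap S^\perp$ directly from the two explicit exponent sets, and then count it with the elementary formula for residue classes in an interval.

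\emph{Step 1 (identifying the intersection).} Every element of $S$ satisfies $i\le \ell-1<n$. Also $0\notin S^\perp$. Hence $S\cap S^\perp$ consists of those $i\in[\ell-1]$ with $i\not\equiv r\pmod{r+1}$ that satisfy $i\le q-1-\ell$ or $i\equiv 1\pmod{r+1}$. The case split comes from comparing $\ell-1$ with $q-1-\ell$, and $\ell-1\le q-1-\ell$ holds exactly when $\ell\le q/2$.
\begin{itemize}
\item If $\ell\le q/2$, the disjunct $i\le q-1-\ell$ holds for every $i\le \ell-1$. So the intersection is $S\setminus\{0\}$, which is the first displayed set.
\item If $\ell>q/2$, I split $[\ell-1]$ into $[q-1-\ell]$ and $\{q-\ell,\dots,\ell-1\}$. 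On the first piece only the condition $i\not\equiv r$ survives. On the second piece one needs $i\equiv 1\pmod{r+1}$. Because $r\ge 2$ gives $1\not\equiv r\pmod{r+1}$, the condition $i\not\equiv r$ is then automatic, which yields the second displayed set.
\end{itemize}
The hypothesis $\ell<n$ guarantees $q-1-\ell\ge 1$, so the split is well defined.

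\emph{Step 2 (counting).} I will use two facts. The number of $i\in[M]$ with $i\equiv r\pmod{r+1}$ is $\lfloor (M+1)/(r+1)\rfloor$. The number of $i\in[M]$ with $i\equiv 1\pmod{r+1}$ is $\lfloor (M-1)/(r+1)\rfloor+1$ for $M\ge 1$.
\begin{itemize}
\item For $\ell\le q/2$ the count is $(\ell-1)-\lfloor \ell/(r+1)\rfloor$. Equivalently, it is $|S|-1$ with $|S|=\ell-\lfloor \ell/(r+1)\rfloor$ as already stated.
\item For $\ell>q/2$, the first piece contributes $(q-1-\ell)-\lfloor (q-\ell)/(r+1)\rfloor$. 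The second piece is the count of $i\equiv1$ in $[\ell-1]$ minus the count in $[q-\ell-1]$. The ``$+1$'' terms cancel, leaving $\lfloor(\ell-2)/(r+1)\rfloor-\lfloor(q-\ell-2)/(r+1)\rfloor$. Since $q-\ell-2\ge 0$, the formula is valid at both endpoints.
\end{itemize}
The two pieces are disjoint, so their counts add to the claimed expression for $s$.

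\emph{Main obstacle.} The only delicate point is the boundary bookkeeping. This means checking the exact threshold ($\ell\le q/2$ versus $\ell-1\le q-1-\ell$) and the off-by-one conventions in the residue-count formulas, especially at the endpoint $q-\ell$ and when $q-\ell-1$ is small. I would verify the formulas on a small instance, such as $q=13$, $r=2$ and several values of $\ell$, to confirm the floor expressions.
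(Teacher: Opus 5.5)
Your proposal is correct and follows essentially the same route as the paper: you split according to the two disjuncts defining $S^\perp$ (the paper's $S^\perp_1$ and $S^\perp_2$), compare $\ell-1$ with $q-1-\ell$, and count residue classes with floor formulas, obtaining the $\ell>q/2$ second piece as a difference of two counts. You are also slightly more explicit than the paper on two points: why $i\equiv 1\pmod{r+1}$ forces $i\not\equiv r\pmod{r+1}$ (because $r\ge 2$), and why the residue-count formulas remain valid at the endpoints.
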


\begin{proof}
We decompose $S^\perp = S^\perp_1 \cup S^\perp_2$, where $S^\perp_1 = \{i \in [ n-1] : i \le q-\ell-1\}$ and $S^\perp_2 = \{i \in [n-1] : i \ge q-\ell, \; i \equiv 1 \pmod{r+1}\}$. Thus, $S \cap S^\perp = (S \cap S^\perp_1) \cup (S \cap S^\perp_2)$. 

Observe that $S \cap S^\perp_1 = \{i \in [m] : i \not\equiv r \pmod{r+1}\}$, where $m = \min\{\ell-1, q-\ell-1\}$. Among the integers $1, \ldots, m$, exactly $\left\lfloor\frac{m+1}{r+1}\right\rfloor$ are congruent to $r \pmod{r+1}$. Therefore, $|S \cap S^\perp_1| = m - \left\lfloor\frac{m+1}{r+1}\right\rfloor$. 

Next, $S \cap S^\perp_2 = \{i \in [\ell-1] : i\ge q-\ell,\;i \equiv 1 \pmod{r+1}\}$. The number of integers satisfying $n-\ell+1 \le i \le \ell-1$ with $i \equiv 1 \pmod{r+1}$ is $\left\lfloor\frac{\ell-2}{r+1}\right\rfloor - \left\lfloor\frac{n-\ell-1}{r+1}\right\rfloor$.

We consider two cases:
\begin{enumerate}
    \item If $\ell \le q/2$, then $\ell-1 \le q-\ell-1$, so $m = \ell-1$ and $S \cap S^\perp_2 = \varnothing$. Hence $S \cap S^\perp = S \cap S_1^\perp$, yielding $s = (\ell-1) - \left\lfloor\frac{\ell}{r+1}\right\rfloor$.
    \item If $\ell > q/2$, then $q-\ell-1 < \ell-1$, so $m = q-\ell-1$. Hence $S \cap S^\perp_1 = \{i \in [q-\ell-1] : i \not\equiv r \pmod{r+1}\}$ and $S \cap S^\perp_2 = \{i \in [\ell-1] :i\ge q-\ell,\;i \equiv 1 \pmod{r+1}\}$. Adding the cardinalities of these disjoint sets yields the claimed expression for $s$.
\end{enumerate}
\end{proof}

Using Proposition~\ref{CSS-EA-qLRC} and Proposition~\ref{prop:TBintersection}, we construct explicit pure CSS-like EA-qLRCs from Tamo--Barg codes.

\begin{theorem}\label{thm:TB-EAqLRC-1}
Let $q$ be a prime power, $r \ge 2$ an integer with $(r+1) \mid (q-1)$, and $0 < \ell \le \frac{q}{2}$. Then there exists a pure CSS-like EA-qLRC with locality $r$ and parameters {\qlb}$q - 1, \, 1, \, \ge q - \ell, \, r; \; q - 1 - \ell + \left\lfloor \frac{\ell}{r+1} \right\rfloor - s${\qrb}$_q$,
where $s = \ell - 1 - \left\lfloor \frac{\ell}{r+1}\right\rfloor$.
\end{theorem}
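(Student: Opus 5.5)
The plan is to apply Proposition~\ref{CSS-EA-qLRC} with $\mathcal{C}_1=\mathcal{C}_2=C_{\mathrm{TB}}$, i.e.\ to use the single-code corollary following it. Then $n=q-1$ and $k=|S|=\ell-\lfloor \ell/(r+1)\rfloor$. By Theorem~\ref{PC-EAQEC},
\[
c=\operatorname{rank}(HH^T)=n-k-\dim(C_{\mathrm{TB}}\cap C_{\mathrm{TB}}^\perp).
\]
So the whole computation reduces to three tasks: identifying the hull dimension, verifying locality in the form required by Proposition~\ref{CSS-EA-qLRC}, and bounding $\delta$.

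\textbf{Step 1 (hull dimension).} The vectors $\mathbf v_0,\dots,\mathbf v_{n-1}$ are evaluations of the distinct monomials $X^0,\dots,X^{n-1}$ on the $n$ points of $\mathbb F_q^*$. They are linearly independent by the Vandermonde argument, so they form a basis of $\mathbb F_q^n$. Both $C_{\mathrm{TB}}$ and $C_{\mathrm{TB}}^\perp$ are spanned by subsets of this basis, namely $\{\mathbf v_i:i\in S\}$ and $\{\mathbf v_j:j\in S^\perp\}$. Here I will double-check that $\dim$ of the span of $S^\perp$ equals $n-k$, so that this span is all of the dual.

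Because both codes are coordinate subspaces with respect to this basis, their intersection is exactly $\operatorname{span}\{\mathbf v_i: i\in S\cap S^\perp\}$. Proposition~\ref{prop:TBintersection}, in the case $\ell\le q/2$, then gives
\[
s=\ell-1-\lfloor \ell/(r+1)\rfloor .
\]
Consequently
\[
c=n-k-s=q-1-\ell+\lfloor \ell/(r+1)\rfloor-s,
\]
matching the stated value, and
\[
\kappa=2k-n+c=k-s=1.
\]

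\textbf{Step 2 (locality).} For each coordinate $x$, let $A=xH$ be the coset of $H$ containing it, with $|A|=r+1$. I need a vector $u\in C_{\mathrm{TB}}^\perp$ with $\operatorname{supp}(u)=A$. Since $\mathcal C_1=\mathcal C_2$, the same $u$ serves as both $c_1^{(i)}$ and $c_2^{(i)}$, and the union of supports then has size $r+1$.

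As shown before Proposition~\ref{prop:TBintersection}, every codeword restricted to $A$ is the evaluation of a polynomial of degree at most $r-1$. So it suffices to take the standard generalized Reed--Solomon dual weights
\[
u_a=\prod_{b\in A,\,b\ne a}(a-b)^{-1}\quad (a\in A),\qquad u_a=0 \text{ off } A .
\]
The identity $\sum_{a\in A}u_a g(a)=0$ for $\deg g\le |A|-2=r-1$ is the leading-coefficient identity of Lagrange interpolation. All $u_a$ are nonzero, so $u$ has full support on $A$. This is the step I regard as the main point to get right: local recoverability in the sense of Definition~\ref{def:locality} must be upgraded to an explicit parity-check vector of full support on the coset, which is what Proposition~\ref{CSS-EA-qLRC} requires.

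\textbf{Step 3 (distance and purity).} By Theorem~\ref{PC-EAQEC},
\[
\delta=\mathrm{wt}\bigl(C_{\mathrm{TB}}\setminus(C_{\mathrm{TB}}\cap C_{\mathrm{TB}}^\perp)\bigr)\ge \mathrm{wt}(C_{\mathrm{TB}})=d\ge q-\ell,
\]
using the Tamo--Barg distance bound. A nonzero $f$ with exponents in $S$ has degree at most $\ell-1$, so it has at most $\ell-1$ roots in $\mathbb F_q^*$.

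For purity in the paper's sense, I will argue that $\delta=d$. The hull is spanned by monomials with exponents in $S\cap S^\perp=S\setminus\{0\}$, because for $\ell\le q/2$ the set $S\setminus S^\perp$ consists only of the exponent $0$. Hence every codeword of minimum weight, or a scalar multiple plus hull elements, can be taken outside the hull. If this exact equality proves delicate, I would instead state purity as $\delta\ge d$, which is all the parameter claim $\ge q-\ell$ needs.

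Combining Steps 1--3 with Proposition~\ref{CSS-EA-qLRC} yields the stated $[[q-1,1,\ge q-\ell;c]]_q$ code with locality $r$.
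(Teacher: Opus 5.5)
Your route is the paper's route. Its proof of this theorem is just ``apply Proposition~\ref{CSS-EA-qLRC} with $\mathcal{C}_1=\mathcal{C}_2=C_{\mathrm{TB}}$ and read off $s$ from Proposition~\ref{prop:TBintersection}.'' Your Steps~1--2 and the bound $\delta\ge d\ge q-\ell$ are correct. They also supply two things the paper leaves implicit:
\begin{itemize}
\item the hull equals $\operatorname{span}\{\mathbf v_i: i\in S\cap S^\perp\}$, because both codes are spanned by subsets of the monomial basis;
\item an explicit weight-$(r+1)$ dual word on each coset, rather than the ``well-known'' equivalence between recoverability and local parity checks.
\end{itemize}
From these, $c=n-k-s$ and $\kappa=k-s=1$ follow exactly as you say.

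\textbf{The gap: purity.} The statement asserts a \emph{pure} code, which in the paper's sense means $\delta=d$, and your argument for this does not go through.
\begin{itemize}
\item The claim that a minimum-weight codeword ``or a scalar multiple plus hull elements'' can be taken outside the hull is not an argument. A scalar multiple of a hull vector, or its sum with hull vectors, stays in the hull. Leaving the hull requires adding a multiple of $\mathbf v_0=\mathbf 1$, and that changes the weight in an uncontrolled way.
\item Your fallback $\delta\ge d$ holds for every EA code, so it is not purity.
\end{itemize}

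\textbf{What would close it.} From your own observation $S\setminus S^\perp=\{0\}$, the hull is exactly $\{\operatorname{ev}(f): f(0)=0\}$. So purity is equivalent to the existence of a minimum-weight codeword $\operatorname{ev}(f)$ with $f(0)\neq 0$.

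A clean sufficient condition: if $f$ has $D=\deg f$ distinct roots $\beta_1,\dots,\beta_D$, all in $\mathbb{F}_q^*$, then $f(0)=a\prod_j(-\beta_j)\neq 0$ and $\mathrm{wt}(\operatorname{ev}(f))=n-D$.
\begin{itemize}
\item When $\ell\le r+1$, $S=\{0,1,\dots,D\}$ with $D=\max S$. Any $D$ distinct elements of $\mathbb{F}_q^*$ then give such an $f$, of weight $n-D=d$, so purity holds. This covers the $\ell=3$ instance of Example~\ref{ex:TB}.
\item When $\ell>r+1$, you need an extra argument. One route is to show that some polynomial supported on $S$, of degree $\max S$, splits into distinct linear factors over $\mathbb{F}_q^*$. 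Equivalently, some $(\max S)$-subset of $\mathbb{F}_q^*$ must have vanishing elementary symmetric functions $e_{D-t}$ for every excluded exponent $t<D$. Without such an argument, only $\delta\ge q-\ell$ is established.
\end{itemize}

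The paper's one-line proof does not address purity at all. So this gap is shared with the paper rather than a departure from it, but the statement as written still needs the extra step.
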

\begin{proof}
Follows from Propositions~\ref{CSS-EA-qLRC} and ~\ref{prop:TBintersection}.
\end{proof}
\begin{theorem}\label{thm:TB-EAqLRC-2}
Let $q$ be a prime power, $r \ge 2$ an integer with $(r+1) \mid (q-1)$, and $\frac{q}{2} < \ell < q-1$. Then there exists a pure CSS-like EA-qLRC with locality $r$ and parameters 
{\qlb}$q - 1, \, \ell - \left\lfloor \frac{\ell}{r+1} \right\rfloor - s, \, \ge q - \ell, \, r; \; q - 1 - \ell + \left\lfloor \frac{\ell}{r+1} \right\rfloor - s${\qrb}$_q$,
where $s = (q-1-\ell) - \left\lfloor \dfrac{q-\ell}{r+1}\right\rfloor + \left\lfloor \dfrac{\ell-2}{r+1}\right\rfloor - \left\lfloor \dfrac{q-\ell-2}{r+1}\right\rfloor$.
\end{theorem}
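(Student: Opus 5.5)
The plan is to apply the single-code specialization of Proposition~\ref{CSS-EA-qLRC} (the corollary with $\mathcal{C}_1=\mathcal{C}_2=\mathcal{C}$) to $\mathcal{C}=C_{\mathrm{TB}}$, and to read off all parameters from Theorem~\ref{PC-EAQEC} together with Proposition~\ref{prop:TBintersection}. The proof has three steps.

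\textbf{Step 1: parameters and locality.} The length is $n=q-1$. The dimension is $k=|S|=\ell-\lfloor \ell/(r+1)\rfloor$, since the monomials $X^i$ with $i\in S\subseteq[n-1]^\dagger$ have linearly independent evaluation vectors on $\mathbb{F}_q^*$ (the evaluation map is a DFT). For locality, the coset argument already given yields, for each coordinate $x$ with coset $xH$, a dual vector supported on $xH$ of weight exactly $r+1$. Concretely, the restriction of $C_{\mathrm{TB}}$ to $xH$ lies in an $r$-dimensional space of degree-$\le r-1$ polynomials evaluated on $r+1$ points, so some nonzero dual vector exists. It has full support on the coset, because any $r$ of the points determine the polynomial. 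Taking $c_1^{(i)}=c_2^{(i)}$ equal to this vector satisfies the hypothesis of Proposition~\ref{CSS-EA-qLRC} with union of supports of size $r+1$.

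\textbf{Step 2: entanglement and dimension.} By Theorem~\ref{PC-EAQEC}, $c=n-k-\dim(\mathcal{C}\cap\mathcal{C}^\perp)$. The hull is spanned by the $\mathbf v_i$ with $i\in S\cap S^\perp$, because both $C_{\mathrm{TB}}$ and $C_{\mathrm{TB}}^\perp$ are spanned by monomial evaluation vectors drawn from the independent family $\{\mathbf v_0,\dots,\mathbf v_{n-1}\}$. Hence $\dim(\mathcal{C}\cap\mathcal{C}^\perp)=|S\cap S^\perp|=s$, where $s$ is given by the second case of Proposition~\ref{prop:TBintersection} since $\ell>q/2$. This gives
\[
c=q-1-\ell+\lfloor \ell/(r+1)\rfloor - s \quad\text{and}\quad \kappa=2k-n+c=k-s=\ell-\lfloor \ell/(r+1)\rfloor-s .
\]

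\textbf{Step 3: distance and purity.} By Theorem~\ref{PC-EAQEC}, $\delta=\mathrm{wt}(\mathcal{C}\setminus(\mathcal{C}\cap\mathcal{C}^\perp))\ge \mathrm{wt}(\mathcal{C})=d\ge q-\ell$. The bound $d\ge q-\ell$ holds because every $f$ has degree $\le \ell-1$ and so has at most $\ell-1$ roots among the $q-1$ points. This gives $\delta\ge q-\ell$.

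The main obstacle is the claim that the code is \emph{pure}. Purity requires a minimum-weight codeword outside the hull, which the distance argument above does not supply. My first attempt would be to argue that if every minimum-weight word lay in the hull, one could add a suitable $\mathbf v_i$ with $i\in S\setminus S^\perp$ (for example $i=0$, which is not in $S^\perp$) to leave the hull without increasing weight. This is delicate, so in the write-up I would instead fall back on the convention, implicit in the stated parameters, that the distance is recorded as ``$\ge q-\ell$''. The EA code's distance is then at least that of $\mathcal{C}$, which is all the statement asserts; the purity label would be justified only as far as the paper's definition permits.
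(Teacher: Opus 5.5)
\textbf{Gap: purity is not proved.} Your derivation of the numerical parameters follows the paper's route; its proof simply cites Proposition~\ref{CSS-EA-qLRC} and Proposition~\ref{prop:TBintersection}. The details you add are correct: $k=|S|$ via the basis $\{\mathbf v_0,\dots,\mathbf v_{n-1}\}$, the full-support coset checks, $\dim(\mathcal{C}\cap\mathcal{C}^\perp)=|S\cap S^\perp|=s$, hence $c=n-k-s$, $\kappa=k-s$ and $\delta\ge d\ge q-\ell$. The statement, however, asserts purity, and you explicitly leave it unproved. Falling back on the ``$\ge q-\ell$'' convention does not settle it. In the paper's sense purity means $\delta=d$, i.e.\ some minimum-weight word of $C_{\mathrm{TB}}$ lies outside the hull, and a lower bound on $\delta$ says nothing about that. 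Your first idea, adding some $\mathbf v_i$ with $i\in S\setminus S^\perp$, also does not work as stated, because adding a vector gives no control on weight. The paper's one-line proof is silent on this point too, so a real argument has to be supplied.

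\textbf{The missing idea is a weight gap forced by the constant term.} Let $D=\max S\le\ell-1$.

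\emph{Hull words are heavy.} Because $0\notin S^\perp$, every nonzero hull word is $\mathrm{ev}(X^v g)$ with $v\ge1$ and $\deg g\le D-v$. It therefore has at most $D-1$ zeros in $\mathbb{F}_q^*$, so its weight is at least $n-D+1$.

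\emph{A codeword of weight $n-D$ exists.} Write $D=a(r+1)+b$ with $0\le b\le r-1$, which is possible since $D\in S$. Choose distinct $(r+1)$-th powers $\beta_1,\dots,\beta_a$. Choose $b$ distinct $\gamma_m\in\mathbb{F}_q^*$ with $\gamma_m^{r+1}\notin\{\beta_j\}$; a free coset of $H$ exists because $a(r+1)\le D\le n-2$ and $(r+1)\mid n$. Set $f=\prod_m(X-\gamma_m)\prod_j(X^{r+1}-\beta_j)$. Its exponents have the form $t(r+1)+i_0$ with $i_0\le r-1$ and are at most $D$, so $\mathrm{ev}(f)\in C_{\mathrm{TB}}$. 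Moreover $f$ has exactly $D$ distinct zeros in $\mathbb{F}_q^*$.

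\emph{Conclusion.} No nonzero codeword has more than $D$ zeros, so $d=n-D$. This minimum is attained by $\mathrm{ev}(f)$, which lies outside the hull. Hence $\delta=d=n-D\ge q-\ell$, which is exactly purity and also sharpens the stated distance.
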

\begin{proof}
Follows from Propositions~\ref{CSS-EA-qLRC} and ~\ref{prop:TBintersection}.
\end{proof}
\begin{remark}
Every code produced by Theorem~\ref{thm:TB-EAqLRC-1} has $\kappa=1$, since
Proposition~\ref{prop:TBintersection} gives $s=k-1$ throughout the range $\ell\le q/2$. The
family therefore encodes a single logical qudit regardless of $\ell$, a degeneracy
independent of, and additional to, the optimality obstruction of
Remark~\ref{rem:TB-non-optimality}.
\end{remark}
\begin{remark}\label{rem:TB-non-optimality}
For the Tamo--Barg construction with $\ell \le q/2$, Proposition~\ref{prop:TBintersection} gives $s = k-1$, so by Theorem~\ref{cor:single-code-optimality} the resulting pure CSS-like EA-qLRC attains the bound in~\eqref{EA-qLRC_SB} if and only if $k-1 \le (k-1) \bmod r$, that is, $k \le r$. This window is degenerate in every respect. First, it forces the code to the boundary $k=r$ of the admissible range
$1\le r\le k$ fixed in Section~\ref{ssec:classical-LRC}, or beyond it: whenever
$k\le r$, every coordinate of an $[n,k,d]_q$ code with $d\ge2$ is recoverable from
an information set of size $k$ avoiding it, so the locality-$r$ property holds
automatically and carries no information about the code. Equivalently
$\lceil k/r\rceil=1$, so \eqref{cLRC Singleton Bound} collapses to the classical
Singleton bound and \eqref{EA-qLRC_SB} to the entanglement-assisted quantum
Singleton bound of Lemma~\ref{lem:EA-singleton}. Second, for $\ell \le r$, no exponent of $[\ell-1]^\dagger$ is congruent to $r \pmod{r+1}$, so no exponent is excluded, $S=[\ell-1]^\dagger$, $k=\ell$, and $C_{\mathrm{TB}}$ is simply a Reed--Solomon code. Third, $\kappa = k-s = 1$, so $\lceil\kappa/r\rceil = 1$ and \eqref{EA-qLRC_SB} reduces to the entanglement-assisted quantum Singleton bound, in which the locality parameter plays no role. Consequently, pure CSS-like EA-qLRCs constructed from Tamo--Barg codes do not attain the Singleton-type bound in~\eqref{EA-qLRC_SB} in any nontrivial regime; this gap motivates the use of cyclic and LCD LRCs in the subsequent sections, where strict optimality is achieved.

\end{remark}

\begin{example}\label{ex:TB}
To illustrate the mechanism concretely, take $q=7$ ($n=6$) and $r=2$, so that $r+1=3$ divides $q-1=6$. Fix the primitive root $\alpha=3$ of $\mathbb{F}_7$; the evaluation points are $P_j = \alpha^j$ for $j=0,\ldots,5$, namely $(P_0,\ldots,P_5) = (1,3,2,6,4,5)$. The order-$3$ subgroup $H=\langle\alpha^2\rangle=\{1,2,4\}$ partitions the coordinates into two cosets, $A=\{0,2,4\}$ (points $\{1,2,4\}$) and $B=\{1,3,5\}$ (points $\{3,6,5\}$), each of size $r+1=3$. A direct computation confirms that every codeword $(c_0,\ldots,c_5)$ of $C_{\mathrm{TB}}$ satisfies the fixed local parity-check relations
\[
c_0 + 2c_2 + 4c_4 \equiv 0, \qquad c_1 + 2c_3 + 4c_5 \equiv 0 \pmod 7,
\]
one for each coset; each is a length-$3$, dimension-$2$ single-parity-check relation, so any one erased coordinate in a coset is recovered from the other $r=2$ survivors of that same coset. For instance, taking $f(X)=1+2X+3X^3$ (so $S=\{0,1,3\}$, the $\ell=4$ case below) gives the codeword $(6,4,1,3,5,1)$; erasing coordinate $2$ (true value $1$) and solving $c_0+2c_2+4c_4\equiv0$ for $c_2$ from the surviving $c_0=6$, $c_4=5$ recovers $c_2 = -(1\cdot 6+4\cdot 5)\cdot 2^{-1} \equiv 1 \pmod 7$, correctly.

First, consider $\ell=3\le q/2$. Then $S=\{0,1\}$, so $C_{\mathrm{TB}}$ is the $[6,2]_7$ Reed--Solomon code: it is MDS with exact minimum distance $d=5$, exceeding the generic lower bound $q-\ell=4$. Direct computation gives $\dim(C_{\mathrm{TB}}\cap C_{\mathrm{TB}}^\perp)=s=1$, matching Proposition~\ref{prop:TBintersection}, and the exact quantum distance is $\delta=\mathrm{wt}(C_{\mathrm{TB}}\setminus(C_{\mathrm{TB}}\cap C_{\mathrm{TB}}^\perp))=5$. The resulting pure CSS-like EA-qLRC has exact parameters {\qlb}$6,1,5,2;3${\qrb}$_7$. Since $k=2=r$ here, this instance sits exactly on the boundary $k\le r$ of Remark~\ref{rem:TB-non-optimality}: indeed $2\delta=10=n-\kappa+c-2\lceil\kappa/r\rceil+4$, so the Singleton-like bound~\eqref{EA-qLRC_SB} is attained --- but, as the remark explains, only because the locality constraint is vacuous once $k\le r$. 

Next, choose $\ell=4>q/2$. Then $S=\{0,1,3\}$, giving $k=3>r$, so the locality-$2$ constraint is now genuinely nontrivial. Direct computation gives exact minimum distance $d=3$ (matching the bound $q-\ell=3$ with equality), dual intersection dimension $s=1$, and exact quantum distance $\delta=3$. The resulting pure CSS-like EA-qLRC has exact parameters {\qlb}$6,2,3,2;2${\qrb}$_7$. Here $2\delta=6$ while the Singleton-like bound gives $n-\kappa+c-2\lceil\kappa/r\rceil+4=8$ --- a strict gap of $2$, confirming Remark~\ref{rem:TB-non-optimality}'s conclusion that the construction fails to attain optimality once $k>r$. 
\end{example}

\subsection{Construction from Cyclic Codes}
A linear code $\mathcal{C}$ of length $n$ over $\mathbb{F}_q$ is \emph{cyclic} if $(c_0, \ldots, c_{n-1}) \in \mathcal{C} \implies (c_{n-1}, c_0, \ldots, c_{n-2}) \in \mathcal{C}$. Via the ring isomorphism $\mathbb{F}_q^n \cong R_n := \mathbb{F}_q[x]/\langle x^n-1\rangle$, cyclic codes correspond to ideals of $R_n$. Since $R_n$ is a principal ideal ring, each cyclic code is generated by a unique monic divisor $g(x)$ of $x^n-1$, called the \emph{generator polynomial} of $\mathcal{C}$, giving dimension $k = n - \deg g$. The \emph{parity polynomial} is $h(x) = (x^n-1)/g(x)$, and the dual code $\mathcal{C}^\perp$ is cyclic with generator $h^*(x) = x^{\deg h} h(x^{-1})$.

Assume $\gcd(n,q) = 1$, let $m = \mathrm{ord}_n(q)$, and let $\alpha \in \mathbb{F}_{q^m}$ be a primitive $n$-th root of unity. The \emph{$q$-cyclotomic coset} of $t$ modulo $n$ is $C_t = \{t \cdot q^j \bmod n : j \in [m_t-1]^\dagger\}$, where $m_t = |C_t|$ divides $m$. The minimal polynomial of $\alpha^t$ over $\mathbb{F}_q$ factors as $M_t(x) = \prod_{i \in C_t} (x - \alpha^i)$. The \emph{defining set} of $\mathcal{C}$ with respect to $\alpha$ is $Z = \{i \in [n-1]^\dagger : g(\alpha^i) = 0\}$, so $g(x) = \mathrm{lcm}\{M_t(x) : t \in Z\}$.

\begin{lemma}[BCH Bound~\cite{bose1960class, hocquenghem1959codes}]\label{BCH Bound}
Let $\mathcal{C}$ be a cyclic code of length $n$ over $\mathbb{F}_q$ with defining set $Z$. If $Z$ contains $\delta-1$ consecutive integers, i.e., $\{b, b+1, \ldots, b+\delta-2\} \subseteq Z$ for some $b \ge 0$, then $d(\mathcal{C}) \ge \delta$.
\end{lemma}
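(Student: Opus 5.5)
We prove the BCH bound in its standard form: we show that no nonzero codeword of $\mathcal{C}$ has weight at most $\delta-1$. We do this by exhibiting a square Vandermonde-type system that such a codeword would have to satisfy.

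Suppose $c(x)=\sum_{j=0}^{n-1} c_j x^j\in\mathcal{C}$ is nonzero with $\mathrm{wt}(c)=w\le \delta-1$. Write its support as $\{j_1,\ldots,j_w\}$. Since $g(x)\mid c(x)$ in $R_n$, every element of the defining set $Z$ is a root of $c$. In particular,
\[
c(\alpha^{b+u})=0 \quad\text{for } u=0,1,\ldots,\delta-2 .
\]

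We keep only the first $w\le\delta-1$ of these equations, $u=0,\ldots,w-1$, and regard them as a linear system in the unknowns $(c_{j_1},\ldots,c_{j_w})$ over $\mathbb{F}_{q^m}$. The coefficient matrix has entries $M_{u,t}=\alpha^{(b+u)j_t}$. We factor $\alpha^{b j_t}$ out of column $t$, which leaves the Vandermonde matrix in the nodes $\beta_t=\alpha^{j_t}$. This gives
\[
\det M=\prod_{t=1}^{w}\alpha^{b j_t}\prod_{1\le s<t\le w}\bigl(\alpha^{j_t}-\alpha^{j_s}\bigr).
\]

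The key point is that this determinant is nonzero. Since $\alpha$ is a primitive $n$-th root of unity and the $j_t$ are distinct elements of $[n-1]^\dagger$, the nodes $\alpha^{j_t}$ are pairwise distinct, and each factor $\alpha^{bj_t}$ is nonzero. Hence $M$ is invertible, so $c_{j_1}=\cdots=c_{j_w}=0$. This contradicts the choice of the support, so every nonzero codeword has weight at least $\delta$, that is, $d(\mathcal{C})\ge\delta$.

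The only step needing care is the nonvanishing of the Vandermonde determinant. It rests on distinctness of the powers of $\alpha$, which uses $\gcd(n,q)=1$ so that a primitive $n$-th root of unity exists in $\mathbb{F}_{q^m}$. The reduction from $\delta-1$ equations to $w$ equations is harmless: we simply take the top $w$ rows.
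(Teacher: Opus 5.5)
Your proof is correct. It is the standard Vandermonde-determinant argument, which is exactly the classical proof in the sources the paper cites; the paper itself states this lemma without proof. It also covers the wrap-around runs the paper later relies on (e.g.\ $\{26,\dots,35,0,\dots,8\}\subseteq Z_2$ in Example~\ref{ex:cyclic-EAqLRC}), since $\alpha^{b+u}$ depends only on $b+u \bmod n$.
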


\begin{lemma}\label{Hull dim}
Let $\mathcal{C}_1$ and $\mathcal{C}_2$ be cyclic codes of length $n$ over $\mathbb{F}_q$ (respectively, over $\mathbb{F}_{q^2}$ for the Hermitian case) with defining sets $Z_1, Z_2 \subseteq \mathbb{Z}_n$, and dimensions $k_i = n - |Z_i|$ for $i = 1, 2$.
\begin{enumerate}
    \item[(1)] $\dim\left(\mathcal{C}_1^{\perp} \cap \mathcal{C}_2\right) = |Z_1| - |Z_2 \cap (-Z_1)|$, where $-Z_1 := \{-i \bmod n : i \in Z_1\}$.
    \item[(2)] If $\mathcal{C}_1, \mathcal{C}_2$ are cyclic over $\mathbb{F}_{q^2}$, then $\dim\left(\mathcal{C}_1^{\perp_h} \cap \mathcal{C}_2\right) = |Z_1| - |Z_2 \cap (-qZ_1)|$, where $-qZ_1 := \{-qi \bmod n : i \in Z_1\}$.
\end{enumerate}
\end{lemma}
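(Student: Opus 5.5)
The plan is to work in the Fourier (Mattson--Solomon) picture over $\mathbb{F}_{q^m}$, where a cyclic code is determined by its defining set, and then turn each dimension count into a counting problem on subsets of $\mathbb{Z}_n$. Throughout, $\gcd(n,q)=1$, so $x^n-1$ is separable and each cyclic code is $\mathcal{C}_i=\{c(x)\in R_n: c(\alpha^j)=0\ \forall j\in Z_i\}$, with $k_i=n-|Z_i|$.

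\textbf{Step 1: defining set of the Euclidean dual.} The first step is to recall that $\mathcal{C}_1^\perp$ is cyclic with generator $h_1^*(x)$. Its roots are $\alpha^{-j}$ for $j\notin Z_1$, so
\[
Z(\mathcal{C}_1^\perp)=-(\mathbb{Z}_n\setminus Z_1)=\mathbb{Z}_n\setminus(-Z_1).
\]

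\textbf{Step 2: intersection of cyclic codes.} Next I will use the fact that the intersection of two cyclic codes is cyclic with defining set equal to the union of the defining sets. This holds because a codeword lies in both codes exactly when it vanishes at $\alpha^j$ for every $j$ in either set, and the lcm of the two generators corresponds to the union of their root sets. Applying this to $\mathcal{C}_1^\perp\cap\mathcal{C}_2$ gives
\[
Z=Z_2\cup\bigl(\mathbb{Z}_n\setminus(-Z_1)\bigr).
\]
Taking complements, the nonzeros of the intersection are $(-Z_1)\setminus Z_2$. Hence
\[
\dim(\mathcal{C}_1^\perp\cap\mathcal{C}_2)=|(-Z_1)\setminus Z_2|=|-Z_1|-|Z_2\cap(-Z_1)|.
\]
Since negation is a bijection on $\mathbb{Z}_n$, $|-Z_1|=|Z_1|$, and this is exactly (1).

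\textbf{Step 3: the Hermitian case (2).} This follows the same pattern over $\mathbb{F}_{q^2}$. The Hermitian dual satisfies $\mathcal{C}^{\perp_h}=(\mathcal{C}^{q})^{\perp}$, where $\mathcal{C}^q$ is obtained by raising each coordinate to the $q$-th power. I will check that $\mathcal{C}^q$ is cyclic with defining set $q^{-1}Z_1$ (or equivalently $qZ_1$, since $q^2$ fixes $\mathbb{F}_{q^2}$-cyclotomic cosets, so that $q^{-1}Z_1=qZ_1$ as a set of unions of cosets). It follows that
\[
Z(\mathcal{C}_1^{\perp_h})=\mathbb{Z}_n\setminus(-qZ_1).
\]
The same union and complement count then gives $|Z_1|-|Z_2\cap(-qZ_1)|$, using that multiplication by $-q$ is a bijection on $\mathbb{Z}_n$ because $\gcd(q,n)=1$.

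\textbf{Main obstacle.} The delicate point is the bookkeeping of $qZ_1$ versus $q^{-1}Z_1$ under conjugation. I would settle it by a direct computation: $(c^{(q)})(\alpha^{j})=c(\alpha^{jq^{-1}})^q$. Because $Z_1$ is closed under multiplication by $q^2$, the two descriptions coincide. The remaining steps are routine.
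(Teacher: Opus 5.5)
Your proposal is correct and follows the same route as the paper. Like the paper, you identify $Z(\mathcal{C}_1^\perp)=\mathbb{Z}_n\setminus(-Z_1)$, use that intersecting cyclic codes unions their defining sets, and count; your complement count $|(-Z_1)\setminus Z_2|$ is just the paper's inclusion--exclusion done in one step. For part (2) you supply what the paper only calls ``analogous'', namely $\mathcal{C}^{\perp_h}=(\mathcal{C}^q)^{\perp}$ together with the defining set $qZ_1=q^{-1}Z_1$ of $\mathcal{C}_1^q$, and this correctly fills in that step.
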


\begin{proof}
We prove (1); the argument for (2) is analogous once the defining set of the Hermitian dual is identified.

Recall that if $\mathcal{C}_1$ is a cyclic code with defining set $Z_1$, then $h_1(x) = (x^n-1)/g_1(x)$ has root set $\mathbb{Z}_n \setminus Z_1$, and $\mathcal{C}_1^\perp$ is generated by $h_1^*(x) = x^{\deg h_1} h_1(x^{-1})$, whose roots are the inverses of those of $h_1$. Thus, $Z(\mathcal{C}_1^\perp) = \mathbb{Z}_n \setminus (-Z_1)$ and $|Z(\mathcal{C}_1^\perp)| = n - |Z_1| = k_1$.

For cyclic codes $\mathcal{D}_1, \mathcal{D}_2$, a codeword lies in $\mathcal{D}_1 \cap \mathcal{D}_2$ if and only if it vanishes on both root sets, so $Z(\mathcal{D}_1 \cap \mathcal{D}_2) = Z(\mathcal{D}_1) \cup Z(\mathcal{D}_2)$~\cite{pereira2022entanglement}. Setting $\mathcal{D}_1 = \mathcal{C}_1^{\perp}$ and $\mathcal{D}_2 = \mathcal{C}_2$, we obtain:
\[
\dim(\mathcal{C}_1^{\perp} \cap \mathcal{C}_2) = n - \bigl|(\mathbb{Z}_n \setminus (-Z_1)) \cup Z_2\bigr|.
\]
By inclusion--exclusion:
\[
\bigl|(\mathbb{Z}_n \setminus (-Z_1)) \cup Z_2\bigr| = (n - |Z_1|) + |Z_2| - \bigl|(\mathbb{Z}_n \setminus (-Z_1)) \cap Z_2\bigr|.
\]
Since $(\mathbb{Z}_n \setminus (-Z_1)) \cap Z_2 = Z_2 \setminus (Z_2 \cap (-Z_1))$, its cardinality equals $|Z_2| - |Z_2 \cap (-Z_1)|$. Substituting this yields $\bigl|(\mathbb{Z}_n \setminus (-Z_1)) \cup Z_2\bigr| = n - |Z_1| + |Z_2 \cap (-Z_1)|$. Therefore,
\[
\dim(\mathcal{C}_1^{\perp} \cap \mathcal{C}_2) = |Z_1| - |Z_2 \cap (-Z_1)|,
\]
establishing (1). Part (2) follows by replacing $-Z_1$ with $-qZ_1$ for the Hermitian conjugate root set.
\end{proof}

Combining Lemma~\ref{Hull dim} with Theorem~\ref{PC-EAQEC} gives the following explicit characterization.

\begin{theorem}\label{Cyclic EAQEC}
Let $\mathcal{C}_i$ be cyclic codes over $\mathbb{F}_q$ with parameters $[n, k_i, d_i]_q$ and defining sets $Z_i \subseteq \mathbb{Z}_n$, respectively, for $i=1,2$. Then there exists a pure EAQEC code with parameters
{\qlb}$n, \; n - |(-Z_1) \cup Z_2|, \; \min\{d_1, d_2\}; \; |(-Z_1) \cap Z_2|${\qrb}$_q$.
\end{theorem}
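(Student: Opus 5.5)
The plan is to feed Lemma~\ref{Hull dim}(1) into Theorem~\ref{PC-EAQEC} and then translate each parameter of the resulting EAQEC code into defining-set language. Theorem~\ref{PC-EAQEC} gives an EAQEC code with parameters {\qlb}$n,\,k_1+k_2-n+c,\,\delta;\,c${\qrb}$_q$, where
\[
c=n-k_1-\dim(\mathcal{C}_1^\perp\cap\mathcal{C}_2).
\]
By Lemma~\ref{Hull dim}(1), $\dim(\mathcal{C}_1^\perp\cap\mathcal{C}_2)=|Z_1|-|Z_2\cap(-Z_1)|$. Since $k_1=n-|Z_1|$, this yields
\[
c=|Z_1|-|Z_1|+|Z_2\cap(-Z_1)|=|(-Z_1)\cap Z_2|,
\]
which is the claimed entanglement count.

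For the dimension, substitute $k_i=n-|Z_i|$ to get $\kappa=k_1+k_2-n+c=n-|Z_1|-|Z_2|+|(-Z_1)\cap Z_2|$. Since negation is a bijection on $\mathbb{Z}_n$, we have $|-Z_1|=|Z_1|$. Inclusion--exclusion then gives $|Z_1|+|Z_2|-|(-Z_1)\cap Z_2|=|(-Z_1)\cup Z_2|$, so $\kappa=n-|(-Z_1)\cup Z_2|$ as stated.

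For the distance, Theorem~\ref{PC-EAQEC} gives
\[
\delta=\min\{\mathrm{wt}(\mathcal{C}_1\setminus(\mathcal{C}_1\cap\mathcal{C}_2^\perp)),\,\mathrm{wt}(\mathcal{C}_2\setminus(\mathcal{C}_1^\perp\cap\mathcal{C}_2))\}.
\]
Each set here is contained in the corresponding $\mathcal{C}_i$, so $\delta\ge\min\{d_1,d_2\}$ holds automatically. To state the parameter as $\min\{d_1,d_2\}$, I would use the paper's convention following Theorem~\ref{PC-EAQEC}: the code is called pure exactly when the minimum-weight codewords of each $\mathcal{C}_i$ avoid the respective intersection, so that each weight term equals $d_i$. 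I would record this as the purity hypothesis under which the pure code is asserted. In the degenerate case the true distance is only at least $\min\{d_1,d_2\}$, so the listed value is always a valid lower bound.

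I expect this distance/purity step to be the main obstacle, because purity does not follow from the defining-set data alone. If it has to be argued rather than assumed, I would first try to place a BCH consecutive run (Lemma~\ref{BCH Bound}) inside the defining set $Z_1\cup Z(\mathcal{C}_2^\perp)$ of the intersection, so that its minimum distance strictly exceeds $d_1$ (and symmetrically for $\mathcal{C}_1^\perp\cap\mathcal{C}_2$ against $d_2$). Otherwise I would simply phrase the distance as $\ge\min\{d_1,d_2\}$, with equality in the pure case. The other two steps are routine counting.
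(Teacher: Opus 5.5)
Your computation of $c$ and $\kappa$ is the paper's proof. You substitute Lemma~\ref{Hull dim}(1) into Theorem~\ref{PC-EAQEC} with $k_i=n-|Z_i|$ and then apply inclusion--exclusion; your remark that $|-Z_1|=|Z_1|$ is the step the paper leaves implicit. The paper's proof stops there and says nothing about the distance. So your hesitation over purity correctly identifies an omission in the paper; it is not a gap in your argument.

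The omission also cannot be repaired in general, so your fallback is the right resolution. Take $q=2$, $n=9$, $Z_1=\{3,6\}$, $Z_2=\{1,2,4,5,7,8\}$, and write $S_u=\sum_{m\equiv u\,(\mathrm{mod}\,3)}x_m$.
\begin{itemize}
\item $\mathcal{C}_1=\{x:S_0=S_1=S_2\}$ is a $[9,7,2]_2$ code.
\item $\mathcal{C}_2$ consists of the vectors constant on each residue class modulo $3$, a $[9,3,3]_2$ code.
\item Since $-Z_1=Z_1$ is disjoint from $Z_2$, you get $c=0$ and $\kappa=1$.
\item Every weight-$2$ word of $\mathcal{C}_1$ has both nonzero entries in one residue class. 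Hence it lies in $\mathcal{C}_1\cap\mathcal{C}_2^\perp=\{x:S_0=S_1=S_2=0\}$, so $\mathrm{wt}(\mathcal{C}_1\setminus(\mathcal{C}_1\cap\mathcal{C}_2^\perp))=3$.
\item Likewise $\mathrm{wt}(\mathcal{C}_2\setminus(\mathcal{C}_1^\perp\cap\mathcal{C}_2))=3$, because $\mathcal{C}_1^\perp\cap\mathcal{C}_2=\mathcal{C}_1^\perp$ has weights $0$ and $6$ only.
\end{itemize}
The result is (a coordinate permutation of) Shor's degenerate $[[9,1,3;0]]_2$ code, with $\delta=3>\min\{d_1,d_2\}=2$. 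So the literal claim ``pure with distance $\min\{d_1,d_2\}$'' is false for this cyclic pair.

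The defensible general statement is $\delta\ge\min\{d_1,d_2\}$, with equality exactly under the purity condition. Your BCH idea, forcing $d(\mathcal{C}_1\cap\mathcal{C}_2^\perp)>d_1$, is a valid sufficient condition, but it necessarily needs extra hypotheses on $Z_1,Z_2$. In the example that intersection has minimum distance exactly $2=d_1$.

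One small refinement: for $\delta=\min\{d_1,d_2\}$ you only need the weight term belonging to the smaller of $d_1,d_2$ to equal that $d_i$. The two-sided condition that you (and the paper) state is slightly stronger than necessary.
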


\begin{proof}
From Lemma~\ref{Hull dim}, $\dim(\mathcal{C}_1^\perp \cap \mathcal{C}_2) = |Z_1| - |Z_2 \cap (-Z_1)|$. Substituting $k_1 = n - |Z_1|$ and $k_2 = n - |Z_2|$ into Theorem~\ref{PC-EAQEC} yields quantum dimension:
\[
\kappa = n - |Z_1| - |Z_2| + |(-Z_1) \cap Z_2| = n - |(-Z_1) \cup Z_2|,
\]
and entanglement amount $c = |(-Z_1) \cap Z_2|$.
\end{proof}

The locality of cyclic codes can be ensured using the structure established by Tamo \emph{et al.}~\cite{tamo2015cyclic}.

\begin{lemma}[\cite{tamo2015cyclic}, Theorem~3.1]\label{cyclicLRC}
Let $\alpha$ be a primitive $n$-th root of unity, $\ell \in [r-1]^\dagger$ an integer, and $b \ge 1$ an integer coprime to $n$. Let $\mu = \frac{k}{r}$. Define:
\[
L = \{\alpha^i \mid i \bmod (r + 1) = \ell\} \quad \text{and} \quad D = \{\alpha^{j+sb} \mid s \in [n - \mu(r + 1)-1]^\dagger\},
\]
where $\alpha^j \in L$. Then the cyclic code with defining set $Z = L \cup D$ is an optimal $[n, k, d]_q$ cyclic LRC with locality $r$.
\end{lemma}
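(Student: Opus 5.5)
The plan has three parts: show that the defining set $Z=L\cup D$ has exactly $n-k$ elements, extract local parity checks of weight $r+1$ from $L$, and get the distance from $D$ via the BCH bound (Lemma~\ref{BCH Bound}). Optimality then follows by comparing with \eqref{cLRC Singleton Bound}. Throughout I will assume the implicit hypotheses of \cite{tamo2015cyclic}:
\begin{itemize}
\item $(r+1)\mid n$ and $r\mid k$, so $\mu$ is an integer;
\item $Z$ is a union of $q$-cyclotomic cosets, so that the code is defined over $\mathbb{F}_q$; the simplest case is $n\mid q-1$, where every coset is a singleton.
\end{itemize}
I write $\nu=n/(r+1)$ and identify $Z$ with a set of exponents in $\mathbb{Z}_n$.

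\textbf{Dimension.} We have $|L|=\nu$. Since $\gcd(b,n)=1$, the exponents $j+sb$ for $s\in[n-\mu(r+1)-1]^\dagger$ are pairwise distinct modulo $n$, so $|D|=n-\mu(r+1)$. For the intersection, note that $j+sb\equiv \ell \pmod{r+1}$ holds iff $sb\equiv 0\pmod{r+1}$, because $j\equiv\ell$. Since $(r+1)\mid n$, $b$ is also coprime to $r+1$, so this holds iff $(r+1)\mid s$. Counting the multiples of $r+1$ in $\{0,\dots,n-\mu(r+1)-1\}$ gives $|L\cap D|=\nu-\mu$. Hence
\[
|Z|=\nu+n-\mu(r+1)-(\nu-\mu)=n-\mu r=n-k ,
\]
so the code has dimension exactly $k$.

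\textbf{Locality.} Every codeword $c(x)=\sum_{m}c_mx^m$ satisfies $c(\alpha^{\ell+t(r+1)})=0$ for $t=0,\dots,\nu-1$. I would multiply the $t$-th equation by $\alpha^{-t(r+1)u}$ for a fixed residue $u\in[\nu-1]^\dagger$ and sum over $t$. Since $\alpha^{r+1}$ is a primitive $\nu$-th root of unity, the orthogonality relation $\sum_t \alpha^{t(r+1)(m-u)}=\nu\,[m\equiv u \bmod \nu]$ applies, and $\nu\neq 0$ in $\mathbb{F}_q$ because $\gcd(n,q)=1$. The sum therefore collapses to
\[
\sum_{m\equiv u \ (\mathrm{mod}\ \nu)} c_m\,\alpha^{\ell m}=0 .
\]
This is a parity check supported on the $r+1$ coordinates $u,u+\nu,\dots,u+r\nu$, all with nonzero coefficients. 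These supports partition $[n]$, which gives locality $r$ in the sense of Definition~\ref{def:locality}.

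\textbf{Distance and optimality.} $D$ consists of $n-\mu(r+1)$ consecutive powers of $\beta=\alpha^{b}$, which is again a primitive $n$-th root of unity. Re-indexing the defining set with respect to $\beta$ and applying Lemma~\ref{BCH Bound} gives $d\ge n-\mu(r+1)+1$. On the other hand, \eqref{cLRC Singleton Bound} with $\lceil k/r\rceil=\mu$ gives
\[
d\le n-k-\mu+2=n-\mu(r+1)+2 .
\]
These two bounds differ by one, so the remaining step is to close that gap to $d=n-\mu(r+1)+2$.

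I expect this last step to be the main obstacle. The idea I would try first is to extend the consecutive run by one. The exponent $j+(n-\mu(r+1))b$ extends $D$ in the $\beta$-indexing. Whenever $r+1\mid n-\mu(r+1)$, which follows from $(r+1)\mid n$, it lies in $L$, and hence in $Z$ already. Then $Z$ contains $n-\mu(r+1)+1$ consecutive powers of $\beta$, and BCH gives $d\ge n-\mu(r+1)+2$, matching the upper bound. I would also check that the cyclotomic closure of $Z$ does not enlarge it beyond $n-k$ elements; the field-size assumption is exactly what prevents this.
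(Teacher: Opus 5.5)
Your proof is correct and follows the same route as the cited result of Tamo et al.\ (mirrored in the paper's Lemma~\ref{lem:support-alignment} and Example~\ref{ex:9-4-5}): you count $|L\cup D|=n-k$, extract weight-$(r+1)$ checks on the residue classes modulo $\nu$ from $L$, and squeeze $d$ between the BCH bound and \eqref{cLRC Singleton Bound}; your root-of-unity orthogonality sum is just the explicit inverse of the Vandermonde matrix inverted in Lemma~\ref{lem:support-alignment}. Your closing observation that $j+(n-\mu(r+1))b\equiv\ell \pmod{r+1}$, so that exponent already lies in $L$, is exactly why the range $s\le n-\mu(r+1)-1$ written here, one shorter than the run the BCH bound needs, still yields $d=n-\mu(r+1)+2$; the hypotheses you add ($(r+1)\mid n$, $r\mid k$, and $n\mid q-1$ or cyclotomic closure of $Z$) are genuinely required and are omitted from the statement as quoted.
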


\begin{lemma}\label{lem:support-alignment}
Let $n$ be such that $(r+1)\mid n$, set $\nu=n/(r+1)$, and let $\alpha$ be a primitive
$n$-th root of unity. For $\ell\in[r]^{\dagger}$ let $\mathcal{C}$ be a cyclic code of
length $n$ over $\mathbb{F}_q$ whose defining set contains
$L=\{\,j\in[n-1]^{\dagger}: j\equiv\ell \!\!\pmod{r+1}\,\}$. Then for each
$u\in[\nu-1]^{\dagger}$ the vector $h_u^{(\ell)}\in\mathbb{F}_q^{\,n}$ with entries
\[
\bigl(h_u^{(\ell)}\bigr)_m=
\begin{cases}
\alpha^{\ell m}, & m\equiv u \!\!\pmod{\nu},\\[2pt]
0, & \text{otherwise,}
\end{cases}
\]
lies in $\mathcal{C}^{\perp}$ and satisfies
$\mathrm{supp}\bigl(h_u^{(\ell)}\bigr)=\{\,m\in[n-1]^{\dagger}: m\equiv u\!\!\pmod{\nu}\,\}$,
a set of size $r+1$ that is independent of $\ell$.

Consequently, if $\mathcal{C}_1,\mathcal{C}_2$ are two such codes with parameters
$\ell_1,\ell_2$ respectively, then for every coordinate $i\in[n]$ the checks
$h_{u(i)}^{(\ell_1)}\in\mathcal{C}_1^{\perp}$ and $h_{u(i)}^{(\ell_2)}\in\mathcal{C}_2^{\perp}$,
where $u(i)\equiv i \pmod{\nu}$, satisfy
\[
i\in\mathrm{supp}\bigl(h_{u(i)}^{(\ell_1)}\bigr)\cap\mathrm{supp}\bigl(h_{u(i)}^{(\ell_2)}\bigr),
\qquad
\bigl|\mathrm{supp}\bigl(h_{u(i)}^{(\ell_1)}\bigr)\cup\mathrm{supp}\bigl(h_{u(i)}^{(\ell_2)}\bigr)\bigr|=r+1 .
\]
\end{lemma}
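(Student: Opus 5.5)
The plan is to verify $h_u^{(\ell)}\in\mathcal{C}^\perp$ directly from the defining set, using a character-sum (discrete Fourier) identity on $\mathbb{Z}_\nu$. The support claim and the two-code consequence then follow by counting.

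Since $\gcd(n,q)=1$ is standing and $(r+1)\mid n$, both $\nu$ and $r+1$ are invertible in $\mathbb{F}_q$. Set $\gamma=\alpha^{r+1}$, which is a primitive $\nu$-th root of unity. Every $j\in L$ has the form $j=\ell+(r+1)s$ with $s\in[\nu-1]^\dagger$, so $\alpha^j=\alpha^\ell\gamma^s$. Take a codeword $c=(c_0,\dots,c_{n-1})$ with associated polynomial $c(x)=\sum_m c_m x^m$. Because $L$ lies in the defining set, for every $s\in[\nu-1]^\dagger$
\[
0=c(\alpha^\ell\gamma^s)=\sum_{m=0}^{n-1} c_m\,\alpha^{\ell m}\gamma^{sm}.
\]
Multiply by $\gamma^{-su}$ and sum over $s$. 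Orthogonality of characters gives $\sum_{s=0}^{\nu-1}\gamma^{s(m-u)}=\nu$ if $m\equiv u\pmod{\nu}$ and $0$ otherwise. This yields
\[
0=\nu\sum_{m\equiv u\,(\mathrm{mod}\ \nu)} c_m\alpha^{\ell m}=\nu\,\langle h_u^{(\ell)},c\rangle .
\]
Since $\nu\neq0$ in the field, $h_u^{(\ell)}$ is orthogonal to every codeword.

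For the support, the entries $\alpha^{\ell m}$ are all nonzero. Hence the support is exactly the residue class $\{m: m\equiv u \pmod{\nu}\}$ in $[n-1]^\dagger$, which has $n/\nu=r+1$ elements and does not involve $\ell$. For the consequence, fix $i$ and set $u(i)\equiv i\pmod\nu$. Both checks $h_{u(i)}^{(\ell_1)}$ and $h_{u(i)}^{(\ell_2)}$ have the same support, namely the class of $i$. So $i$ lies in the intersection of the two supports, and the union equals that class, of size $r+1$. This is exactly the hypothesis of Proposition~\ref{CSS-EA-qLRC}.

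\textbf{Main obstacle.} The entries $\alpha^{\ell m}$ lie in $\mathbb{F}_{q^{m'}}$ with $m'=\mathrm{ord}_n(q)$, and need not lie in $\mathbb{F}_q$. If $n\mid q-1$, as in the length-$(q-1)$ families used later, then $\alpha\in\mathbb{F}_q$ and there is no issue. In general I would instead argue via descent, which may replace $h_u^{(\ell)}$ by an $\mathbb{F}_q$-vector with the same support property rather than $h_u^{(\ell)}$ itself. Let $W$ be the residue class of $u$. The computation above shows that the $\mathbb{F}_{q^{m'}}$-space of checks of $\mathcal{C}\otimes\mathbb{F}_{q^{m'}}$ supported in $W$ contains a vector that is nonzero at every coordinate of $W$. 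This space equals $(\mathcal{C}|_W)^\perp\otimes\mathbb{F}_{q^{m'}}$, the base change of the $\mathbb{F}_q$-space $(\mathcal{C}|_W)^\perp$. Consequently, for each $i\in W$ the $\mathbb{F}_q$-space $(\mathcal{C}|_W)^\perp$ also contains a vector nonzero at $i$. Extending such a vector by zeros outside $W$ gives an $\mathbb{F}_q$-check in $\mathcal{C}^\perp$ with support contained in $W$ and containing $i$. The union of the supports of the two checks is then still contained in $W$, so it has size at most $r+1$, which is all that Proposition~\ref{CSS-EA-qLRC} requires.
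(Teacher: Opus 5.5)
Your argument is correct, and its core is the paper's. The paper rewrites $c(\alpha^{\ell+t(r+1)})=0$ as $\sum_{u}\beta^{tu}S_u=0$ with $\beta=\alpha^{r+1}$ and inverts the $\nu\times\nu$ Vandermonde matrix. You invert the same DFT system explicitly by character orthogonality, using $\nu\neq0$ in $\mathbb{F}_q$ (from $\gcd(n,q)=1$) in place of distinctness of $1,\beta,\dots,\beta^{\nu-1}$.

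Where you go beyond the paper is the field-of-definition point, which its proof passes over. A scalar multiple of $h_u^{(\ell)}$ lies in $\mathbb{F}_q^{n}$ only if $\alpha^{\ell\nu}\in\mathbb{F}_q$. This is automatic when $n\mid q-1$ or $\ell=0$, but it can fail. Take $q=8$, $n=9$, $r=2$, $\ell=1$, with any $\mathcal{C}$ whose defining set contains $L=\{1,4,7\}$, e.g.\ $Z=\{1,2,4,5,7,8\}$. Then $\alpha^{4}/\alpha=\alpha^{3}$ is a primitive cube root of unity, which does not lie in $\mathbb{F}_8$. So $h_1^{(1)}$ is not even an $\mathbb{F}_8$-vector, and the literal claim $h_1^{(1)}\in\mathcal{C}^{\perp}$ is false there.

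Your descent through $(\mathcal{C}|_W)^{\perp}\otimes\mathbb{F}_{q^{m'}}$ repairs this correctly. Coordinatewise field traces of $\lambda h_u^{(\ell)}$ down to $\mathbb{F}_q$ would work as well. Either way you get genuine $\mathbb{F}_q$-checks supported inside the residue class of $i$ and nonzero at $i$. The only cost is that the union of the two supports is shown to have size at most $r+1$ rather than exactly $r+1$, and that is all Proposition~\ref{CSS-EA-qLRC} needs.
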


\begin{proof}
Since $L$ is contained in the defining set, every $c=(c_0,\dots,c_{n-1})\in\mathcal{C}$
satisfies $c(\alpha^{\ell+t(r+1)})=0$ for $t\in[\nu-1]^{\dagger}$. Put
$\beta=\alpha^{r+1}$, an element of order $\nu$, and
$S_u=\sum_{m\equiv u\,(\mathrm{mod}\,\nu)}c_m\alpha^{\ell m}$ for $u\in[\nu-1]^{\dagger}$.
Grouping the coordinates by residue modulo $\nu$ gives
\[
0=c\bigl(\alpha^{\ell+t(r+1)}\bigr)=\sum_{m=0}^{n-1}c_m\alpha^{\ell m}\beta^{tm}
=\sum_{u=0}^{\nu-1}\beta^{tu}S_u ,\qquad t\in[\nu-1]^{\dagger},
\]
using $\beta^{tm}=\beta^{tu}$ whenever $m\equiv u\pmod{\nu}$. The coefficient matrix
$(\beta^{tu})_{0\le t,u\le\nu-1}$ is Vandermonde in the distinct powers
$1,\beta,\dots,\beta^{\nu-1}$, hence invertible, so $S_u=0$ for every $u$. As
$S_u=\langle c,h_u^{(\ell)}\rangle$, this says $h_u^{(\ell)}\in\mathcal{C}^{\perp}$.
Each residue class modulo $\nu$ meets $[n-1]^{\dagger}$ in exactly $n/\nu=r+1$
coordinates, and the entries $\alpha^{\ell m}$ are nonzero, so
$\mathrm{supp}(h_u^{(\ell)})$ is that residue class, independent of $\ell$. The final
display follows by taking $u(i)\equiv i\pmod{\nu}$ in each code.
\end{proof}
Applying Lemma~\ref{cyclicLRC} to Proposition~\ref{CSS-EA-qLRC} yields CSS-like EA-qLRCs from cyclic LRC pairs.
\begin{theorem}\label{Cyclic EA-qLRC}
Let $(r+1)\mid n$ and let $\mathcal{C}_i$ be cyclic LRC codes over $\mathbb{F}_q$ with
locality $r$ and defining sets $Z_i=L_i\cup D_i$ for $i\in\{1,2\}$, constructed as in
Lemma~\ref{cyclicLRC} with respect to a common primitive $n$-th root of unity
$\alpha$. Then there exists a pure CSS-like EA-qLRC with locality $r$ and parameters {\qlb}$n,\;n-|(-Z_1)\cup Z_2|,\;\min\{d_1,d_2\};\;|(-Z_1)\cap Z_2|${\qrb}$_q$.
\end{theorem}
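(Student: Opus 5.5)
The plan splits the statement into its two components, the EAQEC parameters and the locality, and settles each with results already established.

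\emph{Parameters.} Both $\mathcal{C}_1$ and $\mathcal{C}_2$ are cyclic codes of length $n$ over $\mathbb{F}_q$, with defining sets $Z_1,Z_2$ taken relative to the common root $\alpha$. Theorem~\ref{Cyclic EAQEC} therefore applies directly. It gives an EAQEC code with dimension $\kappa=n-|(-Z_1)\cup Z_2|$, entanglement $c=|(-Z_1)\cap Z_2|$, and distance $\min\{d_1,d_2\}$ in the pure case. This is the code coming from the EA-CSS construction of Theorem~\ref{PC-EAQEC}. I would also check that $c=\operatorname{rank}(H_1H_2^T)=n-k_1-\dim(\mathcal{C}_1^\perp\cap\mathcal{C}_2)$ matches: by Lemma~\ref{Hull dim}(1), $n-(n-|Z_1|)-(|Z_1|-|Z_2\cap(-Z_1)|)=|(-Z_1)\cap Z_2|$.

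\emph{Locality.} I would invoke Proposition~\ref{CSS-EA-qLRC}, which needs, for every coordinate $i$, checks $c_1^{(i)}\in\mathcal{C}_1^\perp$ and $c_2^{(i)}\in\mathcal{C}_2^\perp$ that both contain $i$ in their support and whose supports together have size at most $r+1$. By Lemma~\ref{cyclicLRC}, each $Z_j$ contains the full residue class $L_j=\{t: t\equiv \ell_j \pmod{r+1}\}$ with $\ell_j\in[r-1]^\dagger\subseteq[r]^\dagger$. The hypothesis $(r+1)\mid n$ is exactly what Lemma~\ref{lem:support-alignment} needs. That lemma gives checks $h^{(\ell_1)}_{u(i)}\in\mathcal{C}_1^\perp$ and $h^{(\ell_2)}_{u(i)}\in\mathcal{C}_2^\perp$. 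Both are supported on the same residue class modulo $\nu=n/(r+1)$, both contain $i$, and the union of their supports has size exactly $r+1$. So Proposition~\ref{CSS-EA-qLRC} yields locality $r$.

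\emph{Main obstacle.} I expect the subtle point to be the alignment of recovery sets. The two classical codes may use different $\ell_1\neq\ell_2$, so their natural local checks could have different supports, and the union could then exceed $r+1$. Lemma~\ref{lem:support-alignment} removes this problem because the support is independent of $\ell$. This relies on the common primitive root $\alpha$, and I would stress that assumption.

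\emph{Purity.} The distance claim $\delta=\min\{d_1,d_2\}$ requires the pure case, in which the minimum weights of $\mathcal{C}_1\setminus(\mathcal{C}_1\cap\mathcal{C}_2^\perp)$ and $\mathcal{C}_2\setminus(\mathcal{C}_1^\perp\cap\mathcal{C}_2)$ equal $d_1$ and $d_2$. I would take this as the convention already adopted in Theorem~\ref{Cyclic EAQEC}, which the statement inherits, and not prove it anew.
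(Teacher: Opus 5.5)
Your proposal is correct and is essentially the paper's own proof: locality follows by feeding the aligned checks of Lemma~\ref{lem:support-alignment} into Proposition~\ref{CSS-EA-qLRC}, and the parameters (with purity taken as given) are read off from Theorem~\ref{Cyclic EAQEC}. One small correction to your commentary is that the alignment of supports does not depend on the common root $\alpha$, since for any primitive $n$-th root the checks of Lemma~\ref{lem:support-alignment} sit on the residue classes modulo $\nu=n/(r+1)$; the common $\alpha$ is needed instead so that $-Z_1$ and $Z_2$ are measured against the same root in the formulas of Lemma~\ref{Hull dim}.
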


\begin{proof}
By construction $L_i\subseteq Z_i$ for $i=1,2$, so Lemma~\ref{lem:support-alignment}
applies to each $\mathcal{C}_i$ and supplies, for every coordinate $i\in[n]$, checks
$c_1^{(i)}\in\mathcal{C}_1^{\perp}$ and $c_2^{(i)}\in\mathcal{C}_2^{\perp}$ whose
supports coincide, contain $i$, and have size $r+1$. The hypotheses of
Proposition~\ref{CSS-EA-qLRC} are therefore satisfied with recovery set
$\Gamma_i=\{\,m: m\equiv i \!\!\pmod{n/(r+1)}\,\}$, and the code has locality $r$.
The stated parameters follow from Theorem~\ref{Cyclic EAQEC}.
\end{proof}


\begin{example}\label{ex:cyclic-EAqLRC}
Let $q=37$, $n=36$, $r=5$, and $b = 1$. We select classical dimensions $k_1 = 20$ and $k_2 = 15$, yielding $\mu_1 = 20/5 = 4$ and $\mu_2 = 15/5 = 3$. For $\mathcal{C}_1$, set $\ell_1 = 0$. Then $L_1 = \{i \equiv 0 \pmod 6\} = \{0, 6, 12, 18, 24, 30\}$. Selecting base exponent $j=0\in L_1$, $s$ ranges over $0\le s\le 36-4(6)-1=11$, giving $D_1=[11]^{\dagger}$. The defining set $Z_1=L_1\cup D_1=\{0,1,\dots,11,12,18,24,30\}$ has $|Z_1|=16$, giving $k_1=36-16=20$; note $12\in L_1$, so $Z_1$ contains the $13$ consecutive integers $0,\dots,12$ and Lemma~\ref{BCH Bound} gives $d_1\ge14$. This forms an optimal $[36, 20, 14]_{37}$ cyclic LRC.

For $\mathcal{C}_2$, set $\ell_2 = 2$. Then $L_2 = \{i \equiv 2 \pmod 6\} = \{2, 8, 14, 20, 26, 32\}$. Selecting $j = 26 \in L_2$, $s$ ranges over $0 \le s \le 17$, giving $D_2 = \{26, 27, \ldots, 35, 0, 1, \ldots, 7\}$. The defining set $Z_2 = L_2 \cup D_2 = \{26, 27, \dots, 35, 0, 1, \dots, 8, 14, 20\}$ has $|Z_2| = 21$, giving $k_2 = 36 - 21 = 15$. Since $Z_2$ contains 19 consecutive roots, $d_2 \ge 20$. This forms an optimal $[36, 15, 20]_{37}$ cyclic LRC.

To construct the CSS-like EA-qLRC via Theorem~\ref{Cyclic EA-qLRC}, we compute $-Z_1 \bmod 36 = \{0, 6, 12, 18, 24, 25, 26, 27, 28, 29, 30, 31, 32, 33, 34, 35\}$. Their intersection is $(-Z_1) \cap Z_2 = \{26, 27, 28, 29, 30, 31, 32, 33, 34, 35, 0, 6\}$, giving entanglement amount $c = |(-Z_1) \cap Z_2| = 12$. By inclusion--exclusion, $|(-Z_1) \cup Z_2| = 16 + 21 - 12 = 25$. The quantum dimension is $\kappa = 36 - 25 = 11$, and minimum distance is $\delta = \min\{14, 20\} = 14$. Consequently, the resulting pure CSS-like EA-qLRC has locality $r = 5$ and parameters {\qlb}$36, 11, 14; 12${\qrb}$_{37}$.

Although $\ell_1=0\ne2=\ell_2$, Lemma~\ref{lem:support-alignment} shows the local
checks of $\mathcal{C}_1$ and $\mathcal{C}_2$ share the same supports, namely the
residue classes modulo $\nu=6$; the recovery sets therefore have size $r+1=6$ and the
locality of the resulting CSS-like EA-qLRC is $r=5$, not $2r$.
\end{example}
\begin{remark}
Substituting $n=36$, $\kappa=11$, $\delta=14$, $c=12$, $r=5$ into the Singleton-like
bound~\eqref{EA-qLRC_SB} gives $2\delta=28$ against a right-hand side of
$n-\kappa+c-2\lceil\kappa/r\rceil+4=35$: a strict gap of $7$. The construction of
Theorem~\ref{Cyclic EA-qLRC} therefore does not attain optimality in this instance,
illustrating that --- unlike the LCD cyclic constructions of
Section~\ref{sec:maximally-entangled-constructions} --- a general cyclic pair need not
satisfy the necessary and sufficient conditions of Theorem~\ref{pure_EA-qLRC} even when both
constituent codes are individually classically optimal; here $\mathcal{C}_1,\mathcal{C}_2$
achieve~\eqref{cLRC Singleton Bound} with $d_1\ne d_2$ ($14\ne20$), which by
Lemma~\ref{lemma_1} already precludes optimality of the resulting CSS-like EA-qLRC.
\end{remark}
\section{Optimal Constructions from Maximally Entangled LCD Codes}\label{sec:maximally-entangled-constructions}

In this section, we establish general existence and optimality results for maximally entangled CSS-like EA-qLRCs, building directly on the LCD-based single-code construction of Theorem~\ref{EA-qLRC LCD} and the optimality criterion of Theorem~\ref{pure_EA-qLRC}. We then exhibit explicit, deterministic families of classical LCD-LRCs --- cyclic codes of length dividing $q-1$ and $q+1$ --- that instantiate these general results concretely.

\begin{lemma}[\cite{yang1994condition}]\label{cyclic_LCD}
Let $\mathcal{C}$ be a cyclic code of length $n$ over $\mathbb{F}_q$ with generator polynomial $g(x)$ and defining set $Z$. Then $\mathcal{C}$ is LCD if and only if $g(x)$ is self-reciprocal, equivalently if and only if $Z=-Z \pmod n$. A sufficient (but not
necessary) condition for this is $q^\ell\equiv-1\pmod n$ for some positive integer $\ell$.
\end{lemma}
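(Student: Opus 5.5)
The proof rests on the Euclidean hull of a cyclic code, which is again cyclic, and on how its defining set is computed. We use the fact, already invoked in the proof of Lemma~\ref{Hull dim}, that $\mathcal{C}^\perp$ has defining set $\mathbb{Z}_n\setminus(-Z)$ and that $Z(\mathcal{D}_1\cap\mathcal{D}_2)=Z(\mathcal{D}_1)\cup Z(\mathcal{D}_2)$. It follows that $\mathcal{C}\cap\mathcal{C}^\perp$ has defining set $Z\cup(\mathbb{Z}_n\setminus(-Z))$. Its dimension is therefore
\[
n-\bigl|Z\cup(\mathbb{Z}_n\setminus(-Z))\bigr| = |(-Z)\setminus Z| = |Z|-|Z\cap(-Z)|.
\]
This agrees with Lemma~\ref{Hull dim}(1) taken with $Z_1=Z_2=Z$. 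Hence $\mathcal{C}$ is LCD if and only if $|Z\cap(-Z)|=|Z|$, i.e.\ $Z\subseteq -Z$. Because $i\mapsto -i$ is a bijection of $\mathbb{Z}_n$, $|Z|=|-Z|$, and so $Z\subseteq -Z$ is equivalent to $Z=-Z$.

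Next we translate $Z=-Z$ into the statement about $g$. The polynomial $g$ has $g(0)\neq 0$ because $\gcd(n,q)=1$, and its roots are $\{\alpha^i:i\in Z\}$, which are simple. Its reciprocal $g^*(x)=x^{\deg g}g(x^{-1})$ has roots $\{\alpha^{-i}:i\in Z\}$. Thus $g$ is self-reciprocal, meaning $g^*=g(0)g$ up to the monic normalisation, exactly when the two root sets coincide, i.e.\ $Z=-Z$. Both polynomials are squarefree, so comparing root sets is enough.

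For the sufficient condition, suppose $q^\ell\equiv-1\pmod n$. Every $q$-cyclotomic coset $C_t$ is closed under multiplication by $q$, hence under multiplication by $q^\ell$, and so $-C_t=C_t$. The defining set $Z$ is a union of such cosets, which gives $Z=-Z$.

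To show the condition is not necessary, we give one instance. Take $Z=C_0=\{0\}$ with $q\equiv 1\pmod n$ and $n\ge 3$. Then $Z=-Z$, yet no $\ell$ satisfies $q^\ell\equiv-1\pmod n$. The most delicate step is the first one: the hull defining-set computation needs the convention that $Z$ is a union of cyclotomic cosets and that the roots are simple. Both follow from $\gcd(n,q)=1$, which is the standing assumption made earlier in the paper, and we would state this explicitly in the proof.
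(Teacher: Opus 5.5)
Your proof is correct. The paper gives no proof of this lemma; it quotes it from Yang and Massey \cite{yang1994condition}. So there is no in-paper argument to match, and your derivation from the defining-set machinery of Lemma~\ref{Hull dim} is a genuinely self-contained route.

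\textbf{Comparison with Yang--Massey.} Yang and Massey work at the polynomial level. The hull $\mathcal{C}\cap\mathcal{C}^\perp$ is the cyclic code generated by $\mathrm{lcm}(g,h^*)$, so $\mathcal{C}$ is LCD iff this lcm equals $x^n-1$. They show this holds iff $g$ is self-reciprocal \emph{and} every irreducible factor of $g$ has the same multiplicity in $g$ as in $x^n-1$. Their criterion covers the repeated-root case.

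Your computation gives something extra: the exact hull dimension $|Z|-|Z\cap(-Z)|$, not just a test for whether the hull vanishes. In the single-code case this is the entanglement count $c=|(-Z)\cap Z|$ of Theorem~\ref{Cyclic EAQEC}. The price is that you are confined to $\gcd(n,q)=1$.

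\textbf{The coprimality assumption is essential.} Take $n=q=2$ and $g=x+1$. Then $g$ is self-reciprocal, yet $\mathcal{C}=\{00,11\}=\mathcal{C}^\perp$, so $\mathcal{C}$ is not LCD. Without coprimality the ``iff'' with self-reciprocity genuinely fails, and it is Yang--Massey's multiplicity condition that repairs it. You were right to state the assumption explicitly, since the lemma's use of a defining set $Z$ already presupposes it.

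\textbf{A small correction.} $g(0)\neq 0$ holds simply because $g\mid x^n-1$, whatever $\gcd(n,q)$ is. Coprimality is needed only to make the roots simple, so that $g$ is determined by the set $Z$ and comparing root sets of $g$ and $g^*$ suffices.

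The sufficiency argument ($-t\equiv q^\ell t\in C_t$) and your non-necessity example ($Z=\{0\}$ with $q\equiv 1\pmod n$, $n\ge 3$, e.g.\ $q=4$, $n=3$) are both fine.
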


\begin{lemma}\label{lem:n-divides-qplus1-LCD}
Let $n\mid(q+1)$. Then every cyclic code of length $n$ over $\mathbb{F}_q$ is LCD.
\end{lemma}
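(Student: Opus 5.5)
The plan is to reduce to Lemma~\ref{cyclic_LCD}: a cyclic code with defining set $Z$ is LCD exactly when $Z=-Z \pmod n$. So it suffices to show that, when $n\mid(q+1)$, every defining set of a cyclic code of length $n$ over $\mathbb{F}_q$ is closed under negation modulo $n$.

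First note that $\gcd(n,q)=1$, since $n\mid q+1$ and $\gcd(q+1,q)=1$. Hence the cyclotomic-coset machinery of the preceding subsection applies with a primitive $n$-th root of unity $\alpha$. From $n\mid(q+1)$ we get $q\equiv -1\pmod n$. This is exactly the sufficient condition of Lemma~\ref{cyclic_LCD} with exponent $1$, so one could simply cite it.

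To keep the argument self-contained, I would also verify the mechanism directly. For any $t\in\mathbb{Z}_n$, the $q$-cyclotomic coset $C_t=\{t,tq,tq^2,\dots\}$ contains $tq\equiv -t\pmod n$. Hence $C_t=-C_t$, and every cyclotomic coset is closed under negation; concretely $C_t=\{t,-t\}$, since $q^2\equiv 1\pmod n$.

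Now take any cyclic code $\mathcal{C}$ of length $n$ over $\mathbb{F}_q$. Its generator polynomial is $g(x)=\mathrm{lcm}\{M_t(x):t\in Z\}$, so its defining set $Z$ is a union of cyclotomic cosets. Since each coset is invariant under $i\mapsto -i$, so is $Z$, giving $Z=-Z\pmod n$. Lemma~\ref{cyclic_LCD} then gives that $g(x)$ is self-reciprocal and $\mathcal{C}\cap\mathcal{C}^\perp=\{0\}$.

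There is no real obstacle here. The only point needing care is confirming $\gcd(n,q)=1$, so that the defining-set description and Lemma~\ref{cyclic_LCD} are valid. The trivial codes $\{0\}$ and $\mathbb{F}_q^n$, with $Z=\mathbb{Z}_n$ and $Z=\varnothing$ respectively, are also covered, since both are closed under negation.
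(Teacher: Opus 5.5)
Your proposal is correct and follows essentially the same route as the paper: from $q\equiv -1\pmod n$ every $q$-cyclotomic coset is closed under negation, so every defining set, being a union of cosets, satisfies $Z=-Z\pmod n$, and Lemma~\ref{cyclic_LCD} gives LCD. Your explicit check that $\gcd(n,q)=1$, which the defining-set description and Lemma~\ref{cyclic_LCD} require, is a worthwhile addition that the paper leaves implicit.
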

\begin{proof}
Since $q\equiv-1\pmod n$, every nontrivial $q$-cyclotomic coset modulo $n$ has the form $C_i=\{i,-i\}=\{i,n-i\}$ (a two-element set, unless $i\equiv0$ or $2i\equiv0\pmod n$, which are fixed points of the inversion map). Hence every defining set $Z$, being a union of such cosets, automatically satisfies $Z=-Z\pmod n$. By Lemma~\ref{cyclic_LCD}, $\mathcal{C}$ is LCD.
\end{proof}

\begin{remark}\label{rem:boundary}
The smallest instance of Lemma~\ref{lem:n-divides-qplus1-LCD} illustrates the mechanism concretely.
Let $q=4$ and $n=5=q+1$. The $q$-cyclotomic cosets modulo $5$ are $\{0\}$, $\{1,4\}$
and $\{2,3\}$; taking the defining set $Z=\{1,4\}$ gives a cyclic code
$\mathcal{C}=[5,3,3]_4$ with dual $\mathcal{C}^{\perp}=[5,2,4]_4$. Since
$\mathcal{C}^{\perp}$ has minimum weight $4$, the cyclic shifts of a minimum-weight
dual codeword cover every coordinate, so $\mathcal{C}$ has locality $r=3$; and
$\dim(\mathcal{C}\cap\mathcal{C}^{\perp})=0$ automatically, by Lemma~\ref{lem:n-divides-qplus1-LCD}.

Here $k=3=r$, so $\lceil k/r\rceil=1$ and the bound \eqref{cLRC Singleton Bound}
reduces to the classical Singleton bound: the locality constraint is vacuous in the
sense of Section~\ref{ssec:classical-LRC}, and ``optimal LRC'' means simply ``MDS''.
Example~\ref{ex:9-4-5} below crosses this boundary, with $k>r$.
\end{remark}

\begin{theorem}\label{MEAQECC}
Let $\mathcal{C}$ be an LCD cyclic code with parameters $[n,k,d]_q$ and defining set $Z$. Then there exists a pure maximally entangled CSS-like EAQEC code with parameters {\qlb}$n,k,d, n - k${\qrb}$_q$.
\end{theorem}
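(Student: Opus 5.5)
We plan to apply Theorem~\ref{Cyclic EAQEC} with $\mathcal{C}_1=\mathcal{C}_2=\mathcal{C}$, so $Z_1=Z_2=Z$, and then use the LCD hypothesis through Lemma~\ref{cyclic_LCD} to simplify the resulting parameters. (Equivalently, one could apply Theorem~\ref{PC-EAQEC} directly with $s=\dim(\mathcal{C}\cap\mathcal{C}^\perp)=0$; we follow the cyclic route so that the defining-set bookkeeping is explicit.)

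\textbf{Step 1 (defining sets).} Since $\mathcal{C}$ is LCD and cyclic, Lemma~\ref{cyclic_LCD} gives $Z=-Z \pmod n$. Taking $Z_1=Z_2=Z$, we get $(-Z_1)\cap Z_2=Z$ and $(-Z_1)\cup Z_2=Z$.

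\textbf{Step 2 (parameters).} By Theorem~\ref{Cyclic EAQEC}, the entanglement count is $c=|Z|=n-k$, the quantum dimension is $\kappa=n-|Z|=k$, and the distance is $\min\{d,d\}=d$. As a cross-check, Theorem~\ref{PC-EAQEC} gives $\kappa=2k-n+c=k$ and $c=n-k-\dim(\mathcal{C}^\perp\cap\mathcal{C})=n-k$, which agrees.

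\textbf{Step 3 (purity and maximal entanglement).} Because $\mathcal{C}\cap\mathcal{C}^\perp=\{0\}$, we have $\mathcal{C}\setminus(\mathcal{C}\cap\mathcal{C}^\perp)=\mathcal{C}\setminus\{0\}$. Hence the weight in Theorem~\ref{PC-EAQEC} equals $\mathrm{wt}(\mathcal{C})=d$, and the code is pure in the sense defined after that theorem. The code is maximally entangled since $c=n-\kappa$ is the largest value permitted by $c\le n-\kappa$; equivalently, the generator matrix $HH^T$ has full rank $n-k$.

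\textbf{Main obstacle.} The only delicate point is the purity/distance claim. Theorem~\ref{Cyclic EAQEC} already asserts $\min\{d_1,d_2\}$, but this should be justified from the trivial-hull observation of Step~3 rather than assumed. The statement as given does not require locality; if a locality-$r$ version were intended, we would add the hypothesis $L\subseteq Z$ and invoke Lemma~\ref{lem:support-alignment} together with Proposition~\ref{CSS-EA-qLRC}.
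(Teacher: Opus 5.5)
Your proposal is correct and is essentially the paper's proof, which just cites Lemma~\ref{cyclic_LCD} and Theorem~\ref{PC-EAQEC} with $\mathcal{C}_1=\mathcal{C}_2=\mathcal{C}$. Your detour through Theorem~\ref{Cyclic EAQEC} is only the defining-set form of that same computation, your explicit purity check from $\mathcal{C}\cap\mathcal{C}^\perp=\{0\}$ usefully supplies what the paper leaves implicit, and the one wording slip is that $HH^T$ is not a generator matrix---the relevant fact is $c=\operatorname{rank}(HH^T)=n-k$ because $HH^T$ is nonsingular for an LCD code.
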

\begin{proof}
Follows from Lemma \ref{cyclic_LCD} and Theorem  \ref{PC-EAQEC}.
\end{proof}

\begin{example}\label{ex:9-4-5}
Let $q=8$ and $n=9$, so that $n\mid q+1$ and $\gcd(n,q)=1$. Since $8\equiv-1
\pmod 9$, the $q$-cyclotomic cosets modulo $9$ are $\{0\},\{1,8\},\{2,7\},\{3,6\},\{4,5\}$,
and by Lemma~\ref{lem:n-divides-qplus1-LCD} every cyclic code of this length over $\mathbb{F}_8$ is LCD.

Take $r=2$, so that $r+1=3$ divides $n$, and set $\nu=n/(r+1)=3$, $\mu=2$ and
$k=\mu r=4$. Applying Lemma~\ref{cyclicLRC} with $\ell=0$ gives
$L=\{i:i\equiv0\!\!\pmod 3\}=\{0,3,6\}$; choosing the base exponent $j=3\in L$ and
$b=1$, the index $s$ ranges over $0\le s\le n-\mu(r+1)-1=2$, giving $D=\{3,4,5\}$.
The defining set is
\[
Z \;=\; L\cup D \;=\;\{0,3,4,5,6\}\;=\;C_0\cup C_3\cup C_4 ,
\]
a union of $q$-cyclotomic cosets with $|Z|=5$, so $k=9-5=4$.

\emph{Locality.} Since $L\subseteq Z$, every codeword satisfies $c(\alpha^{3t})=0$ for
$t=0,1,2$. Writing $\beta=\alpha^{3}$, an element of order $3$, and
$S_u=\sum_{j\equiv u\,(\mathrm{mod}\,3)}c_j$ for $u=0,1,2$, we have
$c(\alpha^{3t})=\sum_{u=0}^{2}\beta^{tu}S_u$. The matrix $(\beta^{tu})_{0\le t,u\le2}$
is Vandermonde in the distinct powers $1,\beta,\beta^{2}$, hence invertible, so
$S_0=S_1=S_2=0$. Each relation $S_u=0$ is a dual codeword of weight $r+1=3$ supported
on $\{u,u+3,u+6\}$, and the three relations cover all nine coordinates. Thus
$\mathcal{C}$ has locality $r=2$, with recovery sets
$\Gamma_i=\{\,i\bmod 3,\; (i\bmod 3)+3,\; (i\bmod 3)+6\,\}$.

\emph{Distance and classical optimality.} The set $Z$ contains the four consecutive
integers $\{3,4,5,6\}$, so Lemma~\ref{BCH Bound} gives $d\ge5$, while
\eqref{cLRC Singleton Bound} gives $d\le n-k-\lceil k/r\rceil+2=9-4-2+2=5$. Hence $d=5$
and $\mathcal{C}$ is an optimal $[9,4,5]_8$ cyclic LRC with locality $2$.

\emph{The resulting quantum code.} Since $-Z=\{0,6,5,4,3\}=Z$,
Lemma~\ref{cyclic_LCD} confirms $s=\dim(\mathcal{C}\cap\mathcal{C}^{\perp})=0$,
so $\mathcal{C}$ is LCD and the construction is pure, with
$\delta=\mathrm{wt}(\mathcal{C}\setminus\{0\})=5$. As $s=0\le(k-1)\bmod r=1$,
Theorem~\ref{cor:single-code-optimality} applies, and Theorem~\ref{EA-qLRC LCD} yields
the maximally entangled pure CSS-like EA-qLRC {\qlb}$\,9,\,4,\,5;\,5\,${\qrb}$_8$, with locality $r=2$.

Substituting into \eqref{EA-qLRC_SB} gives $2\delta=10$ and
$n-\kappa+c-2\lceil\kappa/r\rceil+4=9-4+5-4+4=10$, so equality holds and the code is
optimal. In contrast to Remark~\ref{rem:boundary}, here $k=4>r=2$ and
$\lceil\kappa/r\rceil=2$, so the locality term contributes strictly and
\eqref{EA-qLRC_SB} does not reduce to the entanglement-assisted quantum
Singleton bound of Lemma~\ref{lem:EA-singleton}. This code is an instance of
Row~5 of Table~\ref{tab:parameters_optimal_qLRCs} ($n\mid q+1$, $q$ even, $\mu$ even). It also
stands in contrast to Example~\ref{ex:cyclic-EAqLRC}, where the constituent cyclic codes
had unequal minimum distances ($d_1=14\ne d_2=20$) and Lemma~\ref{lemma_1}
already precluded optimality; taking $\mathcal{C}_1=\mathcal{C}_2$ satisfies that
necessary condition automatically, and classical optimality of $\mathcal{C}$ supplies
the rest.
\end{example}

\begin{theorem}\label{cor:cyclic-optimal-witnesses}
Every classically optimal cyclic LCD LRC (in particular, the explicit families of
Table~\ref{tab:parameters_optimal_qLRCs} below) yields, via Theorem~\ref{EA-qLRC LCD},
an optimal maximally entangled pure CSS-like EA-qLRC of the same locality.
\end{theorem}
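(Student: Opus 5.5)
Take an arbitrary classically optimal cyclic LCD LRC $\mathcal{C}$ with parameters $[n,k,d]_q$, locality $r$, and defining set $Z$. Run it through the single-code machinery already in place, and check the hypotheses of Theorem~\ref{EA-qLRC LCD} one at a time. The steps are as follows.

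\textbf{Step 1: validity of the construction.} We must show that $\mathcal{C}_1=\mathcal{C}_2=\mathcal{C}$ is a legitimate input to Proposition~\ref{CSS-EA-qLRC}. For each coordinate $i$, locality of $\mathcal{C}$ gives a dual codeword $h_i\in\mathcal{C}^\perp$ with $i\in\operatorname{supp}(h_i)$ and $|\operatorname{supp}(h_i)|\le r+1$. Taking $c_1^{(i)}=c_2^{(i)}=h_i$, the union of supports is just $\operatorname{supp}(h_i)$, so the hypothesis of Proposition~\ref{CSS-EA-qLRC} holds. For the families built from Lemma~\ref{cyclicLRC}, Lemma~\ref{lem:support-alignment} supplies these checks explicitly. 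The resulting code is therefore an EA-qLRC of the same locality $r$.

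\textbf{Step 2: parameters and purity.} Since $\mathcal{C}$ is LCD (Lemma~\ref{cyclic_LCD}, i.e.\ $Z=-Z$), we have $\mathcal{C}\cap\mathcal{C}^\perp=\{0\}$, so the hull dimension is $s=0$. By Theorem~\ref{PC-EAQEC}, this gives $c=n-k-s=n-k$ and $\kappa=2k-n+c=k$. The set $\mathcal{C}\setminus(\mathcal{C}\cap\mathcal{C}^\perp)$ equals $\mathcal{C}\setminus\{0\}$, so $\delta=d$ and the code is pure. This agrees with Theorem~\ref{MEAQECC}: the code is maximally entangled with parameters {\qlb}$n,k,d;n-k${\qrb}$_q$.

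\textbf{Step 3: optimality.} Apply Theorem~\ref{cor:single-code-optimality}, which asks that $\mathcal{C}$ be an optimal LRC and that $s\le (k-1)\bmod r$. The first condition is the hypothesis. The second is trivial because $s=0$ and $(k-1)\bmod r\ge 0$. Equivalently, $\lceil\kappa/r\rceil=\lceil k/r\rceil$ since $\kappa=k$, so the conditions of Theorem~\ref{pure_EA-qLRC} hold with $k_1=k_2=k$ and $d_1=d_2=d$. Theorem~\ref{EA-qLRC LCD} then packages the conclusion: the code is optimal, maximally entangled, and pure.

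\textbf{Main obstacle.} The only nontrivial point is the edge case $k\le r$ (or $\kappa\le r$), flagged in Remark~\ref{rem:boundary}. The statement still holds there, because equality in \eqref{EA-qLRC_SB} follows formally. However, I would note that the locality is then vacuous, and that for the explicit families of Table~\ref{tab:parameters_optimal_qLRCs} one has $k=\mu r$ with $\mu\ge 2$, which lies in the genuine regime. I would also remark that the parenthetical about Table~\ref{tab:parameters_optimal_qLRCs} requires, for each row, that the listed codes are indeed cyclic, LCD (via Lemma~\ref{lem:n-divides-qplus1-LCD} or $Z=-Z$), and optimal (via Lemma~\ref{cyclicLRC} or the BCH bound). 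This verification is deferred to the rows themselves.
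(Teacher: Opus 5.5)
Your proposal is correct and takes the same route as the paper. The paper gives no separate argument and simply invokes Theorem~\ref{EA-qLRC LCD}: LCD forces $s=0$, hence $c=n-k$, $\kappa=k$, $\delta=d$ (pure), and the condition $s\le(k-1)\bmod r$ of Theorem~\ref{cor:single-code-optimality} holds trivially. Your caution about the parenthetical is well placed: Rows~2 and~4 of Table~\ref{tab:parameters_optimal_qLRCs} list only lower bounds on $d$, which do not by themselves certify equality in \eqref{cLRC Singleton Bound}, so those rows need separate verification.
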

\begin{remark}
When $\mathcal{C}_1=\mathcal{C}_2=\mathcal{C}$ is LCD, $\mathcal{C}\cap\mathcal{C}^{\perp}=\{0\}$
and Theorem~\ref{PC-EAQEC} gives
$\delta=\mathrm{wt}(\mathcal{C}\setminus\{0\})=d$ exactly. Every maximally entangled
construction of this section is therefore pure without further verification.
\end{remark}
\subsection{Construction from Cyclic Codes of Length \texorpdfstring{$n\mid(q-1)$}{n | (q-1)}} Rajput \textit{et al.}~\cite{rajput2020cyclic} present four families of cyclic LRC-LCD codes
of length $n\mid(q-1)$ over $\mathbb{F}_q$, covering $q=2^m$ (even characteristic) and
general $q$ (odd characteristic) with varying locality $r$; in each case the defining set $Z$
is chosen to satisfy both the local-repair condition and $Z=-Z$ (LCD, by
Lemma~\ref{cyclic_LCD}). Each family is classically optimal, so by
Theorem~\ref{cor:cyclic-optimal-witnesses}, each yields an optimal pure maximally entangled
CSS-like EA-qLRC of the same locality, listed in Rows~1--4 of Table~\ref{tab:parameters_optimal_qLRCs}.

\subsection{Construction from Cyclic Codes of Length \texorpdfstring{$n\mid(q+1)$}{n | (q+1)}}

By Lemma~\ref{lem:n-divides-qplus1-LCD}, every cyclic code of length $n\mid(q+1)$ is
automatically LCD, so the LCD condition need not be separately verified for any construction
in this regime. Guenda et al.~\cite{guenda2018constructions} establish the existence of MDS
maximal-entangled EAQEC codes of length $q+1$:

\begin{theorem}[\cite{guenda2018constructions}, Theorem~5.1]
If $q$ is even, then there exists an MDS maximal-entangled EAQECC with parameters 
$[[q+1, k, q-k+2; q+1-k]]_q$, for all integers $k$ such that $1 \le k \le q+1$. 
If $q$ is odd, then there exists an MDS maximal-entangled EAQECC with parameters 
$[[q+1, k, q-k+2; q+1-k]]_q$, for all odd integers $k$ such that $1 \le k \le q+1$.
\end{theorem}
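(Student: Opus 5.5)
We would prove this with the machinery already set up in this section: produce, for each admissible $k$, an LCD cyclic MDS code of length $n=q+1$ over $\mathbb{F}_q$, and then apply Theorem~\ref{MEAQECC}. Since $\gcd(q+1,q)=1$, let $m=\mathrm{ord}_{q+1}(q)$. As $q\equiv -1 \pmod{q+1}$, we have $m\le 2$, and a primitive $(q+1)$-th root of unity $\alpha$ lies in $\mathbb{F}_{q^2}$. Every $q$-cyclotomic coset modulo $n$ has the form $\{i,-i\}$. Hence, by Lemma~\ref{lem:n-divides-qplus1-LCD}, every cyclic code of this length is LCD, and any set $Z$ with $Z=-Z \pmod n$ is a union of cosets and so is a legitimate defining set. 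The work therefore reduces to choosing symmetric defining sets that are runs of consecutive integers of every admissible size.

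Two families of candidate sets will be used:
\[
Z_t=\{-t,\dots,-1,0,1,\dots,t\}\ (0\le t),\qquad W_j=\{j,j+1,\dots,n-j\}\ (1\le j).
\]
Both are symmetric under $i\mapsto -i$ and consist of consecutive integers modulo $n$, namely $|Z_t|=2t+1$ and $|W_j|=n-2j+1$. By Lemma~\ref{BCH Bound}, the cyclic code with defining set $Z$ has $d\ge |Z|+1=n-k+1$, and the Singleton bound forces equality. So each such code is an LCD MDS $[q+1,k,q+2-k]_q$ code.

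Next we count the parities of $k$ reached. For $q$ even, $n$ is odd, so $Z_t$ gives $k=n-2t-1=q-2t$, which is even, while $W_j$ gives $k=2j-1$, which is odd. Together these reach every $k$ with $1\le k\le q$; the value $k=q+1$ comes from the empty defining set (the full space, $d=1$). For $q$ odd, $n$ is even and $n/2$ is a fixed point. Now both families give odd $k$: $Z_t$ gives $k=q-2t$ and $W_j$ gives $k=2j-1$. Either family alone covers every odd $k\le q$. An even $k$ would require an odd-size symmetric set containing exactly one of $0,n/2$, and for that $0$ and $n/2$ cannot both lie in a single consecutive run of that size. This is why the odd-$q$ statement restricts to odd $k$, and no construction is needed for those.

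Finally, we feed each code into Theorem~\ref{MEAQECC} (equivalently Theorem~\ref{PC-EAQEC} with $\mathcal{C}_1=\mathcal{C}_2$ LCD, so $c=n-k$ and $\delta=d$). This yields the $[[q+1,k,q-k+2;q+1-k]]_q$ code. The EA-Singleton bound of Lemma~\ref{lem:EA-singleton}, $2\delta\le n-\kappa+c+2$, then holds with equality: $2(q+2-k)=(q+1-k)+(q+1-k)+2$. The only delicate step is the wrap-around consecutiveness of $Z_t$ modulo $n$, together with the boundary cases $t=0$ and $j=(n+1)/2$ (or $n/2$). We would check these boundary cases explicitly rather than appeal to a general statement.
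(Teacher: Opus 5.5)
The paper gives no proof of this statement. It imports the result from Guenda et al.\ and uses it only as background for the length-$(q+1)$ regime. Your argument is therefore a self-contained replacement for that citation, built entirely from tools already in the paper:
\begin{itemize}
\item Lemma~\ref{lem:n-divides-qplus1-LCD} gives the LCD property for free.
\item Lemma~\ref{BCH Bound} together with the Singleton bound gives the MDS property.
\item Theorem~\ref{MEAQECC} gives $c=n-k$, $\kappa=k$ and $\delta=d$.
\end{itemize}
The argument is correct, and the parameter ranges check out.
\begin{itemize}
\item For $q$ even, $Z_t$ with $0\le t\le q/2-1$ gives $k=q,q-2,\dots,2$. $W_j$ with $1\le j\le q/2+1$ gives $k=1,3,\dots,q+1$, where $W_{q/2+1}=\varnothing$.
\item For $q$ odd, $Z_t$ with $0\le t\le (q-1)/2$ (which avoids $n/2$) and $W_j$ with $1\le j\le (q+1)/2$ each give every odd $k\le q$.
\end{itemize}
The wrap-around in $Z_t$ is harmless, because the BCH argument only uses $\alpha^{b},\dots,\alpha^{b+\delta-2}$ with exponents read modulo $n$. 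Compared with the citation, your route yields explicit generator polynomials $g(x)=\prod_{i\in Z}(x-\alpha^i)$ and a visible source for the parity restriction; the citation buys only brevity.

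One sentence of yours is wrong, though nothing depends on it. For $q$ odd, $n$ is even, so an even $k$ forces $|Z|=n-k$ to be \emph{even}, not odd. The odd-size symmetric sets, which contain exactly one of $0$ and $n/2$, are precisely the ones that give odd $k$; your own $Z_t$ and $W_j$ illustrate this.

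The correct obstruction is as follows. The map $i\mapsto -i$ sends a run $\{b,\dots,b+m-1\}$ with $0<m<n$ to $\{-b-m+1,\dots,-b\}$. Symmetry therefore forces $2b+m-1\equiv 0 \pmod n$, which is impossible when $m$ and $n$ are both even. Even this only explains why symmetric BCH runs stop working; it does not show that no MDS maximal-entangled code exists for even $k$. Since the statement claims nothing for even $k$ when $q$ is odd, you can simply delete that sentence.

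Finally, Lemma~\ref{lem:EA-singleton} is stated only for $\delta\le (n+2)/2$, which here means $k\ge (q+1)/2$. For smaller $k$ your identity $2\delta=n-\kappa+c+2$ still holds arithmetically, and that equality is what ``MDS'' means in the cited source. You should phrase it as an identity, not as attaining the lemma's bound.
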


\begin{remark}
This theorem establishes the entanglement and MDS parameters of these codes but does not by
itself establish locality; the local-recoverability structure specific to this length regime
is supplied by the Chen et al. constructions below.
\end{remark}

Chen \textit{et al.}~\cite[Section~IV(A)]{chen2017constructions} construct cyclic LRC-LCD
codes of length $q+1$, classically optimal, for a range of locality parameters. By
Theorem~\ref{cor:cyclic-optimal-witnesses}, each yields an optimal pure maximally entangled
CSS-like EA-qLRC of the same locality, listed in Rows~5--8 of Table~\ref{tab:parameters_optimal_qLRCs}.

\begin{remark}
The maximally entangled constructions developed in this section ($c = n-k$, built
from LCD cyclic codes of length dividing $q \pm 1$ via Lemma~\ref{cyclic_LCD} and
Theorem~\ref{MEAQECC}) occupy a regime structurally disjoint from the explicit
families of~\cite{li2026eaqlrc}, which are instead built from $\ell$-intersection
pairs of MDS codes and from block parity-check matrices assembled from generalized
Reed--Solomon codes. In particular, our constructions inherit locality directly from
the cyclic defining-set structure of Lemma~\ref{cyclicLRC}, whereas the constructions
of~\cite{li2026eaqlrc} inherit locality from the block-diagonal structure of stacked
GRS parity-check matrices. The two approaches are complementary rather than
overlapping, and it would be of interest in future work to determine whether either
technique can be adapted to reproduce the parameter ranges achieved by the other.
\end{remark}

\begin{table}[!ht]
\centering
\caption{Optimal pure CSS-like EA-qLRCs {\qlb}$n, k, d;n-k${\qrb}$_q$ with Locality $r$}
\label{tab:parameters_optimal_qLRCs}
\renewcommand{\arraystretch}{1.35}
\begin{tabular}{|c|c|c|c|}
 \hline
\textbf{\#}
  & \textbf{{\qlb}$n,k,d;n-k${\qrb}$_q$}
  & \textbf{Condition}
  & \textbf{Reference} 
\\[3mm] \hline
 1& {\qlb}$n=2^m-1,k=\frac{rn}{r+1},2;n-k${\qrb}$_q$&$q = 2^m,\;(r+1)\mid n$ & Theorem~3~\cite{rajput2020cyclic}
 \\[3mm] \hline
  2&{\qlb}$n=2^m-1,k=\frac{nr}{r+1}-2m,\ge 6;n-k${\qrb}$_q$ &$q = 2^m,\;(r+1)\mid n,\;d \geq 6$& Theorem~4~\cite{rajput2020cyclic} 
 \\[3mm] \hline
  3&{\qlb}$n,k,n-k-\frac{k}{r}+2;n-k${\qrb}$_q$ &$n\mid q-1,r\mid k,\;(r+1)\mid n,\;
      \bigl(\tfrac{n}{r+1}-\tfrac{k}{r}\bigr)\in 2\mathbb{Z}$& Theorem~5~\cite{rajput2020cyclic}
 \\[3mm] \hline
  4&{\qlb}$n,k,\ge n-k-\frac{k}{r}+1;n-k${\qrb}$_q$ &   $n\mid q-1,(r+1)\mid n,\;
      2r\mid\bigl(\tfrac{nr}{r+1}-k-2a\bigr)$ &  Theorem~6~\cite{rajput2020cyclic}
 \\[3mm] \hline

5& {\qlb}$n=\nu(r+1),k=\mu r,n-k-\frac{k}{r}+2;n-k${\qrb}$_q$
  &
  \begin{tabular}[c]{@{}c@{}}
    $n\mid q+1$,\;$q$ even,\; $\mu$ even \\[2pt]
    $n\mid q+1$,\;$q$ odd,\; $\mu$ even,\; $\nu$ odd,\; $n$ even \\[2pt]
    $n\mid q+1$,\;$q$ odd,\; $\mu,\nu$ even,\; $n$ even \\[2pt]
  \end{tabular}
  &

  \begin{tabular}[c]{@{}c@{}}
    Theorem~10~\cite{chen2017constructions} \\[2pt]
    Theorem~26~\cite{chen2017constructions} \\[2pt]
    Theorem~29~\cite{chen2017constructions} \\[2pt]
   
  \end{tabular} 
\\[3mm] \hline

6 & {\qlb}$n=\nu(r+1),k=\mu r,n-k-\frac{k}{r}+2;n-k${\qrb}$_q$
  &
  \begin{tabular}[c]{@{}c@{}}
    $n\mid q+1$,\;$q$ even,\; $\mu$ odd \\[2pt]
    $n\mid q+1$,\;$q$ odd,\; $\mu,\nu$ odd,\; $n$ even \\[2pt]

  \end{tabular}
  &
  \begin{tabular}[c]{@{}c@{}}
    Theorem~13~\cite{chen2017constructions} \\[2pt]
    Theorem~23~\cite{chen2017constructions} \\[2pt]
  \end{tabular} 
\\[3mm] \hline

7 & {\qlb}$n=\nu(r+1),k=\mu r,n-k-\frac{k}{r}+2;n-k${\qrb}$_q$
  &
  \begin{tabular}[c]{@{}c@{}}
    $n\mid q+1$,\;$q$ even,\; $\mu$ even \\[2pt]
  
  \end{tabular}

  &
  \begin{tabular}[c]{@{}c@{}}
    Theorem~16~\cite{chen2017constructions} \\[2pt]

  \end{tabular} 
\\[3mm] \hline

8 & {\qlb}$n=\nu(r+1),k=\mu r,n-k-\frac{k}{r}+2;n-k${\qrb}$_q$
  &
  \begin{tabular}[c]{@{}c@{}}
    $n\mid q+1$,\;$q$ even,\; $\mu$ odd \\[2pt]
   
  \end{tabular}
  
  &
  \begin{tabular}[c]{@{}c@{}}
    Theorem~19~\cite{chen2017constructions} \\[2pt]

  \end{tabular} 
\\[3mm] \hline
\end{tabular}

~\\ \footnotesize\emph{Note.} Rows~5--8 assume $\mu\ge2$. For $\mu=1$ one has $k=r$, the
 locality constraint is vacuous, and \eqref{EA-qLRC_SB} degenerates to the
 entanglement-assisted quantum Singleton bound (cf.\ Remark~\ref{rem:boundary}).

\end{table}

\section{Achievability Bounds for CSS-like EA-qLRCs}\label{EA-qLRC GV}

All results above are existence statements at a fixed parameter point $(n,k,d,r)$. We now
complement them with a guarantee across the entire asymptotic rate--distance--locality
region, via classical Gilbert--Varshamov-like results for LRCs.

\subsection{A Basic Achievability Bound via Parity-Check Augmentation}\label{sec:basic-GV}

In this subsection, we derive the Gilbert-Varshamov-like bound for a CSS-like EA-qLRC using
the classical LRC Gilbert-Varshamov-like bound of Lemma~\ref{lem:classical-LRC-GV}, due to
Cadambe \emph{et al.}~\cite{cadambe2015bounds}, and the monomial-equivalence-to-LCD result,
due to Carlet \emph{et al.}~\cite{carlet2018lcd}. What we do in this regard is that the
\emph{reduction} carrying these two classical existence results across the
entanglement-assisted framework: (i)~Lemma~\ref{lem:monomial-preserves-locality} shows the
monomial transformation making a code LCD does not disturb its locality, so the two
classical results can be applied to the \emph{same} code simultaneously
(Theorem~\ref{thm:LCD-transfer-basic}); and (ii)~Theorem~\ref{EA-qLRC LCD}, already established,
converts any classical LCD LRC meeting the resulting rate bound directly into a maximally
entangled CSS-like EA-qLRC of the same rate, distance, and locality, at the fixed entanglement cost
$c=n-k$. Dividing the resulting finite-length relations $\kappa=k$, $\delta=d$, $c=n-k$ by
$n$ is what produces the asymptotic bound~\eqref{EAqLRC-GV} below.

\begin{lemma}[\cite{cadambe2015bounds}]\label{lem:classical-LRC-GV}
For every locality $r\ge1$ and every $0\le d' \le1-q^{-1}$, there exists an infinite family
of classical $(n,k,d,r)_q$-LRCs with $d/n\to d'$ and rate
\begin{equation}\label{LRC-GV}
    R \;\ge\; \frac{r}{r+1}-H_q(d'),
\end{equation}
obtained by augmenting the parity-check matrix of a code meeting the classical
Gilbert--Varshamov-like bound $R\ge1-H_q(d')$ with $\lceil n/(r+1)\rceil$ additional rows of
Hamming weight $r+1$ and pairwise disjoint support.
\end{lemma}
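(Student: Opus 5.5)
The plan is a random-coding (Varshamov-type) argument carried out inside a fixed locality skeleton. The bound in~\eqref{LRC-GV} should then come out of a single union bound over low-weight vectors.

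\emph{Step 1 (locality skeleton).} Partition $[n]$ into $\lceil n/(r+1)\rceil$ consecutive blocks $B_1,\dots,B_{\lceil n/(r+1)\rceil}$. Each block has size $r+1$, except possibly the last, which has size at most $r+1$. Let $H_{\mathrm{loc}}$ be the $\lceil n/(r+1)\rceil\times n$ matrix whose $j$-th row is the all-ones vector on $B_j$ and zero elsewhere. These rows have pairwise disjoint supports of weight at most $r+1$ and cover every coordinate. Hence $\mathcal{C}_{\mathrm{loc}}=\ker H_{\mathrm{loc}}$ has locality $r$ in the sense of Definition~\ref{def:locality}, and $\dim\mathcal{C}_{\mathrm{loc}}= n-\lceil n/(r+1)\rceil$. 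A last block of size one would force a zero coordinate; to avoid this I will merge it into the previous block, or simply assume $(r+1)\mid n$ along the sequence of lengths. This costs only $o(n)$.

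\emph{Step 2 (random augmentation).} Set $d=\lceil d'n\rceil$ and choose the smallest $m$ with $q^{m}>V_q(n,d-1)$. Draw an $m\times n$ matrix $H_{\mathrm{rand}}$ with i.i.d.\ uniform entries, and put
\[
H=\begin{pmatrix}H_{\mathrm{loc}}\\ H_{\mathrm{rand}}\end{pmatrix},\qquad \mathcal{C}=\ker H .
\]
For each fixed nonzero $x\in\mathbb{F}_q^n$, the probability that $H_{\mathrm{rand}}x^{\top}=0$ is exactly $q^{-m}$. The expected number of nonzero $x\in\mathcal{C}$ with $\mathrm{wt}(x)\le d-1$ is therefore at most $(V_q(n,d-1)-1)q^{-m}<1$. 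So some realization has minimum distance at least $d$. For that realization, $\mathcal{C}\subseteq\mathcal{C}_{\mathrm{loc}}$, so Lemma~\ref{lem:subcode-locality} gives locality at most $r$. The rows of $H_{\mathrm{rand}}$ on their own define a code meeting the classical Gilbert--Varshamov bound, which matches the ``augmentation'' description in the statement.

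\emph{Step 3 (rate).} By construction $k\ge n-\lceil n/(r+1)\rceil-m$ and $m\le \log_q V_q(n,d-1)+1$. Dividing by $n$ and applying~\eqref{asyHB} with $0\le d'\le 1-q^{-1}$ yields
\[
\frac{k}{n}\ge \frac{r}{r+1}-H_q(d')-o(1),\qquad \frac{d}{n}\to d'.
\]
This is~\eqref{LRC-GV}. If the exact dimension is wanted, I will pass to a subcode, which preserves both the distance and, again by Lemma~\ref{lem:subcode-locality}, the locality.

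The main obstacle is bookkeeping rather than a conceptual difficulty. The union bound must be restricted to the nonzero vectors actually surviving in $\mathcal{C}$; counting all of $\mathbb{F}_q^n$ suffices and gives the same bound. The ceiling and the boundary block must also be absorbed into the $o(1)$ term. I would handle both by fixing the length sequence $n\equiv 0\pmod{r+1}$ first and then treating general $n$ by shortening, which changes the rate and the relative distance only by $O(r/n)$.
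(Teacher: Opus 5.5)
Your proposal is correct and is the construction the lemma itself describes; the paper gives no proof of its own and simply cites Cadambe--Mazumdar. Stacking the pairwise-disjoint local checks of weight at most $r+1$ on top of a Gilbert--Varshamov parity-check matrix gives a subcode that keeps distance at least $d$, loses at most $\lceil n/(r+1)\rceil$ dimensions, and has locality $r$, and your union-bound step just re-derives the existence of that Gilbert--Varshamov code. The only slip is merging a size-one last block into its neighbour, which creates a check of weight $r+2$ and so gives only locality $r+1$; instead restrict to $(r+1)\mid n$ as you also propose, or keep the size-one block, which merely forces an identically zero coordinate and is harmless.
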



Combining Lemma~\ref{lem:classical-LRC-GV} with Theorem~\ref{EA-qLRC LCD} yields an
asymptotic achievability bound, provided the classical family can be taken to be LCD;
the following two results supply that condition for $q>3$ and record the resulting
bound.

\begin{lemma}\label{lem:monomial-preserves-locality}
Let $\mathcal{C}$ be an $[n,k,d]_q$-LRC with locality $r$, and let $\mathcal{C}'=\varphi(\mathcal{C})$
for a monomial transformation $\varphi(x)_i=\lambda_i x_{\sigma(i)}$. Then $\mathcal{C}'$ is
an $[n,k,d]_q$-LRC with locality $r$.
\end{lemma}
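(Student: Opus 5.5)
The plan is to work entirely on the dual side, using the parity-check characterization of locality recalled in Section~\ref{ssec:classical-LRC}: a linear code has locality $r$ exactly when every coordinate $i$ lies in the support of some dual codeword of weight at most $r+1$. We write $\varphi$ as the composition of a permutation $P_\sigma$ and an invertible diagonal scaling $D=\mathrm{diag}(\lambda_1,\dots,\lambda_n)$ with all $\lambda_i\neq0$. The first, routine step is to note that $\varphi$ is an invertible linear isometry of $\mathbb{F}_q^n$ for the Hamming metric. Indeed, $\operatorname{supp}(\varphi(x))=\sigma^{-1}(\operatorname{supp}(x))$, so weights are preserved, and $\varphi$ is a bijection. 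Hence $\mathcal{C}'$ has dimension $k$ and minimum distance $d$.

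The key step is to identify $(\mathcal{C}')^\perp$. We set $\psi(y)_i=\lambda_i^{-1}y_{\sigma(i)}$ and check directly that
\[
\langle\varphi(x),\psi(y)\rangle=\sum_i \lambda_i\lambda_i^{-1}x_{\sigma(i)}y_{\sigma(i)}=\langle x,y\rangle .
\]
It follows that $\psi(\mathcal{C}^\perp)\subseteq(\mathcal{C}')^\perp$. Since $\psi$ is injective, both sides have dimension $n-k$, so equality holds: $(\mathcal{C}')^\perp=\psi(\mathcal{C}^\perp)$. Moreover $\psi$ is also a monomial map, so it preserves weights and satisfies $\operatorname{supp}(\psi(y))=\sigma^{-1}(\operatorname{supp}(y))$.

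Now we transfer the locality witnesses. Fix a coordinate $i$ of $\mathcal{C}'$ and let $j=\sigma(i)$. By the locality of $\mathcal{C}$ there is some $h_j\in\mathcal{C}^\perp$ with $j\in\operatorname{supp}(h_j)$ and $\mathrm{wt}(h_j)\le r+1$. Then $\psi(h_j)\in(\mathcal{C}')^\perp$, it has weight at most $r+1$, and its support contains $\sigma^{-1}(j)=i$. This $\psi(h_j)$ is therefore a local parity check for coordinate $i$, and since $i$ was arbitrary, $\mathcal{C}'$ has locality $r$.

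The only step needing care is the duality computation, which is where the index convention $x_{\sigma(i)}$ versus $x_{\sigma^{-1}(i)}$ and the inverse scalars $\lambda_i^{-1}$ must be matched correctly. An alternative that avoids the dual altogether is to recover $c'_i=\lambda_i c_{\sigma(i)}$ from the symbols $\{c'_m:\sigma(m)\in\Gamma_{\sigma(i)}\}$ using the original recovery set directly. We would mention this as a sanity check.
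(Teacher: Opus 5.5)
Your proposal is correct and follows essentially the same route as the paper: the same companion map $\psi(y)_i=\lambda_i^{-1}y_{\sigma(i)}$, the identity $\langle\varphi(x),\psi(y)\rangle=\langle x,y\rangle$ giving $(\mathcal{C}')^\perp=\psi(\mathcal{C}^\perp)$, and the transfer of each weight-$\le r+1$ witness via $\operatorname{supp}(\psi(h))=\sigma^{-1}(\operatorname{supp}(h))$. Your dimension count ($\psi$ injective, both sides of dimension $n-k$) makes explicit the equality that the paper simply asserts, although only the inclusion $\psi(\mathcal{C}^\perp)\subseteq(\mathcal{C}')^\perp$ is actually needed for the locality argument.
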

\begin{proof}
Since $\varphi$ is invertible and acts coordinatewise by permutation and nonzero scaling, it
preserves dimension and Hamming weight, so $\mathcal{C}'$ has parameters $[n,k,d]_q$. The
companion dual map $\psi(y)_i=\lambda_i^{-1}y_{\sigma(i)}$ satisfies
$\langle\varphi(x),\psi(y)\rangle=\langle x,y\rangle$, so $(\mathcal{C}')^\perp=\psi(\mathcal{C}^\perp)$.
Fix $j\in\{1,\dots,n\}$ and set $i=\sigma(j)$. Since $\mathcal{C}$ has locality $r$, there is
$h_i\in\mathcal{C}^\perp$ with $i\in\operatorname{supp}(h_i)$, $|\operatorname{supp}(h_i)|\le r+1$.
For $h'_j:=\psi(h_i)$, $(h'_j)_j=\lambda_j^{-1}(h_i)_{\sigma(j)}=\lambda_j^{-1}(h_i)_i\ne0$, and
$\operatorname{supp}(h'_j)=\sigma^{-1}(\operatorname{supp}(h_i))$, so $|\operatorname{supp}(h'_j)|\le r+1$.
Hence $\mathcal{C}'$ has locality $r$, with recovery sets $\Gamma'_j=\sigma^{-1}(\Gamma_{\sigma(j)})$.
\end{proof}

\begin{theorem}\label{thm:LCD-transfer-basic}
For $q>3$, every rate--distance--locality point achievable by Lemma~\ref{lem:classical-LRC-GV}
is also achievable by an LCD LRC of the same parameters.
\end{theorem}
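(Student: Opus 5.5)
The argument combines the Carlet--Mesnager--Tang--Qi--Pellikaan theorem~\cite{carlet2018lcd} with Lemma~\ref{lem:monomial-preserves-locality}. That theorem says that for $q>3$ every $[n,k]_q$ linear code is monomially equivalent to an LCD code, i.e., there is a monomial map $\varphi(x)_i=\lambda_i x_{\sigma(i)}$ with $\varphi(\mathcal{C})\cap\varphi(\mathcal{C})^\perp=\{0\}$.

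\textbf{Step 1: the classical family.} Fix $r\ge1$ and $0\le d'\le 1-q^{-1}$. Lemma~\ref{lem:classical-LRC-GV} supplies an infinite family $\{\mathcal{C}_n\}$ of linear $(n,k_n,d_n,r)_q$-LRCs with $d_n/n\to d'$ and $\liminf k_n/n\ge \frac{r}{r+1}-H_q(d')$. Linearity matters here: the parity-check augmentation produces kernels of matrices, so each $\mathcal{C}_n$ is a linear code, and the monomial-equivalence theorem applies.

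\textbf{Step 2: pass to LCD codes.} For each $n$, invoke~\cite{carlet2018lcd} (this is where $q>3$ is used) to obtain a monomial map $\varphi_n$ such that $\mathcal{C}'_n=\varphi_n(\mathcal{C}_n)$ is LCD. Lemma~\ref{lem:monomial-preserves-locality} then shows that $\mathcal{C}'_n$ is an $[n,k_n,d_n]_q$-LRC with the same locality $r$. The point is that the same code $\mathcal{C}'_n$ carries both properties simultaneously: the locality witnesses are transported by the dual map $\psi$, and the LCD property comes from the choice of $\varphi_n$. Since $n$, $k_n$ and $d_n$ are unchanged, the family $\{\mathcal{C}'_n\}$ has the same rate, relative distance and locality, so it achieves the same point as in Lemma~\ref{lem:classical-LRC-GV}.

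\textbf{Main obstacle.} The substantive input is the equivalence theorem itself, and in particular confirming that its hypothesis is exactly $q>3$ with no dimension or length restriction. I would check that its statement covers all $1\le k\le n-1$; the cases $k=0$ and $k=n$ give LCD codes trivially. I would also check that it uses a genuine monomial map, i.e., nonzero scalings $\lambda_i$ combined with a permutation. This is the form assumed by Lemma~\ref{lem:monomial-preserves-locality}. If the theorem only uses diagonal scalings, the lemma still applies with $\sigma$ taken to be the identity.

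Finally, I would note that the transformation does not alter the asymptotic limits, since it is applied code by code at each fixed length. The achievable region is therefore exactly preserved.
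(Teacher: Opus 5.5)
Your proposal is correct and matches the paper's proof: for $q>3$, apply the Carlet et al.\ monomial-equivalence theorem to each code of the parity-check-augmented family, then use Lemma~\ref{lem:monomial-preserves-locality} to conclude that the resulting LCD code keeps the parameters $[n,k,d]_q$ and the locality $r$. Your added remarks are accurate and simply make explicit what the paper leaves implicit: the augmented codes are linear, and the equivalence may be a pure diagonal scaling, in which case $\sigma$ is the identity.
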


\begin{proof}
Let $\mathcal{C}_{\mathrm{aug}}$ be a classical $(n,k,d,r)_q$-LRC from the family of
Lemma~\ref{lem:classical-LRC-GV}. Since $q>3$, Carlet \emph{et al.}~\cite{carlet2018lcd} give a
monomial transformation $\varphi$ under which $\varphi(\mathcal{C}_{\mathrm{aug}})$ is
LCD. By Lemma~\ref{lem:monomial-preserves-locality}, $\varphi(\mathcal{C}_{\mathrm{aug}})$ has the
same parameters $[n,k,d]_q$ and the same locality $r$ as $\mathcal{C}_{\mathrm{aug}}$,
and is therefore an LCD LRC meeting the Gilbert--Varshamov-like rate bound
\eqref{LRC-GV}.
\end{proof}

\begin{theorem}\label{EA-LRC-GV}
Let $q>3$ and $0\le\Delta\le1-q^{-1}$. Then there exists an infinite family of pure
maximally entangled CSS-like EA-qLRCs with locality $r$, obtained via
Proposition~\ref{CSS-EA-qLRC} from classical LRCs of locality $r$ drawn from the
family of Lemma~\ref{lem:classical-LRC-GV}, whose relative distance tends to $\Delta$ and whose
rate, entanglement rate and net rate satisfy
\begin{equation}\label{EAqLRC-GV}
R \;\ge\; \frac{r}{r+1}-H_q(\Delta),
\qquad \gamma=1-R, \qquad R_{net}=2R-1 .
\end{equation}
The same conclusion holds for any field size $q$ for which a member of the family of Lemma~\ref{lem:classical-LRC-GV} can be taken to be LCD; for $q=2,3$ this is not known in general (Remark~\ref{rem:LCD-gap}).
\end{theorem}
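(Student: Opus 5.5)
The plan is to chain the three results already assembled: the classical family of Lemma~\ref{lem:classical-LRC-GV}, the LCD transfer of Theorem~\ref{thm:LCD-transfer-basic}, and the single-code maximally entangled construction behind Theorem~\ref{EA-qLRC LCD}.

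\emph{Classical family.} Fix $q>3$, $r\ge1$ and $0\le\Delta\le1-q^{-1}$. Lemma~\ref{lem:classical-LRC-GV}, applied with $d'=\Delta$, gives an infinite sequence of $(n,k,d,r)_q$-LRCs $\mathcal{C}_{\mathrm{aug}}^{(n)}$ satisfying
\[
d/n\to\Delta \quad\text{and}\quad \liminf k/n\ge \frac{r}{r+1}-H_q(\Delta).
\]

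\emph{LCD transfer.} Apply Theorem~\ref{thm:LCD-transfer-basic} to each member. This replaces $\mathcal{C}_{\mathrm{aug}}^{(n)}$ by a monomially equivalent code $\mathcal{C}^{(n)}$ that is LCD and, by Lemma~\ref{lem:monomial-preserves-locality}, has the same $[n,k,d]_q$ parameters and locality $r$.

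\emph{Quantum code.} Set $\mathcal{C}_1=\mathcal{C}_2=\mathcal{C}^{(n)}$ in Proposition~\ref{CSS-EA-qLRC}. Locality supplies, for each coordinate $i$, a dual word $h_i$ with $i\in\operatorname{supp}(h_i)$ and $|\operatorname{supp}(h_i)|\le r+1$. Taking $c_1^{(i)}=c_2^{(i)}=h_i$ gives a union of supports of size at most $r+1$, so the resulting EA-qLRC has locality $r$.

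Since $\mathcal{C}^{(n)}\cap(\mathcal{C}^{(n)})^\perp=\{0\}$, Theorem~\ref{PC-EAQEC} gives the parameters:
\begin{itemize}
\item $c=n-k-\dim(\mathcal{C}\cap\mathcal{C}^\perp)=n-k$;
\item $\kappa=2k-n+c=k$;
\item $\delta=\mathrm{wt}(\mathcal{C}\setminus\{0\})=d$.
\end{itemize}
In particular the code is pure, as noted in the remark following Theorem~\ref{cor:cyclic-optimal-witnesses}, and it is maximally entangled.

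\emph{Asymptotics.} Divide the relations $\kappa=k$, $\delta=d$, $c=n-k$ by $n$ and pass to the limit, along a subsequence if necessary so that $k/n$ converges. This gives
\[
\delta/n\to\Delta,\qquad R=\lim k/n\ge \frac{r}{r+1}-H_q(\Delta),\qquad \gamma=1-R,\qquad R_{net}=R-\gamma=2R-1,
\]
which is exactly~\eqref{EAqLRC-GV}.

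\emph{Final sentence of the statement.} Steps 1 and 3 never use $q>3$; that hypothesis enters only through Theorem~\ref{thm:LCD-transfer-basic}. So whenever a member of the family can be taken LCD directly, the same argument goes through verbatim.

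\emph{Main obstacle.} The delicate step is the LCD transfer, because the Carlet \emph{et al.}\ result has to be applied to the specific augmented code without destroying locality. Lemma~\ref{lem:monomial-preserves-locality} resolves this. Beyond that, I would check two minor points:
\begin{itemize}
\item The limits exist or can be taken along a subsequence. I would phrase the bound with $\liminf$ if needed.
\item The endpoint case $\Delta=0$, with $d$ bounded but tending to infinity sublinearly, is covered by the classical lemma's statement. Here I would simply rely on Lemma~\ref{lem:classical-LRC-GV} as stated.
\end{itemize}
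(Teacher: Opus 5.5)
Your proposal is correct and follows the same chain as the paper's proof: the family of Lemma~\ref{lem:classical-LRC-GV} is made LCD via Theorem~\ref{thm:LCD-transfer-basic}, fed into the single-code construction with $\mathcal{C}_1=\mathcal{C}_2$, and then normalized by $n$. The only minor difference is that you read off $c=n-k$, $\kappa=k$, $\delta=d$ directly from Theorem~\ref{PC-EAQEC} and verify locality through Proposition~\ref{CSS-EA-qLRC}, instead of citing Theorem~\ref{EA-qLRC LCD}; this is if anything the tidier justification, since that theorem is stated for optimal codes and the GV family is not optimal in general.
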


\begin{proof}
By Theorem~\ref{thm:LCD-transfer-basic}, for $q>3$ each member $\mathcal{C}=[n,k,d]_q$
of the family of Lemma~\ref{lem:classical-LRC-GV} may be replaced by a monomially equivalent LCD
code of the same parameters and locality. Applying Theorem~\ref{EA-qLRC LCD} with
$\mathcal{C}_1=\mathcal{C}_2=\mathcal{C}$ gives $\kappa=k$, $\delta=d$ and $c=n-k$.
Dividing by $n$: the rate bound \eqref{LRC-GV} gives
$R=\kappa/n=k/n\ge r/(r+1)-H_q(d/n)\to r/(r+1)-H_q(\Delta)$ as $d/n\to\Delta$, while
$\gamma=c/n=1-k/n=1-R$ and $R_{net}=(\kappa-c)/n=2(k/n)-1=2R-1$ hold identically for
every $n$. Letting $n\to\infty$ yields \ref{EAqLRC-GV}. The final claim follows by
the same argument, with Theorem~\ref{thm:LCD-transfer-basic} replaced by the assumed
LCD-ness.
\end{proof}




\begin{remark}\label{rem:LCD-gap}
For $q>3$, Theorem~\ref{thm:LCD-transfer-basic} closes the LCD-GV-locality gap completely: the
asymptotic bound~\eqref{EAqLRC-GV} holds unconditionally, via the monomial-equivalence
result of Carlet \emph{et al.}~\cite{carlet2018lcd} applied directly to the
locality-augmented code of Lemma~\ref{lem:classical-LRC-GV}, together with
Lemma~\ref{lem:monomial-preserves-locality} showing this equivalence preserves locality
exactly. For $q=2,3$, Carlet et al.'s equivalence is known to fail in general, and closing
the gap there remains open. In the interim, the explicit cyclic families of
Theorem~\ref{cor:cyclic-optimal-witnesses} continue to provide unconditional, deterministic
LCD-LRC witnesses at any field size, including $q=2,3$, for the parameter ranges they cover.
\end{remark}

\begin{remark}[Asymptotic discussion]\label{rem:GV-asymptotic-discussion}
\emph{Tightness.} Bound~\eqref{EAqLRC-GV} is an achievability (lower) bound and is not
claimed to be tight in general; as with the classical Gilbert--Varshamov-like bound it
inherits from, closing the gap to a matching converse for $\Delta>0$ is a long-standing open
problem even classically. It is, however, tight at the boundary $\Delta=0$: substituting the
maximally entangled regime $\gamma=1-R$ into the asymptotic Singleton-like
bound~\eqref{ASB} of Section~\ref{Bounds EA-qLRC} and solving for the largest admissible $R$
gives the converse $R\le\frac{r}{r+1}(1-\Delta)$, which equals $\frac{r}{r+1}$ at
$\Delta=0$ --- exactly the value~\eqref{EAqLRC-GV} gives at $\Delta=0$ (since $H_q(0)=0$).
The two bounds therefore meet exactly at $\Delta=0$, though for $\Delta>0$ the achievability
bound falls strictly below the converse, since $H_q(\Delta)$ has unbounded slope as
$\Delta\to0^+$ (from the $-\Delta\log_q\Delta$ term) while the converse
$\frac{r}{r+1}(1-\Delta)$ has the finite slope $-\frac{r}{r+1}$; this produces the same
qualitative gap familiar from the classical Gilbert--Varshamov-versus-Singleton comparison, not a new
phenomenon introduced by entanglement assistance.

\emph{Comparison with the classical Gilbert--Varshamov bound.} Bound~\eqref{EAqLRC-GV} has exactly the same
rate--distance trade-off $\frac{r}{r+1}-H_q(\Delta)$ as the classical LRC
Gilbert--Varshamov-like bound~\eqref{LRC-GV} it is built from; entanglement assistance is
not paid for by a worse rate--distance trade-off here, but by the fixed asymptotic
entanglement cost $\gamma=1-R$. This is the expected price of removing the dual-containing
constraint entirely: the construction is free to start from \emph{any} classical LRC
meeting~\eqref{LRC-GV}, at the cost of consuming $\gamma=1-R$ ebits per transmitted qudit
asymptotically.

\emph{Comparison with CSS-qLRC asymptotics.} A CSS-qLRC construction without
entanglement assistance requires $\mathcal{C}_1^{\perp}\subseteq\mathcal{C}_2$. In the
single-code case $\mathcal{C}_1=\mathcal{C}_2=\mathcal{C}$ used throughout this
subsection, this is the dual-containing condition
$\mathcal{C}^{\perp}\subseteq\mathcal{C}$, equivalently
$\dim(\mathcal{C}\cap\mathcal{C}^{\perp})=n-k$, at the opposite extreme from the LCD
condition $\dim(\mathcal{C}\cap\mathcal{C}^{\perp})=0$ exploited here. Dual-containment
forces $k\ge n/2$, so such a construction can only access asymptotic rates
$R\ge 1/2$: the entire low-rate half of the classical Gilbert--Varshamov-like region
\eqref{LRC-GV} is unreachable, and with it the high relative-distance
regime, since \eqref{LRC-GV} gives $R<1/2$ precisely when
$H_q(\Delta)>r/(r+1)-1/2$. Matching the classical GV rate with an explicit family of
dual-containing LRCs is therefore not merely a harder existence question than matching
it with an LCD family---it is impossible over an entire subregion. The LCD condition,
by contrast, imposes no restriction on $k$ at all, and by
Theorem~\ref{thm:LCD-transfer-basic} is attainable at every point of
\eqref{LRC-GV} for $q>3$. The entanglement-assisted route of this
subsection thus accesses the full classical Gilbert--Varshamov rate--distance--locality region at the
fixed entanglement cost $\gamma=1-R$, rather than forfeiting the low-rate,
high-distance portion of it to the CSS-compatible dual-containment condition.
(For a general pair $\mathcal{C}_1\ne\mathcal{C}_2$ the condition
$\mathcal{C}_1^{\perp}\subseteq\mathcal{C}_2$ requires only $k_1+k_2\ge n$ rather than
$k_i\ge n/2$ individually; the restriction above is specific to the single-code
construction.)
\end{remark}

\begin{example}\label{ex:basic-GV-worked}
Let $q=4$, $n=q+1=5$. By Lemma~\ref{lem:n-divides-qplus1-LCD}, every cyclic code of this
length over $\mathbb{F}_4$ is LCD. The $q$-cyclotomic cosets modulo $5$ are $\{0\}$,
$\{1,4\}$, and $\{2,3\}$. Choosing the defining set $T=\{1,4\}$ gives a cyclic code
$\mathcal{C}=[5,3,3]_4$ with dual $\mathcal{C}^\perp=[5,2,4]_4$; since $\mathcal{C}^\perp$
has minimum weight $4$, cyclic shifts of a minimum-weight dual codeword cover every
coordinate, giving $\mathcal{C}$ locality $r=3$. Direct computation confirms
$\dim(\mathcal{C}\cap\mathcal{C}^\perp)=0$. Applying Theorem~\ref{EA-qLRC LCD} yields an
explicit maximally entangled CSS-like EA-qLRC {\qlb}${5},{3},{3};{2}${\qrb}$_4$ with locality $r=3$ over
$\mathbb{F}_4$, for which $R=3/5$, $\gamma=2/5=1-R$, and $R_{net}=1/5=2R-1$, confirming the
relations of Theorem~\ref{EA-LRC-GV} exactly.
\end{example}

\subsection{A Sharper Achievability Bound via Concatenated Codes}\label{sec:CM-refined}

The simple Gilbert--Varshamov-type bound~\eqref{EAqLRC-GV} inherits its rate--distance
trade-off from the parity-check augmentation of Lemma~\ref{lem:classical-LRC-GV}, which
spends redundancy purely on enforcing locality without improving distance. Cadambe and
Mazumdar's concatenated-code construction~\cite{cadambe2015bounds} instead builds locality
directly into the code's block structure, so that the same redundancy simultaneously
provides local recoverability and distance protection. As in Section~\ref{sec:basic-GV},
closing the LCD-existence gap for this sharper family uses the same two ingredients ---
Carlet et al.'s monomial-equivalence result~\cite{carlet2018lcd} and
Lemma~\ref{lem:monomial-preserves-locality} --- applied to the concatenated-code family in
place of the parity-check-augmented one. By the identical reduction used to establish
Theorem~\ref{EA-LRC-GV}, this yields a strictly sharper achievability bound for CSS-like
EA-qLRCs.

\begin{lemma}[Concatenated-Code Achievability~\cite{cadambe2015bounds}, Theorem~2]\label{lem:CM-refined-achievability}
There exists an infinite family of $[n,k,d]_q$ \emph{linear} concatenated codes with
locality $r$, constructed from an outer random $q^r$-ary linear code of length $n/(r+1)$
and a $q$-ary single-parity-check inner code of length $r+1$, such that
\begin{equation}\label{eq:CM-general}
    \frac{k}{n} \;=\; 1-\min_{0\le x\le1}\left[\log_q\bigl(1+x(q-1)\bigr)+\frac{1}{r+1}\log_q\!\left(1+(q-1)\Bigl(\frac{1-x}{1+x(q-1)}\Bigr)^{r+1}\right)-\frac{d}{n}\log_q x\right].
\end{equation}
In particular, for relative distance $\frac{d}{n} \in \left(0, 1-\frac{1}{q}\right)$,
substituting $x=\dfrac{d}{(q-1)(n-d)}$ into~\eqref{eq:CM-general} yields the explicit rate
bound:
\begin{equation}\label{LRC-CM-refined}
    \frac{k}{n} \;=\; 1-H_q\!\left(\frac{d}{n}\right)-\frac{1}{r+1}\log_q\!\left[1+(q-1)\left(1-\frac{d}{n}\cdot\frac{q}{q-1}\right)^{r+1}\right].
\end{equation}
\end{lemma}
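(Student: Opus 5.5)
The plan is a random-coding (Gilbert--Varshamov style) argument carried out inside the fixed space of concatenated words, with locality supplied structurally by the inner code and the rate obtained from a Chernoff-type bound on the inner code's weight enumerator. Let $N=n/(r+1)$ and let $\mathcal{B}$ be the $q$-ary $[r+1,r,2]_q$ single-parity-check code. Every word of $\mathcal{B}^N\subseteq\mathbb{F}_q^n$ is a concatenation of $N$ inner codewords; its dimension is $Nr$ and it has locality $r$, since each block's all-ones parity check has support of size $r+1$. An $\mathbb{F}_q$-linear outer code over $\mathbb{F}_{q^r}$ of length $N$, pushed through the inner encoder, is exactly an $\mathbb{F}_q$-subspace of $\mathcal{B}^N$. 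So I will simply pick a uniformly random $k$-dimensional $\mathbb{F}_q$-subspace $\mathcal{C}\subseteq\mathcal{B}^N$. By Lemma~\ref{lem:subcode-locality}, $\mathcal{C}$ inherits locality $r$ automatically, and linearity is built in. It therefore remains only to show that $\mathcal{C}$ has minimum distance $\ge d$ with positive probability at the claimed rate.

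\textbf{Step 1 (union bound).} Each fixed nonzero $v\in\mathcal{B}^N$ lies in the random subspace with probability $(q^k-1)/(q^{Nr}-1)\le q^{k-Nr}$. Hence the expected number of nonzero codewords of weight $<d$ is at most $q^{k-Nr}\sum_{0<w<d}A_{\mathcal{B}^N}(w)$. A good code exists as soon as this quantity is $<1$.

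\textbf{Step 2 (Chernoff bound on the weight enumerator).} The weight enumerator of $\mathcal{B}$ is
\[
W_{\mathcal{B}}(z)=q^{-1}\bigl[(1+(q-1)z)^{r+1}+(q-1)(1-z)^{r+1}\bigr],
\]
and $\mathcal{B}^N$ has enumerator $W_{\mathcal{B}}(z)^N$. For any $0<x\le1$,
\[
\sum_{w<d}A_{\mathcal{B}^N}(w)\le x^{-d}\,W_{\mathcal{B}}(x)^N .
\]
Taking $\log_q$ and dividing by $n$, the sufficient condition from Step 1 becomes
\[
\frac{k}{n}<\frac{r}{r+1}-\frac{1}{r+1}\log_q W_{\mathcal{B}}(x)+\frac{d}{n}\log_q x .
\]
Expanding $W_{\mathcal{B}}$ by factoring out $(1+(q-1)x)^{r+1}$ gives
\[
\frac{1}{r+1}\log_q W_{\mathcal{B}}(x)=\log_q\bigl(1+x(q-1)\bigr)+\frac{1}{r+1}\log_q\!\Bigl(1+(q-1)\Bigl(\frac{1-x}{1+x(q-1)}\Bigr)^{r+1}\Bigr)-\frac{1}{r+1}.
\]
The $-\frac{1}{r+1}$ term cancels against $\frac{r}{r+1}$ to leave $1$. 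Optimizing over $x\in(0,1]$ and taking $k$ to be the largest admissible integer yields \eqref{eq:CM-general}, up to an $o(1)$ correction from the strict inequality and integer rounding. The endpoint $x=0$ is handled as a limit.

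\textbf{Step 3 (explicit form).} Set $\Delta=d/n$ and substitute $x=\Delta/((q-1)(1-\Delta))$, which lies in $(0,1)$ exactly when $\Delta<1-1/q$. Then $1+x(q-1)=1/(1-\Delta)$, and the combination $\log_q(1+x(q-1))-\Delta\log_q x$ simplifies to the standard entropy identity $H_q(\Delta)$. A one-line computation also gives
\[
\frac{1-x}{1+x(q-1)}=1-\Delta\frac{q}{q-1},
\]
which produces \eqref{LRC-CM-refined}.

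\textbf{Main obstacle.} The delicate points are in Step 2. First, I must justify the Chernoff inequality $\sum_{w<d}A(w)\le x^{-d}W(x)^N$, which needs $x\le1$ so that $x^{w-d}\ge1$ for $w<d$. Second, I must check that the chosen $x$ is admissible in the stated range of $\Delta$. The equality sign in the lemma should be read as the achieved asymptotic rate, i.e.\ the rate is attained up to $o(1)$ as $n\to\infty$.
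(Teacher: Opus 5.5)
Your proposal is correct and is essentially the random-coding argument behind the cited Cadambe--Mazumdar result; the paper gives no proof of its own, only the citation. The argument has the same three ingredients: a random linear subcode of $\mathcal{B}^N$, a union bound contributing the factor $q^{k-Nr}$, and the Chernoff estimate $\sum_{w<d}A(w)\le x^{-d}W_{\mathcal{B}}(x)^N$ on the weight enumerator of the product of single-parity-check codes. Randomizing over $\mathbb{F}_q$-subspaces instead of $\mathbb{F}_{q^r}$-linear outer codes leaves the estimate unchanged. You are also right to read both displayed equalities as achievable rates: the substituted $x$ minimizes only the Gilbert--Varshamov part, so \eqref{LRC-CM-refined} is in general a weaker achievable rate than \eqref{eq:CM-general}.
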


\begin{theorem}\label{thm:CM-LCD-closure}
For $q>3$, every rate--distance--locality point achievable by the concatenated-code family
of Lemma~\ref{lem:CM-refined-achievability} is also achievable by an LCD LRC of the same
parameters $[n,k,d]_q$ and locality $r$.
\end{theorem}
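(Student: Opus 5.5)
The plan mirrors the proof of Theorem~\ref{thm:LCD-transfer-basic}, applied to the concatenated family of Lemma~\ref{lem:CM-refined-achievability} instead of the parity-check-augmented one.

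Fix a member $\mathcal{C}_{\mathrm{cat}}$ of that family, an $[n,k,d]_q$ linear code with locality $r$. First I will make its locality explicit at the level of dual codewords. Every codeword is a concatenation of $n/(r+1)$ blocks of length $r+1$, each block a codeword of the single-parity-check inner code. So for each block $B\subseteq[n]$, the all-ones vector supported on $B$ lies in $\mathcal{C}_{\mathrm{cat}}^\perp$. This vector has weight $r+1$ and contains every coordinate of $B$. The blocks partition $[n]$, so every coordinate $i$ has such a witness, and $\mathcal{C}_{\mathrm{cat}}$ has locality $r$ in the sense of Definition~\ref{def:locality}. Linearity matters here: the result of Carlet \emph{et al.}~\cite{carlet2018lcd} is stated for linear codes, and Lemma~\ref{lem:CM-refined-achievability} asserts linearity explicitly. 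If the outer random code were nonlinear over $\mathbb{F}_{q^r}$, I would instead take an $\mathbb{F}_q$-linear outer code of the same rate. Cadambe--Mazumdar's random-coding argument already runs over $\mathbb{F}_q$-linear maps, so this should not change the rate~\eqref{eq:CM-general}.

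Second, since $q>3$, I invoke Carlet \emph{et al.}~\cite{carlet2018lcd}. It states that every $\mathbb{F}_q$-linear code is monomially equivalent to an LCD code. This gives $\varphi(x)_i=\lambda_i x_{\sigma(i)}$ with $\varphi(\mathcal{C}_{\mathrm{cat}})$ LCD.

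Third, Lemma~\ref{lem:monomial-preserves-locality} shows that $\varphi(\mathcal{C}_{\mathrm{cat}})$ is again an $[n,k,d]_q$ code with locality $r$. Its new recovery sets are $\sigma^{-1}$ of the old blocks, and its local checks are the rescaled vectors $\psi(\mathbf{1}_B)$. Because $n$, $k$, $d$ and $r$ are unchanged, the ratios $k/n$ and $d/n$ coincide member by member. Hence the rate--distance--locality point given by~\eqref{eq:CM-general}, and in particular~\eqref{LRC-CM-refined}, is realized by an LCD LRC.

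I expect the only real obstacle to be the first step. I must make sure the family genuinely consists of $\mathbb{F}_q$-linear codes with the stated $[n,k,d]$, so that Carlet's theorem applies. I also need the distance $d$ to be an exact code parameter rather than only a lower bound, since monomial maps preserve it either way. The rest is a direct transcription of Theorem~\ref{thm:LCD-transfer-basic}.
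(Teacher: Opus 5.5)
Your proposal is correct and follows the paper's proof. Both fix a member of the concatenated family, invoke Carlet \emph{et al.}'s monomial equivalence to an LCD code for $q>3$, and use Lemma~\ref{lem:monomial-preserves-locality} to carry over $[n,k,d]_q$ and locality $r$. Your explicit block all-ones parity checks are a harmless addition, since the lemma's hypothesis already supplies locality $r$.
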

\begin{proof}
Let $\mathcal{C}_{\mathrm{cat}}$ be a classical $[n,k,d]_q$ linear concatenated code of
locality $r$ from Lemma~\ref{lem:CM-refined-achievability}. Since $q>3$, Carlet
\emph{et al.}~\cite{carlet2018lcd} guarantee the existence of a monomial transformation
$\varphi$ under which $\varphi(\mathcal{C}_{\mathrm{cat}})$ is LCD. By
Lemma~\ref{lem:monomial-preserves-locality}, monomial transformations preserve distance,
dimension, and coordinate repair sets. Thus, $\varphi(\mathcal{C}_{\mathrm{cat}})$ maintains
the parameters $[n,k,d]_q$ and locality $r$, yielding an LCD LRC meeting
rate~\eqref{LRC-CM-refined}.
\end{proof}

\begin{theorem}\label{thm:EA-LRC-CM-refined}
Let $q>3$ and $\Delta \in \left(0, 1-\frac{1}{q}\right)$. A CSS-like EA-qLRC $\mathcal{Q}_E$
with locality $r$ obtained via Proposition~\ref{CSS-EA-qLRC} using a classical LRC
$\mathcal{C}$ drawn from the family of Lemma~\ref{lem:CM-refined-achievability} satisfies
\begin{equation}\label{EAqLRC-GV-refined}
    R \;\ge\; 1-H_q(\Delta)-\frac{1}{r+1}\log_q\!\left[1+(q-1)\left(1-\Delta\cdot\frac{q}{q-1}\right)^{r+1}\right],
    \qquad \gamma=1-R, \qquad R_{\mathrm{net}}=2R-1.
\end{equation}
\end{theorem}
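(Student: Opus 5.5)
The plan mirrors the proof of Theorem~\ref{EA-LRC-GV}, with Lemma~\ref{lem:CM-refined-achievability} in place of Lemma~\ref{lem:classical-LRC-GV}. Fix $q>3$ and $\Delta\in(0,1-\frac1q)$. Take the infinite family of linear concatenated $[n,k,d]_q$ codes of locality $r$ from Lemma~\ref{lem:CM-refined-achievability}, choosing the members so that $d/n\to\Delta$. This is possible because $d$ is a free parameter of the random outer code, and the lengths are multiples of $r+1$. Apply Theorem~\ref{thm:CM-LCD-closure} to replace each member $\mathcal{C}_{\mathrm{cat}}$ by a monomially equivalent code $\mathcal{C}=\varphi(\mathcal{C}_{\mathrm{cat}})$. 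The new code is LCD and has the same $[n,k,d]_q$ and the same locality $r$; the locality is preserved by Lemma~\ref{lem:monomial-preserves-locality}.

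Next I will feed $\mathcal{C}_1=\mathcal{C}_2=\mathcal{C}$ into Proposition~\ref{CSS-EA-qLRC}. For this I need local checks $c_1^{(i)}=c_2^{(i)}=h_i\in\mathcal{C}^\perp$ with $i\in\operatorname{supp}(h_i)$ and weight at most $r+1$. Their union of supports then also has size at most $r+1$, so the resulting code has locality $r$. Since $\mathcal{C}$ is LCD, Theorem~\ref{EA-qLRC LCD} (equivalently Theorem~\ref{PC-EAQEC} with $\mathcal{C}\cap\mathcal{C}^\perp=\{0\}$) gives a pure maximally entangled code with $\kappa=k$, $\delta=d$ and $c=n-k$. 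Hence $R_n:=\kappa/n=k/n$, $c/n=1-k/n$ and $(\kappa-c)/n=2k/n-1$ hold identically in $n$. These relations give $\gamma=1-R$ and $R_{\mathrm{net}}=2R-1$ in the limit.

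For the rate, \eqref{LRC-CM-refined} gives
\[
k/n=1-H_q(d/n)-\frac{1}{r+1}\log_q\!\left[1+(q-1)\left(1-\frac{d}{n}\cdot\frac{q}{q-1}\right)^{r+1}\right].
\]
The right-hand side is a continuous function of $d/n$ on $(0,1-\frac1q)$, since $H_q$ is continuous there and the logarithm's argument is at least $1$. Letting $d/n\to\Delta$ therefore yields \eqref{EAqLRC-GV-refined}. I state the bound as "$\ge$" because at finite $n$ one may need $d\ge\lfloor\Delta n\rfloor$ together with the monotonicity of the right-hand side in $d/n$, which only lowers the guaranteed rate.

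The main obstacle is the locality step. Lemma~\ref{lem:CM-refined-achievability} asserts locality $r$ but does not explicitly exhibit the dual checks that Proposition~\ref{CSS-EA-qLRC} requires. I plan to obtain them from the structure: each inner single-parity-check block of length $r+1$ gives a weight-$(r+1)$ dual codeword of the concatenated code, namely the all-ones vector on that block, or the appropriately scaled version after the monomial map. Every coordinate lies in exactly one block, so the checks cover all coordinates. A secondary subtlety is that the limit requires the family to realize relative distances converging to $\Delta$, and I will take this from the random-coding statement of Lemma~\ref{lem:CM-refined-achievability}.
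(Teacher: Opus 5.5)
Your proposal is correct and follows the paper's proof. Both replace each concatenated-code member by a monomially equivalent LCD code via Theorem~\ref{thm:CM-LCD-closure}, apply Theorem~\ref{EA-qLRC LCD} with $\mathcal{C}_1=\mathcal{C}_2=\mathcal{C}$ to get $\kappa=k$, $\delta=d$, $c=n-k$, and normalize by $n$. Your explicit block checks and continuity argument only fill in steps the paper leaves implicit.

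One small correction: drop the appeal to monotonicity of the rate function in $d/n$. It is unnecessary, since continuity with $d/n\to\Delta$ suffices, and it is false in general: for $r=1$ the function is negative and increasing just below $1-q^{-1}$. The actual reason for ``$\ge$'' is that substituting the particular $x=d/((q-1)(n-d))$ into \eqref{eq:CM-general} only upper-bounds the minimum there.
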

\begin{proof}
By Theorem~\ref{thm:CM-LCD-closure}, for $q>3$, any member $\mathcal{C}=[n,k,d]_q$ of the
family in Lemma~\ref{lem:CM-refined-achievability} can be transformed into an LCD code
without loss of rate or locality. Applying Theorem~\ref{EA-qLRC LCD} with
$\mathcal{C}_1=\mathcal{C}_2=\mathcal{C}$ yields $\kappa=k$, $\delta=d$, and required ebits
$c=n-k$. Normalizing by $n$ gives $R=\kappa/n=k/n$, so the classical rate
bound~\eqref{LRC-CM-refined} transfers directly as $n\to\infty$ with $d/n\to\Delta$. The
consumption parameter $\gamma=c/n=1-R$ and net rate $R_{\mathrm{net}}=(\kappa-c)/n=2R-1$
follow immediately.
\end{proof}

\begin{remark}\label{rem:sharper}
Theorem~\ref{thm:EA-LRC-CM-refined} sharpens Theorem~\ref{EA-LRC-GV}: both bounds coincide
at the endpoints $\Delta=0$ (giving $R=\tfrac{r}{r+1}$) and $\Delta=1-q^{-1}$ (giving
$R=0$), but~\eqref{EAqLRC-GV-refined} is strictly tighter in the interior of this range,
reflecting the improved rate--distance trade-off of the underlying classical
concatenated-code construction over the simple parity-check augmentation of
Lemma~\ref{lem:classical-LRC-GV}.
\end{remark}

The bound~\eqref{EAqLRC-GV} is achieved only via the random-coding argument of
Lemma~\ref{lem:classical-LRC-GV}, not by explicit algebraic constructions such as the
cyclic family of Example~\ref{ex:basic-GV-worked}; the same is true of the sharper
bound~\eqref{EAqLRC-GV-refined} of this subsection, via the concatenated-code argument of
Lemma~\ref{lem:CM-refined-achievability}.

\subsection{Numerical Analysis of the Sharper Achievability Bound}\label{sec:numerical-analysis}

To quantify the impact of the locality parameter $r$ on the achievable rate capacity of the
constructed CSS-like EA-qLRCs, we decompose the sharpened lower bound from
Theorem~\ref{thm:EA-LRC-CM-refined} into the standard Gilbert--Varshamov (GV) rate and an
explicit locality penalty term:
\begin{equation}\label{eq:rate-decomposition}
    R_{\mathrm{LRC}}(\Delta, r) = \underbrace{1 - H_q(\Delta)}_{\text{Standard GV Rate } R_{\mathrm{GV}}} - \underbrace{\frac{1}{r+1} \log_q \!\left[ 1 + (q-1)\left(1 - \Delta \cdot \frac{q}{q-1}\right)^{r+1} \right]}_{\text{Locality Penalty } \delta_r(\Delta)}.
\end{equation}

The penalty term $\delta_r(\Delta)$ exhibits three main asymptotic characteristics:
\begin{enumerate}
    \item \textbf{Low-Noise Limit ($\Delta \to 0$):} Taking the limit yields
    $\lim_{\Delta \to 0} \delta_r(\Delta) = \frac{1}{r+1}$. The maximum rate simplifies to
    $1 - \frac{1}{r+1} = \frac{r}{r+1}$, which matches the Singleton-type upper bound on LRC
    rate bounds.
    \item \textbf{High-Noise Limit ($\Delta \to \frac{q-1}{q}$):} As $\Delta$ approaches the
    GV channel threshold, the term $\left(1 - \Delta \frac{q}{q-1}\right) \to 0$, driving
    $\delta_r(\Delta) \to 0$. Near the maximum supportable relative distance, local parity
    constraints impose virtually no additional asymptotic rate penalty over standard random
    linear codes.
    \item \textbf{Unconstrained Locality Limit ($r \to \infty$):} For fixed $\Delta$ with
    $0<\Delta<\frac{q-1}{q}$, write $x:=1-\Delta\frac{q}{q-1}\in(0,1)$. Since
    $\log_q(1+u)=u/\ln q+O(u^2)$ as $u\to0$, the penalty satisfies
    \[
        \delta_r(\Delta) = \frac{(q-1)}{(r+1)\ln q}\,x^{r+1}\bigl(1+o(1)\bigr), \qquad r\to\infty,
    \]
    so $\delta_r(\Delta)$ vanishes \emph{exponentially} in $r$ (since $0<x<1$), not merely
    at the polynomial rate $O(1/r)$ the prefactor alone might suggest; the bound therefore recovers the standard Gilbert--Varshamov rate $R_{\mathrm{GV}}(\Delta)$ very rapidly as locality is
    relaxed.
\end{enumerate}

\begin{figure}[!ht]
\centering
\includegraphics[width=\linewidth]{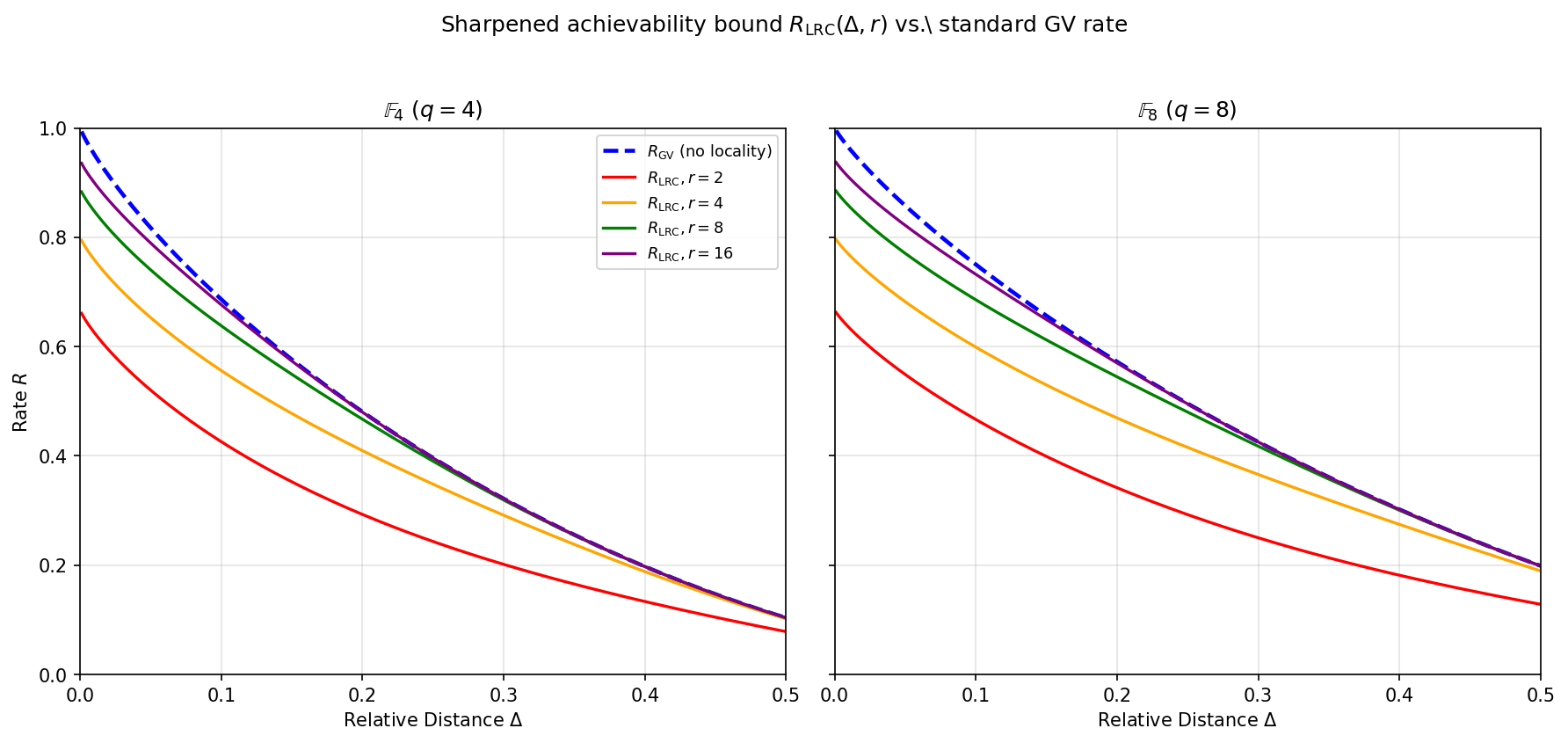}
\caption{The sharpened achievability bound $R_{\mathrm{LRC}}(\Delta,r)$ of
Theorem~\ref{thm:EA-LRC-CM-refined}, plotted against the standard Gilbert--Varshamov rate
$R_{\mathrm{GV}}(\Delta)=1-H_q(\Delta)$, for $r=2,4,8,16$ over $\mathbb{F}_4$ and
$\mathbb{F}_8$. As $r$ increases, $R_{\mathrm{LRC}}$ converges rapidly toward
$R_{\mathrm{GV}}$, consistent with the exponential decay of the locality penalty established
above.}
\label{fig:lrc-gv-comparison}
\end{figure}

Figure~\ref{fig:lrc-gv-comparison} illustrates this behavior directly: the gap between
$R_{\mathrm{LRC}}$ and $R_{\mathrm{GV}}$ narrows visibly as $r$ grows from $2$ to $16$, and
the convergence is markedly faster than a naive $1/r$ rate would predict, consistent with
the exponential decay of $\delta_r(\Delta)$ derived above.

\subsection{Unified Comparison Under Maximally Entangled Constraints}\label{sec:unified-comparison}

The converse bounds of Section~\ref{Bounds EA-qLRC} were derived for arbitrary entanglement
rate $\gamma$; we now specialize them to the maximally entangled regime $\gamma=1-R$ of
Theorem~\ref{EA-qLRC LCD}, placing every bound of this paper on common ground for direct
comparison against the achievability bounds of Theorem~\ref{EA-LRC-GV} and
Theorem~\ref{thm:EA-LRC-CM-refined}.

Each of the Singleton-, Griesmer-, and Plotkin-like asymptotic bounds \eqref{ASB},
\eqref{AGB}, \eqref{APB} shares the common form
\begin{equation}\label{eq:star}
    R \;\le\; \frac{r}{r+2}(1+\gamma) - \frac{2r}{r+2}\,C\,\Delta,
\end{equation}
where $C=1$ for the Singleton-like bound, $C=\frac{q^m-1}{q^{m-1}(q-1)}$ for the
Griesmer-like bound, and $C=\frac{q^r-1}{q^{r-1}(q-1)}$ for the Plotkin-like bound.

\begin{proposition}\label{prop:maxent-collapse}
Under the maximally entangled constraint $\gamma=1-R$, each bound of the form~\eqref{eq:star}
reduces to
\begin{equation}\label{eq:maxent-linear}
    R \;\le\; \frac{r}{r+1}\bigl(1-C\Delta\bigr).
\end{equation}
In particular, the Singleton-, Griesmer-, and Plotkin-like bounds~\eqref{ASB}, \eqref{AGB},
\eqref{APB} become, respectively,
\[
    R\le\frac{r}{r+1}(1-\Delta), \qquad
    R\le\frac{r}{r+1}\Bigl(1-\frac{q^m-1}{q^{m-1}(q-1)}\Delta\Bigr), \qquad
    R\le\frac{r}{r+1}\Bigl(1-\frac{q^r-1}{q^{r-1}(q-1)}\Delta\Bigr).
\]
\end{proposition}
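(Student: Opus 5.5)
The plan is to substitute $\gamma=1-R$ directly into the common form \eqref{eq:star} and solve the resulting linear inequality for $R$. All three asymptotic bounds \eqref{ASB}, \eqref{AGB}, \eqref{APB} have been written in the shape \eqref{eq:star}, differing only in the constant $C$. It therefore suffices to treat \eqref{eq:star} for an arbitrary fixed constant $C>0$ and then read off the three cases. Since $\gamma$ appears only linearly, the substitution will not create any implicit dependence beyond an $R$ term on the right-hand side, which can be moved to the left.

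Concretely, substituting gives
\[
R \le \frac{r}{r+2}(2-R) - \frac{2r}{r+2}\,C\,\Delta .
\]
Multiplying by $r+2>0$ and collecting the $R$ terms yields
\[
(r+2)R + rR \le 2r - 2rC\Delta, \qquad\text{that is,}\qquad 2(r+1)R \le 2r(1-C\Delta).
\]
Dividing by $2(r+1)>0$ gives \eqref{eq:maxent-linear}. The directions of the inequalities are preserved because every multiplier is positive; this is the only point needing care.

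The $o(1)$ terms carried in \eqref{ASB}--\eqref{APB} are handled the same way. They are multiplied by the positive constant $(r+2)/(2(r+1))$, so they remain $o(1)$ and vanish in the limiting statement.

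For the three specializations, insert $C=1$ for the Singleton-like bound. For the Griesmer-like bound, insert $C=\frac{q^m-1}{q^m-q^{m-1}}=\frac{q^m-1}{q^{m-1}(q-1)}$, and for the Plotkin-like bound, insert the analogous expression with $m=r$. This matches the constants identified before \eqref{eq:star} and in \eqref{Com_equ}.

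The main obstacle is interpretive rather than computational. The constraint $\gamma=1-R$ couples the two limits, so one should state that the limits $R$ and $\gamma$ exist along the sequence and that $\gamma=1-R$ holds in the limit. Under that assumption the substitution is legitimate, and the whole argument is a one-line algebraic rearrangement.
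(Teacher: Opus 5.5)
Your proposal is correct and matches the paper's proof: substitute $\gamma=1-R$ into \eqref{eq:star}, multiply by $r+2$, collect the $R$ terms to get $2(r+1)R\le 2r(1-C\Delta)$, and divide by $2(r+1)$. Your remarks on the $o(1)$ terms and on reading $\gamma=1-R$ as a relation between limits are sound additions that the paper leaves implicit.
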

\begin{proof}
Substituting $\gamma=1-R$ into~\eqref{eq:star} and multiplying by $(r+2)$ gives
$R(r+2)\le r(2-R)-2rC\Delta$. Collecting the $R$ terms,
$R(2r+2)\le 2r(1-C\Delta)$, and dividing by $2(r+1)$ yields~\eqref{eq:maxent-linear}.
\end{proof}

\begin{proposition}\label{prop:spb-collapse}
Under the maximally entangled constraint $\gamma=1-R$, the sphere-packing-like
bound~\eqref{asySPB} reduces to the closed form
\begin{equation}\label{eq:spb-maxent}
    R \;\le\; 1-H_q\!\left(\frac{\Delta}{2}\right).
\end{equation}
\end{proposition}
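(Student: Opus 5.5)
The plan is a direct substitution into the asymptotic sphere-packing-like bound \eqref{asySPB}, in the same spirit as the proof of Proposition~\ref{prop:maxent-collapse}. The key observation is that the implicit quantity $\mu$, which makes \eqref{asySPB} nonlinear in $R$, becomes a constant in the maximally entangled regime. Substituting $\gamma=1-R$ gives
\[
\mu=\frac{1+R+\gamma}{2}=\frac{1+R+(1-R)}{2}=1 .
\]
So the ball-volume term $2\mu H_q\!\left(\frac{\Delta}{2\mu}\right)$ collapses to $2H_q(\Delta/2)$, and its dependence on $R$ disappears.

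With $\mu=1$ and $\gamma=1-R$ in hand, I would rewrite \eqref{asySPB} as
\[
R\;\le\;1+(1-R)-2H_q\!\left(\tfrac{\Delta}{2}\right)+o(1),
\]
that is, $2R\le 2-2H_q(\Delta/2)+o(1)$. Dividing by $2$ and letting $n\to\infty$ then yields \eqref{eq:spb-maxent}. As a cross-check, one can start from the net-rate form: there $R_{net}=2R-1$ and $\mu=(1+R_{net}+2\gamma)/2=1$, which gives the same conclusion.

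The only step needing care, and the one I expect to be the main obstacle, is justifying the use of \eqref{asyHB} at relative radius $\Delta/2$. I would check that $0\le \Delta/2\le 1-q^{-1}$ whenever $0\le\Delta\le 1-q^{-1}$, so $H_q$ is evaluated inside its defining range. I would also check that $M=\tfrac{n+\kappa+c}{2}-\tau(r+1)$ satisfies $M/n\to\mu=1$ for fixed $\tau$. Since $\kappa+c=n$ exactly in the maximally entangled construction of Theorem~\ref{EA-qLRC LCD}, this in fact holds with $M=n-\tau(r+1)$ at every finite $n$, not only in the limit. The rounding in $t=\lfloor(\delta-1)/2\rfloor$ and the $\tau$ term are absorbed into $o(1)$ exactly as in the derivation of \eqref{asySPB}, so no further estimates should be needed.
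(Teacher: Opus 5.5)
Your proposal is correct and follows essentially the same route as the paper: substitute $\gamma=1-R$ to get $\mu=1$, then rearrange $R\le 2-R-2H_q(\Delta/2)$ into \eqref{eq:spb-maxent}. Your extra checks (the range of $\Delta/2$, $M/n\to 1$, and the net-rate cross-check) go beyond the paper's proof but are consistent with it.
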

\begin{proof}
Recall~\eqref{asySPB}: $R\le 1+\gamma-2\mu H_q\bigl(\Delta/(2\mu)\bigr)$, where
$\mu=(1+R+\gamma)/2$. Substituting $\gamma=1-R$ gives $\mu=\bigl(1+R+(1-R)\bigr)/2=1$,
independent of $R$ and $\Delta$. Substituting $\mu=1$ and $\gamma=1-R$ into~\eqref{asySPB}
gives $R\le 2-R-2H_q(\Delta/2)$, and dividing by $2$ after adding $R$ to both sides
yields~\eqref{eq:spb-maxent}.
\end{proof}

\begin{remark}
Unlike the four converse bounds above, the Gilbert--Varshamov-like bounds~\eqref{EAqLRC-GV}
and~\eqref{EAqLRC-GV-refined} require no substitution to reach closed form in $\Delta$
alone: they are derived directly from Theorem~\ref{EA-qLRC LCD}'s maximally entangled
construction, for which $c=n-k$ exactly, so $\gamma=c/n=1-k/n=1-R$ holds identically by
construction rather than as an imposed constraint. Propositions~\ref{prop:maxent-collapse}
and~\ref{prop:spb-collapse} show that restricting the four converse bounds to this same
regime is precisely what is needed to compare them against~\eqref{EAqLRC-GV} and
\eqref{EAqLRC-GV-refined} on common ground, as in Figure~\ref{fig:comprehensive-comparison}.
\end{remark}

\begin{figure}[!ht]
\centering
\includegraphics[width=0.9\linewidth]{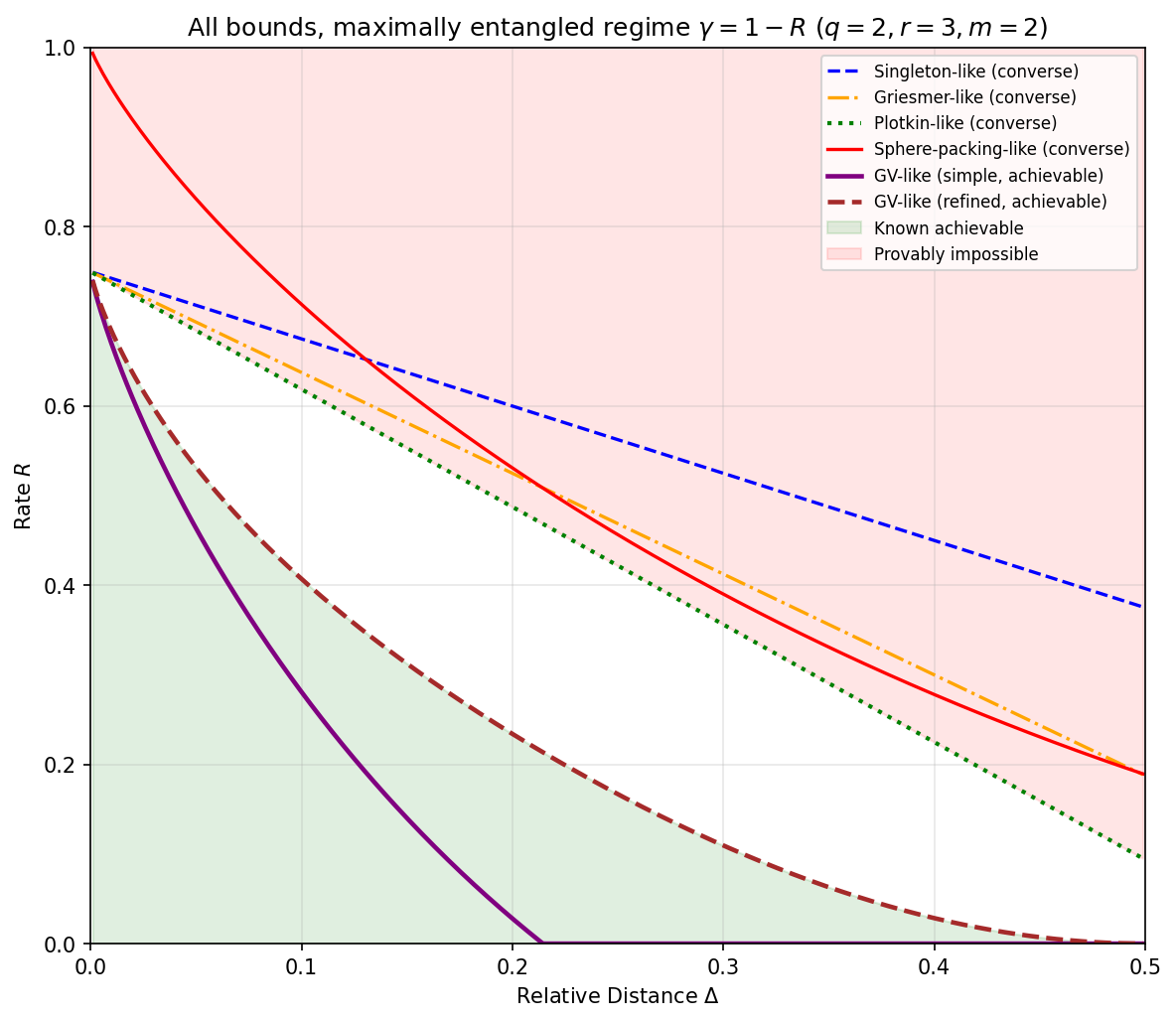}
\caption{The shaded red region is provably impossible, bounded below by the pointwise minimum
of the four converse bounds (at $r=3$ this is the Plotkin-like bound throughout, by
Remark~\ref{rem:sphere-loosest}); the shaded green region is provably achievable,
bounded above by the pointwise maximum of the two achievability bounds (the refined
Gilbert--Varshamov-like bound throughout, by Remark~\ref{rem:sharper}). The unshaded
gap between them, familiar from the classical Singleton-versus-Gilbert--Varshamov
comparison, remains open.}
\label{fig:comprehensive-comparison}
\end{figure}

Figure~\ref{fig:comprehensive-comparison} places every bound developed in this paper on a
single axis, making visible the gap between what is provably achievable and what is
provably impossible for entanglement-assisted quantum locally recoverable codes ---
mirroring the analogous, long-open gap between the classical Gilbert--Varshamov and
Singleton-type bounds that this paper's constructions inherit.

\section{Conclusion}\label{conclusion}

In this paper, we developed a systematic framework for the study of entanglement-assisted
quantum locally recoverable codes (CSS-like EA-qLRCs). We first established the fundamental theory
of CSS-like EA-qLRCs by deriving five converse bounds --- Singleton-like, Griesmer-like,
Plotkin-like, and sphere-packing-like in explicit closed form, together with a
Cadambe--Mazumdar-like bound that, as in the classical case, is not explicit --- extending
their classical locally recoverable code counterparts to the entanglement-assisted quantum
setting. We analyzed the relative tightness of the four explicit bounds, showing that the
Griesmer-like and Plotkin-like bounds are asymptotically strictly tighter than the
Singleton-like bound, while the sphere-packing-like bound is tightest in specific regimes of
relative distance and locality.

Building upon these theoretical results, we characterized the necessary and sufficient
conditions under which pure CSS-like EA-qLRCs attain the Singleton-like bound. This
characterization enabled explicit constructions of optimal pure CSS-like EA-qLRCs from
cyclic locally recoverable codes, and revealed a structural obstruction showing that the
Tamo--Barg-based construction, while yielding valid CSS-like EA-qLRCs, attains the Singleton-like bound only when $k\le r$, a regime in which the locality constraint is vacuous---every coordinate is recoverable from an information set
avoiding it---and \eqref{EA-qLRC_SB} degenerates to the entanglement-assisted
quantum Singleton bound. The construction therefore attains
\eqref{EA-qLRC_SB} in no regime where locality carries information. We further constructed a family of maximally
entangled CSS-like EA-qLRCs using cyclic locally recoverable codes and their complementary
dual codes.

Finally, we established two Gilbert--Varshamov-like achievability bounds for CSS-like EA-qLRCs --- a basic bound via classical parity-check augmentation and a sharper bound via concatenated codes --- and showed both hold unconditionally for field size $q>3$ via a monomial-equivalence argument that preserves locality. We closed the paper by unifying every bound established across this work, converse and achievability alike, under the shared maximally entangled regime, yielding closed-form expressions for each and a single
comparison delineating the boundary between provably achievable and provably impossible rate--distance--locality trade-offs for CSS-like EA-qLRCs.

EA-qLRCs are an emerging area of research in quantum error correction. The results
presented in this paper naturally lead to several promising directions for future research.

\begin{itemize}
\item The constructions developed in this work are primarily based on pure CSS-like
EA-qLRCs derived from cyclic and Tamo--Barg locally recoverable codes. It would be of
significant interest to investigate broader classes of classical locally recoverable codes
that give rise to new families of EA-qLRCs with improved parameters. In particular, the
systematic construction and characterization of impure CSS-like EA-qLRCs remain largely
unexplored and may lead to improved trade-offs among locality, minimum distance, and
entanglement consumption.

\item While this work provides explicit constructions attaining the Singleton-like bound,
an important open problem is the construction of EA-qLRCs that are optimal with respect to
the stronger Griesmer-like, Plotkin-like, and sphere-packing-like bounds established in this
paper. Determining whether infinite families of CSS-like EA-qLRCs can attain these bounds constitutes
a promising direction for future investigation.

\item Closing the Gilbert--Varshamov-LCD gap for $q=2,3$ --- where the monomial-equivalence argument used for $q>3$ is known to fail --- remains open.

\item Narrowing the gap between the achievability bounds established in this work and the converse bounds of Section~\ref{Bounds EA-qLRC} would provide a deeper understanding of the fundamental limits of entanglement-assisted locally recoverable quantum codes.
\end{itemize}

We hope that the framework developed in this paper will stimulate further research on entanglement-assisted locally recoverable quantum codes and strengthen the connections between classical locality theory and quantum error correction.

\section*{Acknowledgments}
\bibliographystyle{IEEEtran}
\bibliography{bibliography}
\end{document}